\documentclass[12pt]{amsart}
\usepackage[latin1]{inputenc}
\usepackage{mathrsfs}
\usepackage{mathtools}
\usepackage{amsmath}
\usepackage{amsfonts}
\usepackage{amssymb}
\usepackage{graphicx}
\usepackage{fourier}
\usepackage{dutchcal}
\usepackage[colorlinks=true,citecolor=green,linkcolor=blue]{hyperref}
\usepackage{enumerate}
\usepackage{esint}
\usepackage{bm}
\usepackage{pgfplots}

\usepackage{xcolor} 
\usepackage{verbatim}        
\usepackage{float}

\usepackage{subcaption}
\usepackage{epstopdf}

\usepackage{scalerel}
\usepackage{upgreek}

\let\nu\upupsilon
\let\phi\upvarphi
\let\eta\upeta
\let\psi\uppsi
\let\mu\upmu
\let\epsilon\varepsilon
\let\varrho\upvarrho

\usepackage[margin=1in]{geometry}

\usepackage[english]{babel}
\newtheorem{theorem}{Theorem}[section]
\newtheorem{lemma}[theorem]{Lemma}

\theoremstyle{definition}
\newtheorem{definition}[theorem]{Definition}

\newtheorem{corollary}[theorem]{Corollary}

\theoremstyle{remark}
\newtheorem{remark}[theorem]{Remark}

\newcommand{\norm}[1]{\left\lVert#1\right\rVert}
\newcommand{\abs}[1]{\left\lvert#1\right\rvert}
\newcommand{\pa}[1]{\left( #1 \right)}
\newcommand{\rpa}[1]{\left[ #1 \right]}
\newcommand{\br}[1]{\left\lbrace #1\right\rbrace}

\newcommand{\R}{\mathbb{R}}
\newcommand{\C}{\mathbb{C}}
\newcommand{\Z}{\mathbb{Z}}
\newcommand{\N}{\mathbb{N}}

\newcommand{\PP}{\mathscr{P}}
\newcommand{\X}{\mathcal{X}}
\newcommand{\wh}{\widehat}
\newcommand{\wt}{\widetilde}
\newcommand{\F}{\mathcal{F}}

\newcommand{\T}{\mathcal{T}}
\newcommand{\A}{\mathcal{A}}
\newcommand{\B}{\mathcal{B}}

\newcommand{\NNN}{\nonumber\\}

\newcommand{\tm}[1]{\mathrm{#1}}
\numberwithin{equation}{section}

\definecolor{bostonuniversityred}{rgb}{0.8, 0.0, 0.0}

\definecolor{byzantium}{rgb}{0.44, 0.16, 0.39}

\newcommand{\ii}{\mathfrak{i}}

\raggedbottom


\title{A notion of partial order in the Choose the Leader model}
\author[A. Einav]{Amit Einav}
\address{Amit Einav \hfill\break 
	School of Mathematical Sciences, Durham University, Upper Mountjoy Campus, Stockton Road DH1 3LE Durham, United Kingdom}
\email{amit.einav@durham.ac.uk}
\author{Yue Jiang}
\address{Yue Jiang \hfill\break 
	School of Mathematical Sciences, Durham University, Upper Mountjoy Campus, Stockton Road DH1 3LE Durham, United Kingdom}
\email{yue.jiang@durham.ac.uk}

\begin{document}
\maketitle

\begin{abstract}
	In this work we continue the study of non-chaotic asymptotic correlations in many element systems and discuss the emergence of a new notion of asymptotic correlation -- partial order -- in the Choose the Leader (CL) system. Similarly to the newly defined notion of order, partial order refers to alignment of the elements in the system -- though it allows for deviation from total adherence. 
	Our presented work revolves around the definition of partial order and shows its emergence in the CL model in its original critical scaling. Furthermore, we discuss the propagation of partial order in the CL model and give a quantitative estimate to the convergence to this state. 
	This new notion (as well as that of order) opens the door to exploring old and new (probabilistic) models of biological and societal nature in a more realistic way.
\end{abstract}
{\fontsize{5}{4}\selectfont
	{KEYWORDS:}
	Mean field limits, asymptotic correlation, order, partial order}
\\{\indent\fontsize{5}{4}\selectfont{MSC Subject Classification:}
	82C22, 60F99, 35Q82, 35B40}

\section{Introduction}\label{sec:intro}
\subsection{Background: mean field models and limits}\label{subsec:background_and_MFL}
Systems of many interacting elements are prevalent in most fields of natural sciences, as well as in economical and societal settings. Understanding such systems, theoretically or in practice, can be quite challenging and computationally costly.
The late 19th century, a time where great progress in the study of kinetic gases and statistical mechanics was achieved, saw a fundamental shift from the (then) conventional approach in studying such systems: instead of considering the behaviour of individuals in the system, people began to explore the behaviour of an \textit{average} element. Since then numerous probabilistic and analytic ideas and approaches have been developed and used to deal with such systems.

One framework to investigate systems with many element is known as \textit{mean field models and limits}. In this approach such systems are represented by an average (probabilistic) model given by a PDE for the density measure of the system, the so-called \textit{master equation} (or Liouville equation), together with a correlation condition which captures the expected emerging phenomena. 

One of the first examples of a mean field model, as well as a notion of asymptotic correlation, was Kac's particle model (see \cite{K1956}). Motivated by a desire to provide a probabilistic justification to the validity of the Boltzmann equation, Kac proposed a simplified average model of a dilute gas consisting of $N$ \textit{indistinguishable} particles which undergo binary collisions. He showed that under the assumption of choatitcity, i.e. emerging independence of finitely many particles as the number of particles in the system goes to infinity, the Boltzmann equation arises as limit of the equation that describes the behaviour of one average particle in the original system.

Kac's model and approach had ramification beyond their immediate success. In the last few decades, the mean field limit approach has moved beyond particle systems and permeated into the realms of biology, chemical interactions, and even sociology. Prime examples are models of swarming, neuronal networks, and societal consensus/opinion variations (see \cite{APD2021,BFFT2012,BCC2011,DLMT2017,FL2016} as well as probabilistic models described in the review papers \cite{CD(I)2022,CD(II)2022} and the many references within). 

While casting a wide net, until the recent work of the first author, \cite{E2024}, the \textit{only} asymptotic correlation relation used in most mean field models was chaoticity. Since in many biological and societal situations we expect to see a development of strong dependence, and even adherence -- opposite phenomena to that which chaoticity encapsulates, there is a need to explore additional notions of asymptotic correlation. 

The intuition that there might be something besides chaos was validated in the work of Carlen, Degond, and Wennberg from 2013, \cite{CDW2013}, where the authors constructed a swarming model, the Choose the Leader (CL) model, which showed a departure from chaoticity. This was the starting point of \cite{E2024} where a new notion of \textit{order} was defined.

\subsection{Choose the Leader (CL) model and the notion of order}\label{subsec:CL_order}

The CL model is a Kac-like swarming model which describes a herd of $N$ animals that moves in a planar domain. Each animal is represented by its velocity which is assumed to be of length $1$. As these velocities can be thought of as elements of the unit circle in $\C$, using the identity $v=e^{i\theta}$ allows us to replace the velocity variables in the model with their relative angle $\theta$, considered on the interval $\rpa{-\pi,\pi}$ where $-\pi$ and $\pi$ are identified as the same angle.

The ``collision'' process in the CL model is as follows: at a random time, given by a Poisson stream with a rate $\lambda>0$, a pair of animals is chosen at random in a uniform way. One of the animals, chosen again at random with probability $1/2$, adapts its velocity to the other animal's up to a small amount of ``noise''. Mathematically, if the $i$-th and $j$-th animals interacted and the $j$-th animal decided to adapt it velocity to the $i$-th's animal (i.e. it decided to follow the $i-$th animal's lead) we find that after the interaction finished, the original velocity angles of these animals, $\pa{\theta_i,\theta_j}$, changes to $\pa{\theta_i,\theta_i+\mathcal{Z}}$ where $\mathcal{Z}$ is an independent random variable with values in $\rpa{-\pi,\pi}$ distributed in accordance to a given probability density function $h$.

Following on the above description we find that the master equation for the CL model, i.e. the evolution equation for the probability measure of the ensemble of animals on the $N-$dimensional torus $\T^N=[-\pi,\pi]^{N}$, is: 
\begin{equation}\nonumber
	\begin{split}
		\partial_t F_N\pa{\theta_1,\dots,\theta_N} =& \frac{2\lambda}{N-1}\sum_{i<j}\Bigg\{\frac{h\pa{\theta_i-\theta_j}}{2}\Bigg( \rpa{F_N}_{\wt{j}}\pa{\theta_1,\dots,\wt{\theta}_{j},\dots,\theta_N}\\
		&+\rpa{F_N}_{\wt{i}}\pa{\theta_1,\dots,\wt{\theta}_{i},\dots,\theta_N}\Bigg)
		-F_N\pa{\theta_1,\dots,\theta_N}\Bigg\},
	\end{split}
\end{equation}
with
\begin{equation}\nonumber
	\rpa{F_N}_{\wt{j}}\pa{\theta_1,\dots,\wt{\theta}_{j},\dots,\theta_N}=\int_{-\pi}^{\pi}F_N\pa{\theta_1,\dots,\theta_N}\frac{d\theta_j}{2\pi}, 
\end{equation}
and where we have used the notation $\pa{\theta_1,\dots,\wt{\theta}_{j},\dots,\theta_N}$ for the $(N-1)$-dimensional vector which is attained by removing $\theta_j$ from the original $N$-dimensional vector $\pa{\theta_1,\dots,\theta_N}$. We will continue and use this notation throughout this work.

The underlying process that governs the CL model leads us to believe that since the interactions between the animals creates \textit{strong} correlation we can't expect to observe chaoticity. We intuitively expect to see the herd moving in some random direction in relative unison -- the more $h$ is concentrated around $\theta=0$, the more aligned the herd should be. However, Carlen et al.\ showed the following in \cite{CDW2013}:
\begin{theorem}\label{thm:CDE_chaos}
	Assume that $\br{F_{N}(0)}_{N\in\N}$ is $f$-chaotic. Then for any $t>0$ the family of solutions to the CL master equation with initial data $\br{F_{N}(0)}_{N\in\N}$, $\br{F_{N}(t)}_{N\in\N}$, is $f(t)$-chaotic where $f(t)$ satisfies the equation
	\begin{equation}\nonumber
		\partial_t f\pa{\theta,t} = \pa{h\ast f}\pa{\theta,t} - f\pa{\theta,t}.
	\end{equation}
\end{theorem}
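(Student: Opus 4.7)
The plan is to prove propagation of chaos via the standard BBGKY-style argument: derive the hierarchy of evolution equations for the $k$-marginals $F_{N,k}(t)$, pass to the limit $N\to\infty$, check that the tensor product $f(t)^{\otimes k}$ solves the limit hierarchy, and conclude by uniqueness. A structural feature that makes the CL model much easier than Boltzmann-type derivations is that its absorption-type jump (the new velocity depends only on the partner's velocity, not on the particle's own previous value) produces a hierarchy that \emph{closes at each level}: $F_{N,k+1}$ never appears in the equation for $F_{N,k}$.

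Concretely, I would integrate the master equation over $\theta_{k+1},\dots,\theta_N$ and split the sum over pairs $(i,j)$ into three types based on how many indices are $\leq k$. Pairs with both indices $>k$ cancel exactly by probability conservation; the $\binom{k}{2}$ pairs with both indices $\leq k$ contribute only an $\mathcal{O}(1/N)$ term involving $F_{N,k-1}$; and the mixed pairs dominate. For a mixed pair $(i,j)$ with $i\leq k<j$, using the symmetry of $F_N$ to identify the partial integral of $[F_N]_{\wt i}$ with $F_{N,k}$ evaluated at the shifted tuple (with $\theta_j$ filling slot $i$), the contribution per pair works out to $\tfrac{\lambda}{N-1}\bigl[(h*_i F_{N,k})-F_{N,k}\bigr]$, where $*_i$ denotes convolution in the $i$-th slot. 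Summing gives
\begin{equation*}
\partial_t F_{N,k}=\lambda\,\frac{N-k}{N-1}\sum_{i=1}^{k}\bigl[(h*_i F_{N,k})-F_{N,k}\bigr]+\mathcal{O}(1/N).
\end{equation*}

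Passing formally to the limit yields the linear, level-decoupled equation $\partial_t f_k=\lambda\sum_{i=1}^{k}[(h*_i f_k)-f_k]$. The tensor product $f(t)^{\otimes k}$ indeed solves it: from $\partial_t(f^{\otimes k})=\sum_{i=1}^{k} f^{\otimes(i-1)}\otimes\partial_t f\otimes f^{\otimes(k-i)}$ and $\partial_t f=\lambda(h*f-f)$ the verification is immediate. Since the generator $\mathcal{L}_k=\sum_{i=1}^{k}(h*_i\,\cdot\,)-kI$ is bounded and dissipative, uniqueness of the solution to the limit equation with prescribed initial data is automatic, pinning the solution as $f(t)^{\otimes k}$.

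It remains to make the limit rigorous, which I would do by induction on $k$. The base case $k=1$ is essentially trivial: only mixed pairs contribute, so $F_{N,1}$ satisfies the limit equation \emph{exactly} with no $N$-dependence, and $F_{N,1}(0)\rightharpoonup f(0)$ propagates to $F_{N,1}(t)\rightharpoonup f(t)$ by linear well-posedness. For $k\geq 2$, write $F_{N,k}(t)-f(t)^{\otimes k}$ in Duhamel form against $e^{t\mathcal{L}_k}$ and absorb the $\mathcal{O}(1/N)$ error using the inductive hypothesis and Gronwall. The main obstacle, ensuring stability of weak convergence under the action of the semigroup, is routine given the dissipative structure; it can also be bypassed entirely by working Fourier-side, where the limit hierarchy becomes the pointwise ODE $\partial_t\hat f_k(\vec n,t)=\sum_i(\hat h(n_i)-1)\hat f_k(\vec n,t)$, explicitly solved by $\prod_i e^{(\hat h(n_i)-1)t}\hat f_k(\vec n,0)$, so that the tensor product structure in the limit becomes manifest as soon as $\hat F_{N,k}(\vec n,0)\to\prod_i\hat f(n_i,0)$ from initial $f$-chaoticity.
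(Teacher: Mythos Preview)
The paper does not actually prove this theorem: it is quoted as a background result from \cite{CDW2013}, so there is no in-paper proof to compare against. That said, your outline is correct and is exactly the standard argument one expects for this model. The crucial structural observation you make---that the marginal hierarchy closes at each level because the post-interaction velocity depends only on the partner's velocity---is the reason propagation of chaos is immediate here, and your computation of the mixed-pair contribution $\tfrac{\lambda(N-k)}{N-1}\sum_i[(h*_iF_{N,k})-F_{N,k}]$ is right. The Fourier-side shortcut you mention at the end is in fact the cleanest route and mirrors what the present paper does later for the rescaled model (cf.\ the recursive formula \eqref{eq:recursive}): since the limit equation diagonalises mode-by-mode, weak convergence reduces to pointwise convergence of Fourier coefficients, the limit hierarchy solves explicitly, and the tensor structure is read off directly. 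One cosmetic point: the theorem as stated drops the rate $\lambda$ from the limit equation, so either set $\lambda=1$ or carry it through consistently.
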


The reason behind the fact that chaos remains a valid asymptotic correlation in the CL model is that as it stands, the CL model retains the Kac-like assumption of \textit{sparsity}: in a given unit of time we expect to see only \textit{one} interaction between a couple of animals in the entire herd. This implies that while the interactions are extremely strong they are still very unlikely to happen between most pairs of animals. 

Carlen et al.\ suggested that in order to see chaos breaks we need to rescale both the time variable $t$ (to allow enough correlations to build up) and the strength of the interaction (represented by $h$)\footnote{It is worth to mention that rescaling the interaction with respect to $N$ is not unlikely when one considers biological and societal settings -- the number of people/animals in the system may affect how they align/confer to each other.}. 

While scaling by a factor of $N$ seems a natural choice for the time rescaling, as it guarantees that in a (rescaled) unit time many pairs of animals have interacted, the scaling of the interaction is less clear. Motivated by a standard scaling on the line we replace $h$ with 
\begin{equation}\nonumber
	g_{\epsilon_N}\pa{\theta}=\frac{1}{\epsilon_N G_{\epsilon_N}}g\pa{\frac{\theta}{\epsilon_N}},\qquad \theta \in [-\pi,\pi], 
\end{equation}
where $g$ is an even probability density function on $\R$ which we call the \textit{interaction generating function},
$$G_{\epsilon_N}=\frac{1}{2\pi}\int^{\frac{\pi}{\epsilon_N}}_{-\frac{\pi}{\epsilon_N}}g(x)dx,$$
and $\br{\epsilon_N}_{N\in\N}$, the spatial scaling parameters, is a positive sequence that goes to zero.

The rescaled CL master equation can be found in \cite{CCDW2013} and is given by
\begin{equation} \label{eq:master_CL_rescaled}
	\begin{split}
		&\partial_t F_N\pa{\theta_1,\dots,\theta_N,t} = \frac{2\lambda N}{N-1}\sum_{i<j}\Bigg\{\frac{1}{2}g_{\epsilon_N}\pa{\theta_i-\theta_j} \\
		&\pa{\rpa{F_N}_{\wt{j}}\pa{\theta_1,\dots,\wt{\theta}_{j},\dots,\theta_N}+\rpa{F_N}_{\wt{i}}\pa{\theta_1,\dots,\wt{\theta}_{i},\dots,\theta_N}}
		-F_N\pa{\theta_1,\dots,\theta_N}\Bigg\}.
	\end{split}
\end{equation}

As our interactions in the above become more and more concentrated around $\theta=0$ when $N$ increases, we can intuitively imagine that the herd will become more and more aligned -- more \textit{ordered}. This observation led the first author of the presented work to define a new notion of asymptotic correlation in \cite{E2024}\footnote{The definition given here is equivalent to that of \cite{E2024} where the limiting measure was $\mu_0\pa{\theta_1}\prod_{i=1}^{k-1} \delta\pa{\theta_i-\theta_{i+1}}$. The reason we recast it here is motivated by the new concept of \textit{partial order} which will shortly be defined.}:
\begin{definition}\label{def:order}
	Let $\mathcal{X}$ be a Polish space with a group operation $+$. We say that a sequence of symmetric probability measures\footnote{Symmetric probability measures are probability measures which are invariant under permutation of the variables.}, $\mu_N\in \PP\pa{\mathcal{X}^N}$ with $N\in\N$, is $\mu_0-$ordered for some probability measure $\mu_0\in \PP\pa{\mathcal{X}}$ if for any $k\in\N$
	\begin{equation}\label{eq:def_of_order}
		\Pi_k\pa{\mu_{N}}\pa{\theta_1,\dots,\theta_k} \underset{N\to\infty}{\overset{\text{weak}}{\longrightarrow}}\mu_0\pa{\theta_1}\prod_{i=2}^{k} \delta\pa{\theta_i-\theta_1},
	\end{equation}
	where $\Pi_k \pa{\mu_N}$ is the $k-$th marginal of $\mu_N$ and $\delta$ is the delta measure concentrated at the additive zero.
\end{definition}

Is this definition appropriate in the setting of the CL model? Exploring the BBGKY hierarchy of \eqref{eq:master_CL_rescaled}, i.e. the systems of equations for the marginals of $F_N$, (representing the evolution of a fixed number of animals), gives us a closed system of equations which unveils potential choices for spatial scaling $\epsilon_N$. In \cite{E2024}, the author has used simple Fourier analysis to identify three possible relationships between the time and interaction scaling under the assumption that the interaction generating function, $g$, has a finite moment of order $3$:
\begin{enumerate}[(i)]
	\item $N\epsilon_N^2 \underset{N\to\infty}{\longrightarrow}\infty$: in this case the effects of the interactions are \textit{slower} than the time scaling and consequently the adherence of the herd doesn't manage to build up fast enough. We expect that chaos will prevail in this case\footnote{In fact, we expect to achieve the time independent chaotic equilibrium in this case as our rescaled time goes to infinity with $N$.}.
	\item\label{item:diffusive} $N\epsilon_N^2=1$: in this case the effects of the interactions and time scaling are ``balanced'' in a diffusive manner. Carlen et al.\ have explored this scaling relationship in \cite{CCDW2013} and \cite{CDW2013} where they showed the breaking of chaoticity in this case\footnote{The same analysis holds if $\lim_{N\to\infty}N\epsilon_N^2=C$ with $0<C<\infty$.}.	 
	\item\label{item:order} $N\epsilon_N^2 \underset{N\to\infty}{\longrightarrow}0$: in this case the effects of the interactions are \textit{faster} than the time scaling and consequently we expect the rise of \textit{order}.
\end{enumerate} 

The last case was explored extensively in \cite{E2024}. Using the conventional notation $F_{N,k}=\Pi_k\pa{F_N}$ the author has shown the following:

\begin{theorem}\label{thm:main_order}
	Let $\br{F_N(t)}_{N\in\N}$ be the family of symmetric solutions to \eqref{eq:master_CL_rescaled} with initial data $\br{F_N(0)}_{N\in\N}$. Assume in addition that $\lim_{N\to\infty}N\epsilon_N^2=0$ and that $\br{F_{N,k}\pa{0}}_{N\in\N}$ converges weakly as $N$ goes to infinity to a family $f_{k,0} \in \PP\pa{\mathcal{T}^k}$ for any $k\in\N$. 
	Then for any $t>0$ and any $k\in\N$, $\br{F_{N,k}(t)}_{N\in\N}$ converges weakly as $N$ goes to infinity to $f_k(t)\in \PP\pa{\T^k}$ which satisfies
	\begin{equation}\label{eq:limit_of_F_N_K_no_order}
		\begin{aligned}
			&f_k\pa{\theta_1,\dots,\theta_k,t}= e^{-\lambda k\pa{k-1}t}f_{k,0}\pa{\theta_1,\dots,\theta_k}\\
			&+2\lambda\int_{0}^t e^{-\lambda k\pa{k-1}\pa{t-s}}\pa{ \sum_{i<j\leq k}f_{k-1}\pa{\theta_1,\dots, \wt{\theta_i},\dots,\theta_k,s}\delta\pa{\theta_i-\theta_j}}ds,
		\end{aligned}
	\end{equation}
	where the sum over an empty set is defined to be $0$.
	
	In particular, we have that $\br{f_k(t)}_{k\in\N}$ converges weakly as $t$ goes to infinity to an $f_{1,0}-$ordered family
	$$\lim_{t\to\infty}f_{k}(\theta_1,\dots, \theta_k,t) = f_{1,0}\pa{\theta_1}\prod_{j=2}^{k}\delta\pa{\theta_{i}-\theta_i},$$
	where the product over an empty set is defined to be $~1$. Moreover, if $\br{F_N(0)}_{N\in\N}$ is $f_{1,0}-$ordered then 
	$$f_k\pa{\theta_1,\dots,\theta_k,t}=f_{1,0}\pa{\theta_1}\prod_{j=2}^{k}\delta\pa{\theta_{i}-\theta_1}$$ 
	for all $t>0$.
\end{theorem}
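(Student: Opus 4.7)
My plan is to adapt the Fourier-analytic strategy used for the CL model in \cite{CDW2013,CCDW2013,E2024}. First, I would derive the BBGKY hierarchy by integrating the master equation \eqref{eq:master_CL_rescaled} over $\theta_{k+1},\dots,\theta_N$. Thanks to the convolutional structure of the interaction and the fact that $g_{\epsilon_N}$ is a probability density with respect to the normalised measure on $\T$, the hierarchy turns out to be closed: the right-hand side of the equation for $F_{N,k}$ depends only on $F_{N,k}$ itself, on its convolutions with $g_{\epsilon_N}$, and on its lower marginals.

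Next I would Fourier transform in all $k$ angular variables, obtaining a linear ODE system for $\wh{F_{N,k}}(n_1,\dots,n_k,t)$. Convolutions with $g_{\epsilon_N}$ become multiplications by $\wh{g_{\epsilon_N}}(n)$ (evaluated on appropriately shifted Fourier modes), and marginalisation corresponds to setting the relevant Fourier mode to zero. Since $g$ has a finite third moment, a Taylor expansion yields $\wh{g_{\epsilon_N}}(n) = 1 + O(\epsilon_N^2 n^2)$; hence every ODE coefficient carrying a pre-factor of $N(1 - \wh{g_{\epsilon_N}}(n))$ is of order $N\epsilon_N^2$ and vanishes as $N \to \infty$ under the scaling assumption. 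Identifying the surviving terms yields a limiting linear ODE whose physical-space form is
\begin{equation*}
\partial_t f_k = -\lambda k(k-1)\, f_k + 2\lambda \sum_{i<j\le k} f_{k-1}(\theta_1,\dots,\wt\theta_i,\dots,\theta_k)\,\delta(\theta_i - \theta_j),
\end{equation*}
whose Duhamel solution is precisely \eqref{eq:limit_of_F_N_K_no_order}. Weak convergence of $F_{N,k}(t)$ to $f_k(t)$ then follows from pointwise convergence of Fourier coefficients combined with the uniform bound $\abs{\wh{F_{N,k}}}\le 1$, applied via dominated convergence against trigonometric polynomials (which are dense in $C(\T^k)$).

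For the long-time limit I would argue by induction on $k$. The base case $k=1$ is immediate since the sum in \eqref{eq:limit_of_F_N_K_no_order} is empty, giving $f_1(t) \equiv f_{1,0}$. Assuming the statement at level $k-1$, the typical integrand $f_{k-1}(\theta_1,\dots,\wt\theta_i,\dots,\theta_k,s)\,\delta(\theta_i - \theta_j)$ converges weakly as $s \to \infty$ to $f_{1,0}(\theta_1)\prod_{\ell=2}^{k}\delta(\theta_\ell - \theta_1)$: the factor $\delta(\theta_i - \theta_j)$ supplies the variable that $f_{k-1}$ was missing (pinning $\theta_i = \theta_j$), while the inductive hypothesis forces the remaining variables to adhere. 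Summing the identical contribution over the $\binom{k}{2}$ pairs and using $\int_0^t 2\lambda e^{-\lambda k(k-1)(t-s)}\,ds \to 2/(k(k-1))$ delivers exactly the ordered state. The propagation claim is the same computation read in reverse: substituting ordered initial data into the right-hand side of \eqref{eq:limit_of_F_N_K_no_order} and performing the same collapse shows that $f_k(\cdot,t)$ retains its ordered form for every $t > 0$.

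The main obstacle I anticipate is the careful and uniform passage to the limit at the Fourier level. The regime $N\epsilon_N^2 \to 0$ is delicate: individual convolution corrections are asymptotically trivial, but one must rule out any silent accumulation of $N$-sized factors coming from the $\binom{k}{2}$ pair interactions in the BBGKY, and the combinatorial bookkeeping becomes heavier as $k$ grows. A secondary subtlety is that the limit $f_k(t)$ is a genuine singular measure rather than a function, so all convergence statements must be interpreted weakly and the Duhamel formula must be read in the sense of distributions, with test functions chosen carefully to probe the emerging ordered structure.
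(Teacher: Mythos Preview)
This theorem is not proved in the present paper: it is quoted from \cite{E2024} as background (see the sentence preceding the statement, ``The last case was explored extensively in \cite{E2024}''). That said, your plan is correct and matches both the argument in \cite{E2024} and the paper's own proof of the parallel result Theorem~\ref{thm:recursive_formula_for_sol} for the critical scaling $N\epsilon_N^2=1$: one starts from the closed Fourier-side recursion \eqref{eq:recursive}, observes that $N(1-\wh{g_{\epsilon_N}}(n))\to 0$ under $N\epsilon_N^2\to 0$ (via the Taylor estimate in Lemma~\ref{lem:on_g_low_and_high_frequencies}\eqref{item:low_frequencies}), and passes to the limit inductively by dominated convergence, exactly as you describe. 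Your treatment of the long-time and propagation statements via induction on $k$ in Fourier variables is likewise the standard route; the obstacles you flag (uniform control of the $N$-dependent exponents, interpreting the limit as a singular measure) are real but are handled precisely by working pointwise on Fourier coefficients and invoking the density of trigonometric polynomials, as in the general principle stated at the start of the proof of Theorem~\ref{thm:recursive_formula_for_sol}.
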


We will continue to use the convention that the sum over an empty set is defined to be $0$ and the product over an empty set is defined to be $1$ throughout this work.

Theorem \ref{thm:main_order} not only shows the propagation of order in the case where $\lim_{N\to\infty}N\epsilon_N^2=0$, it also shows that order will be \textit{generated} at large times -- though we need to be careful here as the result is obtained by first taking $N$ to infinity and then taking $t$ to infinity. 

Motivated by the above, it was enquired in \cite{E2024} whether or not a similar phenomena occurs in the critical case where $N\epsilon_N^2=1$. Unfortunately, or fortunately, order does not hold in that case:

\begin{theorem}\label{thm:main_in_between}
	Let $\br{F_N(t)}_{N\in\N}$ be the family of symmetric solutions to \eqref{eq:master_CL_rescaled} with initial data $\br{F_N(0)}_{N\in\N}$. Assume in addition that $N\epsilon_N^2=1$. Then $\br{F_N(t)}_{N\in\N} $  is neither chaotic nor ordered for any $t>0$.\\
	Moreover, if $\br{F_{N,1}(0)}_{N\in\N}$ and $\br{F_{N,2}(0)}_{N\in\N}$ converge weakly to $f_{1,0}\in \PP\pa{\T}$  and $f_{2,0}\in \PP\pa{\T^2}$ respectively then for all $t>0$, $\br{F_{N,1}(t)}_{N\in\N}$ and $\br{F_{N,2}(t)}_{N\in\N}$ converge to some $f_1(t)\in\PP\pa{\T}$ and $f_2(t)\in \PP\pa{\T^2}$ respectively and
	\begin{equation}\nonumber 
		\lim_{t\to\infty} f_1(\theta_1,t) =1,
	\end{equation}
	while
	\begin{equation}\nonumber 
		\lim_{t\to\infty}f_2(\theta_1,\theta_2,t) = \mathcal{H}\pa{\theta_2-\theta_1}
	\end{equation}
	with 
	$$\mathcal{H}\pa{\theta}= \sum_{n\in\Z}\frac{2}{m_2 n^2+2}e^{in \theta}=1+4\sum_{n\in\N}\frac{\cos\pa{n\theta}}{m_2 n^2+2},$$
	where $m_2=\int_{\R}x^2g(x)dx$ with $g$ being the interaction generating function. 
\end{theorem}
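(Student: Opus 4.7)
My approach is to work in Fourier space. Using the symmetry of $F_N$, the rescaled master equation \eqref{eq:master_CL_rescaled} produces a \emph{closed} hierarchy at the level of the first two Fourier marginals: $\wh{F_{N,1}}(n,t)$ and $\wh{F_{N,2}}(n_1,n_2,t)$ satisfy ODEs in which only these objects appear, a structural observation already exploited in the BBGKY analysis of \cite{E2024}. Splitting the pair sum $\sum_{i<j}$ according to whether the pair intersects $\br{1,2}$ in two, one, or zero indices, and noting that pairs disjoint from $\br{1,2}$ contribute nothing (the gain and loss terms each integrate to the relevant marginal), I obtain
\begin{equation*}
\partial_t \wh{F_{N,1}}(n,t) = \lambda N\pa{\wh{g_{\epsilon_N}}(n)-1}\wh{F_{N,1}}(n,t),
\end{equation*}
together with an analogous but longer ODE for $\wh{F_{N,2}}(n_1,n_2,t)$ whose right-hand side is a linear combination of $\wh{F_{N,1}}(n_1+n_2,t)$ and $\wh{F_{N,2}}(n_1,n_2,t)$ with coefficients built from $\wh{g_{\epsilon_N}}$. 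Since $g$ is even with finite third moment, one has $\wh{g_{\epsilon_N}}(n)=1-\epsilon_N^2 m_2 n^2/2+O(\epsilon_N^3)$, and the critical scaling $N\epsilon_N^2=1$ yields the limiting ODEs
\begin{equation*}
\partial_t \wh{f_1}(n,t) = -\tfrac{\lambda m_2 n^2}{2}\wh{f_1}(n,t),\qquad \partial_t \wh{f_2}(n_1,n_2,t) = 2\lambda \wh{f_1}(n_1+n_2,t) - \mu\, \wh{f_2}(n_1,n_2,t),
\end{equation*}
with $\mu := 2\lambda + \tfrac{\lambda m_2(n_1^2+n_2^2)}{2}$.

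The first ODE integrates to $\wh{f_1}(n,t)=e^{-\lambda m_2 n^2 t/2}\wh{f_{1,0}}(n)$, so as $t\to\infty$ the coefficients at $n\neq 0$ decay to $0$ while $\wh{f_1}(0,t)=1$, hence $f_1(\theta_1,t)\to 1$. The second ODE is solved by an integrating factor, and on substituting the explicit form of $\wh{f_1}$ I obtain a closed-form expression for $\wh{f_2}(n_1,n_2,t)$. Taking $t\to\infty$: when $n_1+n_2\neq 0$ all exponentials ($e^{-\mu t}$ and $e^{-\lambda m_2(n_1+n_2)^2 t/2}$) decay, producing the limit $0$; when $n_2=-n_1$ the driving term is the constant $2\lambda$ and the integral converges to $2\lambda/\mu = 2/(m_2 n_1^2+2)$. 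This is precisely the $n_1$-th Fourier coefficient of $\mathcal{H}$, so $f_2(\theta_1,\theta_2,t)\to \mathcal{H}(\theta_2-\theta_1)$ weakly.

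For the non-chaos/non-order assertion at every $t>0$, I compare these explicit formulas. The chaos relation $\wh{f_2}(n_1,n_2,t)=\wh{f_1}(n_1,t)\wh{f_1}(n_2,t)$ would demand that the factor $e^{-\mu t}=e^{-2\lambda t}e^{-\lambda m_2(n_1^2+n_2^2)t/2}$ appearing in $\wh{f_2}$ match the factor $e^{-\lambda m_2(n_1^2+n_2^2)t/2}$ arising from $\wh{f_1}(n_1,t)\wh{f_1}(n_2,t)$; this fails for any $t>0$ once the initial data has a non-trivial Fourier content. A similar comparison, say at $n_1=-n_2=1$, where the integrated contribution produces a strictly positive mass bounded by $2/(m_2+2)<1$, rules out the order relation $\wh{f_2}(n_1,n_2,t)=\wh{f_1}(n_1+n_2,t)$. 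The main technical obstacle is the justification of the limit $N\to\infty$ in the hierarchy: I need uniform-in-$n$ control on the $O(\epsilon_N^3)$ remainder in the expansion of $\wh{g_{\epsilon_N}}(n)$ and a stability estimate guaranteeing that convergence of the Fourier coefficients at $t=0$ (coming from the weak convergence of the initial marginals) propagates to any later time $t>0$. Once this passage is secured the remainder of the argument is reduced to explicit calculation and case inspection.
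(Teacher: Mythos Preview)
Your approach is essentially the paper's: Fourier-transform the hierarchy, exploit the closed recursion for the first two marginals (this is \eqref{eq:recursive} specialised to $k=1,2$), use the expansion of $\wh{g_{\epsilon_N}}$ from Lemma~\ref{lem:on_g_low_and_high_frequencies} together with $N\epsilon_N^2=1$, pass to the limit, and solve the resulting linear ODEs explicitly; the solutions \eqref{eq:about_f_1}--\eqref{eq:about_f_2} and their $t\to\infty$ behaviour are exactly what you describe, and the comparison of $\wh{f_2}$ against the tensorised and diagonal forms is how non-chaos and non-order are read off. One small correction on the technical obstacle you flag: uniform-in-$n$ control on the remainder is not needed, since weak convergence on $\T^k$ is equivalent to \emph{pointwise} convergence of Fourier coefficients (Lemma~\ref{lem:weak_convergence_and_fourier}); the paper therefore works at each fixed $(n_1,n_2)$, writes the hierarchy in Duhamel form~\eqref{eq:recursive}, and passes the limit by dominated convergence using only $|\wh{F_{N,k}}|\le 1$, as in the proof of Theorem~\ref{thm:recursive_formula_for_sol}.
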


\begin{figure}[H]\label{fig1}
\centering
\includegraphics[scale=0.3]{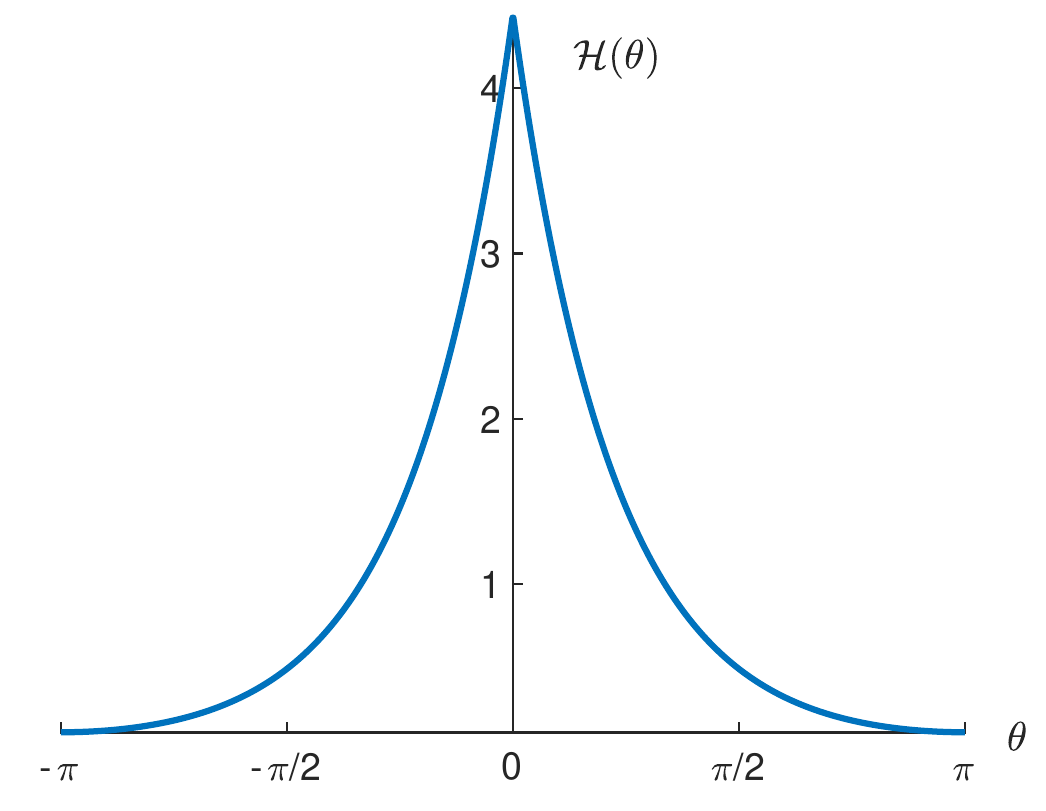}
\caption{\small{A plot of an approximation of $\mathcal{H}$ with $m_2=1$.}}
\end{figure}
The fact that $\mathcal{H}$ is concentrated around $0$ shows that while order doesn't emerge in the critical case, some partial adherence is observed. The possibility of such correlation arising in this case was conjectured in \cite{E2024} and is the goal and achievement of the presented work.

\subsection{Partial order and our main results} Motivated by the definition of order, Definition \ref{def:order}, and Theorem \ref{thm:main_in_between}, one seemingly reasonable option to define a notion of partial order is to replace the delta measures in \eqref{eq:def_of_order} with another measure, say $\nu$, which is somewhat concentrated around $\theta=0$\footnote{It is worth to note that \textit{this} is the reason why we have defined the notion of order in \eqref{eq:def_of_order} relative to a fixed variable (i.e. considering $\delta\pa{\theta_i-\theta_1}$ and not $\delta\pa{\theta_{i+1}-\theta_i}$ as in  the original definition in \cite{E2024}). By doing so we are able to more readily generalise this notion to partial order where we compare to a \textit{fixed} random leader.}. Some care is needed here: the fact that we used the delta measures in \eqref{eq:def_of_order} automatically implies that every element is on the same footing -- choosing $\theta_1$ as the variable for which we know the limiting distribution $\mu_0$ was arbitrary since any other element behaves exactly the same -- the measure that expresses the behaviour of $k$ elements is concentrated on the diagonal. In other words for any $i\not=j$
$$\mu_0\pa{\theta_i}\prod_{l\not=i}^{k} \delta\pa{\theta_l-\theta_i}=\mu_0\pa{\theta_j}\prod_{l\not=j}^{k} \delta\pa{\theta_l-\theta_j}.$$
When moving to the case where the measure of the $k$ elements is no longer concentrated on the diagonal, we need to take into account the intuition that \textit{every element} is a potential ``leader'', and since the original elements were indistinguishable -- every element has the same likelihood to be chosen. This motivates us to consider partial order as the existence of two measures, $\mu_0$ and $\nu$, such that 
	\begin{equation}\label{eq:naive_partial_order}
	\Pi_k\pa{\mu_{N}}\pa{\theta_1,\dots,\theta_k} \underset{N\to\infty}{\overset{\text{weak}}{\longrightarrow}} \frac{1}{k}\sum_{i=1}^k \mu_0\pa{\theta_i}\prod_{j\not=i}^{k} \nu\pa{\theta_j-\theta_i},
\end{equation}
where $\nu$ is an even measure on $\X$, i.e. $\nu(x)=\nu(-x)$. Following on Theorem \ref{thm:main_in_between}, one could hope that $\nu$ has a probability density function with respect to the underlying measure on $\T$, $\frac{d\theta}{2\pi}$.  \\
It turns out that the above is not attainable in the simple case of the CL model  \textit{unless} $\nu=\frac{d\theta}{2\pi}$ (this will be shown in \S \ref{sec:preliminaries}). In a sense, the relationship expressed in \eqref{eq:naive_partial_order} is too simplistic as it states that the correlation relation between any ``follower'' and ``leader'' is completely decoupled from all the other followers and independent of their number. In general, we need to allow for a more complex relation.

The above realisations motivate our notion of partial order. Before we define it formally, we remind the reader the notions of translation of measures and even measures:

\begin{definition}\label{def:translation_and_evenness}
	Let $\mathcal{X}$ be a Polish space with a continuous group operation $+$ and its inverse. Given a probability measure $\mu\in\PP\pa{\mathcal{X}}$ and an element $c\in\mathcal{X}$ we define the translated probability measure 
	\begin{equation}\label{eq:transalted_measure}
		\begin{split}
			\mu\pa{\cdot-c} = {\tau_c}_\# \mu,
		\end{split}
	\end{equation}
	where $\tau_c(x) = x+c$. In other words
	\begin{equation}\label{eq:translation_measure_for_integrals}
		\begin{split}
			\int_{\mathcal{X}}f\pa{x}d\mu\pa{x-c}=\int_{\mathcal{X}}f\pa{x+c}d\mu\pa{x}.
		\end{split}
	\end{equation}
	for all $f\in L^1\pa{\mathcal{X},\mu}$.\\
	We say that a probability measure $\mu\in \PP\pa{\mathcal{X}^{k}}$ with $k\in\N$ is even if 
		\begin{equation}\label{eq:even}
	 {\ii}_{\#}\mu=\mu
	\end{equation}
	where $\ii\pa{x_1,\dots,x_k}=\pa{-x_1,\dots,-x_k}$. In other words
	\begin{equation}\label{eq:even_for_integrals}
		\begin{split}
			\int_{\X}f\pa{x_1,\dots,x_k}d\mu(x)= \int_{\X}f\pa{-x_1,\dots,-x_k}d\mu(x)
		\end{split}
	\end{equation}
	for any $f\in L^1\pa{\X^k,\mu}$.
\end{definition}

\begin{definition}[Partial Order]\label{def:partial_order}
	Let $\mathcal{X}$ be a 
	Polish space with a continuous group operation $+$ and its inverse. We say that a sequence of symmetric probability measures, $\mu_N\in \PP\pa{\mathcal{X}^N}$ with $N\in\N$, is $\br{\pa{\eta_{k},\nu_{k-1}}}_{k\in\N}-$partially ordered where $\eta_{k}\in \PP\pa{\mathcal{X}}$ for any $k\in\N$ and $\nu_k\in \PP\pa{\mathcal{X}^{k}}$ are even and symmetric for any $k\in\N$
	 if 
	\begin{equation}\label{eq:def_of_partial_order}
		\Pi_k\pa{\mu_{N}}\pa{\theta_1,\dots,\theta_k} \underset{N\to\infty}{\overset{\text{weak}}{\longrightarrow}}
		\begin{cases}
			\eta_1\pa{\theta_1},& k=1,\\
			\frac{1}{k}\sum_{i=1}^k \eta_{k}\pa{\theta_i}\nu_{k-1}\pa{\theta_1-\theta_i,\dots,\widetilde{\theta_i-\theta_i},\dots, \theta_k-\theta_i}, & k\geq 2,
		\end{cases}
	\end{equation}
	for any $k\in\N$ where
	\begin{equation}\nonumber
		\begin{aligned}
			\eta_k&\pa{\cdot_i}\nu_{k-1}\pa{\cdot_1-\cdot_i,\dots,\widetilde{\cdot_i-\cdot_i},\dots, \cdot_k-\cdot_i}\pa{E}\\
			&=\int_{\X}\chi_E\pa{x_1+x_i,\dots, x_i, \dots, x_k+x_i}d\nu_{k-1}\pa{x_1,\dots,\wt{x_i},\dots,x_k}d\eta_k(x_i).
		\end{aligned}
	\end{equation}
\end{definition}
\begin{remark}\label{rem:why_totally_}
	We assumed that $\nu_k$ is even to capture the intuition that $\nu_k$ only cares about the ``distance'' between the ensemble $\pa{\theta_1,\dots,\wt{\theta_i},\dots,\theta_k}$ and $\theta_i$ in the sense that 
	$$\nu_{k-1}\pa{\theta_1-\theta_i,\dots,\widetilde{\theta_i-\theta_i},\dots, \theta_j-\theta_i,\dots, \theta_k-\theta_i}=\nu_{k-1}\pa{\theta_i-\theta_1,\dots,\widetilde{\theta_i-\theta_i},\dots, \theta_i-\theta_k}.$$
\end{remark}

In the presented work we will focus solely on the setting of the CL model, i.e. on the family of probability spaces $\br{\pa{\T^k, \frac{d\theta_1\dots d\theta_k}{\pa{2\pi}^k}}}_{k\in\N}$ where $\frac{d\theta_1\dots d\theta_k}{\pa{2\pi}^k}$ is the underlying measure when considering probability density functions.

Our first result shows the rise of partial order in the critical scaling of the CL model, $N\epsilon_N^2=1$.

\begin{theorem}\label{thm:main_partial_order}
	Let $\br{F_N(t)}_{N\in\N}$ be the family of symmetric solutions to \eqref{eq:master_CL_rescaled} with initial data $\br{F_N(0)}_{N\in\N}$. Assume in addition that $N\epsilon_N^2=1$, the interaction generating function $g$ has a finite moment of order $l \in \N\setminus \br{1,2}$, i.e. for some $l\in\N$, $l\geq 3$
	$$m_l =\int_{\R}\abs{x}^l g(x)dx <\infty,$$ 
	and $\br{F_{N,k}\pa{0}}_{N\in\N}$ converges weakly as $N$ goes to infinity to $f_{k,0} \in \PP\pa{\mathcal{T}^k}$ for any $k\in\N$. \\
	Then for any $t>0$ and any $k\in\N$, $\br{F_{N,k}(t)}_{N\in\N}$ converges weakly as $N$ goes to infinity to a symmetric $f_k(t)\in \PP\pa{\T^k}$ whose Fourier coefficients are given by
	\begin{equation}\label{eq:limit_of_F_N_K_partial_order}
		\begin{aligned}
			&\wh{f_k}\pa{n_1,\dots,n_k,t}= e^{-\frac{\lambda \pa{2k\pa{k-1}+m_2\pa{\sum_{r=1}^k n_r^2}}t}{2}}\wh{f_{k,0}}\pa{n_1,\dots,n_k}\\
			&+2\lambda\sum_{i<j\leq k}\int_{0}^t e^{-\frac{\lambda \pa{2k\pa{k-1}+m_2\pa{\sum_{r=1}^k n_r^2}}\pa{t-s}}{2}}\wh{f_{k-1}}\pa{n_1,\dots,\wt{n_i},\dots,n_i+n_j,\dots, \wt{n_j},\dots, n_k,s}ds,
		\end{aligned}
	\end{equation}
	where the position of $n_i+n_j$ is arbitrary due to the symmetry of $f_k$.
	 
	In addition we have that for any $k\in\N$
	$$\lim_{t\to\infty}f_{k}(\theta_1,\dots, \theta_k,t) = f_{k,\infty}(\theta_1,\dots, \theta_k) $$
	where $f_{k,\infty}\in \PP\pa{\T^k}$ is the probability measure whose Fourier coefficients are given by
	\begin{equation}\label{eq:formula_for_f_infty}
		\begin{split}
			&\wh{f_{1,\infty}}(n_1)=\delta_0\pa{n_1},\\
			&\wh{f_{k,\infty}}\pa{n_1,\dots,n_{k}}=\frac{\delta_0\pa{\sum_{r=1}^{k}n_r}}{k}\sum_{l=1}^{k}\nu_{k-1}\pa{n_1,\dots,\wt{n_l},\dots,n_{k}},\quad k\geq 2,
		\end{split}
	\end{equation}
	with 
	$$\delta_0\pa{n} = \begin{cases}
		1,& n=0,\\
		0,& n\ne0,
	\end{cases}$$
	and where  $\nu_k$ is an even and symmetric probability measure on $\T^k$ which is absolutely continuous with respect to $\frac{d\theta_1\dots d\theta_k}{\pa{2\pi}^k}$.  Moreover, the probability density of $\nu_k$, $\upnu_k$, is continuous and \eqref{eq:formula_for_f_infty} can be rewritten as
	\begin{equation}\label{eq:formula_for_f_infty_spatial}
		f_{k,\infty}\pa{\theta_1,\dots,\theta_k} = \begin{cases}
			\frac{d\theta_1}{2\pi},& k=1,\\
			\pa{\frac{1}{k}\sum_{l=1}^k \upnu_{k-1}\pa{\theta_1-\theta_l,\dots \wt{\theta_l-\theta_l},\dots,\theta_k-\theta_l}}\frac{d\theta_1\dots d\theta_k}{\pa{2\pi}^k}, & k\geq 2.
		\end{cases}
	\end{equation}
	Furthermore, $\nu_k$ is the unique probability measure on $\PP\pa{\T^k}$ whose Fourier coefficients are given by\footnote{Notice that when $k=1$
		$$\xi_1\pa{n_1,\dots,\wt{n_1}, - \sum_{r\not= 1}n_r, \dots, n_{1}} = \xi_1(0)=\xi_1(n_1)=1, $$
		so the formula still makes sense.}
	\begin{equation}\label{eq:explicit_nu_k}
		\begin{split}
			& \wh{\nu_{k}}\pa{n_1,\dots,n_{k}} =\sum_{i=1}^k\frac{2(k+1)\xi_k\pa{n_1,\dots,\wt{n_i}, - \sum_{r\not= i}n_r, \dots, n_{k}}}{2k(k+1)+m_2\sum_{r=1}^{k}n_r^2 + m_2\pa{\sum_{r=1}^{k}n_r}^2 }
		\end{split}
	\end{equation}
	where $\xi_k$ satisfies the recursive relation
	\begin{equation}\label{eq:explicit_xi_k}
		\begin{split}
			&\wh{\xi_1}(n_1)=1,\\
			& \wh{\xi_{k}}\pa{n_1,\dots,n_{k}} =
			\frac{4\sum_{i< j \leq k}\xi_{k-1}\pa{n_1,\dots,\wt{n_i},\dots,n_i+n_j,\dots, \wt{n_j},\dots, n_{k}}}{2k(k-1)+m_2\sum_{r=1}^{k}n_r^2 },\quad k\geq 2.
		\end{split}
	\end{equation}
	 Lastly, the family $\br{f_{N,\infty}}_{N\in\N}$,  is $\br{\pa{\frac{d\theta}{2\pi},\nu_{k-1}}}_{k\in\N}-$partially ordered.
\end{theorem}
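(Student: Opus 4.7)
The plan is to split the argument into three stages. In the first stage I would derive \eqref{eq:limit_of_F_N_K_partial_order}. Taking the $k$-th marginal of the master equation \eqref{eq:master_CL_rescaled} and Fourier-transforming produces a BBGKY-type hierarchy for $\wh{F_{N,k}}\pa{n_1,\ldots,n_k,t}$. I would then use the small-$\epsilon$ expansion $\wh{g_{\epsilon_N}}(n)=1-\tfrac{m_2 n^2}{2}\epsilon_N^2+O(\epsilon_N^3)$, justified by the moment assumption $m_l<\infty$ for some $l\ge 3$, combined with the critical scaling $N\epsilon_N^2=1$, to cancel the $N$-prefactor and produce a finite limiting hierarchy. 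Crucially, in this limit $\wh{f_k}$ couples only to $\wh{f_{k-1}}$ via the mode-contraction $\pa{n_i,n_j}\mapsto n_i+n_j$ (since Fourier-transforming the convolution-with-$g_{\epsilon_N}$ part identifies the two frequencies). Applying Duhamel to the resulting linear inhomogeneous ODE in $t$ would give \eqref{eq:limit_of_F_N_K_partial_order} directly.

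The second stage computes the long-time limit inductively in $k$. For $k=1$ the Duhamel sum is empty and $\wh{f_1}(n_1,t)=e^{-\lambda m_2 n_1^2 t/2}\wh{f_{1,0}}(n_1)\to\delta_0(n_1)$, giving $f_{1,\infty}=\frac{d\theta_1}{2\pi}$. For $k\ge 2$, the initial-data term in \eqref{eq:limit_of_F_N_K_partial_order} decays to zero outside the all-zero mode, while the Duhamel integral, by dominated convergence together with the inductive limit $\wh{f_{k-1}}(s)\to\wh{f_{k-1,\infty}}$, would converge to
\begin{equation}\nonumber
\wh{f_{k,\infty}}\pa{n_1,\ldots,n_k}=\frac{4\sum_{i<j\le k}\wh{f_{k-1,\infty}}\pa{n_1,\ldots,\wt{n_i},\ldots,n_i+n_j,\ldots,\wt{n_j},\ldots,n_k}}{2k(k-1)+m_2\sum_{r=1}^k n_r^2}.
\end{equation}
The constraint $\sum_r n_r=0$ propagates downward from $k=1$ (since $\pa{n_i,n_j}\mapsto n_i+n_j$ preserves the total sum), producing the $\delta_0\pa{\sum_r n_r}$ factor of \eqref{eq:formula_for_f_infty}. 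Re-grouping the pair-sum by the role of the ``leader'' index, and introducing $\xi_k$ through \eqref{eq:explicit_xi_k} to decouple the recursion, then yields the formula \eqref{eq:explicit_nu_k} for $\wh{\nu_k}$.

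In the third stage I would show $\nu_k$ is a genuine probability measure with continuous density. An induction on \eqref{eq:explicit_xi_k} gives $\abs{\wh{\xi_k}}\le 1$, since the denominator $2k(k-1)+m_2\sum n_r^2$ dominates the $\binom{k}{2}$-term numerator at the trivial bound $\abs{\wh{\xi_{k-1}}}\le 1$. Substituting this estimate into \eqref{eq:explicit_nu_k} yields enough decay of $\wh{\nu_k}$ (enhanced further by the inductive decay of $\xi_k$) for absolute summability over $\Z^k$, hence a continuous $\upnu_k$ via the inverse Fourier series; evenness and symmetry are structural. Non-negativity and unit mass would follow from the fact that $f_{k,\infty}$, as a weak limit of probability measures, is itself a probability measure, and the affine dependence of \eqref{eq:formula_for_f_infty_spatial} on $\upnu_{k-1}$, together with testing against suitable indicator functions concentrated near a single leader index, forces $\upnu_k\ge 0$ and $\int\upnu_k=1$. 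Finally, the partial-order claim is immediate: the consistency of marginalization under weak limits gives $\Pi_k\pa{f_{N,\infty}}=f_{k,\infty}$ for every $N\ge k$, and \eqref{eq:formula_for_f_infty_spatial} is exactly the form demanded by Definition \ref{def:partial_order} with $\eta_k=\frac{d\theta}{2\pi}$ and $\nu_{k-1}$ as constructed. The hardest part will be this third stage: simultaneously establishing summability of $\wh{\nu_k}$ over $\Z^k$ (marginal in low $k$, where one must track the decay inherited by $\xi_k$ from its own recursion) and extracting the non-negativity of $\upnu_k$ from the probability-measure nature of $f_{k,\infty}$ rather than from any manifest sign in \eqref{eq:explicit_nu_k}.
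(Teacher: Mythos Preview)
Your Stages 1 and 2 are sound and track the paper closely; in fact your dominated-convergence argument for the $t\to\infty$ limit (change variables $u=t-s$ in the Duhamel integral, use $|\wh{f_{k-1}}|\le 1$ and $e^{-\alpha u}\in L^1$) is cleaner than the paper's route, which builds an explicit decomposition $\wh{f_k}=\sum_h e^{-\lambda h(h-1)t}b_{h,k}+a_k\,\wh{f_{1,0}}\,e^{-\lambda m_2(\sum n_r)^2 t/2}$ with uniform bounds on $a_k,b_{h,k}$ (those bounds are needed later for quantitative rates, not here).

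Stage 3, however, has two genuine gaps. First, summability: the single-step bound $|\wh\xi_k|\le 1$ gives only $|\wh{\nu_k}(n)|\lesssim(1+|n|^2)^{-1}$, which is \emph{not} in $\ell^1(\Z^k)$ once $k\ge 2$. Your parenthetical ``enhanced further by the inductive decay of $\xi_k$'' is the whole problem, not a remark; one must track how the recursion trades modes under $(n_i,n_j)\mapsto n_i+n_j$, and the cancellation can be severe (e.g.\ $n_1=M$, $n_2=-M$ collapses to $0$). The paper handles this by proving $\sum_{\Z^k}\delta_0(\sum n_r)|a_k(n)|<\infty$ via a counting lemma that bounds $\sup_{A,B}\sum_{n}|m(n^2+An+B)+K|^{-1}$ uniformly, then peels off one index per induction step after a telescoping change of variables $p_l=\sum_{r\le l}n_r$. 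Without something of this flavour your $\ell^1$ claim is unproven.

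Second, non-negativity: you cannot extract $\upnu_{k-1}\ge 0$ by ``testing near a single leader index''. The representation \eqref{eq:formula_for_f_infty_spatial} evaluates $\upnu_{k-1}$ at $k$ mutually coupled arguments, and on the torus there is no way to localise so that only one term survives (already for $k=3$, setting $\theta_1=0$ leaves $\upnu_2(\phi_1,\phi_2)+\upnu_2(-\phi_1,\phi_2-\phi_1)+\upnu_2(-\phi_2,\phi_1-\phi_2)\ge 0$, which does not force $\upnu_2\ge 0$). The paper instead applies Bochner's theorem directly to $\wh{\nu_k}$: it recognises the scalar factor $\frac{2(k+1)}{2k(k+1)+m_2\sum p_r^2+m_2(\sum p_r)^2}$ as the restriction to $\Z^{k+1}$ of the Fourier transform of the (non-negative, $L^1$) fundamental solution of $a^2-\Delta$ on $\R^{k+1}$, so the matrix $\bigl(\wh{\nu_k}(p_n-p_m)\bigr)_{n,m}$ is a sum of Hadamard products of positive semidefinite matrices (one factor from $f_{k,\infty}\in\PP(\T^k)$, one from the Bessel kernel), hence PSD by the Schur product theorem. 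This step is not visible from your outline and is the real content of showing $\nu_k\in\PP(\T^k)$.
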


Much like Theorem \ref{thm:main_order}, the above theorem indicates that partial order is generated as $t$ goes to infinity. A natural question at this point would be: does partial order also propagate in our setting? Unfortunately, the answer to this question is in the negative in general:
\begin{theorem}\label{thm:lack_of_propagation}
	Let $f_1(t)\in \PP\pa{\T}$ and $f_2(t)\in \PP\pa{\T^2}$ be the limits as $N$ goes to infinity of the solutions $\br{F_{N,1}(t)}_{N\in\N}$ and $\br{F_{N,2}(t)}_{N\in\N}$ to \eqref{eq:master_CL_rescaled} with $N\epsilon_N^2=1$ as described in Theorem \ref{thm:main_partial_order}. Then
	\begin{enumerate}[(i)]
		\item\label{item:no_partial_order_for_some_time} If there exists $n_0\in \Z $ such that $\abs{n_0} > \sqrt{\frac{2}{m_2}}$ and 
		$$\wh{f_{2,0}}\pa{n_0,n_0} + \frac{2\wh{f_{1,0}}\pa{2n_0}}{m_2n_0^2-2}\ne 0$$
		then there exists no $~\eta(t)\in \PP\pa{\T}$ and an even $~\nu(t)\in\PP\pa{\T} $ such that
		\begin{equation}\label{eq:f_2_for_lack}
			f_2\pa{\theta_1,\theta_2,t} = \frac{1}{2}\pa{\mu\pa{\theta_1,t}\nu\pa{\theta_2-\theta_1,t}+\mu\pa{\theta_2,t}\nu\pa{\theta_2-\theta_1,t}}
		\end{equation}
		for all $t\in \pa{0,\infty}$. 
		\item\label{item:no_partial_order_for_all_time} If there exists a polynomial function $p$ such that 
		$$ \limsup_{n\to \pm\infty}\abs{p(n)\pa{\wh{f_{2,0}}\pa{n,n} + \frac{2\wh{f_{1,0}}\pa{2n}}{m_2n^2-2}}}>0,$$
		then \eqref{eq:f_2_for_lack} can't hold for any $~\eta(t)\in \PP\pa{\T}$ and an even $~\nu(t)\in \PP\pa{\T}$, and any $t\in \pa{0,\infty}$.
	\end{enumerate}
	In particular, under the same conditions of Theorem \ref{thm:main_partial_order}, the initial data 
	$$F_{N}\pa{\theta_1,\dots,\theta_N,0}=\mu_0(\theta_1)\prod_{i=2}^N\delta\pa{\theta_N-\theta_1}$$
	is $~\br{\pa{\mu_0, \prod_{i=1}^{k-1} \delta}}_{k\in\N}-$partially ordered and the condition of \eqref{item:no_partial_order_for_all_time} holds when $\mu_0\pa{\theta}$ corresponds to the probability density with Fourier coefficients $\wh{\mu_0}(n) = \frac{2}{2+n^2}$\footnote{A close look shows that $\nu_1$ from Theorem \ref{thm:main_partial_order} satisfies this when $m_2=1$ and as such there indeed exists such probability measure.}.
	Consequently, partial order does not propagate in the rescaled CL model when $N\epsilon_N^2=1$ in general.
\end{theorem}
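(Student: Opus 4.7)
The plan is to derive a necessary Fourier-space identity from the hypothesized decomposition and then show, using the explicit formulas of Theorem~\ref{thm:main_partial_order}, that this identity must fail. Assume toward a contradiction that at time $t$ one has $f_2(\theta_1,\theta_2,t)=\tfrac{1}{2}\bigl[\eta(\theta_1,t)+\eta(\theta_2,t)\bigr]\nu(\theta_2-\theta_1,t)$ with $\nu(\cdot,t)$ even. Taking Fourier coefficients and exploiting evenness of $\nu$,
\begin{equation*}
\wh{f_2}(n_1,n_2,t)=\tfrac{1}{2}\wh{\eta}(n_1+n_2,t)\bigl(\wh{\nu}(n_1,t)+\wh{\nu}(n_2,t)\bigr).
\end{equation*}
Evaluating at $(n,-n)$ identifies $\wh{\nu}(n,t)=\wh{f_2}(n,-n,t)$, while equating the two expressions for $\wh{\eta}(2n,t)$ obtained at $(n,n)$ and at $(2n,0)$ (the latter using $\wh{f_2}(2n,0,t)=\wh{f_1}(2n,t)$) produces the consistency identity
\begin{equation*}
\wh{f_2}(n,n,t)\bigl[\wh{f_2}(2n,-2n,t)+1\bigr]=2\wh{f_1}(2n,t)\wh{f_2}(n,-n,t)\qquad\text{for every }n\in\Z,
\end{equation*}
which I will refute.

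Next, I use \eqref{eq:limit_of_F_N_K_partial_order} to obtain closed forms. Solving the recursion at $k=2$ gives, for $m_2n^2\ne 2$,
\begin{equation*}
\wh{f_2}(n,n,t)=\alpha_n e^{-\lambda(2+m_2n^2)t}-\tfrac{2\wh{f_{1,0}}(2n)}{m_2n^2-2}e^{-2\lambda m_2n^2 t},\quad \alpha_n:=\wh{f_{2,0}}(n,n)+\tfrac{2\wh{f_{1,0}}(2n)}{m_2n^2-2},
\end{equation*}
together with analogous expressions for $\wh{f_2}(n,-n,t)$ and $\wh{f_2}(2n,-2n,t)$, and $\wh{f_1}(2n,t)=\wh{f_{1,0}}(2n)e^{-2\lambda m_2n^2 t}$. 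Expanding the difference of the two sides of the consistency identity yields a linear combination of five exponentials in $t$ at rates $\lambda(2+m_2n^2)$, $2\lambda m_2n^2$, $\lambda(2+3m_2n^2)$, $\lambda(4+5m_2n^2)$, and $\lambda(2+6m_2n^2)$, pairwise distinct whenever $m_2n^2\ne 2$; a direct computation shows that the coefficient of the slowest rate $e^{-\lambda(2+m_2n^2)t}$ is $\alpha_n\bigl(1+\tfrac{2}{2+4m_2n^2}\bigr)$.

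For part~\eqref{item:no_partial_order_for_some_time}, take $n=n_0$ with $|n_0|>\sqrt{2/m_2}$: the hypothesis $\alpha_{n_0}\ne 0$ makes the coefficient of the slowest exponential nonzero, so by linear independence of distinct exponentials in $t$ the consistency identity at $n_0$ is a nonzero function of $t$ and must fail at some $t\in(0,\infty)$, ruling out the decomposition there. For part~\eqref{item:no_partial_order_for_all_time}, fix any $t>0$: each of the four subdominant exponentials has coefficient uniformly bounded in $n$ (using $\|\wh{f_{k,0}}\|_\infty\leq 1$) and exponent at least $2\lambda m_2n^2$, hence modulus at most $Ce^{-2\lambda m_2n^2 t}$, whereas the polynomial lower bound produces a subsequence $|n_k|\to\infty$ with $|\alpha_{n_k}|\geq c/|p(n_k)|$, giving a main-term lower bound $\tfrac{c}{|p(n_k)|}e^{-\lambda(2+m_2n_k^2)t}$. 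Their ratio is at least $\tfrac{c}{C|p(n_k)|}e^{\lambda(m_2n_k^2-2)t}\to\infty$ as $k\to\infty$, since exponential growth in $n_k^2$ beats any polynomial in $n_k$; so for large $k$ the main term dominates and the identity fails at $(n_k,t)$, excluding any decomposition at that $t$.

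For the specified initial datum $F_N(\theta,0)=\mu_0(\theta_1)\prod_{i\geq 2}\delta(\theta_i-\theta_1)$, the $k$-th marginal is the diagonal measure with weight $\mu_0$, so $\wh{f_{1,0}}(n)=\wh{\mu_0}(n)$ and $\wh{f_{2,0}}(n_1,n_2)=\wh{\mu_0}(n_1+n_2)$. Substituting yields $\alpha_n=\wh{\mu_0}(2n)\cdot\tfrac{m_2n^2}{m_2n^2-2}$; for $\wh{\mu_0}(n)=\tfrac{2}{2+n^2}$ (which coincides with $\wh{\nu_1}$ from Theorem~\ref{thm:main_partial_order} at $m_2=1$ and thus is a valid probability-measure Fourier sequence) one gets $\alpha_n\sim\tfrac{1}{2n^2}$ as $|n|\to\infty$, verifying the hypothesis of part~\eqref{item:no_partial_order_for_all_time} with $p(n)=n^2$. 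I expect the main technical hurdle to be the careful bookkeeping of the five exponential rates and the implicit constants in the subdominant bound, and making sure that the polynomial-lower-bound subsequence really does overwhelm the four subdominant contributions uniformly.
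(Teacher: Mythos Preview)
Your proposal is correct and follows essentially the same route as the paper: you derive the identical consistency identity $2\wh{f_1}(2n,t)\wh{f_2}(n,-n,t)=\wh{f_2}(n,n,t)\bigl(\wh{f_2}(2n,-2n,t)+1\bigr)$ by evaluating the assumed decomposition at $(n,-n)$ and $(2n,0)$, then refute it using the explicit exponential formulas for $\wh{f_1},\wh{f_2}$. The only cosmetic difference is that for (i) the paper simply lets $t\to\infty$ in the rearranged identity rather than invoking linear independence of the five exponentials, and for (ii) the paper bounds $|p(n)\alpha_n|$ directly from the identity rather than separating dominant and subdominant terms; both variants are equivalent to yours.
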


With a lack of propagation property for our model, we turn our attention to the generation of partial order (as indicated in Theorem \ref{thm:main_partial_order}) and ask ourselves if it bears any meaning for the family of marginal solutions to \eqref{eq:master_CL_rescaled}, $\br{F_{N,k}(t)}_{N\in\N}$. After all, the precise claim of Theorem \ref{thm:main_partial_order} is that 
$$\lim_{t\to\infty}\pa{\lim_{N\to\infty}F_{N,k}(t)}=f_{k,\infty}$$
is partially ordered. As the above is an iterated limit, issues may arise that will prevent us from saying that $F_{N,k}(t)$ and $f_{k,\infty}$ are ``close'' for large enough $N$ and $t$. The last theorem of our work addresses these concerns and shows that the $F_{N,k}(t)$ converges weakly to $f_{k,\infty}$ \textit{uniformly} in $N$ and $t$ under a few additional conditions on the interaction generating function. The weak convergence will be quantitatively estimated by the Fourier coefficients of the appropriate measures (more on the connection between weak convergence and the Fourier coefficients can be found in \S\ref{sec:preliminaries}). We will need the following two definitions to be able to state our theorem: 

\begin{definition}\label{def:}
	A probability density $g\in \PP\pa{\R,dx}$ is said to be a \textit{strong} interaction generating function if $g$ is an even function with a finite moment of order $l\in \N \setminus \br{1,2}$, and if $g\in L^p\pa{\R}$ for some $p>1$.
\end{definition}

\begin{definition}\label{def:k_level_set_of_Z_r}
	Let $k\in\N$ be given and let $r\in\N$ be such that $r\leq k$. The $k-$th level set of $\N^r$ is the set 
	\begin{equation}\nonumber 
		\mathfrak{L}_{k}\pa{r} = \br{\bm{p}=\pa{p_1,\dots,p_r}\in \N^r\;|\; \sum_{l=1}^r p_l=k}.
	\end{equation}
	For a given $\sigma\in S^k$, where $S^{k}$ is the permutation group of $\br{1,\dots,k}$, and an element of $\bm{p}\in \mathfrak{L}_{k}\pa{r}$  we define the map
	$$s_{\bm{p}}^{\sigma}:\Z^k \to \Z^r$$
	by
	$$s_{\bm{p}}^{\sigma}\pa{n_1,\dots,n_k} = \pa{\sum_{m=1}^{p_1}n_{\sigma(m)},\sum_{m=p_1+1}^{p_1+p_2}n_{\sigma(m)},\dots, \sum_{m=p_1+\dots+p_{r-1}+1}^{k}n_{\sigma(m)}}.$$ 
\end{definition}

\begin{theorem}\label{thm:quantitative_convergence}
		Let $\br{F_N(t)}_{N\in\N}$ be the family of symmetric solutions to \eqref{eq:master_CL_rescaled} with a strong interaction generating function $g$ and with initial data $\br{F_N(0)}_{N\in\N}$. Assume in addition that $N\epsilon_N^2=1$ and that $\br{F_{N,k}\pa{0}}_{N\in\N}$ converges weakly as $N$ goes to infinity to $f_{k,0} \in \PP\pa{\mathcal{T}^k}$ for any $k\in\N$. \\
		Then, there exist explicit $\mathfrak{N}_0\in \N$ and $\gamma>0$ that depend only on the interaction generating function $g$, and explicit constants $\mathcal{C}_k,\mathcal{D}_k$ such that for any 
		$$N\geq \max\pa{\mathfrak{N}_0, \frac{\pa{2m_l}^{\frac{l}{2}}}{\pi^2},\pa{\frac{32m_l }{\pi^l\max\pa{8,m_2}}}^{\frac{2}{l-2}},2k, \pa{\frac{96m_l k}{\pi^lm_2}}^{\frac{2}{l-2}}}$$
		we have that for any  $\pa{n_1,\dots,n_{k}}\in \Z^{k}$
		\allowdisplaybreaks
		\begin{align}
			\nonumber&\abs{\wh{F_{N,k}}\pa{n_1,\dots,n_{k},t}-\wh{f_{k,\infty}}\pa{n_1,\dots,n_{k}}} \leq 
			\abs{\wh{F_{N,k}}\pa{n_1,\dots,n_{k},0}-\wh{f_{k,0}}\pa{n_1,\dots,n_{k}}} \\
				\label{eq:quantitative_convergence} & + \pa{\frac{N}{N-1}}^{k-1}\pa{k-1}\max_{\fontsize{5}{4}\selectfont{\begin{matrix}
							r\in \br{1,\dots,k-1},\\
							\bm{p}\in \mathfrak{L}_{k}\pa{r},\;\;\sigma\in S^k
					\end{matrix}}, }\abs{\wh{F_{N,r}}\pa{s_{\bm{p}}^{\sigma}\pa{n_1,\dots,n_k},0}-\wh{f_{r,0}}\pa{s_{\bm{p}}^{\sigma}\pa{n_1,\dots,n_k}}}\\
			\nonumber& +2e^{-\lambda k(k-1)t}e^{-\lambda \gamma \sqrt[\kappa(l)~]{N} t}+
			\frac{\mathcal{C}_k}{\sqrt[\kappa(l)~]{N}}+	\mathcal{D}_k\pa{\frac{ e^{-2\lambda t}}{1-e^{-2\lambda t}}+e^{-\frac{\lambda m_2 t}{2}}}
		\end{align}
		where the maximum over an empty set is defined to be zero and 
		\begin{equation}\nonumber
			\kappa(l) = \begin{cases}
				3,& l=3,\\
				2, & l\geq 4.
			\end{cases}
		\end{equation}
\end{theorem}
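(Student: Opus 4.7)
The plan is to work entirely on the Fourier side and follow the same strategy that produced Theorem \ref{thm:main_partial_order}, but keep careful track of all quantitative errors. First I would take Fourier coefficients of the rescaled master equation \eqref{eq:master_CL_rescaled}, integrate out $\theta_{k+1},\dots,\theta_N$, and exploit the symmetry of $F_N$ to obtain a closed linear ODE of the form
\begin{equation}\nonumber
\partial_t \wh{F_{N,k}}\pa{n_1,\dots,n_k,t}=-\A_N^{(k)}\pa{n_1,\dots,n_k}\wh{F_{N,k}}\pa{n_1,\dots,n_k,t}+\B_N^{(k)}\rpa{\wh{F_{N,k-1}}}\pa{n_1,\dots,n_k,t},
\end{equation}
where $\A_N^{(k)}$ is a diagonal symbol built from sums of $\wh{g_{\epsilon_N}}\pa{n_r}$, $\wh{g_{\epsilon_N}}\pa{n_r+n_s}$ and the combinatorial $\frac{2\lambda N}{N-1}\sum_{i<j}$ weight (including the contribution of pairs with one or both indices in $\br{k+1,\dots,N}$), and $\B_N^{(k)}$ is a finite sum of evaluations of $\wh{F_{N,k-1}}$ at folded frequencies of the form $s_{\bm p}^{\sigma}\pa{n_1,\dots,n_k}$ as in Definition \ref{def:k_level_set_of_Z_r}. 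Duhamel's formula then writes $\wh{F_{N,k}}\pa{\cdot,t}$ as the sum of the free-evolution term $e^{-\A_N^{(k)}t}\wh{F_{N,k}}\pa{\cdot,0}$ and a time integral of the source.

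Using the moment assumption and $N\epsilon_N^2=1$, the Taylor expansion
\begin{equation}\nonumber
\wh{g_{\epsilon_N}}\pa{n}=1-\frac{m_2 n^2}{2N}+r_N\pa{n},\qquad \abs{r_N\pa{n}}\leq \frac{C_l \abs{n}^l}{N^{l/2}},
\end{equation}
makes $\A_N^{(k)}$ compare to the limiting symbol $\A^{(k)}\pa{n_1,\dots,n_k}=\lambda k\pa{k-1}+\tfrac{\lambda m_2}{2}\sum_r n_r^2$ of \eqref{eq:limit_of_F_N_K_partial_order}. I would split Fourier space at a cutoff $\mathfrak M_N\sim \sqrt[\kappa(l)]N$: on the low-frequency window $\max_r\abs{n_r}\leq \mathfrak M_N$ the symbol perturbation is $O\pa{N^{-1/\kappa(l)}}$, while on the high-frequency window the strong-interaction hypothesis $g\in L^p$ with $p>1$ forces $\abs{\wh{g_{\epsilon_N}}\pa{n}}$ to decay (via interpolation between Hausdorff--Young for $\wh g$ and the trivial $\ell^\infty$ bound), so $\A_N^{(k)}\geq \lambda k\pa{k-1}+\lambda\gamma\,\mathfrak M_N$ for an explicit $\gamma>0$ depending only on $g$. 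This produces both the $\frac{\mathcal C_k}{\sqrt[\kappa(l)]N}$ error on the free-evolution piece and the damping $2e^{-\lambda k\pa{k-1}t}e^{-\lambda\gamma\sqrt[\kappa(l)]N t}$ on the non-constant-frequency initial data. The dichotomy $\kappa(3)=3$ versus $\kappa(l\geq 4)=2$ is exactly the optimum at which the Taylor remainder and the cutoff loss balance; the threshold on $N$ in the hypothesis is precisely what is needed to ensure $\mathfrak M_N$ lies comfortably in the regime where this balance works and where $\pa{\frac{N}{N-1}}^{k-1}$ is bounded.

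Setting $\mathcal E_{N,k}\pa{\vec n,t}:=\abs{\wh{F_{N,k}}\pa{\vec n,t}-\wh{f_k}\pa{\vec n,t}}$ and subtracting the Duhamel representations for $\wh{F_{N,k}}$ and for $\wh{f_k}$ from \eqref{eq:limit_of_F_N_K_partial_order}, the symbol comparison controls the diagonal mismatch while the source-term difference $\B_N^{(k)}\rpa{\wh{F_{N,k-1}}}-\B^{(k)}\rpa{\wh{f_{k-1}}}$ is controlled in terms of $\mathcal E_{N,k-1}$ at the folded frequencies. Induction on $k$ then produces the maximum over $\pa{r,\bm p,\sigma}$ and the prefactor $\pa{\frac{N}{N-1}}^{k-1}\pa{k-1}$ in \eqref{eq:quantitative_convergence}. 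Finally, to pass from $\wh{f_k}\pa{\vec n,t}$ to $\wh{f_{k,\infty}}\pa{\vec n}$ I would iterate \eqref{eq:limit_of_F_N_K_partial_order} down to $k=1$: the pure combinatorial rate $e^{-\lambda k\pa{k-1}t}$ summed over the hierarchy gives, via a geometric-series computation, the factor $\tfrac{e^{-2\lambda t}}{1-e^{-2\lambda t}}$, while the frequency-dependent damping $e^{-\lambda m_2 \sum_r n_r^2 t/2}$ contributes $e^{-\lambda m_2 t/2}$ on any frequency with $\sum_r n_r^2\geq 1$, the zero-sum case having been already absorbed into $\wh{f_{k,\infty}}$ through the structure of \eqref{eq:formula_for_f_infty}--\eqref{eq:explicit_xi_k}. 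A triangle inequality closes the bound, with the constant $\mathcal D_k$ bookkeeping the explicit combinatorics of the iteration.

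\textbf{Main obstacle.} The genuine difficulty is a uniform-in-$\vec n$ estimate: both $\wh{F_{N,k}}$ and $\wh{f_{k,\infty}}$ are only a priori bounded by $1$, so the na\"ive error on the high-frequency tail is $O(1)$ and would destroy the whole argument. The strong-interaction hypothesis $g\in L^p$ is what rescues us, because it gives an $N$-uniform algebraic decay of $\wh{g_{\epsilon_N}}$ and therefore a large coercive gap $\A_N^{(k)}\gtrsim \lambda\gamma\sqrt[\kappa(l)]N$ above the cutoff. Calibrating this gap against the Taylor remainder on the low-frequency window, while simultaneously propagating the inductive losses $\pa{\frac{N}{N-1}}^{k-1}$ and $k$ through the hierarchy, is the delicate point; it is also what forces the explicit form of $\kappa(l)$ and of the lower bound on $N$, and it is where the explicit dependence of $\mathcal C_k,\mathcal D_k,\mathfrak N_0,\gamma$ on $g$ and $l$ is ultimately extracted.
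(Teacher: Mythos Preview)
Your overall architecture is exactly the paper's: a triangle inequality between $\wh{F_{N,k}}$, $\wh{f_k}$, and $\wh{f_{k,\infty}}$; Duhamel on the recursive marginal equation \eqref{eq:recursive}; a low/high-frequency split governed by a parameter $\alpha_N$ to be optimised; induction on $k$ to produce the $\pa{\tfrac{N}{N-1}}^{k-1}(k-1)\max_{r,\bm p,\sigma}$ term; and an iteration of \eqref{eq:limit_of_F_N_K_partial_order} (the paper packages this as Lemma~\ref{lem:explicit_behaviour_for_recursive}) to extract the $\tfrac{e^{-2\lambda t}}{1-e^{-2\lambda t}}+e^{-\lambda m_2 t/2}$ relaxation to $f_{k,\infty}$.

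The one genuine gap is your high-frequency mechanism. You invoke Hausdorff--Young and ``$N$-uniform algebraic decay of $\wh{g_{\epsilon_N}}$'', but that is neither what the paper uses nor what the situation calls for. What is actually needed is a \emph{quantitative lower bound} on $1-\wh{g_{\epsilon_N}}(n)$ whenever $\abs{n\epsilon_N}\geq\alpha_N$, with $\alpha_N\to 0$; decay of $\abs{\wh g}$ at infinity says nothing about this small-$\xi$ regime, and Hausdorff--Young gives no pointwise control there. The paper (Lemma~\ref{lem:on_g_low_and_high_frequencies}\eqref{item:high_frequencies}, proved via Lemma~\ref{lem:away_from_zero}) obtains $1-\F(g)(\xi)\gtrsim\alpha^2$ for $\abs{\xi}\geq\alpha$ by writing $1-\F(g)(\xi)=\int g(x)\pa{1-\cos(\xi x)}\,dx$, bounding the Lebesgue measure of $\br{x:1-\cos(\xi x)\leq\beta}$, and then using $g\in L^p$ via \emph{H\"older} to control its $g$-mass. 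This is also why your cutoff location is off: the paper splits at $\abs{n}\sim\alpha_N\sqrt N$ (i.e.\ $N^{1/6}$ when $l=3$, $N^{1/4}$ when $l\geq 4$), whereas $\sqrt[\kappa(l)]{N}=N\alpha_N^2$ is the resulting \emph{damping rate} after the factor $N$ in the symbol multiplies the $\alpha_N^2$ spectral gap. Two minor inaccuracies: the diagonal symbol contains only $\sum_r\pa{1-\wh{g_{\epsilon_N}}(n_r)}$, no cross terms $\wh{g_{\epsilon_N}}(n_r+n_s)$; and the source at each step folds exactly one pair $n_i+n_j$, the full $s_{\bm p}^\sigma$ appearing only after the induction has unwound.
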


\subsection{The organisation of the paper} \sloppy In section \S\ref{sec:preliminaries} we will recast weak convergence on $\T^k$ 
in terms of Fourier coefficients and use that to find an equivalent definition to partial order in the setting of the CL model. We will also discuss the existence of partially ordered states and state a few technical lemmas which will serve us in our analysis. In section \S\ref{sec:partial_order} we will consider the measures arising as the mean field limits of the $k-$th marginals of the solutions to the CL master equation and show the generation of partial order, proving Theorem \ref{thm:main_partial_order}. We will discuss the lack of propagation of partial order and prove Theorem \ref{thm:lack_of_propagation} in \S\ref{sec:propagation},  and then turn our attention to the quantitative convergence of marginals of solutions of \eqref{eq:master_CL_rescaled} to our limiting partially ordered state  in \S\ref{sec:quantitative} where we will prove Theorem \ref{thm:quantitative_convergence}. We conclude the work with a few remarks and future prospects in \S\ref{sec:remarks}, followed by an Appendix where we prove results that were postponed to allow a better flow for the paper.

\section{Preliminaries}\label{sec:preliminaries}
In this section we will discuss the notion of partial order on $\T^N$, explore the existence of partially ordered states, and consider a few technical lemmas on the Fourier coefficients of the rescaled interaction density, $g_{\epsilon_N}$, which will be crucial to prove theorems \ref{thm:main_partial_order} and  \ref{thm:quantitative_convergence}.
\subsection{Weak convergence, partial order and the Fourier coefficients}
We start with the following simple observation whose proof can be found in \cite{E2024}:
\begin{lemma}\label{lem:weak_convergence_and_fourier}
	Let $\br{\eta_{N}}_{N\in\N}$ be a sequence of probability measures on $\T^k,$ and let $~\eta\in \PP\pa{\T^k}$. Then $\eta_N\underset{N\to\infty}{\overset{\text{weak}}{\longrightarrow}}\eta$ if and only if for any $\pa{n_1,\dots,n_k}\in\Z^k$
	\begin{equation}\nonumber 
		\begin{aligned}
			\wh{\eta_N}\pa{n_1,\dots,n_k} &= \int_{\T^k} e^{-i\sum_{j=1}^k n_j \theta_j} d\eta_N\pa{\theta_1,\dots,\theta_k}
			&\underset{N\to\infty}{\longrightarrow}\wh{\eta}\pa{n_1,\dots,n_k}. 
		\end{aligned}
	\end{equation}
\end{lemma}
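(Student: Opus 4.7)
The plan is to establish the two directions of the equivalence separately. The main technical tool is the Stone--Weierstrass theorem, which guarantees that the algebra of trigonometric polynomials is dense in $C\pa{\T^k;\C}$ with respect to the uniform norm.

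For the forward direction ($\Rightarrow$), I would simply observe that for any fixed $\pa{n_1,\dots,n_k}\in \Z^k$ the character $\pa{\theta_1,\dots,\theta_k}\mapsto e^{-i\sum_{j=1}^{k}n_j\theta_j}$ is a bounded continuous function on the compact torus $\T^k$. Its real and imaginary parts are therefore admissible test functions for the weak convergence of $\br{\eta_N}_{N\in\N}$ to $\eta$, and the claimed pointwise convergence $\wh{\eta_N}\pa{n_1,\dots,n_k}\to \wh{\eta}\pa{n_1,\dots,n_k}$ is immediate from the definition.

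For the reverse direction ($\Leftarrow$), I would proceed in two steps. First, by linearity, the assumed convergence of Fourier coefficients extends to
\[\int_{\T^k}P\pa{\theta_1,\dots,\theta_k}\,d\eta_N\pa{\theta_1,\dots,\theta_k} \underset{N\to\infty}{\longrightarrow} \int_{\T^k}P\pa{\theta_1,\dots,\theta_k}\,d\eta\pa{\theta_1,\dots,\theta_k}\]
for every complex trigonometric polynomial $P$, that is, every finite linear combination of characters. Second, given an arbitrary $f\in C\pa{\T^k;\C}$ and $\epsilon>0$, Stone--Weierstrass applied to the algebra of trigonometric polynomials (which separates points, contains the constants, and is closed under complex conjugation since $\overline{e^{i\sum n_j\theta_j}}=e^{-i\sum n_j\theta_j}$) produces a trigonometric polynomial $P_\epsilon$ with $\norm{f-P_\epsilon}_\infty<\epsilon$. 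Since $\eta_N$ and $\eta$ are probability measures, the standard three-$\epsilon$ estimate gives
\[\abs{\int_{\T^k} f\, d\eta_N - \int_{\T^k} f\, d\eta} \leq 2\epsilon + \abs{\int_{\T^k}P_\epsilon\,d\eta_N - \int_{\T^k}P_\epsilon\,d\eta},\]
and the second term vanishes as $N\to\infty$ by the first step. Letting $\epsilon\to 0$ then yields $\int f\,d\eta_N\to\int f\,d\eta$ for every $f\in C\pa{\T^k;\C}$, which is exactly weak convergence.

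The only delicate point is notational: weak convergence is traditionally phrased against real-valued continuous test functions, whereas the characters used in the Fourier coefficients are complex. This is handled transparently by splitting into real and imaginary parts, so I do not anticipate any genuine obstacle. The argument uses nothing beyond compactness of $\T^k$ and the classical density of trigonometric polynomials in $C\pa{\T^k;\C}$, together with the uniform total variation bound provided by the fact that the measures involved are probability measures.
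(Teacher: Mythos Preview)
Your proof is correct and is the standard Stone--Weierstrass argument for this fact. Note that the paper itself does not prove this lemma but simply cites \cite{E2024} for the proof; your argument is almost certainly the same as the one found there, as this is the canonical approach.
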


With this at hand, we are able to recast the definition of partial order in the setting of the CL model in a more tractable form.

\begin{lemma}\label{lem:partial_order_in_fourier}
	The family $\mu_N\in \PP\pa{\T^N}$, with $N\in\N$, is $\br{\pa{\eta_{k},\nu_{k-1}}}_{k\in\N}-$partially ordered if and only if 
	for any $\pa{n_1,\dots,n_k}\in\Z^k$
	\begin{equation}\label{eq:convergence_of_fourier_coefficients_partial}
		\begin{aligned}
			\wh{\Pi_k\pa{\mu_N}}\pa{n_1,\dots,n_k}&\underset{N\to\infty}{\longrightarrow} \begin{cases}
				\wh{\eta_1}\pa{n_1},& k=1,\\
				\frac{\wh{\eta_k}\pa{\sum_{j=1}^k n_j}}{k}\sum_{l=1}^k \wh{\nu_{k-1}}\pa{n_1,\dots,\wt{n_l},\dots,n_{k}}, & k\geq 2,
			\end{cases}
		\end{aligned}
	\end{equation}
	with
	\begin{equation}\label{eq:even_fourier}
		\wh{\nu_{k}}\pa{n_1,\dots,n_{k}} = \wh{\nu_{k}}\pa{-n_1,\dots,- n_{k}}, 
	\end{equation}
	\begin{equation}\label{eq:symmetry_fourier}
			\wh{\nu_{k}}\pa{n_1,\dots,n_{k}} = \wh{\nu_{k}}\pa{n_{\sigma(1)},\dots, n_{\sigma(k)}}, 
	\end{equation}
	for any $\pa{n_1,\dots,n_{k}}\in \Z^{k}$ and any $\sigma \in S^{k}$.
\end{lemma}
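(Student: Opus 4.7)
The plan is to combine Lemma \ref{lem:weak_convergence_and_fourier} with a direct computation of the Fourier coefficients of the candidate limit measure in \eqref{eq:def_of_partial_order}. Since Lemma \ref{lem:weak_convergence_and_fourier} reduces weak convergence on $\T^k$ to pointwise convergence of Fourier coefficients, and the Fourier transform is injective on $\PP\pa{\T^k}$, it suffices to identify the Fourier coefficients of the right-hand side of \eqref{eq:def_of_partial_order} as those appearing in \eqref{eq:convergence_of_fourier_coefficients_partial}, and separately to translate the evenness/symmetry hypotheses on $\nu_k$ into \eqref{eq:even_fourier} and \eqref{eq:symmetry_fourier}.

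For $k=1$ the identification is immediate. For $k\ge 2$, I would fix $i\in\br{1,\dots,k}$ and compute the Fourier transform of the $i$-th summand
$$\eta_k\pa{\theta_i}\nu_{k-1}\pa{\theta_1-\theta_i,\dots,\wt{\theta_i-\theta_i},\dots,\theta_k-\theta_i}.$$
The change of variables $y_i=\theta_i$ and $y_j=\theta_j-\theta_i$ for $j\neq i$ has unit Jacobian on $\T^k$ and converts the Fourier exponent $\sum_{j=1}^k n_j\theta_j$ into $\pa{\sum_{j=1}^k n_j}y_i+\sum_{j\neq i}n_j y_j$. By Fubini, the integral then factors as
$$\wh{\eta_k}\pa{\sum_{j=1}^k n_j}\;\wh{\nu_{k-1}}\pa{n_1,\dots,\wt{n_i},\dots,n_k},$$
where the $i$-th slot is absent from the argument list of $\wh{\nu_{k-1}}$. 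Averaging over $i\in\br{1,\dots,k}$ yields exactly the right-hand side of \eqref{eq:convergence_of_fourier_coefficients_partial}, giving both directions of the main convergence equivalence.

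It remains to translate the evenness and symmetry of $\nu_k$ into \eqref{eq:even_fourier} and \eqref{eq:symmetry_fourier}. The evenness assumption $\ii_\#\nu_k=\nu_k$ from \eqref{eq:even}, applied via \eqref{eq:even_for_integrals} with $f(\theta)=e^{-i\sum_j n_j\theta_j}$, immediately yields \eqref{eq:even_fourier}; the symmetry relation \eqref{eq:symmetry_fourier} follows analogously by composing the exponential with a permutation of the coordinates and invoking the invariance of $\nu_k$ under that permutation. I do not foresee any substantial obstacle here: the proof is essentially a change-of-variables computation on $\T^k$ together with routine bookkeeping, and the only subtle point is tracking the removed index $\wt{\cdot_i}$ consistently between the spatial side and the Fourier side.
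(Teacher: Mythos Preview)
Your proposal is correct and follows essentially the same route as the paper: the paper isolates the Fourier computation of the limit measure and the Fourier characterization of evenness/symmetry as two separate auxiliary lemmas (Lemmas~\ref{lem:fourier_of_partial_order} and~\ref{lem:properties_of_even_and_symmetry_in_fourier}) and then invokes them together with Lemma~\ref{lem:weak_convergence_and_fourier} and injectivity of the Fourier transform, whereas you perform the same change-of-variables calculation and the same evenness/symmetry translation inline. The only point worth making explicit in your write-up is the converse direction for evenness and symmetry (from \eqref{eq:even_fourier}--\eqref{eq:symmetry_fourier} back to the measure-level properties), which you cover implicitly via injectivity but which the paper spells out using density of trigonometric polynomials.
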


The proof of this lemma relies on two observations.

\begin{lemma}\label{lem:fourier_of_partial_order}
	For a given $k\in\N$ we define $\mu_{\mathrm{po},k}\in \PP\pa{\T^k}$ by 
	$$\mu_{\mathrm{po},k}\pa{\theta_1,\dots,\theta_k} = \begin{cases}
		\eta_1\pa{\theta_1},& k=1,\\
		\frac{1}{k}\sum_{i=1}^k \eta_{k}\pa{\theta_i}\nu_{k-1}\pa{\theta_1-\theta_i,\dots,\widetilde{\theta_i-\theta_i},\dots, \theta_k-\theta_i}, & k\geq 2,
	\end{cases}$$
	where $\eta_k\in \PP\pa{\T}$ and $\nu_{k-1}\in \PP\pa{\T^{k-1}}$.
	Then for any $\pa{n_1,\dots,n_k}\in \Z^k$ we have that 
	$$\wh{\mu_{\mathrm{po},k}}\pa{n_1,\dots,n_k} = \begin{cases}
		\wh{\eta_1}\pa{n_1},& k=1,\\
		\frac{\wh{\eta_k}\pa{\sum_{j=1}^k n_j}}{k}\sum_{l=1}^k \wh{\nu_{k-1}}\pa{n_1,\dots,\wt{n_l},\dots,n_{k}}, & k\geq 2.
	\end{cases}$$
\end{lemma}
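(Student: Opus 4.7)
The plan is to prove Lemma \ref{lem:fourier_of_partial_order} by a direct calculation, reducing the Fourier coefficient of each summand in the definition of $\mu_{\mathrm{po},k}$ via a translation change of variables on the torus.

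First, I would dispose of the trivial case $k=1$, where the assertion is nothing but the definition of the Fourier coefficient $\wh{\eta_1}(n_1)$. For $k\geq 2$, I would fix $\pa{n_1,\dots,n_k}\in\Z^k$ and write
\begin{equation}\nonumber
\wh{\mu_{\mathrm{po},k}}\pa{n_1,\dots,n_k} = \frac{1}{k}\sum_{i=1}^k I_i,\qquad I_i = \int_{\T^k} e^{-i\sum_{j=1}^k n_j\theta_j}\,\eta_k\pa{\theta_i}\nu_{k-1}\pa{\theta_1-\theta_i,\dots,\widetilde{\theta_i-\theta_i},\dots,\theta_k-\theta_i}\,d\theta,
\end{equation}
and then evaluate $I_i$ separately for each $i$.

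The main step is the change of variables $\phi_j = \theta_j-\theta_i$ for $j\neq i$, with $\theta_i$ left untouched. Since $\T$ is a group under addition, this is a measure-preserving change of variables on $\T^{k-1}$. Substituting $\theta_j = \phi_j+\theta_i$ into the exponent, the phase becomes
\begin{equation}\nonumber
\sum_{j=1}^k n_j\theta_j = n_i\theta_i + \sum_{j\neq i} n_j\pa{\phi_j+\theta_i} = \pa{\sum_{j=1}^k n_j}\theta_i + \sum_{j\neq i} n_j \phi_j.
\end{equation}
The integrand factorises completely: the $\theta_i$-integral yields $\wh{\eta_k}\pa{\sum_{j=1}^k n_j}$ and the $\pa{\phi_1,\dots,\widetilde{\phi_i},\dots,\phi_k}$-integral against $\nu_{k-1}$ yields $\wh{\nu_{k-1}}\pa{n_1,\dots,\wt{n_i},\dots,n_k}$. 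Hence
\begin{equation}\nonumber
I_i = \wh{\eta_k}\pa{\sum_{j=1}^k n_j}\wh{\nu_{k-1}}\pa{n_1,\dots,\wt{n_i},\dots,n_k},
\end{equation}
and summing over $i$ and dividing by $k$ yields the claimed formula.

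There is no genuine obstacle here; the only subtlety is keeping the index-removal notation $\wt{\cdot}$ straight, and confirming that the translation $\phi_j\mapsto \phi_j+\theta_i$ leaves the underlying Haar measure on $\T$ invariant so that Fubini's theorem applies cleanly. This latter point is immediate from the group structure of $\T$, so the proof is essentially a single change of variables followed by an application of the definition of Fourier coefficients for $\eta_k$ and $\nu_{k-1}$.
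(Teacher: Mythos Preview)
Your proposal is correct and follows essentially the same approach as the paper: both dispose of $k=1$ trivially and, for $k\geq 2$, compute each summand via the translation $\theta_j\mapsto \theta_j+\theta_i$ (equivalently your $\phi_j=\theta_j-\theta_i$) to split the phase into $\pa{\sum_j n_j}\theta_i$ plus $\sum_{j\neq i}n_j\phi_j$, then factorise into $\wh{\eta_k}$ and $\wh{\nu_{k-1}}$. The only cosmetic difference is that the paper invokes the measure-theoretic definition of the translated measure directly rather than phrasing it as an explicit change of variables, but the computation is identical.
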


\begin{proof}
	The statement is immediate for the case $k=1$ and as such we may assume that $k\geq 2$. Using the definition of $\mu_{\mathrm{po},k}$ we find that for any $\pa{n_1,\dots,n_k}\in \Z^k$
	$$\wh{\mu_{\mathrm{po},k}}\pa{n_1,\dots,n_k} =\frac{1}{k}\sum_{l=1}^k\int_{\T^k}e^{-i\sum_{j=1}^k n_j \theta_j} d\eta_k\pa{\theta_l}d\nu_{k-1}\pa{\theta_1-\theta_l,\dots,\wt{\theta_l-\theta_l},\dots,\theta_k-\theta_l}$$
	$$=\frac{1}{k}\sum_{l=1}^k\int_{\T^k}e^{-in_l \theta_l-i \sum_{j=1,\;j\not=l}^k n_j \pa{\theta_j+\theta_l}} d\eta_k\pa{\theta_l}d\nu_{k-1}\pa{\theta_1,\dots,\wt{\theta_l},\dots,\theta_k}$$
	$$=\frac{1}{k}\sum_{l=1}^k\int_{\T^k}e^{-i\pa{\sum_{j=1}^k n_j} \theta_l}e^{-i \sum_{j=1,\;j\not=l}^k n_j \theta_j} d\eta_k\pa{\theta_l}d\nu_{k-1}\pa{\theta_1,\dots,\wt{\theta_l},\dots,\theta_k}$$
	$$=\frac{\wh{\eta_k}\pa{\sum_{j=1}^k n_j}}{k}\sum_{l=1}^k \wh{\nu_{k-1}}\pa{n_1,\dots,\wt{n_l},\dots,n_{k}},$$
	which is the desired result.
\end{proof}

\begin{lemma}\label{lem:properties_of_even_and_symmetry_in_fourier}
	Let $k\in\N$ be given.
	\begin{enumerate}[(i)]
		\item\label{item:even_in_fourier} $\rho_k\in\PP\pa{\T^k}$ is even if and only if 
			\begin{equation}\label{eq:even_fourier_lem}
			\wh{\rho_k}\pa{n_1,\dots,n_{k}} = \wh{\rho_k}\pa{-n_1,\dots,- n_{k}}, 
		\end{equation}
			for any $\pa{n_1,\dots,n_{k}}\in \Z^{k}$.
		\item\label{item:symmetry_in_fourier} 
		$\rho_k\in\PP\pa{\T^k}$ is symmetric if and only if 
		\begin{equation}\label{eq:symmetry_fourier_lem}
			\wh{\rho_k}\pa{n_1,\dots,n_{k}} = \wh{\rho_k}\pa{n_{\sigma(1)},\dots, n_{\sigma(k)}}, 
		\end{equation}
			for any $\pa{n_1,\dots,n_{k-1}}\in \Z^{k}$ and any $\sigma \in S^{k}$.
	\end{enumerate}
\end{lemma}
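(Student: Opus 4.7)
The lemma is a straightforward application of two standard facts: (a) the Fourier transform on $\PP(\T^k)$ is injective (two probability measures on the torus with identical Fourier coefficients must coincide), and (b) the pushforward of a measure under a continuous map interacts cleanly with integration against test functions. I will exploit these to reduce both equivalences to simple manipulations inside the integral defining $\wh{\rho_k}$.

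For part \eqref{item:even_in_fourier}, the forward direction is immediate: if $\ii_\#\rho_k=\rho_k$, then applying \eqref{eq:even_for_integrals} with the test function $f(\theta_1,\dots,\theta_k)=e^{-i\sum_{j=1}^k n_j\theta_j}$ yields
\begin{equation}\nonumber
\wh{\rho_k}(n_1,\dots,n_k)=\int_{\T^k} e^{-i\sum n_j\theta_j}\,d\rho_k=\int_{\T^k} e^{-i\sum n_j(-\theta_j)}\,d\rho_k=\wh{\rho_k}(-n_1,\dots,-n_k).
\end{equation}
For the converse, I will observe that by the transport formula the pushforward measure $\ii_\#\rho_k$ has Fourier coefficients
\begin{equation}\nonumber
\wh{\ii_\#\rho_k}(n_1,\dots,n_k)=\int_{\T^k} e^{-i\sum n_j\theta_j}\,d(\ii_\#\rho_k)=\int_{\T^k} e^{i\sum n_j\theta_j}\,d\rho_k=\wh{\rho_k}(-n_1,\dots,-n_k),
\end{equation}
so the hypothesis \eqref{eq:even_fourier_lem} becomes exactly $\wh{\ii_\#\rho_k}=\wh{\rho_k}$. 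Injectivity of the Fourier transform on $\PP(\T^k)$ then gives $\ii_\#\rho_k=\rho_k$, i.e.\ $\rho_k$ is even.

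For part \eqref{item:symmetry_in_fourier}, I will run the analogous argument with the permutation map $\sigma_\ast(\theta_1,\dots,\theta_k)=(\theta_{\sigma(1)},\dots,\theta_{\sigma(k)})$ in place of $\ii$. Symmetry of $\rho_k$ means $(\sigma_\ast)_\#\rho_k=\rho_k$ for every $\sigma\in S^k$. Applying this definition to the test function $e^{-i\sum n_j\theta_j}$ and reindexing the sum gives one direction; for the other direction I compute
\begin{equation}\nonumber
\wh{(\sigma_\ast)_\#\rho_k}(n_1,\dots,n_k)=\int_{\T^k} e^{-i\sum n_j\theta_{\sigma(j)}}\,d\rho_k=\wh{\rho_k}(n_{\sigma^{-1}(1)},\dots,n_{\sigma^{-1}(k)}),
\end{equation}
and observe that \eqref{eq:symmetry_fourier_lem} holding for every $\sigma\in S^k$ is equivalent to the Fourier coefficients of $(\sigma_\ast)_\#\rho_k$ and $\rho_k$ agreeing for every $\sigma$; injectivity of the Fourier transform closes the argument.

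There is no real obstacle here; the only point requiring a modicum of care is the invocation of Fourier uniqueness on $\PP(\T^k)$, which follows since the trigonometric polynomials are dense in $C(\T^k)$ (by Stone--Weierstrass) and continuous functions are a determining class for finite Borel measures. I would likely just cite Lemma \ref{lem:weak_convergence_and_fourier} (already available in the paper via \cite{E2024}) to justify this uniqueness, since equal Fourier coefficients trivially yield weak convergence to both limits. With these pieces in place, the proof is essentially a pair of three-line calculations.
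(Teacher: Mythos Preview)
Your proposal is correct and takes essentially the same approach as the paper: both directions rely on the fact that trigonometric polynomials determine finite Borel measures on $\T^k$, with the forward implication being an immediate substitution and the converse following from Fourier uniqueness. The only cosmetic difference is that the paper phrases the converse as extending the identity from trigonometric polynomials to all of $C_b(\T^k)$ via density, whereas you compute the Fourier coefficients of the pushforward and invoke injectivity directly---these are the same argument.
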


\begin{proof}
	We start by recalling that if $\mu,\nu\in \PP\pa{\T^k}$ satisfy
	$$\int_{\T^k}f\pa{\theta_1,\dots,\theta_k}d\mu\pa{\theta_1,\dots,\theta_k}=\int_{\T^k}f\pa{\theta_1,\dots,\theta_k}d\nu\pa{\theta_1,\dots,\theta_k}$$
	for all $f\in C_b \pa{\T^k}$ then $\mu=\nu$.
	Consequently, testing the properties of evenness and symmetry is equivalent to testing the integral versions of these properties. 
	\begin{enumerate}[(i)]
		\item We notice that as
		$$\int_{\T^k}e^{-i \sum_{l=1}^k n_l \pa{-\theta_l} }d\rho_k(\theta_1,\dots,\theta_k)=\wh{\rho_k}\pa{-n_1,\dots,-n_k},$$
		 We conclude that if $\rho_k$ is even then \eqref{eq:even_fourier_lem} must hold.\\
		 Conversely, assuming \eqref{eq:even_fourier_lem} holds and using the above we find that for any trigonometric polynomial $p\pa{\theta_1,\dots, \theta_k}$ 
		 \begin{equation}\nonumber
		 \begin{aligned}
		 	\int_{\T^k}p&\pa{-\theta_1,\dots, -\theta_k}d\rho_k\pa{\theta_1,\dots,\theta_k}
		 	=\int_{\T^k}p\pa{\theta_1,\dots, \theta_k}d\rho_k\pa{\theta_1,\dots,\theta_k}.
		\end{aligned}	
		 \end{equation}
		 As trigonometric polynomials are dense in $C_b\pa{\T^k}$ with respect to the supremum norm, we conclude that the above holds for any $f\in C_b\pa{\T^k}$. Consequently, $\rho_k$ is even. 
		 \item We notice that for any $\sigma\in S^k$
		 	$$\int_{\T^k}e^{-i \sum_{l=1}^k n_k \theta_{\sigma(k)}}d\rho_k(\theta_1,\dots,\theta_k)=\int_{\T^k}e^{-i \sum_{l=1}^k n_{\sigma^{-1}(k)} \theta_{k}}d\rho_k(\theta_1,\dots,\theta_k)=\wh{\rho_k}\pa{n_{\sigma^{-1}(1)},\dots,n_{\sigma^{-1}(k)}}.$$
		 	Consequently, if $\rho_k$ is symmetric we must have that 
		 	$$\wh{\rho_k}\pa{n_{\sigma^{-1}(1)},\dots,n_{\sigma^{-1}(k)}}=\wh{\rho_k}\pa{n_{1},\dots,n_{k}}$$
		 	for any $\pa{n_1,\dots,n_k}\in\Z^k$ and any $\sigma\in S^k$ which implies \eqref{eq:symmetry_fourier_lem}.
		 	The converse follows from similar arguments to those given to show part \eqref{item:even_in_fourier}. 
	\end{enumerate}
\end{proof}

\begin{proof}[Proof of Lemma \ref{lem:partial_order_in_fourier}]
	This is an immediate result of lemmas \ref{lem:weak_convergence_and_fourier}, \ref{lem:fourier_of_partial_order}, and \ref{lem:properties_of_even_and_symmetry_in_fourier} together with the fact that the Fourier coefficients of any measure on $\T^k$ determine the measure uniquely. 
\end{proof}

\subsection{The existence of a partially ordered state and the lack of a decoupled state on  $\T^k$}
A natural question when defining a new notion is whether or not it makes sense -- i.e. whether or not the definition is vacuous. Theorem \ref{thm:main_partial_order} guarantees the existence of a partially ordered state, but it will require a decent amount of work. There are, however, two immediate polar options we can consider:
\begin{itemize}
	\item \textit{Ordered states}. Choosing 
	$$\updelta_{k-1} \pa{\theta_1,\dots,\theta_{k-1}}= \prod_{i=1}^{k-1} \delta\pa{\theta_i},$$
	with $\delta$ being the delta measure concentrated at $0$, we find that the family $\br{\updelta_{k-1}}_{k\in\N\setminus\br{1}}$ is even and symmetric. If $\br{F_N}_{N\in\N}$ is $\mu_0-$ordered then 
	$$\Pi_k\pa{F_N}\pa{\theta_1,\dots,\theta_k} \underset{N\to\infty}{\overset{\text{weak}}{\longrightarrow}} \mu_0(\theta_1) \prod_{j=2}^k\delta\pa{\theta_j-\theta_1}$$
	$$=\frac{1}{k}\sum_{i=1}^k \mu_{0}\pa{\theta_i}\updelta_{k-1}\pa{\theta_1-\theta_i,\dots,\widetilde{\theta_i-\theta_i},\dots, \theta_k-\theta_i}. $$
	Consequently, any $\mu_0-$ordered family is $\br{\pa{\mu_0,\updelta_{k-1}}}_{k\in\N}-$partially ordered.
	\item \textit{The uniform state (time independent chaotic state)}. For any $k\in\N$ the measure
	$$\nu_{k}\pa{\theta_1,\dots,\theta_{k}} = \frac{d\theta_1\dots d\theta_{k}}{\pa{2\pi}^{k}}$$
	is even and symmetric, as well as translation invariant. The family $F_N = \frac{d\theta_1\dots d\theta_N}{\pa{2\pi}^N}$ satisfies
	$$\Pi_k\pa{F_N}\pa{\theta_1,\dots,\theta_k} = \frac{d\theta_1\dots d\theta_k}{\pa{2\pi}^k}  = \frac{1}{k}\sum_{i=1}^k \frac{d\theta_i}{2\pi}\nu_{k-1}\pa{\theta_1-\theta_i,\dots,\widetilde{\theta_i-\theta_i},\dots, \theta_k-\theta_i}.$$
	Consequently, $\frac{d\theta_1\dots d\theta_N}{\pa{2\pi}^N}$ is $\br{\pa{\frac{d\theta}{2\pi},\nu_{k-1}}}_{k\in\N}-$partially ordered.
\end{itemize}

The above states seem unlikely to appear in most situations where we expect to see true partial order emerging. Much like in the case of chaos and order, the family of limiting probabilities in the definition of partial order, Definition \ref{def:partial_order}, is a natural candidate for a partially ordered state. The next lemma shows us how we can use such families to build up new partially ordered states from existing ones.

\begin{lemma}\label{lem:existence_of_partial_order}
	For any $N\in\N$ define the probability measure $\mu_N\in \PP\pa{\T^N}$ by
	$$\mu_N\pa{\theta_1,\dots,\theta_N}=\begin{cases}
	\eta_1(\theta_1),& N=1,\\
	\frac{1}{N}\sum_{i=1}^N \eta_N(\theta_i)\nu_{N-1}\pa{\theta_1-\theta_i,\dots,\widetilde{\theta_i-\theta_i},\dots, \theta_N-\theta_i}, & N\geq 2,
	\end{cases}$$
	where $\eta_N\in \PP(\T)$ and $\nu_{N}\in \PP\pa{\T^{N}}$ is symmetric for any $N\in\N$. If there exist $\eta_\infty\in \PP(\T)$ such that 
	$$\eta_N(\theta) \underset{N\to\infty}{\overset{\text{weak}}{\longrightarrow}} \eta_{\infty}(\theta)$$
	and if $\br{\nu_N}_{N\in\N}$ is $\br{\pa{\zeta_{k,\infty},\rho_{k-1,\infty}}}_{k\in\N}-$partially ordered for an appropriate $~\zeta_{k,\infty}\in \PP\pa{\T}$ and $\varrho_{k,\infty}\in \PP\pa{\T^k}$ 
	then $\br{\mu_N}_{N\in\N}$ is $\br{\pa{\eta_\infty \ast \zeta_{k,\infty},\rho_{k-1,\infty}}}_{k\in\N}-$partially ordered.
	
	In particular
	\begin{enumerate}[(i)]
		\item\label{item:example_uniform} If $\eta_\infty(\theta) =\zeta_{k,\infty}(\theta)= \frac{d\theta}{2\pi}$ for all $k\in \N$ then $\br{\mu_N}_{N\in\N}$ is $\br{\pa{\frac{d\theta}{2\pi},\rho_{k-1,\infty}}}_{k\in\N}-$partially ordered.
		\item\label{item:example_delta} if $\eta_\infty(\theta) =\delta(\theta)$ then  $\br{\mu_N}_{N\in\N}$ is $\br{\pa{\zeta_{k,\infty},\rho_{k-1,\infty}}}_{k\in\N}-$partially ordered.
	\end{enumerate}
\end{lemma}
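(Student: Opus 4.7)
The plan is to establish the claim by moving entirely into Fourier space and invoking Lemma \ref{lem:partial_order_in_fourier} at both ends of the argument. As a first step I would apply Lemma \ref{lem:fourier_of_partial_order} to the definition of $\mu_N$ to obtain, for any $(n_1,\dots,n_N)\in \Z^N$, the explicit formula
\begin{equation*}
\wh{\mu_N}\pa{n_1,\dots,n_N} = \frac{\wh{\eta_N}\pa{\sum_{j=1}^N n_j}}{N}\sum_{l=1}^N \wh{\nu_{N-1}}\pa{n_1,\dots,\wt{n_l},\dots,n_N},
\end{equation*}
and then read off the Fourier coefficients of the $k$-th marginal by evaluating at $(n_1,\dots,n_k,0,\dots,0)$.

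The heart of the computation is the use of the symmetry of $\nu_{N-1}$ provided by Lemma \ref{lem:properties_of_even_and_symmetry_in_fourier} to shuffle the zero arguments inside each $\wh{\nu_{N-1}}(\cdot)$ and recognise each summand as the Fourier coefficient of an appropriate marginal of $\nu_{N-1}$. Splitting the sum over $l$ according to whether the dropped index lies among the retained variables ($l\leq k$) or among the integrated-out ones ($l>k$), I expect to arrive, for $k\geq 2$, at
\begin{equation*}
\wh{\Pi_k(\mu_N)}(n_1,\dots,n_k) = \frac{\wh{\eta_N}\pa{\sum_{j=1}^k n_j}}{N}\pa{\sum_{l=1}^k \wh{\Pi_{k-1}(\nu_{N-1})}\pa{n_1,\dots,\wt{n_l},\dots,n_k} + (N-k)\wh{\Pi_k(\nu_{N-1})}\pa{n_1,\dots,n_k}},
\end{equation*}
together with the simpler case $k=1$ in which only the second type of term survives (with prefactor $N-1$).

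The passage to the limit $N\to\infty$ is then routine. The $l\leq k$ sub-sum is $O(k/N)$ with summands bounded by $1$ in modulus and therefore vanishes; in the $l>k$ piece $\wh{\eta_N}\to\wh{\eta_\infty}$ pointwise by Lemma \ref{lem:weak_convergence_and_fourier}, the factor $(N-k)/N\to 1$, and the assumed partial order of $\br{\nu_N}$ together with Lemma \ref{lem:partial_order_in_fourier} yields the pointwise limit of $\wh{\Pi_k(\nu_{N-1})}$. Combining these limits and using the convolution identity $\wh{\eta_\infty\ast\zeta_{k,\infty}}(n)=\wh{\eta_\infty}(n)\wh{\zeta_{k,\infty}}(n)$ on $\T$ produces precisely the Fourier pattern characterising $\br{\pa{\eta_\infty\ast\zeta_{k,\infty},\rho_{k-1,\infty}}}_{k\in\N}$-partial order in Lemma \ref{lem:partial_order_in_fourier}; applying that lemma in the reverse direction concludes the main assertion.

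The two distinguished cases then follow by evaluating the convolution in Fourier: the uniform measure on $\T$ has Fourier coefficients $\delta_0(n)$, so $\eta_\infty\ast\zeta_{k,\infty}=\frac{d\theta}{2\pi}$ whenever either factor is uniform, while the delta at $0$ has constant Fourier coefficients equal to $1$, so $\delta\ast\zeta_{k,\infty}=\zeta_{k,\infty}$. The only step that demands genuine care is the symmetry-based identification of summands with marginals in the $l\leq k$ case, where one must track that $k-1$ (rather than $k$) non-zero slots survive; beyond this indexing exercise, no analytic obstacles are anticipated.
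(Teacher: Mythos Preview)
Your proposal is correct and follows essentially the same route as the paper: compute $\wh{\mu_N}$ via Lemma~\ref{lem:fourier_of_partial_order}, evaluate at $(n_1,\dots,n_k,0,\dots,0)$, split the sum over $l$ into $l\le k$ and $l>k$, and pass to the limit using Lemma~\ref{lem:partial_order_in_fourier} for the $\nu_N$ family. The only step the paper makes explicit that you leave implicit is the preliminary verification that each $\mu_N$ is itself symmetric (needed since Definition~\ref{def:partial_order} and Lemma~\ref{lem:partial_order_in_fourier} apply to symmetric measures); this follows immediately from the symmetry of $\wh{\nu_{N-1}}$ and the permutation-invariance of $\sum n_j$ in the Fourier formula you already wrote down.
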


\begin{proof}
We start by noticing that as $\nu_{N}$ are symmetric, so are $\mu_N$. This can be verified by using lemmas \ref{lem:fourier_of_partial_order} and \ref{lem:properties_of_even_and_symmetry_in_fourier} as for any $N\geq 2$
$$\wh{\mu_N}\pa{n_1,\dots,n_N} = 
	\frac{\wh{\eta_N}\pa{\sum_{j=1}^N n_j}}{N}\sum_{l=1}^N \wh{\nu_{N-1}}\pa{n_1,\dots,\wt{n_l},\dots,n_{N}}$$
	$$=\frac{\wh{\eta_N}\pa{\sum_{j=1}^N n_{\sigma(j)}}}{N}\sum_{l=1}^N \wh{\nu_{N-1}}\pa{n_{1},\dots,\wt{n_{l}},\dots,n_{N}} $$
		$$=\frac{\wh{\eta_N}\pa{\sum_{j=1}^N n_{\sigma(j)}}}{N}\sum_{l=1}^N \wh{\nu_{N-1}}\pa{n_{\sigma(1)},\dots,\wt{n_{\sigma(l)}},\dots,n_{\sigma(N)}} =\wh{\mu_N}\pa{n_{\sigma(1)},\dots,n_{\sigma(N)}},$$
for all $\sigma\in S^N$ and $\pa{n_1,\dots,n_N}\in\Z^N$. 

In our setting, the $k-$th marginal of $\mu_N$ can be defined as the unique measure $\Pi_k\pa{\mu_N}\in \PP\pa{\T^k}$ such that for any $f\in C_b\pa{\T^k}$ 
$$\int_{\T^k}f\pa{\theta_1,\dots,\theta_k}d\Pi_k\pa{\mu_N}\pa{\theta_1,\dots,\theta_k} = \int_{\T^N}f\pa{\theta_1,\dots,\theta_k}d\mu_N\pa{\theta_1,\dots,\theta_N}.$$
Much like the proof of Lemma \ref{lem:properties_of_even_and_symmetry_in_fourier}, the density of trigonometric polynomials in $\T^k$ and the uniqueness of the Fourier coefficients imply that the $\Pi_k\pa{\mu_N}$ is the unique probability measure such that 
$$\wh{\Pi_k\pa{\mu_N}}\pa{n_1,\dots,n_k} = \int_{\T^N}e^{i\sum_{l=1}^k n_l \theta_l}d\mu_N\pa{\theta_1,\dots,\theta_N}=\wh{\mu_N}\pa{n_1,\dots,n_k,0,\dots,0}$$
for any $\pa{n_1,\dots,n_k}\in\Z^k$.

Using lemmas \ref{lem:weak_convergence_and_fourier} and \ref{lem:partial_order_in_fourier} we find that for $N\geq k \geq 2$
$$\wh{\Pi_k\pa{\mu_N}}\pa{n_1,\dots,n_k} =
	\frac{\wh{\eta_N}\pa{\sum_{j=1}^k n_j}}{N}\Big(\sum_{l=1}^k \wh{\nu_{N-1}}\pa{n_1,\dots,\wt{n_l},\dots,n_{k},0,\dots,0}  + (N-k)\wh{\nu_{N-1}}\pa{n_1,\dots,n_{k},0,\dots,0} \Big)$$
	$$=\wh{\eta_N}\pa{\sum_{j=1}^k n_j}\pa{\frac{\sum_{l=1}^k \wh{\Pi_{k-1}\pa{\nu_{N-1}}}\pa{n_1,\dots,\wt{n_l},\dots,n_{k}}}{N} +\frac{N-k}{N}\wh{\Pi_{k}\pa{\nu_{N-1}}}\pa{n_1,\dots,n_{k}}}$$
	$$\underset{N\to\infty}{\longrightarrow} \wh{\eta_{\infty}}\pa{\sum_{j=1}^k n_j}\;\frac{\wh{\zeta_{k,\infty}}\pa{\sum_{j=1}^k n_j}}{k}\sum_{l=1}^k \wh{\varrho_{k-1,\infty}}\pa{n_1,\dots,\wt{n_l},\dots,n_k}$$
	$$=\frac{\wh{\eta_{\infty}\ast \zeta_{k,\infty}}\pa{\sum_{j=1}^k n_j}}{k}\sum_{l=1}^k \wh{\varrho_{k-1,\infty}}\pa{n_1,\dots,\wt{n_l},\dots,n_k}.$$
	
	When $k=1$ we find that 
	$$\wh{\Pi_1\pa{\mu_N}}\pa{n_1} =
   \wh{\eta_N}\pa{n_1}\pa{\frac{1}{N} +\frac{N-1}{N}\wh{\Pi_{1}\pa{\nu_{N-1}}}\pa{n_1}}\underset{N\to\infty}{\longrightarrow} \wh{\eta_\infty}\pa{n_1}\wh{\zeta_{1,\infty}}(n_1) = \wh{\eta_\infty\ast \zeta_{1,\infty}}(n_1).$$
	Combining the above, and using Lemma \ref{lem:partial_order_in_fourier} and the uniqueness of Fourier coefficients, we conclude that
	$$\Pi_k\pa{\mu_N}\pa{\theta_1,\dots,\theta_k}\underset{N\to\infty}{\overset{\text{weak}}{\longrightarrow}} \begin{cases}
		\eta_{\infty}\ast \zeta_{1,\infty} (\theta_1),& k=1,\\
		\frac{1}{k}\sum_{l=1}^k \eta_{\infty}\ast \zeta_{k,\infty}(\theta_l)\rho_{k-1,\infty}\pa{\theta_1-\theta_l,\dots,\widetilde{\theta_l-\theta_l},\dots, \theta_k-\theta_l}, & k\geq 2,
	\end{cases}$$
	which gives the general result. The fact that 
	$$\wh{\frac{d\theta}{2\pi}\ast \frac{d\theta}{2\pi}}(n)  = \wh{\frac{d\theta}{2\pi}}^2(n)=\delta_0(n)^2 =\delta_0(n)=\wh{\frac{d\theta}{2\pi}}(n),$$
	and
	$$\wh{\delta}(n) \wh{\zeta_{k,\infty}}(n)=\wh{\zeta_{k,\infty}}(n),$$
	show the last two statements.
\end{proof}

Following on the above, and motivated by the fact that ordered states are automatically partially-ordered, it is natural to enquire if we can have a partially ordered state in the setting of the CL model which ``approximate'' ordered states, i.e. a state whose marginals converge to
$$\frac{1}{k}\sum_{i=1}^k \eta\pa{\theta_i}\prod_{j\not=i}^{k} \nu\pa{\theta_j-\theta_i}$$
for some measures $\eta,\nu\in \PP\pa{\T}$, where $\nu$ is even and somewhat concentrated around $\theta=0$. The next lemma sheds light on this possibility.

\begin{lemma}\label{lem:no_decoupled_partial_order}
	Let $\br{F_N}_{N\in\N}$ be a family of symmetric probability measure on $\T^N$ such that its marginals, $\br{F_{N,k}}_{N\in\N}$ converges weakly to $f_k\in\PP\pa{\T^k}$ as $N$ goes to infinity for any $k\in\N$. Assume in addition that for any $k\in\N\setminus\br{1}$ there exist $\eta_k\in\PP\pa{\T}$ and an even probability measure $\nu_k\in \PP\pa{\T}$ such that 
	\begin{equation}\label{eq:strong_partial_order}
		f_k\pa{\theta_1,\dots,\theta_k} = \frac{1}{k}\sum_{i=1}^k \eta_k\pa{\theta_i} \prod_{j\not=i}\nu_k\pa{\theta_j-\theta_i},\qquad k\geq 2.
	\end{equation}
	Then for any $j,k\in\N\setminus\br{1}$ with $j\leq k$ we have that\footnote{Recall that we use the convention that the product over an empty set is $1$.} 
		\begin{equation}\label{eq:connection_between_marginals_special_choice_sum}
		\begin{aligned}
			k&\pa{\prod_{m=1}^{j-1}\wh{\nu_j}\pa{n_m}+\wh{\nu_j}\pa{\sum_{m=1}^{j-1}n_m}\sum_{l=1}^{j-1}\prod_{m=1,\;m\not=l}^{j-1}\wh{\nu_j}\pa{n_m}}	\\
			=&j\pa{\prod_{m=1}^{j-1}\wh{\nu_k}\pa{n_m}+\wh{\nu_k}\pa{\sum_{m=1}^{j-1}n_m}\sum_{l=1}^{j-1}\prod_{m=1,\;m\not=l}^{j-1}\wh{\nu_k}\pa{n_m}+\pa{k-j}\wh{\nu_k}\pa{\sum_{m=1}^{j-1}n_m}\prod_{m=1}^{j-1}\wh{\nu_k}\pa{n_m}}\\
		\end{aligned}
	\end{equation}
	for any $n_1,\dots,n_{j-1}\in\Z$. 
	In particular
	\begin{equation}\label{eq:strong_partial_condition_j=2}
		k\wh{\nu_2}(n) = 2\wh{\nu_k}(n)+\pa{k-2}\wh{\nu_k}^2(n)
	\end{equation}
	for any $k\geq 2$ and $n\in\Z$, and 
		\begin{equation}\label{eq:strong_partial_condition_j=3}
			\begin{split}
				k&\pa{\wh{\nu_3}(n_1)\wh{\nu_3}(n_2)+\wh{\nu_3}(n_1+n_2)\pa{\wh{\nu_3}(n_1)+\wh{\nu_3}(n_2)}}\\
				=& 3\pa{\wh{\nu_k}(n_1)\wh{\nu_k}(n_2)+\wh{\nu_k}(n_1+n_2)\pa{\wh{\nu_k}(n_1)+\wh{\nu_k}(n_2)}+(k-3)\wh{\nu_k}(n_1+ n_2)\wh{\nu_k}(n_1)\wh{\nu_k}(n_2)}
			\end{split}
	\end{equation}
	for any $k\geq 3$ and $n_1,n_2\in\Z$.

	Consequently, if $~\nu_k=\nu$ for all $k\in\N\setminus\br{1}$ then $\wh{\nu}(n)=0\text{ or }1$ for all $n\in\N$. Moreover, if there exists $n_0\in\Z\setminus \br{0}$ such that $\wh{\nu}\pa{n_0}=1$ then 
	$$\wh{\nu}\pa{ln_0}=1,\qquad \forall l\in\Z.$$	
	We conclude that if $\nu\ne \frac{d\theta}{2\pi}$ then $\nu$ can't be absolutely continuous with respect to $\frac{d\theta}{2\pi}$.
\end{lemma}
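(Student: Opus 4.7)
The plan is to translate everything into Fourier coefficients and to exploit the fact that, since the $f_k$ arise as weak limits of marginals of a common symmetric family $\br{F_N}_{N\in\N}$, we must have $\Pi_j\pa{f_k}=f_j$ for every $j\leq k$ (the marginal operation is continuous under weak convergence, and $\Pi_j\pa{F_{N,k}}=F_{N,j}$ is tautological). A direct computation from the explicit form \eqref{eq:strong_partial_order}, carried out by applying the change of variables $\phi_m=\theta_m-\theta_i$ in the $i$-th summand, gives
\begin{equation}\nonumber
\wh{f_k}\pa{n_1,\dots,n_k}=\frac{\wh{\eta_k}\pa{\sum_{m=1}^{k} n_m}}{k}\sum_{l=1}^{k}\prod_{m\neq l}\wh{\nu_k}(n_m),
\end{equation}
and the marginal identity $\wh{f_j}\pa{n_1,\dots,n_j}=\wh{f_k}\pa{n_1,\dots,n_j,0,\dots,0}$ is then the starting point of the whole argument.

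Next I would exploit the freedom in choosing $n_j$: specialising to $n_j=-\sum_{m=1}^{j-1}n_m$ forces the total arguments of both $\wh{\eta_j}$ and $\wh{\eta_k}$ to be zero, so the $\wh{\eta_\cdot}(0)=1$ factors collapse and the $\eta$-dependence disappears entirely. Using $\wh{\nu_k}(0)=1$ at every zero slot together with the evenness of $\nu_j$ (which gives $\wh{\nu_j}\pa{-\sum_{m=1}^{j-1}n_m}=\wh{\nu_j}\pa{\sum_{m=1}^{j-1}n_m}$), I would split the outer sum $\sum_{l=1}^{k}$ on the right-hand side into three ranges: $l\leq j-1$ contributes $\wh{\nu_k}\pa{\sum_{m=1}^{j-1}n_m}\prod_{m\neq l}^{j-1}\wh{\nu_k}(n_m)$; $l=j$ contributes $\prod_{m=1}^{j-1}\wh{\nu_k}(n_m)$; and $l\geq j+1$ contributes $k-j$ identical copies of $\wh{\nu_k}\pa{\sum_{m=1}^{j-1}n_m}\prod_{m=1}^{j-1}\wh{\nu_k}(n_m)$. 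Cross-multiplying by $jk$ yields \eqref{eq:connection_between_marginals_special_choice_sum}, and specialising to $j=2$ and $j=3$ while unwinding the empty-product and empty-sum conventions produces \eqref{eq:strong_partial_condition_j=2} and \eqref{eq:strong_partial_condition_j=3} respectively.

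For the consequences when $\nu_k=\nu$ for every $k\geq 2$, I would plug into \eqref{eq:strong_partial_condition_j=2} to get $(k-2)\wh{\nu}(n)\pa{1-\wh{\nu}(n)}=0$, so taking $k\geq 3$ forces $\wh{\nu}(n)\in\br{0,1}$ for every $n\in\Z$. Given $n_0$ with $\wh{\nu}\pa{n_0}=1$, I would prove $\wh{\nu}\pa{ln_0}=1$ for all $l\in\Z$ by induction: substituting $n_1=ln_0$ and $n_2=n_0$ into \eqref{eq:strong_partial_condition_j=3} for any $k\geq 4$ and using the inductive hypothesis collapses the identity to $(k-3)\pa{1-\wh{\nu}\pa{(l+1)n_0}}=0$, and evenness of $\nu$ handles negative indices.

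The final statement is then immediate from the Riemann--Lebesgue lemma: an absolutely continuous probability measure on $\T$ with respect to $\frac{d\theta}{2\pi}$ has Fourier coefficients that vanish at infinity, which is incompatible with $\wh{\nu}\pa{ln_0}=1$ for all $l\in\Z$ unless $n_0=0$; in the latter case $\wh{\nu}(n)=0$ for all $n\neq 0$ and hence $\nu=\frac{d\theta}{2\pi}$. The only piece of genuine bookkeeping in the whole argument is the three-range splitting of $\sum_{l=1}^{k}$ and the careful handling of empty products and sums at $j=2$ and $j=3$; everything else is straightforward algebra together with a classical Fourier fact.
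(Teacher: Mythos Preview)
Your proposal is correct and follows essentially the same route as the paper: pass to Fourier coefficients, use the marginal identity $\wh{f_j}\pa{n_1,\dots,n_j}=\wh{f_k}\pa{n_1,\dots,n_j,0,\dots,0}$, specialise $n_j=-\sum_{m=1}^{j-1}n_m$ to kill the $\wh{\eta}$ factors, split the sum over $l$ into the three ranges, and then draw the $\nu_k=\nu$ consequences from the $j=2$ and $j=3$ identities together with Riemann--Lebesgue. The only cosmetic difference is that for the claim $\wh{\nu}\pa{ln_0}=1$ the paper first simplifies \eqref{eq:strong_partial_condition_j=3} (with $\nu_k=\nu$, $k\geq 4$) to the closed form $\wh{\nu}(n_1)\wh{\nu}(n_2)+\wh{\nu}(n_1+n_2)\pa{\wh{\nu}(n_1)+\wh{\nu}(n_2)}=3\wh{\nu}(n_1+n_2)\wh{\nu}(n_1)\wh{\nu}(n_2)$ and reads off the implication, whereas you substitute specific values and induct; both arrive at the same place.
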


\begin{proof}
	We start by noticing that $\br{f_k}_{k\in\N}$ are symmetric by their definition\footnote{The fact that $\br{F_N}_{N\in\N}$ are symmetric will also imply that the weak limit of their marginals must be symmetric.}. For any $j\leq k$ and any $\pa{n_1,\dots,n_j}$ we have that 
	\begin{equation}\label{eq:mariginal_connection}
		\begin{split}
			\wh{f_j}&\pa{n_1,\dots,n_j} = \lim_{N\to\infty} \int_{\T^N}e^{-i\sum_{l=1}^j n_l \theta_l}dF_N\pa{\theta_1,\dots,\theta_N}\\
			&= \lim_{N\to\infty} \int_{\T^N}e^{-i\sum_{l=1}^j n_l \theta_l}d\Pi_k\pa{F_N}\pa{\theta_1,\dots,\theta_k}= \wh{f_k}\pa{n_1,\dots,n_j,0,\dots,0}.
		\end{split}
	\end{equation}
	Under the assumption that \eqref{eq:strong_partial_order} holds, and using Lemma \ref{lem:fourier_of_partial_order} and the fact that 
	$$\wh{\otimes_{r=1}^l \xi}\pa{n_1,\dots,n_l} = \prod_{r=1}^l \wh{\xi}\pa{n_r}$$
	for any $\xi\in \PP\pa{\T}$ we find that \eqref{eq:mariginal_connection} implies that
	\begin{equation}\nonumber 
		\frac{\wh{\eta_j}\pa{\sum_{i=1}^j n_i}}{j}\sum_{l=1}^j  \prod_{m=1,\;m\not=l}^j\wh{\nu_j}\pa{n_{m}}=\frac{\wh{\eta_k}\pa{\sum_{i=1}^j n_i}}{k}\pa{\sum_{l=1}^j  \prod_{m=1,\;m\not=l}^{j}\wh{\nu_k}\pa{n_{m}} + \pa{k-j}\prod_{m=1}^{j}\wh{\nu_k}\pa{n_{m}}},
	\end{equation}
	where we have used the fact that $\wh{\nu}(0)=1$ for any $\nu\in\PP\pa{\T}$. 
	
	Choosing $n_j=-\sum_{i=1}^{j-1}n_i$ we find that 
	\begin{equation}\nonumber 
		\begin{aligned}
			\frac{1}{j}&\pa{\prod_{m=1}^{j-1}\wh{\nu_j}\pa{n_m}+\wh{\nu_j}\pa{-\sum_{m=1}^{j-1}n_m}\sum_{l=1}^{j-1}\prod_{m=1,\;m\not=l}^{j-1}\wh{\nu_j}\pa{n_m}}	\\
			=&\frac{1}{k}\pa{\prod_{m=1}^{j-1}\wh{\nu_k}\pa{n_m}+\wh{\nu_k}\pa{-\sum_{m=1}^{j-1}n_m}\sum_{l=1}^{j-1}\prod_{m=1,\;m\not=l}^{j-1}\wh{\nu_k}\pa{n_m}+\pa{k-j}\wh{\nu_k}\pa{-\sum_{m=1}^{j-1}n_m}\prod_{m=1}^{j-1}\wh{\nu_k}\pa{n_m}}.
		\end{aligned}
	\end{equation}
	In particular, using the fact that any even probability measure $\xi$ satisfies $\wh{\xi}(n)=\wh{\xi}(-n)$ gives us \eqref{eq:connection_between_marginals_special_choice_sum} which implies \eqref{eq:strong_partial_condition_j=2} and \eqref{eq:strong_partial_condition_j=3} by plugging $j=2$ and $j=3$ respectively.
	
	Assuming that $\nu_k=\nu$ for all $k\in\N\setminus\br{1}$ and utilising \eqref{eq:strong_partial_condition_j=2} we find that 
	$$k\wh{\nu}(n) = 2\wh{\nu}(n)+\pa{k-2}\wh{\nu}^2(n). $$
	for any $k\geq 3$. This implies that
	$$\wh{\nu}(n)=\wh{\nu}^2(n)$$
	for all $n\in\Z$ from which we conclude that $\wh{\nu}(n)=0$ or $\wh{\nu}(n)=1$ for any $n\in\Z$.
	
	Next we notice that in this case \eqref{eq:strong_partial_condition_j=3} for any $k\geq 4$ can be simplified to
	$$\wh{\nu}(n_1)\wh{\nu}(n_2) + \wh{\nu}\pa{n_1+n_2}\pa{\wh{\nu}(n_1)+\wh{\nu}(n_2) }=3\wh{\nu}(n_1+n_2) \wh{\nu}(n_1)\wh{\nu}(n_2). $$
	Consequently:
	$$\wh{\nu}(n+m) = \begin{cases}
		1,& \wh{\nu}(n)=\wh{\nu}(m)=1,\\
		0,& \wh{\nu}(n)\wh{\nu}(m)=0\ \ \ \text{ and } \ \ \ \wh{\nu}(n)+\wh{\nu}(m)\ne 0.\\
	\end{cases}$$
	This implies that if there exists $n_0\in\Z\setminus\br{0}$ such that $\wh{\nu}\pa{n_0}=1$ then $\wh{\nu}\pa{ln_0}=1$ for all $l\in\N$. Since $\nu$ is even and $\wh{\nu}(0)=1$ we find that in this case
	$$\wh{\nu}\pa{ln_0}=1,\qquad \forall l\in\Z.$$
	Consequently 
	$$\lim_{l\to\pm\infty}\wh{\nu}\pa{ln_0}\not=0,$$
	showing that $\nu$ can't be absolutely continuous with respect to $\frac{d\theta}{2\pi}$ in that case. As $\wh{\nu}(n) = \delta_0(n)$ implies that $\nu=\frac{d\theta}{2\pi}$, we conclude the proof.
\end{proof}

\begin{remark}\label{rem:restricting_conditions}
	We would like to point out two observations about Lemma \ref{lem:no_decoupled_partial_order} and its proof:
	\begin{itemize}
		\item We have not truly needed \eqref{eq:connection_between_marginals_special_choice_sum} for all $j$ and $k$ to conclude the second part of the lemma. We can achieve the same result by using \eqref{eq:strong_partial_condition_j=2} and $k=3$, and \eqref{eq:strong_partial_condition_j=3} and $k=4$.
		\item 	The identity \eqref{eq:connection_between_marginals_special_choice_sum} is extremely restrictive -- and it doesn't even include $\mu_k$! It seems unlikely that a partially ordered state of the form \eqref{eq:strong_partial_order} exists in the setting of the CL model, but the authors have elected to leave further investigation of this to a later time.
	\end{itemize}
\end{remark}

\begin{remark}\label{rem:f_infty_not_tensorised}
	Theorem \ref{thm:main_partial_order} states that $\br{f_{N,\infty}}_{N\in\N}$, the family of generated limiting partially ordered states of the CL model, is $\pa{\frac{d\theta}{2\pi},\nu_{k-1}}_{k\in\N}-$partially ordered for a family of absolutely continuous measures $\br{\nu_{k}}_{k\in\N}$. Lemma \ref{lem:no_decoupled_partial_order} implies that $\nu_{k}$ is not of the form $\otimes_{i=1}^k \nu$ for some $\nu\in\PP\pa{\T}$. 
\end{remark}

\subsection{On the Fourier coefficients of $g_{\epsilon_N}$}
 
 The study of the rescaled CL master equation \eqref{eq:master_CL_rescaled} relies heavily on $g_{\epsilon_N}$ and in particular on the asymptotic behaviour of its Fourier coefficient. Our study of  the ``low'' frequencies of $\wh{g_{\epsilon_N}}$ will be invaluable in understanding the limit equations of $\br{F_{N,k}(t)}_{N\in\N}$, expressed via \eqref{eq:limit_of_F_N_K_partial_order}, while the of the ``high'' frequencies of $\wh{g_{\epsilon_N}}$ will be imperative for the proof of Theorem \ref{thm:quantitative_convergence}.
 
 \begin{lemma}\label{lem:on_g_low_and_high_frequencies}
 	Let $g$ be a even probability density on $\pa{\R,dx}$ such that its $l$-th moment, defined as
 	$$m_l = \int_{\R}\abs{x}^l g(x)dx,$$
 	is finite for some $l\in \N \setminus \br{1,2}$. 
 	\begin{enumerate}[(i)]
 		\item\label{item:g_coefficient_real_and_bounded_by_1} For any $n\in\Z$ we have that $\wh{g_{\epsilon_N}}(n)\in\R$ and $-1\leq \wh{g_{\epsilon_N}}(n)\leq 1$.
 		\item\label{item:low_frequencies} For any $\epsilon_N<\frac{\pi}{\sqrt[l]{m_l}}$ and any $n\in\Z$
 		\begin{equation}\label{eq:estimation_for_fourier_of_g_eps_low}
 			\abs{\widehat{g_{\epsilon_N}}(n) - 1 + \frac{m_2 \epsilon_N^2 n^2}{2}} \leq \uptau_N+\begin{cases}
 				\frac{m_3}{3}\epsilon_N^3 \abs{n}^3, & l=3,\\
 				\frac{m_4}{12}\epsilon_N^4 n^4, & l>3,
 			\end{cases}
 		\end{equation}
 		where
 		\begin{equation}\label{eq:uptau_N}
 			\uptau_N = 	\frac{2\epsilon_N^l m_l}{\pi^l-\epsilon_N^l m_l} 
 		\end{equation}
 		\item\label{item:high_frequencies} If in addition we have that $g\in L^p\pa{\R}$ for some $p>1$ then for any positive sequence $\br{\alpha_N}_{N\in\N}$ such that 
 		\begin{equation}\nonumber 
 			\alpha_N \leq 4^{q+1}\norm{g}_{L^p}^q,
 		\end{equation}
 		with $q$ being the H\"older conjugate of $p$, we have that when $\abs{n} \geq \frac{\alpha_N}{\epsilon_N}$
 		\begin{equation}\label{eq:upper_bound_away_from_zero}
 			\wh{g_{\epsilon_N}(n)}-1 \leq \uptau_N  - \frac{\alpha_N^2 \pi^2}{ 2\cdot 4^{2\pa{q+1}+1}\norm{g}_{L^p}^{2q}\pa{\sqrt[l]{4m_l}\alpha_N+2\pi}^2},
 		\end{equation}
 		as long as $\epsilon_N  < \frac{\pi}{\sqrt[l]{m_l}}$.
 	\end{enumerate}
 \end{lemma}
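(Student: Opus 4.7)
\medskip

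All three parts flow from a common rewriting. Substituting $u = \theta/\epsilon_N$ in the definition of $\wh{g_{\epsilon_N}}(n)$ and using that $g$ is even yields
\[
\wh{g_{\epsilon_N}}(n) \;=\; \frac{1}{2\pi G_{\epsilon_N}}\int_{-\pi/\epsilon_N}^{\pi/\epsilon_N}\cos(n\epsilon_N u)\,g(u)\,du,
\]
which is real-valued and, because $|\cos|\le 1$ while $\frac{1}{2\pi G_{\epsilon_N}}\int_{-\pi/\epsilon_N}^{\pi/\epsilon_N}g\,du=1$, bounded in absolute value by $1$. This proves part \eqref{item:g_coefficient_real_and_bounded_by_1}.

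For part \eqref{item:low_frequencies}, I would isolate the boundary correction by writing $\frac{1}{2\pi G_{\epsilon_N}} = 1 + \frac{\beta_N}{1-\beta_N}$, where $\beta_N := \int_{|u|>\pi/\epsilon_N}g\,du \le m_l\epsilon_N^l/\pi^l$ by Markov's inequality; note that $2\beta_N/(1-\beta_N) \le \uptau_N$. Add and subtract $\int_\R\cos(n\epsilon_N u)g\,du$, and expand the cosine by Taylor: $\cos(x) = 1 - x^2/2 + R(x)$ with Lagrange-type remainder $|R(x)|\le |x|^3/6$ when $l=3$ and $|R(x)|\le x^4/24$ when $l\ge 4$ (using $\cos'''$ or $\cos^{(4)}$ respectively, bounded by $1$). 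Integration against $g$ turns the quadratic term into $-m_2\epsilon_N^2n^2/2$, the remainder into $\int_\R R(n\epsilon_N u)g\,du$ bounded by the desired $m_3$ or $m_4$ contribution, and the boundary corrections (a tail integral plus $\beta_N\wh{g_{\epsilon_N}}(n)$) are absorbed into $\uptau_N$ using $|\wh{g_{\epsilon_N}}(n)|\le 1$ from part \eqref{item:g_coefficient_real_and_bounded_by_1}.

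Part \eqref{item:high_frequencies} is the main obstacle. Using the identity $1 - \cos = 2\sin^2(\cdot/2)$ and the fact that $1/(2\pi G_{\epsilon_N})\ge 1$, one reduces to showing a lower bound
\[
2\int_\R g(u)\sin^2(\xi u/2)\,du \;\ge\; \frac{\alpha_N^2\pi^2}{2\cdot 4^{2(q+1)+1}\|g\|_p^{2q}(\sqrt[l]{4m_l}\alpha_N+2\pi)^2}
\]
for $\xi = n\epsilon_N$ with $|\xi|\ge\alpha_N$; the missing piece $\uptau_N$ again absorbs the tail $|u|>\pi/\epsilon_N$. For the truncated integral on $|u|\le M$, I would combine the Taylor-type lower bound $\sin^2(\xi u/2)\ge (\xi u/\pi)^2$ valid for $|\xi u|\le\pi$ with H\"older's inequality: write $\int_{|u|\le M}g(1-\cos(\xi u))du = \int_{|u|\le M}g\,du - \int_{|u|\le M}g\cos(\xi u)\,du$, bound $\int_{|u|\le M}g\,du \ge 1 - m_l/M^l$ by Markov, and estimate $\bigl|\int_{|u|\le M}g\cos(\xi u)du\bigr| \le \|g\|_p (2M)^{1/q}$ by H\"older with conjugate exponents $(p,q)$. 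A truncation at $M=\sqrt[l]{4m_l}$ makes the tail $m_l/M^l=1/4$. The hard part of the argument is combining these two mechanisms and a suitably chosen $M$ (likely of the form $M = (\sqrt[l]{4m_l}\alpha_N + 2\pi)/(|\xi|\alpha_N)$ or similar) to produce the precise algebraic shape; the restriction $\alpha_N\le 4^{q+1}\|g\|_p^q$ is exactly what makes the H\"older term compatible with the explicit constants, since it lets us compare the oscillatory contribution $\|g\|_p(2M)^{1/q}$ with the quadratic gain $\alpha_N^2$ inherited from the Taylor estimate. Matching the constant $4^{2(q+1)+1}$ is purely bookkeeping once the right $M$ is identified.
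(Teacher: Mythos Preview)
Parts \eqref{item:g_coefficient_real_and_bounded_by_1} and \eqref{item:low_frequencies} are fine and match the paper's approach: the paper simply cites the evenness of $g_{\epsilon_N}$ for \eqref{item:g_coefficient_real_and_bounded_by_1} and refers to \cite{E2024} for \eqref{item:low_frequencies}, and your Taylor-plus-tail decomposition is exactly what underlies those references.

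Part \eqref{item:high_frequencies}, however, has a genuine gap. Your ``mechanism~2'' --- bounding $\bigl|\int_{|u|\le M}g\cos(\xi u)\,du\bigr|\le\|g\|_{L^p}(2M)^{1/q}$ by H\"older --- does not use the oscillation of $\cos(\xi u)$ at all; it is the same bound you would get for $\int_{|u|\le M}|g|$. In particular there is no reason for $(1-m_l/M^l)-\|g\|_{L^p}(2M)^{1/q}$ to be positive: for a standard Gaussian with $p=\infty$, $l=4$, it is negative for every $M>0$. Your Taylor mechanism $\sin^2(\xi u/2)\ge(\xi u/\pi)^2$ only holds on $|u|\le\pi/|\xi|$, and the integral of $g(u)u^2$ over that shrinking interval has no lower bound in terms of moments or $\|g\|_{L^p}$. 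So neither mechanism works alone, and there is no way to combine them that produces a uniform lower bound --- the ``suitably chosen $M$'' you hope for does not exist.

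What the paper does instead is a sublevel-set argument. For $\beta\in(0,1)$ and $R>0$, set
\[
B_{\beta,R}=\bigl\{u\in[-R,R]:1-\cos(|\xi|u)\le\beta\bigr\}.
\]
Since $1-\cos(|\xi|u)\le\beta$ confines $|\xi|u$ to an arc of length $2\cos^{-1}(1-\beta)\le 4\sqrt\beta$ per period and $[-R,R]$ contains at most $R|\xi|/\pi+2$ periods, one gets $|B_{\beta,R}|\le 4(R/\pi+2/\alpha)\sqrt\beta$ when $|\xi|\ge\alpha$. Then
\[
\int_{\R}g(u)\bigl(1-\cos(|\xi|u)\bigr)du\;\ge\;\beta\int_{[-R,R]\setminus B_{\beta,R}}g
\;\ge\;\beta\Bigl(1-\tfrac{m_l}{R^l}-\|g\|_{L^p}|B_{\beta,R}|^{1/q}\Bigr),
\]
where now H\"older is applied on the \emph{small} set $B_{\beta,R}$, so the $L^p$ term carries the factor $\beta^{1/(2q)}$. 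Taking $R=\sqrt[l]{4m_l}$ and $\beta$ equal to the right-hand side of \eqref{eq:upper_bound_away_from_zero} (this is where the constraint $\alpha_N\le 4^{q+1}\|g\|_{L^p}^q$ enters) makes the bracket at least $1/2$, and combining with the approximation $|\wh{g_{\epsilon_N}}(n)-\mathcal F(g)(n\epsilon_N)|\le\uptau_N$ finishes the job. The missing idea in your sketch is precisely this: apply H\"older on the sublevel set of $1-\cos$, not on all of $[-M,M]$.
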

 
\begin{proof}
	\leavevmode
	\begin{enumerate}[(i)]
		\item The claim follows form the fact that $g_{\epsilon_N}$ is a real and even probability density. 
		\item The proof can be found in the discussion about $g_{\epsilon_N}$ in \cite[Lemma 18, Lemma 23, and Lemma 24]{E2024}.
		\item The proof is inspired form similar considerations in \cite{CCLLV2010} and is left to Appendix \S\ref{app:additional_proofs}. 
	\end{enumerate}
\end{proof}
 
 With the tools we've developed in this section we are now ready to start exploring the notion of partial order in the CL model.
\section{Generation of partial order}\label{sec:partial_order}

In this section we will explore the generation of partial order in our rescaled CL master equation \eqref{eq:master_CL_rescaled} in the case when $N\epsilon_N^2=1$. We start by finding a recursive formula for the weak limits of the marginals of the family of solutions to our equation. 

\begin{theorem}\label{thm:recursive_formula_for_sol}
Let $\br{F_N(t)}_{N\in\N}$ be the family of symmetric solutions to \eqref{eq:master_CL_rescaled} with initial data $\br{F_{N}(0)}_{N\in\N}$. Assume in addition that $N\epsilon_N^2=1$, the interaction generating function $g$ has a finite moment of order $l \in \N\setminus \br{1,2}$, and $\br{F_{N,k}\pa{0}}_{N\in\N}$ converges weakly as $N$ goes to infinity to a family $f_{k,0} \in \PP\pa{\mathcal{T}^k}$ for any $k\in\N$. 
Then for any $t>0$ and any $k\in\N$, $\br{F_{N,k}(t)}_{N\in\N}$ converges weakly as $N$ goes to infinity to a symmetric family $f_k(t)\in \PP\pa{\T^k}$ whose Fourier coefficients are given by \eqref{eq:limit_of_F_N_K_partial_order}. In other words,
\begin{equation}\nonumber 
	\begin{aligned}
		&\wh{f_k}\pa{n_1,\dots,n_k,t}= e^{-\frac{\lambda \pa{2k\pa{k-1}+m_2\pa{\sum_{r=1}^k n_r^2}}t}{2}}\wh{f_{k,0}}\pa{n_1,\dots,n_k}\\
		&+2\lambda\sum_{i<j\leq k}\int_{0}^t e^{-\frac{\lambda \pa{2k\pa{k-1}+m_2\pa{\sum_{r=1}^k n_r^2}}\pa{t-s}}{2}}\wh{f_{k-1}}\pa{n_1,\dots,\wt{n_i},\dots,n_i+n_j,\dots, \wt{n_j},\dots, n_k,s}ds.
	\end{aligned}
\end{equation}
\end{theorem}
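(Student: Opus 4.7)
The plan is to derive, by Fourier-transforming \eqref{eq:master_CL_rescaled} and integrating out $\theta_{k+1},\dots,\theta_N$, a closed ODE for $\wh{F_{N,k}}(n_1,\dots,n_k,t)$ that involves only $\wh{F_{N,k}}$ and $\wh{F_{N,k-1}}$; then to pass to the limit $N\to\infty$ under the critical scaling $N\epsilon_N^2=1$; and finally to integrate the limiting ODE by Duhamel's formula, yielding \eqref{eq:limit_of_F_N_K_partial_order}. The argument proceeds by induction on $k$, with $k=1$ as the base case.

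\textbf{Step 1: the finite-$N$ Fourier hierarchy.} Evaluating the Fourier transform of \eqref{eq:master_CL_rescaled} at $(n_1,\dots,n_k,0,\dots,0)$ -- which by symmetry of $F_N$ equals $\wh{F_{N,k}}(n_1,\dots,n_k,t)$ -- relies on the convolution identity
\begin{equation}\nonumber
\wh{g_{\epsilon_N}(\theta_i-\theta_j)[F_N]_{\wt{j}}}(m_1,\dots,m_N)=\wh{g_{\epsilon_N}}(m_j)\,\wh{F_N}(m_1,\dots,m_i+m_j,\dots,0,\dots,m_N),
\end{equation}
with a $0$ at position $j$, together with its analogue for $[F_N]_{\wt{i}}$, the evenness of $g$, the identity $\wh{g_{\epsilon_N}}(0)=1$, and the symmetry of $F_{N,k-1}$. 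Splitting $\sum_{i<j}$ into interior pairs $(1\le i<j\le k)$, mixed pairs $(1\le i\le k<j\le N)$, and exterior pairs $(k<i<j\le N)$, one verifies directly that exterior pairs cancel, each mixed pair contributes $\tfrac{1}{2}(\wh{g_{\epsilon_N}}(n_i)-1)\wh{F_{N,k}}$, and each interior pair contributes $\tfrac{1}{2}(\wh{g_{\epsilon_N}}(n_i)+\wh{g_{\epsilon_N}}(n_j))\wh{F_{N,k-1}}(n_1,\dots,\wt{n_i},\dots,n_i+n_j,\dots,\wt{n_j},\dots,n_k)-\wh{F_{N,k}}$. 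Collecting these and multiplying by $\tfrac{2\lambda N}{N-1}$ produces a linear ODE $\partial_t\wh{F_{N,k}}=-\alpha_N(n_1,\dots,n_k)\wh{F_{N,k}}+\beta_N[\wh{F_{N,k-1}}]$ with explicit $N$-dependent coefficients.

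\textbf{Step 2: passage to the limit and Duhamel.} By Lemma \ref{lem:on_g_low_and_high_frequencies}, $\wh{g_{\epsilon_N}}(n)-1=-\tfrac{m_2\epsilon_N^2 n^2}{2}+O(\epsilon_N^3)$ pointwise in $n$. Combined with $N\epsilon_N^2=1$ this yields, pointwise in $(n_1,\dots,n_k)$, the limits $\tfrac{2\lambda N}{N-1}\to 2\lambda$, $\tfrac{1}{2}(\wh{g_{\epsilon_N}}(n_i)+\wh{g_{\epsilon_N}}(n_j))\to 1$, and
\begin{equation}\nonumber
\frac{\lambda N(N-k)}{N-1}\sum_{r=1}^{k}\pa{\wh{g_{\epsilon_N}}(n_r)-1}\;\longrightarrow\; -\frac{\lambda m_2}{2}\sum_{r=1}^{k}n_r^2,
\end{equation}
so $\alpha_N\to A_k:=\tfrac{\lambda\pa{2k(k-1)+m_2\sum_{r=1}^k n_r^2}}{2}$. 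Assuming by induction that $\wh{F_{N,k-1}}(\cdot,s)\to\wh{f_{k-1}}(\cdot,s)$ for every $s\ge 0$ (the base case $k=1$ has no source and integrates to $\wh{f_1}(n_1,t)=e^{-\lambda m_2 n_1^2 t/2}\wh{f_{1,0}}(n_1)$), the uniform bound $\abs{\wh{F_{N,k-1}}}\le 1$ together with $\alpha_N\ge 0$ (since $\wh{g_{\epsilon_N}}(n_r)\le 1$) allows one to invoke dominated convergence in the Duhamel integral, passing the $N\to\infty$ limit through to deliver \eqref{eq:limit_of_F_N_K_partial_order} verbatim. Symmetry of $f_k(t)$ is inherited from $f_{k-1}(t)$ since the interior source is symmetric in $(n_1,\dots,n_k)$ (the location of $n_i+n_j$ being inessential by inductive symmetry); Lemma \ref{lem:weak_convergence_and_fourier} then upgrades pointwise convergence of Fourier coefficients to weak convergence of the marginals.

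\textbf{Main obstacle.} The delicate point is the mixed-pair contribution, whose global prefactor is of order $N$ while each summand $\wh{g_{\epsilon_N}}(n_r)-1$ is only $O(\epsilon_N^2)$: the critical balance $N\epsilon_N^2=1$ is exactly what converts this into the finite damping $-\tfrac{\lambda m_2}{2}\sum n_r^2$. The next-order error from Lemma \ref{lem:on_g_low_and_high_frequencies} is $O(\epsilon_N^3)$ when the moment order is $l=3$, hence $O(N\epsilon_N^3)=O(N^{-1/2})\to 0$ (and smaller still when $l\ge 4$); this is precisely why the hypothesis $l\ge 3$ enters. Once the pointwise convergence of $\alpha_N$ is in hand, the remaining algebra -- closedness of the $\wh{F_{N,k}}$--$\wh{F_{N,k-1}}$ hierarchy and the Duhamel integration -- is routine.
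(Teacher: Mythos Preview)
Your proposal follows essentially the same route as the paper: obtain the finite-$N$ Fourier hierarchy (the paper quotes it as \eqref{eq:recursive} from \cite{E2024} rather than deriving it), then pass to the limit by induction on $k$ using the asymptotics of $\wh{g_{\epsilon_N}}$ from Lemma~\ref{lem:on_g_low_and_high_frequencies} together with $N\epsilon_N^2=1$, the bound $\abs{\wh{F_{N,k-1}}}\le 1$, and dominated convergence in the Duhamel integral.

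There is one small point the paper handles explicitly and you gloss over. Lemma~\ref{lem:weak_convergence_and_fourier} as stated presupposes that the limit $\eta$ is already known to lie in $\PP(\T^k)$; it does not by itself say that pointwise convergence of Fourier coefficients of a sequence of probability measures yields a probability-measure limit. The paper fills this in at the start of its proof by a short Riesz--Markov argument (equivalently, one could invoke Prokhorov compactness on the compact space $\T^k$ and identify the unique limit via its Fourier coefficients). Your final sentence invoking Lemma~\ref{lem:weak_convergence_and_fourier} therefore needs this extra line. Also, the paper obtains symmetry of $f_k(t)$ more cheaply than you do: since each $F_{N,k}(t)$ is symmetric and symmetry passes to weak limits, no inductive argument through the source term is needed.
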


The proof of the above relies on the following lemma, whose proof can be found in \cite[Corollary 21]{E2024}:

\begin{lemma}
	A recursive formula for the $k$-th marginals of a family of solution to \eqref{eq:master_CL_rescaled}, $\br{F_{N,k}(t)}_{N\in\N}$ is given by
	\begin{equation}\label{eq:recursive}
		\begin{aligned}
			&\wh{F_{N,k}}\pa{n_1,\dots,n_k,t} = e^{-\frac{\lambda N}{N-1}\pa{\pa{N-k}\sum_{r=1}^{k}\pa{1-\wh{g_{\epsilon_N}}\pa{n_r}}+k\pa{k-1}}t}\wh{F_{N,k}}\pa{n_1,\dots,n_k,0}\\ 		
			&+\frac{\lambda N}{N-1}\sum_{i<j\leq k}\pa{\widehat{g_{\epsilon_N}}\pa{n_i}+\widehat{g_{\epsilon_N}}\pa{n_j}}\int_{0}^t e^{-\frac{\lambda N}{N-1}\pa{\pa{N-k}\sum_{r=1}^{k}\pa{1-\wh{g_{\epsilon_N}}\pa{n_r}}+k\pa{k-1}}\pa{t-s}} \\
			&\qquad\qquad\qquad\qquad\qquad\qquad\qquad\wh{F_{N,k-1}}\pa{n_1,\dots,\wt{n_i},\dots n_i+n_j,\dots,\wt{n_j},\dots,n_k,s}ds.
		\end{aligned}
	\end{equation}
\end{lemma}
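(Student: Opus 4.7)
The plan is to derive a first-order linear scalar ODE for the Fourier coefficient $\wh{F_{N,k}}(n_1,\dots,n_k,t)$ directly from the rescaled master equation \eqref{eq:master_CL_rescaled} and then recover \eqref{eq:recursive} by Duhamel's formula. Multiplying both sides of \eqref{eq:master_CL_rescaled} by $e^{-i\sum_{l=1}^k n_l\theta_l}$ and integrating against $\prod_{l=1}^N d\theta_l/(2\pi)$ over $\T^N$ turns the left-hand side into $\frac{d}{dt}\wh{F_{N,k}}(n_1,\dots,n_k,t)$, so the substance of the proof is to reduce the right-hand side. I would split the pair sum $\sum_{i<j}$ into three disjoint families: (I) $i<j\leq k$, (II) $i\leq k<j$, and (III) $k<i<j$. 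The elementary change of variables $u=\theta_i-\theta_j$ gives the repeatedly used identity
\begin{equation*}
	\int_{\T} g_{\epsilon_N}\pa{\theta_i-\theta_j}\,e^{-im\theta_j}\,\frac{d\theta_j}{2\pi} = \wh{g_{\epsilon_N}}(m)\,e^{-im\theta_i},
\end{equation*}
together with its mirror image in $\theta_i$.

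Case (III) collapses because $\wh{g_{\epsilon_N}}(0)=1$ forces each gain term to reduce to $\frac{1}{2}[F_N]_{\wt{i,j}}$ and exactly cancel the loss. Case (II) contributes only a decay of $\wh{F_{N,k}}$ and no genuine lower-order marginal: one of its gain terms integrates (in $\theta_j$) to $\wh{F_{N,k}}(n_1,\dots,n_k)$, while the other integrates (in $\theta_i$) to $\wh{g_{\epsilon_N}}(n_i)$ times a Fourier coefficient of $F_N$ whose non-zero indices are $\br{n_l:l\leq k,\,l\neq i}$ at positions $\br{1,\dots,k}\setminus\br{i}$ and $n_i$ at position $j>k$, which by symmetry of $F_N$ equals $\wh{F_{N,k}}(n_1,\dots,n_k)$ again; summing over the $k(N-k)$ mixed pairs contributes $-\frac{\lambda N(N-k)}{N-1}\sum_{r=1}^k\pa{1-\wh{g_{\epsilon_N}}(n_r)}\wh{F_{N,k}}$ to the equation. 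Case (I) is the only one producing the $(k-1)$-th marginal: for a pair $i<j\leq k$, integrating the $[F_N]_{\wt{j}}$ gain term in $\theta_j$ produces $\wh{g_{\epsilon_N}}(n_j)e^{-in_j\theta_i}$, which combines with $e^{-in_i\theta_i}$ to yield $e^{-i(n_i+n_j)\theta_i}$, so that symmetry of $F_N$ identifies the remaining integral as $\wh{F_{N,k-1}}\pa{n_1,\dots,\wt{n_i},\dots,n_i+n_j,\dots,\wt{n_j},\dots,n_k}$; the $[F_N]_{\wt{i}}$ gain term produces the symmetric companion with $\wh{g_{\epsilon_N}}(n_i)$, and the loss terms accumulate to $-k(k-1)\wh{F_{N,k}}$.

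Multiplying the collected contributions by $\frac{2\lambda N}{N-1}$ yields the ODE
\begin{equation*}
	\frac{d}{dt}\wh{F_{N,k}} = -\mathcal{A}\,\wh{F_{N,k}} + \frac{\lambda N}{N-1}\sum_{i<j\leq k}\pa{\wh{g_{\epsilon_N}}(n_i)+\wh{g_{\epsilon_N}}(n_j)}\wh{F_{N,k-1}}\pa{n_1,\dots,\wt{n_i},\dots,n_i+n_j,\dots,\wt{n_j},\dots,n_k},
\end{equation*}
with $\mathcal{A}=\frac{\lambda N}{N-1}\rpa{k(k-1)+(N-k)\sum_{r=1}^k\pa{1-\wh{g_{\epsilon_N}}(n_r)}}$ matching the exponent in \eqref{eq:recursive}. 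Applying Duhamel's formula to this linear scalar ODE in $t$ reproduces \eqref{eq:recursive} verbatim. The main subtlety, beyond bookkeeping of the three cases, is the repeated appeal to symmetry of $F_N(t)$ to identify Fourier integrals of $F_N$ at relabelled multi-indices with the correct marginal $\wh{F_{N,j}}$; this is justified because the collision operator in \eqref{eq:master_CL_rescaled} is symmetric under permutation of particle labels, so $F_N(t)$ remains symmetric for all $t\geq 0$ whenever $F_N(0)$ is.
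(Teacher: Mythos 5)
Your derivation is correct: the three-way split of the pair sum, the cancellation for pairs with both indices $>k$, the decay contribution $-\frac{\lambda N(N-k)}{N-1}\sum_{r=1}^k\pa{1-\wh{g_{\epsilon_N}}(n_r)}$ from mixed pairs, the appearance of $\wh{F_{N,k-1}}$ at the merged index $n_i+n_j$ for pairs with $i<j\leq k$, and the final Duhamel step all check out, and the repeated use of propagated symmetry of $F_N(t)$ to identify relabelled Fourier integrals with the correct marginals is exactly the right justification. Note that the paper itself does not prove this lemma but cites \cite[Corollary 21]{E2024}; your argument is the standard BBGKY-in-Fourier computation that the cited result rests on, so there is nothing to object to.
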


\begin{proof}[Proof of Theorem \ref{thm:recursive_formula_for_sol}]
	\sloppy We start by noting the following general principle on $\T^k$: if $\br{\mu_N}_{N\in\N}$ is a family of probability measures on $\T^k$ such that $\lim_{N\to\infty}\wh{\mu_N}\pa{n_1,\dots,n_k}$ exists for all $\pa{n_1,\dots,n_k}\in\Z^k$, then there exists $\mu\in \PP\pa{\T^k}$ such that $\br{\mu_N}_{N\in\N}$ converges weakly to $\mu$ and 
	$$\wh{\mu}\pa{n_1,\dots,n_k}=\lim_{N\to\infty}\wh{\mu_N}\pa{n_1,\dots,n_k}.$$
	Indeed, the functional
	$$T(p) = \lim_{N\to\infty}\int_{\T^k}p\pa{\theta_1,\dots,\theta_k}d\mu_N\pa{\theta_1,\dots,\theta_k} $$
	is a well defined linear functional on the space of trigonometric polynomials on $\T^k$ due to our assumption on the convergence of the Fourier coefficients of $\br{\mu_N}_{N\in\N}$. Moreover, $T$ is a positive bounded functional with respect to the supremum norm. Consequently, $T$ can be extended to a positive bounded linear functional on $\pa{C\pa{\T^k},\norm{\cdot}_{\infty}}$\footnote{To show the positivity, assume that $f\geq 0$ and fix $\epsilon>0$. We can find a sequence of polynomials $\br{p_{k,\epsilon}}_{k\in\N}$ such that $\norm{p_{k,\epsilon}-\pa{f+\epsilon}}_{\infty}\underset{k\to\infty}{\longrightarrow}0$. Consequently, $p_{k,\epsilon}$ will be a positive polynomial for $k$ large enough and
	$$T(f)+\epsilon=T\pa{f+\epsilon} = \lim_{k\to\infty} T\pa{p_{k,\epsilon}}\geq 0.$$
	As $\epsilon>0$ was arbitrary, we conclude the desired positivity.}. Using the Riesz-Markov theorem and the fact that $T(1)=1$, we conclude that there exists $\mu\in \PP\pa{\T^k}$ such that
	$$T(f) = \int_{\T^k}f\pa{\theta_1,\dots,\theta_k}d\mu\pa{\theta_1,\dots,\theta_k}.$$
	As this implies that for any trigonometric polynomial $p$
	$$\lim_{N\to\infty}\int_{\T^k}p\pa{\theta_1,\dots,\theta_k}d\mu_N\pa{\theta_1,\dots,\theta_k} =\int_{\T^k}p\pa{\theta_1,\dots,\theta_k}d\mu\pa{\theta_1,\dots,\theta_k}$$
	we conclude, just like in the proof of Lemma \ref{lem:properties_of_even_and_symmetry_in_fourier}, that $\br{\mu_N}_{N\in\N}$ converges weakly to $\mu$. This, in turn, implies that  
	$$\wh{\mu}\pa{n_1,\dots,n_k}=\lim_{N\to\infty}\wh{\mu_N}\pa{n_1,\dots,n_k},$$
	as claimed.
	
	Following the above principle, we see that in order to show the existence of the family of probability measures $\br{f_k(t)}_{k\in\N}$ we only need to show the convergence of the Fourier coefficients of $\br{F_{N,k}(t)}_{N\in\N}$ for any $k\in\N$ and any $t>0$. We will show this by induction.  
	
	For $k=1$ \eqref{eq:recursive} reads as
	\begin{equation}\nonumber
		\begin{aligned}
			&\wh{F_{N,1}}\pa{n_1,t} = e^{-\lambda N\pa{1-\wh{g_{\epsilon_N}}\pa{n_1}}t}\wh{F_{N,1}}\pa{n_1,0}.
		\end{aligned}
	\end{equation}
	Using \eqref{eq:estimation_for_fourier_of_g_eps_low} from Lemma \ref{lem:on_g_low_and_high_frequencies} and the fact that $N\epsilon_N^2=1$ we find that 
	\begin{equation}\label{eq:limit_of_fourier_g}
		\lim_{N\to\infty}N\pa{1-\wh{g_{\epsilon_N}}\pa{n}} =\frac{m_2 n^2}{2}. 
	\end{equation}
	Consequently, since $\br{F_{N,1}}_{N\in\N}$ converges weakly to $f_{1,0}$, we find that 
	$$\lim_{N\to\infty}\wh{F_{N,1}}\pa{n_1,t} = e^{-\frac{\lambda m_2 n_1^2 t}{2}}\wh{f_{1,0}}(n_1). $$
	Assume that the claim holds for some $k\in\N$ and consider \eqref{eq:recursive} again. Using the induction assumption, together with the facts that $\br{F_{N,k}(0)}_{N\in\N}$ converges weakly to $f_{k,0}$,
	$$\abs{e^{-\frac{\lambda N}{N-1}\pa{\pa{N-k}\sum_{r=1}^{k}\pa{1-\wh{g_{\epsilon_N}}\pa{n_r}}+k\pa{k-1}}\pa{t-s}} \wh{F_{N,k-1}}\pa{n_1,\dots, n_i+n_j,\dots,n_k,s}}$$
	$$\leq  \abs{\wh{F_{N,k-1}}\pa{n_1,\dots, n_i+n_j,\dots,n_k,s}} \leq 1,\qquad\qquad 0\leq s\leq t,$$
	and the dominated convergence theorem we find that
	\begin{equation}\nonumber 
		\begin{aligned}
			&\lim_{N\to\infty}\wh{F_{N,k}}\pa{n_1,\dots,n_k,t}= e^{-\frac{\lambda\pa{2k\pa{k-1}+m_2\sum_{r=1}^k n_r^2}t}{2}}\wh{f_{k,0}}\pa{n_1,\dots,n_k}\\
			&+2\lambda\sum_{i<j\leq k}\int_{0}^t e^{-\frac{\lambda\pa{2k\pa{k-1}+m_2\sum_{r=1}^k n_r^2}\pa{t-s}}{2}}\wh{f_{k-1}}\pa{n_1,\dots,\wt{n_i},\dots,n_i+n_j,\dots, \wt{n_j},\dots, n_k,s}ds.
		\end{aligned}
	\end{equation}
	showing the desired recursive formula. 
	
	The fact that $\br{F_{N,k}(t)}_{N\in\N}$ is a family of symmetric probability measures that converges weakly to $f_k(t)$ implies that $f_k(t)$ is also a symmetric probability measure, which concludes the proof.
\end{proof}	
	
\begin{remark}\label{rem:L_infty_bound_on_coefficient_of_probability_measure}
	Note that in the above proof we have used the fact that for any probability measure $\mu\in \PP\pa{\T^k}$
	$$\abs{\wh{\mu}\pa{n_1,\dots,n_k}} \leq \mu\pa{\T^k}=1,$$
	for any $\pa{n_1,\dots,n_k}\in\Z^k$. We will use this estimate throughout the paper many times. 
\end{remark}	

With \eqref{eq:limit_of_F_N_K_partial_order} at hand we are now able to explore the behaviour of our marginal solutions as $t$ goes to infinity. We start by considering $\wh{f_1}(t)$ and $\wh{f_2}(t)$ to get the general intuition. 

We have seen in the proof of Theorem \ref{thm:recursive_formula_for_sol} that
\begin{equation}\label{eq:about_f_1}
	\wh{f_1}(n_1,t) = e^{-\frac{\lambda m_2 n_1^2 t}{2}}\wh{f_{1,0}}(n_1).
\end{equation}
Using \eqref{eq:limit_of_F_N_K_partial_order} with the fact that for any $\alpha,\beta\in\R$
\begin{equation}\label{eq:conv_of_exp}
	\int_{0}^{t}e^{-\alpha\pa{t-s}}e^{-\beta s}ds = \begin{cases}
		te^{-\alpha t}, & \alpha=\beta,\\
		\frac{e^{-\beta t}-e^{-\alpha t}}{\alpha-\beta},& \alpha\ne \beta,
	\end{cases}
\end{equation}
We find that 
	\begin{equation}\label{eq:about_f_2}
	\begin{aligned}
		&\wh{f_2}\pa{n_1,n_2,t}= e^{-\frac{\lambda \pa{4+m_2\pa{n_1^2+n_2^2}}t}{2}}\wh{f_{2,0}}\pa{n_1,n_2}\\
		&\qquad+2\lambda \int_{0}^t e^{-\frac{\lambda \pa{4+m_2\pa{n_1^2+n_2^2}}\pa{t-s}}{2}}e^{-\frac{\lambda m_2 \pa{n_1+n_2}^2}{2} s}\wh{f_{1,0}}(n_1+n_2)ds = e^{-\frac{\lambda \pa{4+m_2\pa{n_1^2+n_2^2}}t}{2}}\wh{f_{2,0}}\pa{n_1,n_2}\\
		&+\wh{f_{1,0}}\pa{n_1+n_2}\;\begin{cases}
			2\lambda te^{-\frac{\lambda \pa{4+m_2 \pa{n_1^2+n_2^2}} t}{2} }, & 4+m_2\pa{n_1^2+n_2^2}= m_2\pa{n_1+n_2}^2,\\
			\frac{4\pa{e^{-\frac{\lambda m_2 \pa{n_1+n_2}^2 t}{2} }-e^{-\frac{\lambda \pa{4+m_2\pa{n_1^2+n_2^2}}t}{2}}}}{4+m_2\pa{n_1^2+n_2^2-\pa{n_1+n_2}^2}}, & 4+m_2\pa{n_1^2+n_2^2}\ne m_2\pa{n_1+n_2}^2.
		\end{cases}
	\end{aligned}
\end{equation}
A closer look at the Fourier coefficients of $f_{2}(t)$ reveals two distinct behaviours as $t$ goes to infinity:
\begin{itemize}
	\item Most of the terms decay exponentially to zero uniformly in $\pa{n_1,n_2}\in\Z^2$ with rate of at most $C_2\pa{1+t}e^{-\frac{\lambda\pa{4+m_2}t}{2}}$.
	\item The term $e^{-\frac{\lambda m_2 \pa{n_1+n_2}^2 t}{2}}$ decays exponentially to zero \textit{when $n_1\ne -n_2$}  with rate of at least $e^{-\frac{\lambda m_2 t}{2}}$. When $ n_1+n_2=0$, however, the term becomes the constant $1$. 
\end{itemize}
Plugging $\wh{f_2}(t)$ in \eqref{eq:limit_of_F_N_K_partial_order} will produce a similar behaviour -- most terms will decay exponentially to zero uniformly in $\pa{n_1,\dots,n_k}\in\Z^k$ but we will always get a term of the form $e^{-\frac{\lambda m_2 \pa{\sum_{r=1}^k n_r}^2 t}{2}}$ which will give us the constant $1$ when $\sum_{r=1}^kn_r=0$. 

An explicit and quantitative version of this observation is given by the following lemma:

\begin{lemma}\label{lem:explicit_behaviour_for_recursive}
	Let $\br{f_k(t)}_{k\in\N}$ be a family of symmetric probability densities on $\br{\PP\pa{\T^k}}_{k\in\N}$ given by \eqref{eq:limit_of_F_N_K_partial_order}. 
	\begin{enumerate}[(i)]
		\item\label{item:recursive_a_b} For any $k\in \N$ we have that 
		\begin{equation}\label{eq:explicit_behaviour_for_f_k}
			\begin{split}
				&\wh{f_k}\pa{n_1,\dots,n_k,t} = \sum_{h=2}^k e^{-\lambda h\pa{h-1}t}b_{h,k}\pa{n_1,\dots,n_k,t} 	\\
				&\qquad\qquad + a_k\pa{n_1,\dots,n_k}\wh{f_{1,0}}\pa{\sum_{r=1}^k n_r}e^{-\frac{\lambda m_2 \pa{\sum_{r=1}^k n_r}^2 t}{2}}.	
			\end{split}
		\end{equation} 
		 where 
			\begin{equation}\label{eq:a11}
			\begin{split}
				&a_1(n_1)=1,\\
			\end{split}
		\end{equation}
		and $a_k$ and $\br{b_{h,k}(t)}_{h=2, \dots, k}$ are defined recursively for $k\geq 2$ by
		\begin{equation}\label{eq:recursive_a_zero}
			\begin{split}
				& a_{k}\pa{n_1,\dots,n_{k}} =0
			\end{split}
		\end{equation}
		when $~\displaystyle 2k(k-1) + m_2\sum_{r=1}^{k}n_r^2= m_2\pa{\sum_{r=1}^{k}n_r}^2$ and
		\begin{equation}\label{eq:recursive_a}
			\begin{split}
				& a_{k}\pa{n_1,\dots,n_{k}} =
				\frac{4}{2k(k-1)+m_2\pa{\sum_{r=1}^{k}n_r^2 - \pa{\sum_{r=1}^{k}n_r}^2}}\\
				&\qquad\qquad\sum_{i<j\leq k}	a_{k-1}\pa{n_1,\dots,\wt{n_i},\dots,n_i+n_j,\dots, \wt{n_j},\dots, n_{k}}
			\end{split}
		\end{equation}
		when $~\displaystyle 2k(k-1) + m_2\sum_{r=1}^{k+1}n_r^2\ne m_2\pa{\sum_{r=1}^{k+1}n_r}^2$ and where the position of $n_i+n_j$ is the same as in the recursive formula \eqref{eq:limit_of_F_N_K_partial_order},
		\begin{equation}\label{eq:b_last_term}
			\begin{split}
				&b_{k,k}\pa{n_1,\dots,n_{k},t}=e^{-\frac{\lambda m_2\pa{\sum_{r=1}^{k} n_r^2} t} {2}}\wh{f_{k,0}}\pa{n_1,\dots,n_{k}}+e^{-\frac{\lambda m_2\pa{\sum_{r=1}^{k}n_r^2}t}{2}}\wh{f_{1,0}}\pa{\sum_{r=1}^{k}n_r}\\
				&\qquad\pa{\sum_{i<j\leq k}	a_{k-1}\pa{n_1,\dots,\wt{n_i},\dots,n_i+n_j,\dots, \wt{n_j},\dots, n_{k}}}\\
				&\qquad\qquad\begin{cases}
					2\lambda t, & 2k(k-1) + m_2\sum_{r=1}^{k}n_r^2= m_2\pa{\sum_{r=1}^{k}n_r}^2,\\
					-\frac{4}{2k(k-1)+m_2\pa{\sum_{r=1}^{k}n_r^2-\pa{\sum_{r=1}^{k}n_r}^2}}, & 2k(k-1) + m_2\sum_{r=1}^{k}n_r^2\ne m_2\pa{\sum_{r=1}^{k}n_r}^2,
				\end{cases}\\
			\end{split}
		\end{equation}
			\begin{equation}\label{eq:recursive_b}
			\begin{split}
				&b_{h,k}\pa{n_1,\dots,n_{k},t}=2\lambda\sum_{i<j\leq k}\int_{0}^t e^{-\frac{\lambda \pa{2\rpa{k\pa{k-1}-h(h-1)}+m_2\pa{\sum_{r=1}^{k} n_r^2}}\pa{t-s}}{2}}\\
				&\qquad\qquad\qquad\qquad\qquad  b_{h,k-1}\pa{n_1,\dots,\wt{n_i},\dots,n_i+n_j,\dots, \wt{n_j},\dots, n_{k},s} ds,\quad 2\leq h\leq k-1
			\end{split}
		\end{equation}
		\item\label{item:some_properties_of_a} Defining
		\begin{equation}\label{eq:def_ell_for_upper_bound_for_a}
			\ell_1=1,\quad \ell_{k}=\begin{cases}
				\frac{2k(k-1)}{m_2},& \frac{2k(k-1)}{m_2}\in\Z,\\
				\frac{2k(k-1)}{\min\pa{2k(k-1)-m_2\lfloor \frac{2k(k-1)}{m_2}\rfloor,m_2\pa{\lfloor \frac{2k(k-1)}{m_2}\rfloor+1}-2k(k-1)}},& \frac{2k(k-1)}{m_2}\not\in\Z,
			\end{cases}
		\end{equation}
	 when $k\geq 2$ and where $\lfloor x \rfloor$ is the closest integer to $x$ such that $\lfloor x \rfloor \leq x$, we have that 
		\begin{equation}\label{eq:uniform_bound_on_a}
			\sup_{\pa{n_1,\dots,n_{k}}\in\Z^{k}}\abs{a_{k}\pa{n_1,\dots,n_{k}}} \leq \prod_{j=1}^{k}\ell_{j}.
		\end{equation}
		In addition
		\begin{equation}\label{eq:value_of_a_at_zero}
			a_k\pa{0,\dots,0}=1,
		\end{equation}
		for any $k\in\N$.
			\item\label{item:some_properties_of_b} We have that for any $k\geq 2$
				\begin{equation}\label{eq:uniform_bound_on_bkk}
				\begin{split}
					&\abs{b_{k,k}\pa{n_1,\dots,n_{k},t}}  \leq e^{-\frac{\lambda m_2\pa{\sum_{r=1}^{k} n_r^2} t}{2}}\\
					&\qquad \pa{1+\begin{cases}
							\lambda k(k-1)\pa{\prod_{j=1}^{k-1} \ell_j}t, & 2k(k-1) + m_2\sum_{r=1}^{k}n_r^2= m_2\pa{\sum_{r=1}^{k}n_r}^2,\\
							\prod_{j=1}^{k}\ell_j, & 2k(k-1) + m_2\sum_{r=1}^{k}n_r^2 \ne m_2\pa{\sum_{r=1}^{k}n_r}^2,
					\end{cases}}
				\end{split}
			\end{equation}
			and
				\begin{equation}\label{eq:uniform_bound_on_b}
				\sup_{\pa{n_1,\dots,n_{k}}\in \Z^k}\abs{b_{h,k}\pa{n_1,\dots,n_{k},t}}  \leq C_{k},\qquad 2\leq h \leq k
			\end{equation}
			where
			\begin{equation}\label{eq:recursive_for_upper_bound_b}
				C_2 =1+\max\pa{\ell_2,\frac{4}{ m_2 e}} ,\quad C_{k}=\max\pa{1+ \pa{\prod_{j=1}^{k-1}\ell_j}\max \pa{\ell_{k}, \frac{2k(k-1)}{ m_2 e}} ,\frac{kC_{k-1}}{2}},
			\end{equation}
			when $k\geq 3$.
	\end{enumerate}
\end{lemma}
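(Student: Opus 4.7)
The strategy is to prove all three parts by induction on $k$, using the recursive formula \eqref{eq:limit_of_F_N_K_partial_order} from Theorem \ref{thm:recursive_formula_for_sol}. The base case $k=1$ is immediate from \eqref{eq:about_f_1}: the decomposition \eqref{eq:explicit_behaviour_for_f_k} reduces to a single slow-decay term with $a_1 \equiv 1 = \ell_1$ and no $b_{h,1}$ contributions (the sum over $h=2$ to $1$ being empty), and $a_1(0)=1$ is trivial.

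For the inductive step of Part \eqref{item:recursive_a_b}, I substitute the inductive form of $\wh{f_{k-1}}$ into the recursive formula for $\wh{f_k}$. A key algebraic observation is that every shifted $\pa{k-1}$-tuple appearing inside the sum in \eqref{eq:limit_of_F_N_K_partial_order} has total sum equal to $\sum_{r=1}^k n_r$, so the slow-decay exponent in the $a_{k-1}$ piece of $\wh{f_{k-1}}$ is independent of $\pa{i,j}$. Each resulting time integral is evaluated via \eqref{eq:conv_of_exp} with $\alpha = \frac{\lambda\pa{2k(k-1)+m_2 \sum n_r^2}}{2}$ and $\beta$ the appropriate exponent from the $\wh{f_{k-1}}$ expansion. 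The $b_{h,k-1}$ contributions for $2 \leq h \leq k-1$ yield, after pulling out the factor $e^{-\lambda h(h-1)t}$, precisely the recursion \eqref{eq:recursive_b}. The $a_{k-1}$ contribution splits into two pieces: its $e^{-\beta t}$ part assembles the slow term with $a_k$ matching \eqref{eq:recursive_a}, while its $e^{-\alpha t}$ part merges with the initial-data contribution $e^{-\alpha t}\wh{f_{k,0}}$ to produce $b_{k,k}$ as in \eqref{eq:b_last_term}. The degenerate case $\alpha=\beta$ is handled by the first branch of \eqref{eq:conv_of_exp}, producing the $2\lambda t$ factor in \eqref{eq:b_last_term} and forcing $a_k = 0$, consistent with \eqref{eq:recursive_a_zero}.

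For Part \eqref{item:some_properties_of_a}, summing over the $\binom{k}{2} = \frac{k(k-1)}{2}$ index pairs $\pa{i,j}$ and applying the inductive bound $\abs{a_{k-1}} \leq \prod_{j=1}^{k-1}\ell_j$ reduces \eqref{eq:uniform_bound_on_a} to the lower bound $\abs{2k(k-1)+m_2\pa{\sum n_r^2 - \pa{\sum n_r}^2}} \geq \frac{2k(k-1)}{\ell_k}$ whenever this quantity is nonzero. Using $\sum n_r^2 - \pa{\sum n_r}^2 = -2\sum_{i<j} n_i n_j \in 2\Z$, this reduces to a distance-to-multiples-of-$m_2$ computation, and the definition \eqref{eq:def_ell_for_upper_bound_for_a} takes the worst-case minimum over \emph{all} integers, which yields a (possibly conservative but always) valid bound since the even integers form a subset of $\Z$. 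The normalisation $a_k(0,\dots,0) = \frac{4\binom{k}{2}}{2k(k-1)} a_{k-1}(0,\dots,0) = 1$ completes the claim.

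For Part \eqref{item:some_properties_of_b}, the pointwise bound \eqref{eq:uniform_bound_on_bkk} on $b_{k,k}$ follows directly from its explicit formula \eqref{eq:b_last_term} together with the bounds on $a_{k-1}$ from Part \eqref{item:some_properties_of_a}. The main technical subtlety lies in the uniform-in-$t$ estimate \eqref{eq:uniform_bound_on_b}: in the degenerate case $\alpha=\beta$, the factor $2\lambda t$ in \eqref{eq:b_last_term} must be controlled through the accompanying $e^{-\lambda m_2 \sum n_r^2 t/2}$. Here $\bm{n} \ne 0$ is forced (since $2k(k-1) = -2m_2 \sum_{i<j} n_i n_j$ with $k \geq 2$ excludes the all-zero tuple), so $\sum n_r^2 \geq 1$, and the elementary inequality $se^{-as} \leq \frac{1}{ae}$ yields the contribution $\frac{2k(k-1)}{m_2 e}$ appearing in $C_k$. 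The uniform bound on $b_{h,k}$ for $h<k$ follows from \eqref{eq:recursive_b} combined with the inductive hypothesis $\abs{b_{h,k-1}} \leq C_{k-1}$, the estimate $\int_0^t e^{-\lambda \beta(t-s)} ds \leq \frac{1}{\lambda \beta}$, and the inequality $k(k-1)-h(h-1) \geq 2(k-1)$ for $h \leq k-1$, which yields $\abs{b_{h,k}} \leq \frac{kC_{k-1}}{2}$. The main obstacle throughout is the careful bookkeeping between the two branches of \eqref{eq:conv_of_exp}, which dictates the $\max$ structure in both $\ell_k$ and $C_k$ and ensures uniformity in $\bm{n} \in \Z^k$.
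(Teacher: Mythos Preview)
Your proposal is correct and follows essentially the same inductive approach as the paper's proof: substitute the inductive form of $\wh{f_{k-1}}$ into \eqref{eq:limit_of_F_N_K_partial_order}, evaluate the resulting integrals via \eqref{eq:conv_of_exp}, and collect terms to identify $a_k$, $b_{k,k}$, and the recursion for $b_{h,k}$; Parts \eqref{item:some_properties_of_a} and \eqref{item:some_properties_of_b} are then handled by the same induction with the same distance-to-integer estimate and the inequality $xe^{-\beta x}\leq \frac{1}{\beta e}$. Your additional observation that $\sum n_r^2-\pa{\sum n_r}^2\in 2\Z$ is sharper than what the paper uses (it only needs integrality), but as you correctly note this does not change the bound since $\ell_k$ is defined via the infimum over all of $\Z$.
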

	\begin{proof}
		The proofs of \eqref{item:recursive_a_b} and \eqref{item:some_properties_of_a} are done by induction on $k\in\N$. Starting with  \eqref{item:recursive_a_b} we notice that the base cases $a_1$ and $b_{2,2}$ follow from \eqref{eq:about_f_1} and \eqref{eq:about_f_2}.
		
	We continue by assuming that the statement holds for some $k\in\N$. Using \eqref{eq:limit_of_F_N_K_partial_order} and the induction assumption we find that 
	\allowdisplaybreaks
	\begin{align}
   &		\wh{f_{k+1}}\pa{n_1,\dots,n_{k+1},t}=e^{-\frac{\lambda \pa{2k\pa{k+1}+m_2\pa{\sum_{r=1}^{k+1} n_r^2}}t}{2}}\wh{f_{k+1,0}}\pa{n_1,\dots,n_{k+1}}\NNN
   & +2\lambda\sum_{i<j\leq k+1}\int_{0}^t e^{-\frac{\lambda \pa{2k\pa{k+1}+m_2\pa{\sum_{r=1}^{k+1} n_r^2}}\pa{t-s}}{2}}\wh{f_{k}}\pa{n_1,\dots,\wt{n_i},\dots,n_i+n_j,\dots, \wt{n_j},\dots, n_{k+1},s}ds\NNN
   & \qquad\qquad=e^{-\frac{\lambda \pa{2k\pa{k+1}+m_2\pa{\sum_{r=1}^{k+1} n_r^2}}t}{2}}\wh{f_{k+1,0}}\pa{n_1,\dots,n_{k+1}}+2\lambda\sum_{i<j\leq k+1}\int_{0}^t e^{-\frac{\lambda \pa{2k\pa{k+1}+m_2\pa{\sum_{r=1}^{k+1} n_r^2}}\pa{t-s}}{2}} \NNN
   &\qquad\qquad\qquad\Bigg[\sum_{h=2}^k e^{-\lambda h\pa{h-1}s}b_{h,k}\pa{n_1,\dots,\wt{n_i},\dots,n_i+n_j,\dots, \wt{n_j},\dots, n_{k+1},s} \NNN
   &\qquad\qquad+ a_k\pa{n_1,\dots,\wt{n_i},\dots,n_i+n_j,\dots, \wt{n_j},\dots, n_{k+1}}\wh{f_{1,0}}\pa{\sum_{r=1}^{k+1} n_l}e^{-\frac{\lambda m_2 \pa{\sum_{r=1}^{k+1} n_r}^2 s}{2}}	\Bigg]ds \NNN
   &=e^{-\lambda k\pa{k+1}t}e^{-\frac{\lambda m_2\pa{\sum_{r=1}^{k+1} n_r^2} t} {2}}\wh{f_{k+1,0}}\pa{n_1,\dots,n_{k+1}}+2\lambda\sum_{h=2}^{k}e^{-\lambda h(h-1)t}\NNN
   &\pa{\sum_{i<j\leq k+1}\int_{0}^t e^{-\frac{\lambda \pa{2\rpa{k\pa{k+1}-h(h-1)}+m_2\pa{\sum_{r=1}^{k+1} n_r^2}}\pa{t-s}}{2}}b_{h,k}\pa{n_1,\dots,\wt{n_i},\dots,n_i+n_j,\dots, \wt{n_j},\dots, n_{k+1},s} ds}\NNN
   &+2\lambda\sum_{i<j\leq k+1}a_k\pa{n_1,\dots,\wt{n_i},\dots,n_i+n_j,\dots, \wt{n_j},\dots, n_{k+1}}\wh{f_{1,0}}\pa{\sum_{l=1}^{k+1} n_l} \NNN
   &\qquad\qquad\qquad\pa{\int_{0}^t e^{-\frac{\lambda \pa{2k\pa{k+1}+m_2\pa{\sum_{r=1}^{k+1} n_r^2}}\pa{t-s}}{2}}e^{-\frac{\lambda m_2 \pa{\sum_{r=1}^{k+1} n_r}^2 s}{2}}ds}.\nonumber
	\end{align}
		Since
\begin{align}
	& 2\lambda \int_{0}^t e^{-\frac{\lambda \pa{2k\pa{k+1}+m_2\pa{\sum_{r=1}^{k+1} n_r^2}}\pa{t-s}}{2}}e^{-\frac{\lambda m_2 \pa{\sum_{r=1}^{k+1} n_r}^2 s}{2}}ds\NNN
	& =\begin{cases}
		2\lambda te^{-\frac{\lambda \pa{2k\pa{k+1}+m_2\pa{\sum_{r=1}^{k+1} n_r^2}}t}{2}},& 2k\pa{k+1}+m_2\pa{\sum_{r=1}^{k+1} n_r^2}=m_2 \pa{\sum_{r=1}^{k+1} n_r}^2\\
		\frac{4\pa{e^{-\frac{\lambda m_2 \pa{\sum_{r=1}^{k+1} n_r}^2 t}{2}}-e^{-\frac{\lambda \pa{2k\pa{k+1}+m_2\pa{\sum_{r=1}^{k+1} n_r^2}}t}{2}}}}{2k(k+1)+m_2\pa{\sum_{r=1}^{k+1} n_r^2-\pa{\sum_{r=1}^{k+1} n_r}^2}},& 2k\pa{k+1}+m_2\pa{\sum_{r=1}^{k+1} n_r^2}\ne m_2 \pa{\sum_{r=1}^{k+1} n_r}^2\\
	\end{cases},\nonumber
\end{align}
		we conclude the proof by gathering the appropriate terms. \\
		Next we turn our attention to the inductive proof of  \eqref{item:some_properties_of_a}, starting with \eqref{eq:uniform_bound_on_a}. The base case $k=1$ is immediate from \eqref{eq:a11}. We proceed by assuming that the claim holds for some $k\in\N$. 
		
		We notice that if $\frac{2k(k+1)}{m_2} \not\in\Z$ then
		$$\inf_{n\in\Z}\abs{\frac{2k(k+1)}{m_2}-n} = \min\pa{\frac{2k(k+1)}{m_2}-\lfloor \frac{2k(k+1)}{m_2} \rfloor, \lfloor \frac{2k(k+1)}{m_2} \rfloor+1-\frac{2k(k+1)}{m_2}}$$
		and
		$$2k\pa{k+1}+m_2\sum_{r=1}^{k+1}n_r^2 \ne m_2\pa{\sum_{r=1}^{k+1}n_r}^2$$
		for any $\pa{n_1,\dots,n_{k+1}}\in\Z^{k+1}$. Consequently, in this case
		$$\abs{a_{k+1}\pa{n_1,\dots,n_{k+1}}} \leq \frac{4}{m_2\inf_{n\in\Z}\abs{\frac{2k(k+1)}{m_2}-n} }\frac{k(k+1)}{2}\sup_{\pa{p_1,\dots,p_k}\in\Z^k}\abs{a_k\pa{p_1,\dots,p_k}}$$
		$$\leq \frac{2k(k+1)}{m_2\min\pa{\frac{2k(k+1)}{m_2}-\lfloor \frac{2k(k+1)}{m_2} \rfloor, \lfloor \frac{2k(k+1)}{m_2} \rfloor+1-\frac{2k(k+1)}{m_2}}}\prod_{j=1}^k \ell_j=\prod_{j=1}^{k+1} \ell_j,$$
		where we have used our induction assumption. \\
		If there exists $n_0\in\Z$ such that $2k(k+1)=m_2n_0$ then:\\
		\begin{itemize}
			\item If $ \pa{\sum_{r=1}^{k+1}n_r}^2-\sum_{r=1}^{k+1}n_r^2=n_0  $
			$$\abs{a_{k+1}\pa{n_1,\dots,n_{k+1}}}=0 \leq \prod_{j=1}^{k+1} \ell_j.$$
			\item If $\pa{\sum_{r=1}^{k+1}n_r}^2-\sum_{r=1}^{k+1}n_r^2\ne n_0  $  
			$$\abs{a_{k+1}\pa{n_1,\dots,n_{k+1}}}\leq \frac{4}{m_2\inf_{n\in\Z\setminus\br{n_0}}\abs{n_0-n} }\frac{k(k+1)}{2}\sup_{\pa{p_1,\dots,p_k}\in\Z^k}\abs{a_k\pa{p_1,\dots,p_k}}$$
			$$\leq \frac{2k(k+1)}{m_2}\prod_{j=1}^k \ell_j=\prod_{j=1}^{k+1} \ell_j.$$
		\end{itemize}
		As we've covered all possible cases for $\pa{n_1,\dots,n_{k+1}}\in\Z^{k+1}$ we conclude the validity of \eqref{eq:uniform_bound_on_a}.\\
		The proof of \eqref{eq:value_of_a_at_zero} follows from the fact that $k\pa{k-1}\not=0$ for any $k\geq 2$ and as such
		$$a_{k}\pa{0,\dots,0} = \begin{cases}
			1,& k=1,\\
			\frac{4}{2k(k-1)}\sum_{i<j\leq k}a_{k-1}\pa{0,\dots,0},& k\geq 2,
		\end{cases}=\begin{cases}
		1,& k=1,\\
		a_{k-1}\pa{0,\dots,0},& k\geq 2.
		\end{cases}$$
		Lastly, we will prove \eqref{item:some_properties_of_b}. To show \eqref{eq:uniform_bound_on_bkk} we notice that by definition
		\begin{equation}\nonumber
		\begin{split}
			&b_{k,k}\pa{n_1,\dots,n_{k},t}= e^{-\frac{\lambda m_2\pa{\sum_{r=1}^{k} n_r^2} t}{2}}\Bigg(\wh{f_{k,0}}\pa{n_1,\dots,n_{k}}\\
				+&2\lambda \wh{f_{1,0}}\pa{\sum_{r=1}^{k}n_r}\pa{\sum_{i<j\leq k}	a_{k-1}\pa{n_1,\dots,\wt{n_i},\dots,n_i+n_j,\dots, \wt{n_j},\dots, n_{k}}}t\Bigg),
		\end{split}	
		\end{equation}
		when $~2k\pa{k-1}+m_2\sum_{r=1}^{k}n_r^2 = m_2\pa{\sum_{r=1}^{k}n_r}^2$ and
				\begin{equation}\nonumber
			\begin{split}
				&b_{k,k}\pa{n_1,\dots,n_{k},t}= e^{-\frac{\lambda m_2\pa{\sum_{r=1}^{k} n_r^2} t}{2}}\pa{\wh{f_{k,0}}\pa{n_1,\dots,n_{k}} -a_{k}\pa{n_1,\dots,n_{k}}}
				\end{split}	
			\end{equation}
			when $~2k\pa{k-1}+m_2\sum_{r=1}^{k}n_r^2 \ne m_2\pa{\sum_{r=1}^{k}n_r}^2$.
		Consequently,
		\begin{equation}\nonumber
		\begin{split}
			&\abs{b_{k,k}\pa{n_1,\dots,n_{k},t}} \leq e^{-\frac{\lambda m_2\pa{\sum_{r=1}^{k} n_r^2} t}{2}}\pa{1+
				\lambda k(k-1)\pa{\prod_{j=1}^{k-1} \ell_j}t},
		\end{split}			
		\end{equation}
		when $~2k\pa{k-1}+m_2\sum_{r=1}^{k}n_r^2 = m_2\pa{\sum_{r=1}^{k}n_r}^2$ and
			\begin{equation}\nonumber
			\begin{split}
				&\abs{b_{k,k}\pa{n_1,\dots,n_{k},t}} \leq e^{-\frac{\lambda m_2\pa{\sum_{r=1}^{k} n_r^2} t}{2}}\pa{1+
						\prod_{j=1}^{k}\ell_j},
			\end{split}			
		\end{equation}
		when $~2k\pa{k-1}+m_2\sum_{r=1}^{k}n_r^2 \ne m_2\pa{\sum_{r=1}^{k}n_r}^2$. \eqref{eq:uniform_bound_on_bkk} follows.\\
		We continue to prove \eqref{item:some_properties_of_b} and show \eqref{eq:uniform_bound_on_b} and \eqref{eq:recursive_for_upper_bound_b} by induction. From \eqref{eq:uniform_bound_on_bkk} for $k=2$ we find that  
		 	\begin{equation}\nonumber
		 	\begin{split}
		 		& \abs{b_{2,2}\pa{n_1,n_{2},t}} \leq e^{-\frac{\lambda m_2\pa{n_1^2+n_2^2}t}{2}} \pa{1+\begin{cases}
		 			2\lambda t, & 4+m_2\pa{n_1^2+n_2^2} = m_2\pa{n_1+n_2}^2,\\
		 			\ell_2, & 4+m_2\pa{n_1^2+n_2^2} \ne m_2\pa{n_1+n_2}^2,
		 		\end{cases} }\\
	 		& \leq 1+\begin{cases}
	 			2\lambda e^{-\frac{\lambda m_2 t}{2}}t, & 4+m_2\pa{n_1^2+n_2^2} = m_2\pa{n_1+n_2}^2,\\
	 			\ell_2, & 2 \ne 4+m_2\pa{n_1^2+n_2^2} \ne m_2\pa{n_1+n_2}^2,
	 			\end{cases} \leq 1+\max\pa{\ell_2,\frac{4}{ m_2 e}}=C_2,
		 	\end{split}
		 \end{equation}
		 where we have used the fact that $n_1^2+n_2^2 \geq 1$ when $4+m_2\pa{n_1^2+n_2^2}= m_2\pa{n_1+n_2}^2$ and the fact that 
		 \begin{equation}\label{eq:exponential_bound_power1}
		 	\sup_{x\geq 0} x e^{-\beta x} = \frac{1}{\beta e},
		 \end{equation}
		 when $\beta>0$. We now assume that the claims holds for all $2\leq h \leq k$ for some $k\geq 2$. Using \eqref{eq:uniform_bound_on_bkk} and the same arguments we find that 
		 \begin{equation}\nonumber 
		 	\begin{split}
		 		&\abs{b_{k+1,k+1}\pa{n_1,\dots,n_{k+1},t}} \leq 1+ \begin{cases}
		 			\lambda k(k+1)\pa{\prod_{j=1}^k \ell_j}e^{-\frac{\lambda m_2 t}{2}}t, & 		 2k\pa{k+1}+m_2\sum_{r=1}^{k+1}n_r^2 = m_2\pa{\sum_{r=1}^{k+1}n_r}^2,\\
		 			\prod_{j=1}^{k+1}\ell_j, & 2k\pa{k+1}+m_2\sum_{r=1}^{k+1}n_r^2 \ne m_2\pa{\sum_{r=1}^{k+1}n_r}^2,
		 		\end{cases}\\
		 		& \qquad\qquad \qquad \leq 1+ \pa{\prod_{j=1}^k\ell_j}\max \pa{\ell_{k+1}, \frac{2k(k+1)}{m_2 e}} \leq C_{k+1}.
		 	\end{split}
		 \end{equation}
		 For any $2\leq h \leq k$ we use \eqref{eq:recursive_b} and the induction assumption to find that  
		 \begin{equation}\nonumber
		 	\begin{split}
		 		& \abs{b_{h,k+1}\pa{n_1,\dots,n_{k+1},t}} \leq 2\lambda C_k\sum_{i<j\leq k+1}\int_{0}^t e^{-\frac{\lambda \pa{2\rpa{k\pa{k+1}-h(h-1)}+m_2\pa{\sum_{r=1}^{k+1} n_r^2}}\pa{t-s}}{2}}ds\\
		 		&\qquad\qquad \leq k(k+1)\lambda C_k \int_0^t e^{-2k\lambda s}ds \leq \frac{\pa{k+1}C_k}{2} \leq C_{k+1}.
		 	\end{split}
		 \end{equation}
		 The proof is thus complete.
	\end{proof}
	An immediate corollary to Lemma \ref{lem:explicit_behaviour_for_recursive} is the following:
	\begin{corollary}\label{cor:t_limiting_behaviour_of_sol}
			Let $\br{f_k(t)}_{k\in\N}$ be a family of symmetric probability measures on $\br{\PP\pa{\T^k}}_{k\in\N}$ given by \eqref{eq:limit_of_F_N_K_partial_order}.  Then for any $k\in\N$ there exists a symmetric $f_{k,\infty}\in\PP\pa{\T^k}$ such that 
			\begin{equation}\label{eq:t_limit_weak_convergence}
				f_k\pa{\theta_1,\dots,\theta_k,t} \underset{t\to\infty}{\overset{\text{weak}}{\longrightarrow}}f_{k,\infty}\pa{\theta_1.\dots,\theta_k}.
			\end{equation}
			Moreover, the Fourier coefficients of $f_{k,\infty}$ are given by
			\begin{equation}\label{eq:fourier_of_t_limit}
				\wh{f_{k,\infty}}\pa{n_1,\dots,n_k} = \delta_0 \pa{\sum_{r=1}^k n_r} a_k\pa{n_1,\dots,n_k}.
			\end{equation} 
	\end{corollary}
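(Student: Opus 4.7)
The plan is to read off the $t\to\infty$ limit directly from the decomposition given in Lemma \ref{lem:explicit_behaviour_for_recursive}\eqref{item:recursive_a_b}, and then promote pointwise convergence of the Fourier coefficients to weak convergence via the Riesz–Markov principle already established within the proof of Theorem \ref{thm:recursive_formula_for_sol}.

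More concretely, I would first fix $(n_1,\dots,n_k)\in\Z^k$ and examine the terms in
\begin{equation}\nonumber
\wh{f_k}(n_1,\dots,n_k,t) = \sum_{h=2}^k e^{-\lambda h(h-1)t}b_{h,k}(n_1,\dots,n_k,t) + a_k(n_1,\dots,n_k)\wh{f_{1,0}}\pa{\sum_{r=1}^k n_r}e^{-\frac{\lambda m_2 \pa{\sum_{r=1}^k n_r}^2 t}{2}}.
\end{equation}
Since $h(h-1)\geq 2$ for each $h\geq 2$ and since each $b_{h,k}(\cdot,t)$ is bounded by the constant $C_k$ of \eqref{eq:uniform_bound_on_b} (with the $h=k$ summand further carrying the factor $e^{-\frac{\lambda m_2(\sum_r n_r^2)t}{2}}$ by \eqref{eq:uniform_bound_on_bkk}, which remains at most $1+\lambda k(k-1)(\prod \ell_j)t$ and is then killed by $e^{-\lambda k(k-1)t}$), the entire first sum tends to $0$ as $t\to\infty$. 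For the last term, the exponential factor tends to $0$ unless $\sum_{r=1}^k n_r=0$, in which case it equals $1$ and the factor $\wh{f_{1,0}}(0)$ equals $1$ since $f_{1,0}$ is a probability measure. Thus for every $(n_1,\dots,n_k)\in\Z^k$
\begin{equation}\nonumber
\lim_{t\to\infty}\wh{f_k}(n_1,\dots,n_k,t) = \delta_0\pa{\sum_{r=1}^k n_r}\,a_k(n_1,\dots,n_k),
\end{equation}
which is exactly the right-hand side of \eqref{eq:fourier_of_t_limit}.

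Next, I would invoke the general principle established in the proof of Theorem \ref{thm:recursive_formula_for_sol}: if a sequence of probability measures $\{\mu_t\}$ on $\T^k$ has pointwise convergent Fourier coefficients as $t\to\infty$, then there exists $\mu_\infty\in\PP(\T^k)$ such that $\mu_t$ converges weakly to $\mu_\infty$ and $\wh{\mu_\infty}$ equals the pointwise limit of $\wh{\mu_t}$. Applying this along any sequence $t_n\to\infty$ to the family $\{f_k(t)\}_{t>0}$, we obtain a probability measure $f_{k,\infty}\in\PP(\T^k)$ satisfying \eqref{eq:t_limit_weak_convergence} and \eqref{eq:fourier_of_t_limit}. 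Symmetry of $f_{k,\infty}$ follows from the symmetry of each $f_k(t)$ (equivalently, from the permutation invariance of the Fourier coefficients via Lemma \ref{lem:properties_of_even_and_symmetry_in_fourier}).

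I do not expect any real obstacle: the decomposition, the uniform bounds on $a_k$ and $b_{h,k}$, and the weak-convergence principle from Theorem \ref{thm:recursive_formula_for_sol} are all already available, so the argument reduces to bookkeeping the exponential decay rates and identifying which terms survive in the limit. The only point worth being careful about is the case $h=k$ in the sum, where $b_{k,k}$ may grow linearly in $t$ on the resonance set $2k(k-1)+m_2\sum_r n_r^2 = m_2(\sum_r n_r)^2$; but the outer factor $e^{-\lambda k(k-1)t}$ dominates this growth, so the contribution still vanishes as $t\to\infty$.
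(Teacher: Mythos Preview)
Your proposal is correct and follows essentially the same route as the paper: the paper's proof likewise invokes the decomposition \eqref{eq:explicit_behaviour_for_f_k} together with the uniform bound \eqref{eq:uniform_bound_on_b} to kill the sum over $h$, identifies the surviving term as $\delta_0(\sum_r n_r)a_k(n_1,\dots,n_k)$ via $\wh{f_{1,0}}(0)=1$, and then appeals to the Fourier-coefficient weak-convergence principle from the proof of Theorem \ref{thm:recursive_formula_for_sol} and to Lemma \ref{lem:properties_of_even_and_symmetry_in_fourier} for symmetry. Your extra remark on the potential linear-in-$t$ growth of $b_{k,k}$ is already absorbed into the constant $C_k$ of \eqref{eq:uniform_bound_on_b}, so no separate treatment is needed.
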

	\begin{remark}\label{rem:weak_convergece_on_interval}
		It is worth to mention at this point that while we normally consider weak convergence of sequences of probability measures, one can easily extend this definition to a continuous family. This is strengthened further by the fact that weak convergence is metrisable on any Polish space and as such we can say that $\mu(t) \underset{t\to\infty}{\overset{\text{weak}}{\longrightarrow}}\mu_\infty$ if $\mu\pa{t_n} \underset{n\to\infty}{\overset{\text{weak}}{\longrightarrow}}\mu_\infty$ for all sequences $\br{t_n}_{n\in\N}$ that go to infinity. 
	\end{remark}
	\begin{proof}[Proof of Corollary \ref{cor:t_limiting_behaviour_of_sol}]
		Similarly to the general principle described in the proof of Theorem \ref{thm:recursive_formula_for_sol}, we have that $\br{f_k(t)}_{t>0}$ will converge weakly as $t$ goes to infinity if and only if its Fourier coefficients converge. In that case, denoting the weak limit of $\br{f_k(t)}_{t>0}$ by $f_{k,\infty}\in\PP\pa{\T^k}$ we must have that
					\begin{equation}\nonumber
			\wh{f_{k,\infty}}\pa{n_1,\dots,n_k} = \lim_{t\to\infty}\wh{f_k}\pa{n_1,\dots,n_k,t}.
		\end{equation} 
		Furthermore, the above together with Lemma \ref{lem:properties_of_even_and_symmetry_in_fourier} will imply that as $f_k$ are symmetric, so is $f_{k,\infty}$. 
		
		Following on Lemma \ref{lem:explicit_behaviour_for_recursive} -- in particular \eqref{eq:explicit_behaviour_for_f_k} and \eqref{eq:uniform_bound_on_b}, and the fact that 
		$$\lim_{t\to\infty} e^{-\alpha t} = \begin{cases}
			0,& \alpha>0,\\
			1, & \alpha=0,
		\end{cases}$$
		we find that 
		$$\lim_{t\to\infty}\wh{f_k}\pa{n_1,\dots,n_k,t} = a_k\pa{n_1,\dots,n_k}\wh{f_{1,0}}\pa{\sum_{r=1}^k n_r} \delta_0\pa{\frac{\lambda m_2 \pa{\sum_{r=1}^k n_r}^2}{2}}=\delta_0\pa{\sum_{r=1}^k n_r}a_k\pa{n_1,\dots,n_k}$$
		and conclude the proof.
	\end{proof}
	With the long time behaviour at hand, we turn our attention to the measure $f_{k,\infty}$. As $f_{k,\infty}$ is intimately connected to $a_k$, we start by investigating properties of these coefficients.
	
	\begin{lemma}\label{lem:additional_properties_of_ak}
		Let $a_k\pa{n_1,\dots,n_k}$ be defined as in \eqref{eq:a11}--\eqref{eq:recursive_a}. Then
		\begin{enumerate}[(i)]
			\item\label{item:a_is_symmetric} $a_k\pa{n_1,\dots,n_k}$ is symmetric for any $k\in\N$, i.e. for any $\sigma\in S^k$
			$$a_k\pa{n_1,\dots,n_k}=a_k\pa{n_{\sigma(1)},\dots,n_{\sigma(k)}}.$$ 
			\item\label{item:a_is_even} $a_k\pa{n_1,\dots,n_k}$ is even for any $k\in\N$, i.e. for any $\pa{n_1,\dots,n_k}\in \Z^k$
			$$a_k\pa{n_1,\dots,n_k}=a_k\pa{-n_1,\dots,-n_k}.$$
			\item\label{item:ak_in_l1} For any $t>0$
			\begin{equation}\nonumber 
				\sum_{\pa{n_1,\dots,n_k}\in\Z^k} e^{-\frac{\lambda m_2 \pa{\sum_{r=1}^k n_r}^2 t}{2}}\abs{a_k\pa{n_1,\dots,n_k}}<\infty.
			\end{equation}
			In particular
			\begin{equation}\label{eq:a_in_l1}
				\sum_{\pa{n_1,\dots,n_k}\in\Z^k} \delta_0\pa{\sum_{r=1}^k n_r}\abs{a_k\pa{n_1,\dots,n_k}}<\infty.
			\end{equation}
		\end{enumerate}
	\end{lemma}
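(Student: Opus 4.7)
The plan is to prove all three parts by induction on $k$, with the base case $a_1 \equiv 1$ immediate in each.

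For parts (i) and (ii), I would exploit the symmetric structure of \eqref{eq:recursive_a}: the denominator depends only on $\sum_r n_r^2$ and $(\sum_r n_r)^2$, so it is symmetric in its arguments and invariant under $n \mapsto -n$. The numerator is a sum over unordered pairs $\{i,j\} \subset \{1,\ldots,k\}$ of $a_{k-1}$ applied to the corresponding merged $(k-1)$-tuple; by the induction hypothesis, $a_{k-1}$ is symmetric, so the position of the merged entry $n_i+n_j$ is immaterial. Any permutation $\sigma \in S^k$ simply relabels the set of pairs amongst themselves, proving (i); replacing every $n_r$ by $-n_r$ carries $n_i+n_j$ to $-(n_i+n_j)$ and leaves the value of $a_{k-1}$ invariant by the evenness hypothesis, proving (ii).

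For part (iii), I plan to prove the stronger statement that the row sum
$$A_k(s) := \sum_{(n_1,\ldots,n_k)\in\Z^k,\;\sum_r n_r = s} \abs{a_k(n_1,\ldots,n_k)}$$
is finite with at most polynomial growth in $|s|$. Once this is established, the Gaussian factor $e^{-\lambda m_2 s^2 t / 2}$ absorbs that growth and gives
$$\sum_{(n_1,\ldots,n_k)\in\Z^k} e^{-\frac{\lambda m_2 (\sum_r n_r)^2 t}{2}} \abs{a_k(n_1,\ldots,n_k)} = \sum_{s\in\Z} e^{-\frac{\lambda m_2 s^2 t}{2}} A_k(s) < \infty$$
for every $t > 0$, and \eqref{eq:a_in_l1} follows by restricting to $s=0$. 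The base $A_1(s) \equiv 1$ is trivial. For the inductive step I would start from the recursive bound $\abs{a_k} \leq (4/\abs{D_k})\sum_{i<j}\abs{a_{k-1}(\mathrm{merge}(i,j))}$ on $\{D_k \neq 0\}$, swap the order of summation, and parametrize the fiber over each $(k-1)$-tuple $m$ with $\sum m = s$ by a single free integer $n_i$. Completing the square exhibits $D_k(n_i)$ as a quadratic in $n_i$ with positive leading coefficient $2m_2$ and at most two real roots, so the tail sum $\sum_{\abs{D_k}\geq 1} 1/\abs{D_k}$ is uniformly bounded by comparison with $\sum 1/n^2$; on the lattice points where $\abs{D_k} < 1$ I would substitute the uniform bound $\abs{a_k} \leq \prod_{j=1}^{k}\ell_j$ from Lemma \ref{lem:explicit_behaviour_for_recursive}(ii). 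Counting these exceptional points by a polynomial in $|s|$ yields a recursion $A_k(s) \leq C_k(1+|s|)^{\alpha_k} A_{k-1}(s)$, and polynomial growth in $|s|$ propagates.

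The main obstacle I anticipate is the fiber-sum bookkeeping in (iii): ensuring that the tail estimate on $\sum_{n_i} 1/\abs{D_k}$ and the count of lattice points near the roots of $D_k$ can both be made uniform in the auxiliary variables $m$, so that the constants and exponents in the recursion for $A_k(s)$ are independent of $m$.
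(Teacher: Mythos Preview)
Your treatment of parts (i) and (ii) matches the paper's: induction on $k$, using that the denominator depends only on the symmetric, even quantities $\sum n_r^2$ and $(\sum n_r)^2$, and that the sum over unordered pairs is permutation-invariant.

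For part (iii) your strategy is sound but takes a noticeably harder road than the paper's. The paper does \emph{not} split into $\abs{D_k}\geq 1$ versus $0<\abs{D_k}<1$. Instead it proves a dedicated technical lemma (Lemma~\ref{lem:estimating_quadratic_sum}) showing that for any quadratic $q(n)=m(n^2+An+B)+K$ with fixed leading coefficient, the full sum $\sum_{n:\,q(n)\neq 0}1/\abs{q(n)}$ is bounded by a constant depending only on $m$ and $K$, \emph{uniformly in $A$ and $B$}. This absorbs both your ``tail'' and your ``exceptional'' points in one stroke. After the change of variables $p_l=\sum_{r\leq l}n_r$ the denominator becomes exactly such a quadratic in $p_1$, and summing out $p_1$ via that lemma turns the weighted sum for $a_{k+1}$ into a constant times the weighted sum for $a_k$. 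In effect the paper obtains $A_{k}(s)\leq C_k$ \emph{uniformly} in $s$, whereas your route only yields polynomial growth in $|s|$ (still sufficient once multiplied by the Gaussian, but a weaker intermediate statement).

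The price you pay for the split is exactly the obstacle you flag: on the fiber level the number of exceptional $n_i$ is $O(1)$, but there are infinitely many fibers, so you are forced into the \emph{global} count of lattice points on the intersection $\{\sum n_r=s,\ \sum n_r^2\approx s^2-2k(k-1)/m_2\}$, which for $k\geq 4$ genuinely requires a lattice-point-on-sphere bound of order $|s|^{k-3}$ or so. That works, but it imports number-theoretic input that the paper's elementary Lemma~\ref{lem:estimating_quadratic_sum} (a level-set decomposition of $\Z$ into dyadic shells $N^2\leq |q|\leq (N+1)^2$ of size $O(\sqrt{N})$) sidesteps completely. Your ``comparison with $\sum 1/n^2$'' for the tail is also a little optimistic as stated: when the vertex value of $D_k$ is a large negative number $-M$, the region between the roots contributes roughly $\log M/\sqrt{M}$ rather than being dominated termwise by $1/n^2$, so the uniform tail bound really needs the same kind of shell argument as Lemma~\ref{lem:estimating_quadratic_sum}.
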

	
	The proof of \eqref{item:ak_in_l1} requires the following technical lemma whose proof is left to Appendix \ref{app:additional_proofs}:
	
	\begin{lemma}\label{lem:estimating_quadratic_sum}
		For any $K\in\Z$ and $m>0$ we have that
		\begin{equation}\label{eq:estimating_quadratic_sum}
			\begin{split}
				&\sup_{A,B\in \Z} \sum_{\fontsize{5}{4}\selectfont{\begin{matrix}
							n\in\Z \\
							m\pa{n^2+An+B}+K\ne 0	
				\end{matrix}}}\frac{1}{\abs{m\pa{n^2+An+B}+K}} \\
			&\leq \begin{cases}
					\frac{6}{m},& \frac{K}{m}\in\Z,\\
					\frac{6}{\min\pa{K-m\lfloor \frac{K}{m} \rfloor, m\pa{\lfloor \frac{K}{m} \rfloor+1}-K}}, & \frac{K}{m}\not\in\Z
				\end{cases} \;\;+ \frac{16}{m}\sum_{N=1}^\infty \frac{1}{N^{\frac{3}{2}}}.
			\end{split}
		\end{equation} 
	\end{lemma}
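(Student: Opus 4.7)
The plan is to reduce the sum to a canonical one-sided sum by completing the square, isolate the term at which the denominator becomes small, and bound the remaining tail. Let $n_0\in\Z$ be the integer nearest to $-A/2$. Writing $n=n_0+j$, one has
\[ n^2+An+B = q(n_0)+j^2+\epsilon j,\qquad \epsilon:=2n_0+A\in\br{-1,0,1},\]
where $q(n)=n^2+An+B$. The map $j\mapsto j^2+\epsilon j$ on $\Z$ attains each value in its image with multiplicity two, with the sole exception of $0$ when $\epsilon=0$. Consequently
\[ \sum_{n\in\Z,\,mq(n)+K\neq 0}\frac{1}{\abs{mq(n)+K}}\leq 2\sum_{j\geq 0,\,L+mv_j\neq 0}\frac{1}{\abs{L+mv_j}}, \]
with $L:=mq(n_0)+K$ and $v_j\in\br{j^2,\,j(j+1)}$ according to the parity of $A$. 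Since $q(n_0)\in\Z$, $L$ lies in the coset $m\Z+K$, so every nonzero value of $L+mv_j$ satisfies $\abs{L+mv_j}\geq\delta_0$, where $\delta_0$ is the minimum-gap quantity appearing in the statement of the lemma.

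Next, because $v_{j+1}-v_j\geq 2j+1\geq 1$ gives $\abs{(L+mv_{j+1})-(L+mv_j)}\geq m$, at most one index $j^*\geq 0$ can satisfy $\abs{L+mv_{j^*}}<m/2$. Let $j^*$ denote the index minimizing $\abs{L+mv_j}$: the corresponding ``critical'' term is bounded by $1/\delta_0$ (if dangerous) or by $2/m$ (otherwise), and in either case is absorbed by the $6/\delta_0$-type summand of the lemma (using $\delta_0\leq m$). For $j\neq j^*$ I would use the elementary identities $(j^*+r)^2-(j^*)^2=r(2j^*+r)\geq r^2$ for $r\geq 1$, and $(j^*)^2-(j^*-r)^2=r(2j^*-r)\geq r^2$ for $1\leq r\leq j^*$ (noting $2j^*-r\geq r$ throughout this range). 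Combined with the fact that $j^*$ minimizes $\abs{L+mv_j}$, these yield $\abs{L+mv_{j^*\pm r}}\geq mr^2/2$ for every $r\geq 1$, so that the remaining tail sums to $O(1/m)$. The loose wrapping $\frac{16}{m}\sum_{N\geq 1}N^{-3/2}$ then absorbs this $O(1/m)$ estimate together with the parity and edge-case overheads (the two possible values of $v_j$, the possibility that the critical index does not appear on both sides, and the outer factor of $2$).

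The chief technical difficulty is making the tail bound \emph{uniform} in $A,B\in\Z$, equivalently in the critical index $j^*$, which can be arbitrarily large as $B$ varies. The integrality of $q(n_0)+v_j$ is essential here: it forces every nonzero denominator to be at least $\delta_0$ and thereby rules out the logarithmic blow-up (of order $\log(j^*)/j^*$) that the real-parameter analogue of this sum would exhibit near $j^*$; handling the two parity-dependent shapes $v_j=j^2$ and $v_j=j(j+1)$ in a unified manner is a secondary combinatorial nuisance.
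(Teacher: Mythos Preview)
Your approach is correct and genuinely different from the paper's. The paper decomposes $\Z$ into the level sets
\[
S_0=\{n:0<|q(n)+K/m|\le 1\},\qquad S_N^{\pm}=\{n:N^2\le \pm(q(n)+K/m)\le (N+1)^2\},
\]
shows $\#S_0\le 6$ and $\#S_N^{\pm}\le 8\sqrt{N}$ by solving the quadratic inequalities case by case, and then bounds each term on $S_N^{\pm}$ by $1/(mN^2)$; this is what produces the exponent $3/2$ in the tail. Your route---completing the square, reducing to a one-sided monotone sequence $v_j$, isolating the minimising index $j^*$, and using $v_{j^*\pm r}-v_{j^*}\ge r^2$---is more elementary and in fact yields a sharper tail of order $\tfrac{1}{m}\sum_{r\ge 1}r^{-2}$, which is comfortably absorbed by the stated $\tfrac{16}{m}\sum N^{-3/2}$.

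The one step in your outline that deserves an extra line is the passage from ``$j^*$ is the minimiser'' and ``$|v_{j^*\pm r}-v_{j^*}|\ge r^2$'' to ``$|L+mv_{j^*\pm r}|\ge mr^2/2$''. The difference identities alone do not give this; you also need that minimality forces $|L+mv_{j^*}|\le \tfrac{m}{2}(v_{j^*+1}-v_{j^*})$ when $L+mv_{j^*}<0$ (respectively $\le \tfrac{m}{2}(v_{j^*}-v_{j^*-1})$ when $L+mv_{j^*}\ge 0$ and $j^*\ge 1$), since otherwise the neighbouring index would have smaller absolute value. Once this is written out, subtracting it from $m(v_{j^*\pm r}-v_{j^*})\ge mr^2$ does give the claimed $mr^2/2$ in both parity cases $v_j=j^2$ and $v_j=j(j+1)$. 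With that detail filled in, your argument is complete and arguably cleaner than the paper's level-set count.
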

	
	\begin{proof}[Proof of Lemma \ref{lem:additional_properties_of_ak}]
		We will prove \eqref{item:a_is_symmetric}--\eqref{item:ak_in_l1} by induction, starting with \eqref{item:a_is_symmetric} and \eqref{item:a_is_even}.\\
		The base case $a_1(n_1)=1$ is both even and symmetric. We continue by assuming that the statements hold for some $k\in\N$.
		Since $\pa{\sum_{r=1}^{k+1}n_r}^2$ and $\sum_{r=1}^{k+1}n_r^2$ are invariant under permutation and under the map $\pa{n_1,\dots,n_k}\to \pa{-n_1,\dots,-n_k}$ we find that if $2k(k+1)+m_2\pa{\sum_{r=1}^{k+1}n_r}^2 = m_2\sum_{r=1}^{k+1}n_r^2$ then
		$$a_k\pa{n_{\sigma(1)},\dots,a_{\sigma(k)}} = 0 =a_k\pa{n_1,\dots,n_k},$$
		for any $\sigma\in S^k$, and
		$$a_k\pa{-n_1,\dots,-n_k} = 0 =a_k\pa{n_1,\dots,n_k},$$
		where we have used \eqref{eq:recursive_a_zero}.\\
		We assume, thus, that $2k(k+1)+m_2\pa{\sum_{r=1}^{k+1}n_r}^2 \ne m_2\sum_{r=1}^{k+1}n_r^2$ (and as such the same holds by replacing $n_r$ with $n_{\sigma(r)}$ or $\pa{n_1,\dots,n_k}$ with $\pa{-n_1,\dots,-n_k}$). Noticing that 
		$$\sum_{i<j\leq k+1}a_k\pa{n_1,\dots,\wt{n_i},\dots,n_i+n_j,\dots, \wt{n_j},\dots, n_{k+1}} = \sum_{j<i\leq k+1}a_k\pa{n_1,\dots,\wt{n_j},\dots,n_i+n_j,\dots, \wt{n_i},\dots, n_{k+1}}$$
		$$=\frac{1}{2}\sum_{i,j=1,\;i\not=j}^{k+1}a_k\pa{n_1,\dots,\wt{n_{\min\pa{i,j}}},\dots,n_i+n_j,\dots, \wt{n_{\max\pa{i,j}}},\dots, n_{k+1}}$$
		and utilising \eqref{eq:recursive_a} we see that for any $\sigma\in S^{k+1}$
		$$a_{k+1}\pa{n_{\sigma(1)},\dots, n_{\sigma(k+1)}} = \frac{2}{2k(k+1)+m_2\pa{\sum_{r=1}^{k+1}n_{\sigma(r)}^2 - \pa{\sum_{r=1}^{k+1}n_{\sigma(r)}}^2}}$$
		$$\sum_{i,j=1,\;\sigma(i)\not=\sigma(j)}^{k+1}a_k\pa{n_{\sigma(1)},\dots,\wt{n_{\min\pa{\sigma(i),\sigma(j)}}},\dots,n_{\sigma(i)}+n_{\sigma(j)},\dots, \wt{n_{\max\pa{\sigma(i),\sigma(j)}}},\dots, n_{\sigma(k+1)}}$$
		$$=\frac{2}{2k(k+1)+m_2\pa{\sum_{r=1}^{k+1}n_r^2 - \pa{\sum_{r=1}^{k+1}n_r}^2}}$$
		$$\sum_{p,q=1,\;p\not=q}^{k+1}a_k\pa{n_{1},\dots,\wt{n_{\min\pa{p,q}}},\dots,n_{p}+n_{q},\dots, \wt{n_{\max\pa{p,q}}},\dots, n_{k+1}} = a_{k+1}\pa{n_1,\dots,n_{k+1}},$$
		where we have used our induction assumption of symmetry.\\
		Similarly, using the induction assumption of evenness we see that
		\allowdisplaybreaks
		\begin{align}
			&a_{k+1}\pa{-n_1,\dots,-n_k}=\frac{4 \sum_{i<j\leq k+1}	a_k\pa{-n_1,\dots,\wt{-n_i},\dots,-n_i-n_j,\dots, \wt{-n_j},\dots, -n_{k+1}}}{2k(k+1)+m_2\pa{\sum_{r=1}^{k+1}\pa{-n_r}^2 - \pa{\sum_{r=1}^{k+1}\pa{-n_r}}^2}}\NNN
			& =\frac{4\sum_{i<j\leq k+1}	a_k\pa{n_1,\dots,\wt{n_i},\dots,n_i+n_j,\dots, \wt{n_j},\dots, n_{k+1}} }{2k(k+1)+m_2\pa{\sum_{r=1}^{k+1}n_r^2 - \pa{\sum_{r=1}^{k+1}n_r}^2}}=a_{k+1}\pa{n_1,\dots,n_k}, \nonumber
		\end{align}
		concluding the proof of \eqref{item:a_is_symmetric} and \eqref{item:a_is_even}.\\
		We now turn our attention to \eqref{item:ak_in_l1}. The base case for our induction is
		$$\sum_{n_1\in\Z} e^{-\frac{\lambda m_2 n_1^2 t}{2}}\abs{a_1\pa{n_1}} = \sum_{n_1\in\Z} e^{-\frac{\lambda m_2 n_1^2 t}{2}}=2\sum_{n_1=0}^{\infty}e^{-\frac{\lambda m_2 n_1^2 t}{2}}-1= 2\sum_{n_1=0}^{\infty}\pa{e^{-\frac{\lambda m_2 t}{2}}}^{n_1^2}-1$$
		$$\leq 2\sum_{n_1=0}^{\infty}\pa{e^{-\frac{\lambda m_2 t}{2}}}^{n_1}-1=\frac{2}{1-e^{-\frac{\lambda m_2 t}{2}}}-1=\frac{1+e^{-\frac{\lambda m_2 t}{2}}}{1-e^{-\frac{\lambda m_2 t}{2}}}<\infty,$$
		for $t>0$. We continue and assume that the claim holds for some $k\in\N$. We have that 
		$$\sum_{\pa{n_1,\dots,n_{k+1}}\in \Z^{k+1}}e^{-\frac{\lambda m_2 \pa{\sum_{r=1}^{k+1}n_r}^2t}{2}}\abs{a_{k+1}\pa{n_1,\dots,n_{k+1}}}$$
		$$\leq  \sum_{\fontsize{5}{4}\selectfont{\begin{matrix}
			\pa{n_1,\dots,n_{k+1}}\in \Z^{k+1}\\
			2k(k+1)+m_2\sum_{r=1}^{k+1}n_r^2 \ne m_2\pa{\sum_{r=1}^{k+1}n_r}^2
			\end{matrix}}
			}e^{-\frac{\lambda m_2 \pa{\sum_{r=1}^{k+1}n_r}^2t}{2}}\frac{4}{\abs{2k(k+1)+m_2\pa{\sum_{r=1}^{k+1}n_r^2 - \pa{\sum_{r=1}^{k+1}n_r}^2}}}$$
			$$\pa{\sum_{i<j\leq k+1}	\abs{a_k\pa{n_1,\dots,\wt{n_i},\dots,n_i+n_j,\dots, \wt{n_j},\dots, n_{k+1}}}}$$
			$$=\sum_{i<j\leq k+1}\sum_{\fontsize{5}{4}\selectfont{\begin{matrix}
						\pa{n_1,\dots,n_{k+1}}\in \Z^{k+1}\\
						2k(k+1)+m_2\sum_{r=1}^{k+1}n_r^2 \ne m_2\pa{\sum_{r=1}^{k+1}n_r}^2
				\end{matrix}}
			}e^{-\frac{\lambda m_2 \pa{\sum_{r=1}^{k+1}n_r}^2t}{2}}\frac{4\abs{a_k\pa{n_i+n_j, n_1,\dots,\wt{n_i},\dots, \wt{n_j},\dots, n_{k+1}}}}{\abs{2k(k+1)+m_2\pa{\sum_{r=1}^{k+1}n_r^2 - \pa{\sum_{r=1}^{k+1}n_r}^2}}}$$
			$$=\frac{k(k+1)}{2}\sum_{\fontsize{5}{4}\selectfont{\begin{matrix}
						\pa{n_1,\dots,n_{k+1}}\in \Z^{k+1}\\
						2k(k+1)+m_2\sum_{r=1}^{k+1}n_r^2 \ne m_2\pa{\sum_{r=1}^{k+1}n_r}^2
				\end{matrix}}
			}e^{-\frac{\lambda m_2 \pa{\sum_{r=1}^{k+1}n_r}^2t}{2}}\frac{4\abs{a_k\pa{n_1+n_2, n_3,\dots,n_{k+1}}}}{\abs{2k(k+1)+m_2\pa{\sum_{r=1}^{k+1}n_r^2 - \pa{\sum_{r=1}^{k+1}n_r}^2}}}$$
		where we have used \eqref{eq:recursive_a} and the symmetry of $a_k$, $\sum_{r=1}^{k+1}n_r^2 $, and $\sum_{r=1}^{k+1}n_r $.
		
		Defining 
		$$p_l = \sum_{r=1}^{l}n_r$$
		which is invertible map from $\Z^{k+1}$ to itself with an inverse
		$$n_r=\begin{cases}
			p_1,& r=1,\\
			p_{r}-p_{r-1}, & 2\leq r\leq k+1,
		\end{cases}$$
		 we see that 
		$$ \sum_{\fontsize{5}{4}\selectfont{\begin{matrix}
			\pa{n_1,\dots,n_{k+1}}\in \Z^{k+1}\\
			2k(k+1)+m_2\sum_{r=1}^{k+1}n_r^2 \ne m_2\pa{\sum_{r=1}^{k+1}n_r}^2
			\end{matrix}}
			}e^{-\frac{\lambda m_2 \pa{\sum_{r=1}^{k+1}n_r}^2t}{2}}\frac{4	\abs{a_k\pa{n_1+n_2,n_3,\dots,n_{k+1}}}}{\abs{2k(k+1)+m_2\pa{\sum_{r=1}^{k+1}n_r^2 - \pa{\sum_{r=1}^{k+1}n_r}^2}}}$$
			$$ =\sum_{\fontsize{5}{4}\selectfont{\begin{matrix}
						\pa{p_1,\dots,p_{k+1}}\in \Z^{k+1}\\
						2k(k+1)+m_2\pa{p_1^2+ \sum_{r=2}^{k+1}\pa{p_r-p_{r-1}}^2}\ne m_2p_{k+1}^2
				\end{matrix}}
			}e^{-\frac{\lambda m_2 p_{k+1}^2t}{2}}\frac{4	\abs{a_k\pa{p_2,p_3-p_2,\dots,p_{k+1}-p_k}}}{\abs{2k(k+1)+m_2\pa{p_1^2+ \sum_{r=2}^{k+1}\pa{p_r-p_{r-1}}^2}-m_2p_{k+1}^2}}$$
			$$ =\sum_{\fontsize{5}{4}\selectfont{\begin{matrix}
						\pa{p_1,\dots,p_{k+1}}\in \Z^{k+1}\\
						2k(k+1)+2m_2\Bigg(p_1^2-p_2p_1\\
							+ \pa{\sum_{r=2}^{k}p_r^2 - \sum_{r=3}^{k+1}p_rp_{r-1}}\Bigg)\ne 0
				\end{matrix}}
			}e^{-\frac{\lambda m_2 p_{k+1}^2t}{2}}\frac{4	\abs{a_k\pa{p_2,p_3-p_2,\dots,p_{k+1}-p_k}}}{\abs{2k(k+1)+2m_2\pa{p_1^2-p_2p_1 +\pa{\sum_{r=2}^{k}p_r^2 - \sum_{r=3}^{k+1}p_rp_{r-1}}}}}$$
			$$\leq \sum_{\pa{p_2,\dots,p_{k+1}}\in \Z^k}4 e^{-\frac{\lambda m_2 p_{k+1}^2t}{2}}\abs{a_k\pa{p_2,p_3-p_2,\dots,p_{k+1}-p_k}}$$
			$$\sum_{\fontsize{5}{4}\selectfont{\begin{matrix}
						p_1\in \Z^{k+1},\\
						2k(k+1)+2m_2\Bigg(p_1^2-p_2p_1
						+ \pa{\sum_{r=2}^{k}p_r^2 - \sum_{r=3}^{k+1}p_rp_{r-1}}\Bigg)\ne 0
				\end{matrix}}}\frac{1}{\abs{2k(k+1)+2m_2\pa{p_1^2-p_2p_1 +\pa{\sum_{r=2}^{k}p_r^2 - \sum_{r=3}^{k+1}p_rp_{r-1}}}}}$$
			$$ \leq 4C_{m_2,k}\sum_{\pa{p_2,\dots,p_{k+1}}\in \Z^k} e^{-\frac{\lambda m_2 p_{k+1}^2t}{2}}\abs{a_k\pa{p_2,p_3-p_2,\dots,p_{k+1}-p_k}}$$
			where we have used Lemma \ref{lem:estimating_quadratic_sum} and denoted by
			$$C_{m_2,k}= \begin{cases}
				\frac{3}{m_2},& \frac{k(k+1)}{m_2}\in\Z,\\
				\frac{3}{\min\pa{k(k+1)-m_2\lfloor \frac{k(k+1)}{m_2} \rfloor, m_2\pa{\lfloor \frac{k(k+1)}{m_2} \rfloor+1}-k(k+1)}}, & \frac{k(k+1)}{m_2}\not\in\Z
			\end{cases} \;\;+ \frac{8}{m_2}\sum_{N=1}^\infty \frac{1}{N^{\frac{3}{2}}}.$$
			Defining $l_r = p_{r+1}-p_{r}$ for $r\geq 2$ and $l_1=p_2 $ we conclude that 
			$$\sum_{\pa{n_1,\dots,n_{k+1}}\in \Z^{k+1}}e^{-\frac{\lambda m_2 \pa{\sum_{r=1}^{k+1}n_r}^2t}{2}}\abs{a_{k+1}\pa{n_1,\dots,n_{k+1}}} $$
			$$\leq 2k(k+1)C_{m_2,k}\sum_{\pa{l_1,\dots,l_{k}}\in \Z^k} e^{-\frac{\lambda m_2 \pa{\sum_{r=1}^k l_r}^2 t}{2}}\abs{a_k\pa{l_1,\dots,l_k}}<\infty,$$
			according to our induction assumption. To complete the proof we notice that 
			$$\sum_{\pa{n_1,\dots,n_k}\in\Z^k} \delta_0\pa{\sum_{r=1}^k n_r}\abs{a_k\pa{n_1,\dots,n_k}} \leq \sum_{\pa{n_1,\dots,n_k}\in\Z^k}e^{-\frac{\lambda m_2 \pa{\sum_{r=1}^{k}n_r}^2t}{2}}\abs{a_k\pa{n_1,\dots,n_k}}<\infty.$$
 	\end{proof}
	We only require one more ingredient to be able to prove Theorem \ref{thm:main_partial_order}.
	\begin{lemma}\label{lem:about_partial_decomp}
		Let $\br{f_{k,\infty}}_{k\in\N}$ be the family of probability measures described in Corollary \ref{cor:t_limiting_behaviour_of_sol}.
		For any $k\in\N$ there exists an even and symmetric $\nu_k\in \PP\pa{\T^k}$ which is absolutely continuous with respect to $\frac{d\theta_1\dots d\theta_k}{\pa{2\pi}^k}$ such that \eqref{eq:formula_for_f_infty} holds for $k\geq 2$, i.e.
		\begin{equation}\nonumber
			\wh{f_{k,\infty}}\pa{n_1,\dots,n_k}= \frac{\delta_0\pa{\sum_{r=1}^{k}n_r}}{k}\sum_{l=1}^{k}\wh{\nu_{k-1}}\pa{n_1,\dots,\wt{n_l},\dots,n_{k}}.
		\end{equation}
		Moreover, the probability density of $\nu_k$, $\upnu_k$, is continuous and \eqref{eq:formula_for_f_infty} can be rewritten by \eqref{eq:formula_for_f_infty_spatial}, i.e.
		\begin{equation}\nonumber
			f_{k,\infty}\pa{\theta_1,\dots,\theta_k} = \pa{\frac{1}{k}\sum_{l=1}^k \upnu_{k-1}\pa{\theta_1-\theta_l,\dots \wt{\theta_l-\theta_l},\dots,\theta_k-\theta_l}}\frac{d\theta_1\dots d\theta_k}{\pa{2\pi}^k}.
		\end{equation}
		 $\nu_k$ is explicitly given by \eqref{eq:explicit_nu_k}, i.e.
				\begin{equation}\nonumber
			\begin{split}
				& \wh{\nu_{k}}\pa{n_1,\dots,n_{k}} =\sum_{i=1}^k\frac{2(k+1)\xi_{k}\pa{n_1,\dots,\wt{n_i}, - \sum_{r\not= i}n_r, \dots, n_{k}}}{2k(k+1)+m_2\sum_{r=1}^{k}n_r^2 + m_2\pa{\sum_{r=1}^{k}n_r}^2 }
			\end{split}
		\end{equation}
	where $\xi_k$ satisfies the recursive relation \eqref{eq:explicit_xi_k}, i.e.
			\begin{equation}\nonumber
			\begin{split}
				&\wh{\xi_1}(n_1)=1,\\
				& \wh{\xi_{k}}\pa{n_1,\dots,n_{k}} =
				\frac{4\sum_{i<j\leq k}\xi_{k-1}\pa{n_1,\dots,\wt{n_i},\dots,n_i+n_j,\dots, \wt{n_j},\dots, n_{k}}}{2k(k-1)+m_2\sum_{r=1}^{k}n_r^2 },\quad k \geq 2.
			\end{split}
		\end{equation}
	\end{lemma}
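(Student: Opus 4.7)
The plan is to work entirely in Fourier space and verify the claims of the lemma in order. First I treat the auxiliary sequence $\xi_k$: since the denominator $2k(k-1)+m_2\sum n_r^2$ in \eqref{eq:explicit_xi_k} is strictly positive, the recursion is well-defined on all of $\Z^k$. Mimicking the inductive arguments in Lemma~\ref{lem:additional_properties_of_ak}, I verify that $\xi_k$ is real, permutation-symmetric, even, uniformly bounded, and summable with appropriate weights. Crucially, when $\sum_r n_r=0$ the recursion \eqref{eq:recursive_a} for $a_k$ reduces to the one for $\xi_k$, so $a_k(n)=\xi_k(n)$ on the zero-sum set. Next I define $\wh{\nu_k}$ by \eqref{eq:explicit_nu_k} and inherit realness, permutation symmetry, and evenness from the corresponding properties of $\xi_k$ via Lemma~\ref{lem:properties_of_even_and_symmetry_in_fourier}. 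Evaluation at the origin yields $\wh{\nu_k}(0,\dots,0)=\sum_{i=1}^k\frac{2(k+1)}{2k(k+1)}=1$. Writing the denominator as $D_k(n)=2k(k+1)+m_2\, n^{T}(I+J)n$ with $J$ the all-ones matrix (so that $I+J$ is positive-definite), the bound $\frac{1}{D_k(n)}=O\pa{\pa{1+\norm{n}^2}^{-1}}$ together with the uniform bound on $\xi_k$ and the quadratic-form estimates underlying Lemma~\ref{lem:estimating_quadratic_sum} yields $\wh{\nu_k}\in\ell^1(\Z^k)$, so that $\upnu_k(\theta)=\sum_n\wh{\nu_k}(n)e^{in\cdot\theta}$ is a continuous function with total integral $1$ against $\frac{d\theta_1\dots d\theta_k}{(2\pi)^k}$.

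Next I verify the partial-order identity \eqref{eq:formula_for_f_infty}. Fix $\pa{n_1,\dots,n_k}\in\Z^k$ with $\sum n_r=0$ and substitute the formula for $\wh{\nu_{k-1}}$ into $\frac{1}{k}\sum_{l=1}^{k}\wh{\nu_{k-1}}\pa{n_1,\dots,\wt{n_l},\dots,n_k}$. On the zero-sum set, $\sum_{r\ne l}n_r=-n_l$, so all denominators collapse to the $l$-independent quantity $2k(k-1)+m_2\sum_{r=1}^k n_r^2$. The numerators produce a double sum $\sum_{l=1}^k\sum_{i=1}^{k-1}\xi_{k-1}$ in which each summand corresponds to merging two indices $p<q$ in $\{1,\dots,k\}$. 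A simple count shows that each unordered pair $\{p,q\}$ arises from exactly two choices of $(l,i)$ (namely $(l,i)=(p,q-1)$ and $(l,i)=(q,p)$), so $\sum_{l,i}=2\sum_{p<q}$. Matched against the recursion \eqref{eq:explicit_xi_k} for $\xi_k$, this gives $\frac{1}{k}\sum_l\wh{\nu_{k-1}}(\cdots)=\xi_k(n)=a_k(n)$, establishing \eqref{eq:formula_for_f_infty}.

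The main obstacle is proving $\upnu_k\geq 0$, for which the plan exploits a heat-kernel representation. Using $\xi_k(\tilde n_i)=a_k(\tilde n_i)=\wh{f_{k,\infty}}(\tilde n_i)$ with $\tilde n_i=n-s\bm{e}_i$ and $s=\sum n_r$, the formula \eqref{eq:explicit_nu_k} rewrites as
\begin{equation*}
\wh{\nu_k}(n)=\frac{2(k+1)}{D_k(n)}\sum_{i=1}^{k}\wh{f_{k,\infty}}\pa{n-s\bm{e}_i}.
\end{equation*}
Writing $\frac{1}{D_k(n)}=\int_0^{\infty}e^{-2k(k+1)t}\wh{H_t}(n)\,dt$, where $H_t\in\PP\pa{\T^k}$ is the non-negative heat kernel whose Fourier coefficients are $e^{-m_2 t\,n^{T}(I+J)n}$ (the wrapped density of a Gaussian on $\T^k$ with covariance $2m_2 t(I+J)$), and using the translation-invariance of $f_{k,\infty}$ together with the reparametrisation $n\leftrightarrow(n-s\bm{e}_i,s)$, I obtain the distributional identity
\begin{equation*}
\sum_{n\in\Z^k}\wh{f_{k,\infty}}\pa{n-s\bm{e}_i}e^{in\cdot\theta}=2\pi\, f_{k,\infty}(\theta)\,\delta_0(\theta_i).
\end{equation*}
Combining these,
\begin{equation*}
\upnu_k(\theta)=2(k+1)\sum_{i=1}^{k}\int_0^{\infty}e^{-2k(k+1)t}\rpa{H_t\ast\pa{2\pi f_{k,\infty}\,\delta_0(\cdot_i)}}(\theta)\,dt,
\end{equation*}
each summand being the convolution of non-negative objects. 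Hence $\upnu_k\geq 0$, and together with $\int\upnu_k\,\frac{d\theta_1\dots d\theta_k}{(2\pi)^k}=1$ this shows $\nu_k\in\PP\pa{\T^k}$. The spatial decomposition \eqref{eq:formula_for_f_infty_spatial} then follows from \eqref{eq:formula_for_f_infty}, the uniqueness of Fourier coefficients, and the continuity of $\upnu_k$.
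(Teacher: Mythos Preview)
Your proof is correct and takes a genuinely different route from the paper on the central point, the positivity of $\nu_k$. The paper proves this on the Fourier side: it recognises $\frac{1}{D_k(n)}$ as the restriction to the sublattice $\{(n_1,\dots,n_k,\sum n_r)\}$ of the Fourier transform of the non-negative Bessel potential $Y_a$ on $\R^{k+1}$, observes that $\wh{f_{k,\infty}}\pa{n-s\bm e_i}$ is positive-definite because $f_{k,\infty}$ is a probability measure, and then applies Bochner's theorem together with Schur's product theorem. Your approach is the spatial dual: you write $\frac{1}{D_k(n)}=\int_0^\infty e^{-2k(k+1)t}\wh{H_t}(n)\,dt$ with $H_t$ the wrapped Gaussian on $\T^k$ associated to the positive-definite form $I+J$, identify $g_i=2\pi\phi\,\delta_0(\cdot_i)$ as a non-negative finite measure (here you implicitly use that $\wh{f_{k,\infty}}\in\ell^1$ so that the continuous density $\phi\geq 0$ exists), and conclude $\nu_k=2(k+1)K\ast\sum_i g_i\geq 0$ with $K=\int e^{-2k(k+1)t}H_t\,dt$. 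Your argument is more constructive and stays entirely on $\T^k$, avoiding both Schur's theorem and the excursion to $\R^{k+1}$; the paper's argument is slicker in that it never leaves the level of Fourier coefficients and does not require identifying $g_i$ explicitly. Your verification of \eqref{eq:formula_for_f_infty} via the combinatorial pair-count is also the reverse direction of the paper's derivation, but equally valid.

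One point deserves tightening: your $\ell^1$ argument for $\wh{\nu_k}$ as written is not sufficient, since $\frac{1}{D_k(n)}=O\bigl((1+|n|^2)^{-1}\bigr)$ combined merely with the \emph{uniform} bound $|\xi_k|\leq 1$ gives a divergent majorant in dimension $k\geq 2$. What is actually needed (and what the paper does) is the $\ell^1$-summability of $\xi_k=a_k$ on the zero-sum hyperplane from Lemma~\ref{lem:additional_properties_of_ak}\eqref{item:ak_in_l1}: since $\xi_k\pa{n_1,\dots,\wt{n_i},-\sum_{r\ne i}n_r,\dots,n_k}$ depends only on $(n_r)_{r\ne i}$, one bounds $D_k(n)\geq 2k(k+1)+m_2 n_i^2$, sums first over $n_i$ to obtain a finite constant, and then invokes the hyperplane summability for the remaining $k-1$ variables. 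You gesture at this with ``summable with appropriate weights'' and the reference to Lemma~\ref{lem:estimating_quadratic_sum}, but the step should be made explicit.
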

	\begin{proof}
	We start by noticing that when $k\geq 2$ and $\sum_{r=1}^{k}n_r=0$ we must have that 
	$$2k\pa{k-1}+m_2\pa{\sum_{r=1}^{k}n_r^2} \not= m_2\pa{\sum_{r=1}^{k}n_r}^2$$
	from which, together with  \eqref{eq:recursive_a}, we conclude that
	\allowdisplaybreaks
	\begin{align}
			&	\wh{f_{k,\infty}}\pa{n_1,\dots,n_k}=\delta_0\pa{\sum_{r=1}^{k}n_r}a_{k}\pa{n_1,\dots,n_{k}} \NNN
			&=\frac{4	\delta_0\pa{\sum_{r=1}^{k}n_r}}{2k(k-1)+m_2\sum_{r=1}^{k}n_r^2 }\sum_{i<j\leq k}	a_{k-1}\pa{n_1,\dots,\wt{n_i},\dots,n_i+n_j,\dots, \wt{n_j},\dots, n_{k}}\NNN
				&=\frac{4	\delta_0\pa{\sum_{r=1}^{k}n_r}}{2k(k-1)+m_2\sum_{r=1}^{k}n_r^2 }\sum_{i<j\leq k}	\delta_0\pa{\sum_{r=1}^{k}n_r }a_{k-1}\pa{n_1,\dots,\wt{n_i},\dots,n_i+n_j,\dots, \wt{n_j},\dots, n_{k}}\NNN
			&=\frac{4	\delta_0\pa{\sum_{r=1}^{k}n_r}}{2k(k-1)+m_2\sum_{r=1}^{k}n_r^2 }\sum_{i<j\leq k}	\wh{f_{k-1,\infty}}\pa{n_1,\dots,\wt{n_i},\dots,n_i+n_j,\dots, \wt{n_j},\dots, n_{k}}\NNN
		&\qquad=\frac{2\delta_0\pa{\sum_{r=1}^{k}n_r}}{2k(k-1)+m_2\sum_{r=1}^{k}n_r^2 }\sum_{i\not=j}^{k}	\wh{f_{k-1,\infty}}\pa{n_1,\dots,\wt{n_{\min\pa{i,j}}},\dots,n_i+n_j,\dots, \wt{n_{\max\pa{i,j}}},\dots, n_{k}}\NNN
		&\qquad  = \frac{\delta_0\pa{\sum_{r=1}^{k}n_r}}{k}\sum_{j=1}^{k} \sum_{i=1,\;i\not=j}^k\frac{2k\wh{f_{k-1,\infty}}\pa{n_1,\dots,\wt{n_{\min\pa{i,j}}},\dots,n_i+n_j,\dots, \wt{n_{\max\pa{i,j}}},\dots, n_{k}}}{2k(k-1)+m_2\sum_{r=1}^{k}n_r^2 }\NNN
		&=\qquad \frac{\delta_0\pa{\sum_{r=1}^{k}n_r}}{k}\sum_{j=1}^{k} \sum_{i=1,\;i\not=j}^k\frac{2k\wh{f_{k-1,\infty}}\pa{n_1,\dots,\wt{n_{\min\pa{i,j}}},\dots,-\sum_{l=1,\;l\ne i,j}^{k}n_l,\dots, \wt{n_{\max\pa{i,j}}},\dots, n_{k}}}{2k(k-1)+m_2\sum_{r=1,\;r\ne j}^{k}n_r^2 + m_2\pa{\sum_{r=1,\;r\not=j}^{k}n_r}^2 }\NNN
		&\qquad= \frac{\delta_0\pa{\sum_{r=1}^{k}n_r}}{k}\sum_{j=1}^{k} \wh{\nu_{k-1}}\pa{n_1,\dots,\wt{n_j},\dots,n_{k}}\nonumber
	\end{align}
		where for any $k\in\N$
	\begin{equation}\label{eq:temp_def_for_nu}
		\begin{split}
			&	\wh{\nu_k}\pa{p_1,\dots,p_k}: = \sum_{i=1}^k\frac{2(k+1)\wh{f_{k,\infty}}\pa{p_1,\dots,\wt{p_i}, - \sum_{r=1,\;r\not= i}^k p_r, \dots, p_{k}}}{2k(k+1)+m_2\sum_{r=1}^{k}p_r^2 + m_2\pa{\sum_{r=1}^{k}p_r}^2 },
		\end{split}
	\end{equation}
	where we have used the symmetry of $\wh{f_{k,\infty}}$. Note that when $k=1$
	$$\wh{f_{1,\infty}}\pa{p_1,\dots,\wt{p_1}, - \sum_{r=1,\;r\not= 1}^l p_r, \dots, p_{1}}=\wh{f_{1,\infty}}(0)=1,$$
	so the formulae above still hold and make sense. 
	
		To show that $\wh{\nu_k}$ indeed represents the Fourier coefficients of some probability measure $\nu_k$ we recall that Bochner's theorem for $\T^k$ or $\R^k$ states that $\br{\wh{\mu}\pa{p_1,\dots,p_k}}_{\pa{p_1,\dots,p_k}\in\Z^k}$ or $\br{\wh{\mu}\pa{p_1,\dots,p_k}}_{\pa{p_1,\dots,p_k}\in\R^k}$ are the Fourier coefficients of a probability measure on $\T^k$ or $\R^k$ respectively if and only if 
	$\wh{\mu}\pa{0,\dots,0}=1$ and
	\begin{equation}\label{eq:bochner_crietrion}
		\mathbf{F}(\wh{\mu})_{\bm{p},N} =\pa{\wh{\mu}\pa{p_{1,n}-p_{1,m},\dots,p_{k,n}-p_{k,m}}}_{n,m\in\br{1,\dots,N}}
	\end{equation}
	is a positive semi-definite matrix for any $N\in\N$ and $\br{p_{1,n},\dots,p_{k,n}}_{n\in\br{1,\dots,N}}$ (see, for instance, \cite[Theorem 1.4.3]{Rudin91}).
	
	We notice that if we denote by
	$$P^{(i)}_{r,n} = \begin{cases}
		p_{r,n},& r\not=i,\\
		-\sum_{r=1,\;r\ne i}^{k} p_{r,n},& r=i,
	\end{cases}$$
	for a given $i\in \br{1,\dots, k}$ and $\br{\pa{p_{1,n},\dots,p_{k,n}}}_{n\in\br{1,\dots,N}}\in \Z^{k}$ then
	\begin{equation}\nonumber
		\begin{split}
			&\mathbf{A}^{(i)}_{\bm{p},N} =\pa{\wh{f_{k,\infty}}\pa{p_{1,n}-p_{1,m},\dots,\wt{p_{i,n}-p_{i,m}}, -\sum_{r=1,\;r\ne i}^k\pa{p_{r,n}-p_{r,m}},\dots, p_{k,n}-p_{k,m}}}_{n,m\in\br{1,\dots,N}}\\
			&\qquad\qquad\qquad=\pa{\wh{f_{k,\infty}}\pa{P^{(i)}_{1,n}-P^{(i)}_{1,m},\dots,P^{(i)}_{k,n}-P^{(i)}_{k,m}}}_{n,m\in\br{1,\dots,N}}
		\end{split}
	\end{equation}
	is a positive semi-definite matrix due to the fact that $f_{k,\infty}$ is a probability measure on $\T^k$.
	
	In addition, it is well know that for any $a>0$ the Fundamental solution for the operator $a^2-\Delta$ on $\R^k$ is proportional to
	$$	Y_a\pa{\bm{x}} = \begin{cases}
		\frac{e^{-a\abs{x}}}{a},& k=1,\\
		\frac{a^{\frac{k-2}{2}}K_{\frac{k-2}{2}}\pa{a\abs{\bm{x}}}}{|x|^{\frac{k-2}{2}}}, & k\geq 2,		
	\end{cases}$$
	where $K_\nu$ is the modified Bessel function of second kind of order $\nu$.
	 $Y_a$ is a non-negative function that belongs to $ L^1\pa{\R^k}$ and its Fourier transform is given by
	$$\wh{Y_a}\pa{\bm{\xi}} = \frac{C_k}{a^2+\bm{\xi}^2},$$
	where $C_k$ is a fixed positive constant (see Appendix \ref{app:additional_proofs} for more details). Utilising Bochner's theorem again (with $Y_a$ acting as a density function of a positive and finite Borel measure on $\R^k$) we see that
	\begin{equation}\nonumber
		\begin{split}
			&\mathbf{B}_{\bm{p},N} =\pa{\frac{2(k+1)}{2k(k+1)+m_2\sum_{r=1}^{k}\pa{p_{r,n}-p_{r,m}}^2 + m_2\pa{\sum_{r=1}^{k}p_{r,n}-\sum_{r=1}^k p_{r,m}}^2 }}_{n,m\in\br{1,\dots,N}}\\
			&\qquad\qquad\qquad=\pa{\frac{2(k+1)\wh{Y_{\sqrt{\frac{2k(k+1)}{m_2}}}}\pa{\mathcal{P}_{1,n}-\mathcal{P}_{1,m},\dots,\mathcal{P}_{k+1,n}-\mathcal{P}_{k+1,m}}}{m_2C_k}}_{n,m\in\br{1,\dots,N}},
		\end{split}
	\end{equation}
	where 
	$$\mathcal{P}_{r,n} = \begin{cases}
		p_{r,n},& r\not=k+1,\\
		\sum_{r=1}^{k} p_{r,n},& r=k+1,
	\end{cases}$$
	is a positive semi-definite matrix. 
	
	Consequently, for any $N\in\N$ and $\br{\pa{p_{1,n},\dots,p_{k,n}}}_{n\in\br{1,\dots,N}}\in \Z^{k}$ the matrix $\mathbf{N}$ defined by
	\begin{equation}\nonumber
		\begin{split}
			&\mathbf{N}_{n,m} = \wh{\nu_k}\pa{p_{1,n}-p_{1,m},\dots,p_{k,n}-p_{k,m}}=\\
			&=\frac{\sum_{i=1}^k2(k+1)\wh{f_{k,\infty}}\pa{p_{1,n}-p_{1,m},\dots,\wt{p_{i,n}-p_{i,m}}, -\sum_{r=1,\;r\ne i}^k\pa{p_{r,n}-p_{r,m}},\dots, p_{k,n}-p_{k,m}}}{2k(k+1)+m_2\sum_{r=1}^{k}\pa{p_{r,n}-p_{r,m}}^2 + m_2\pa{\sum_{r=1}^{k}p_{r,n}-\sum_{r=1}^k p_{r,m}}^2 }\\
			&\qquad\qquad\qquad=\sum_{i=1}^k\pa{\mathbf{A}^{(i)}\odot\mathbf{B}} _{n,m},
		\end{split}
	\end{equation}
	where $\odot$ represents the Hadamard product, is positive semi-definite according to Schur's product theorem and the fact that a sum of positive semi-definite matrices is a positive semi-definite matrix. 
	
	Since
	\begin{equation}\label{eq:one_condition_bochner}
		\wh{\nu_k}\pa{0,\dots,0} = \sum_{i=1}^k\frac{2(k+1)}{2k(k+1)} = 1.
	\end{equation}
	we use Bochner's theorem again to conclude that $\wh{\nu_k}$ is indeed the Fourier coefficients of a probability measure which we will denote by $\nu_k$.

	We now turn our attention to showing that $\nu_k$ is symmetric and even, which we will do by using Lemma \ref{lem:properties_of_even_and_symmetry_in_fourier}.
	
	The symmetry of $\wh{f_{k,\infty}}$ and $\sum_{r=1}^kn_r^2$  and $\sum_{r=1}^kn_r$ imply that 
		\begin{equation}\label{eq:symmetry_of_wh_nu}
		\wh{\nu_k}\pa{n_1,\dots,n_k}=\wh{\nu}\pa{n_{\sigma\pa{1}},\dots,n_{\sigma\pa{k}}},
	\end{equation}
	for any $\pa{n_1,\dots,n_k}\in \Z^k$ and $\sigma\in S^k$ which implies the symmetry of $\nu_k$. In addition, as
	$$\wh{f_{k,\infty}}\pa{-n_1,\dots,-n_k}=\delta_0\pa{-\sum_{r=1}^k n_r}a_k\pa{-n_1,\dots,-n_k}$$
	$$=\delta_0\pa{\sum_{r=1}^k n_r}a_k\pa{n_1,\dots,n_k}=\wh{f_{k,\infty}}\pa{n_1,\dots,n_k},$$
	for any $\pa{n_1,\dots,n_k}\in \Z^k$ according to part \eqref{item:a_is_even} of Lemma \ref{lem:additional_properties_of_ak}, we find that 
	\begin{equation}\label{eq:evenness_of_wh_nu}
		\wh{\nu_k}\pa{n_1,\dots,n_k}=\wh{\nu}\pa{-n_1,\dots,-n_k},
	\end{equation}
	which implies the evenness of $\nu_k$.
	
	The next step in our proof will be to show that the measure $\nu_k$ is absolutely continuous with respect to $\frac{d\theta_1\dots d\theta_k}{\pa{2\pi}^k}$ with a continuous probability density function. To achieve this it will be enough to show that $\br{\wh{\nu_k}\pa{n_1,\dots,n_k}}_{\pa{n_1,\dots,n_k}}\in \ell_1\pa{\Z^k}$ as in that case the function
	
	$$\upnu_k\pa{\theta_1,\dots,\theta_k} = \sum_{\pa{n_1,\dots,n_k}\in\Z^k} \wh{\nu}\pa{n_1,\dots,n_k}e^{i\sum_{r=1}^k n_r\theta_r}$$
	is a continuous function on $\T^k$ whose Fourier coefficients are the same as those of $\nu_k$, implying that 
	$$\nu_k\pa{\theta_1,\dots.\theta_k}= \upnu_k\pa{\theta_1,\dots,\theta_k} \frac{d\theta_1\dots d\theta_k}{\pa{2\pi}^k}.$$
	By definition 
	\begin{align}
		&		\sum_{\pa{n_1,\dots,n_k}\in \Z^k}\abs{\wh{\nu}_k\pa{n_1,\dots,n_k}}\NNN
		&\leq \sum_{\pa{n_1,\dots,n_k}\in \Z^k}\frac{2(k+1)}{2k(k+1)+m_2\sum_{r=1}^{k}n_r^2 + m_2\pa{\sum_{r=1}^{k}n_r}^2 } \sum_{i=1}^k \abs{\wh{f_{k,\infty}}\pa{n_1,\dots,\wt{n_i}, - \sum_{r\not= i}n_r, \dots, n_{k}}}\NNN
		&=\sum_{\pa{n_1,\dots,n_k}\in \Z^k}\frac{2k(k+1)}{2k(k+1)+m_2\sum_{r=1}^{k}n_r^2 + m_2\pa{\sum_{r=1}^{k}n_r}^2 } \ \abs{\wh{f_{k,\infty}}\pa{ - \sum_{r\not= 1}n_r,n_2 \dots, n_{k}}}\NNN 
		& \leq \pa{\sum_{n_1\in\Z} \frac{2k(k+1)}{2k(k+1)+m_2n_1^2}} \sum_{\pa{n_2,\dots,n_k}\in \Z^{k-1}}\abs{\wh{f_{k,\infty}}\pa{ - \sum_{r\not= 1}n_r,n_2 \dots, n_{k}}}\NNN
		& \leq  \pa{\sum_{n_1\in\Z} \frac{2k(k+1)}{2k(k+1)+m_2n_1^2}} \sum_{\pa{n_1,n_2,\dots,n_k}\in \Z^{k}}\delta_0\pa{\sum_{r=1}^k n_r}\abs{a_k\pa{ n_1,n_2 \dots, n_{k}}}<\infty,
		\nonumber
	\end{align}
	according to \eqref{item:ak_in_l1} from Lemma \ref{lem:additional_properties_of_ak} and where we have used the symmetry of $\wh{f_{k,\infty}}$. 
	
	With that at hand we are able to show us \eqref{eq:formula_for_f_infty_spatial} when $k\geq 2$. Indeed, the
	calculation done in Lemma \ref{lem:fourier_of_partial_order} with $\eta\pa{\theta}=\frac{d\theta}{2\pi}$ and the uniqueness of the Fourier coefficients imply that 
	$$f_{k,\infty}\pa{\theta_1,\dots,\theta_k} = \frac{1}{k}\sum_{l=1}^k \nu_{k-1}\pa{\theta_1-\theta_l,\dots \wt{\theta_l-\theta_l},\dots,\theta_k-\theta_l}d\theta_l$$
		$$= \frac{1}{k}\sum_{l=1}^k \pa{\upnu_{k-1}\pa{\theta_1-\theta_l,\dots \wt{\theta_l-\theta_l},\dots,\theta_k-\theta_l}d\pa{\theta_1-\theta_l}\dots \wt{d\pa{\theta_l-\theta_l}}\dots d\pa{\theta_k-\theta_l}}d\theta_l$$
	$$=\pa{ \frac{1}{k}\sum_{l=1}^k \upnu_{k-1}\pa{\theta_1-\theta_l,\dots \wt{\theta_l-\theta_l},\dots,\theta_k-\theta_l}}\frac{d\theta_1\dots d\theta_k}{\pa{2\pi}^k}.$$

	Lastly, we will show that $\wh{\nu_k}$ satisfies \eqref{eq:explicit_nu_k} by showing that for any $k\in\N$
	\begin{equation}\label{eq:f_infty=xi}
		\wh{f_{k,\infty}}\pa{n_1,\dots,n_{k}} = \xi_{k}\pa{n_1,\dots,n_{k}},
	\end{equation}
	when $\sum_{r=1}^{k}n_r=0$ and where $\xi_k$ is defined via \eqref{eq:explicit_xi_k}. This will be shown by induction:\\
	The base case $k=1$ follows immediately from the fact that 
	$$\wh{f_{1,\infty}}\pa{0}=\xi_1(0)=1.$$
	We continue by assuming that \eqref{eq:f_infty=xi} holds for some $k\in\N$ and notice that when $\sum_{r=1}^{k+1}n_r=0$  \eqref{eq:recursive_a} implies that
	$$\wh{f_{k+1,\infty}}\pa{n_1,\dots,n_{k+1}} = \delta_0\pa{\sum_{r=1}^{k+1}n_r}a_{k+1}\pa{n_1,\dots,n_{k+1}}$$
	$$=\frac{4\sum_{i<j\leq k+1}	\delta_0\pa{\sum_{r=1}^{k+1}n_r}a_k\pa{n_1,\dots,\wt{n_i},\dots,n_i+n_j,\dots, \wt{n_j},\dots, n_{k+1}}}{2k(k+1)+m_2\sum_{r=1}^{k+1}n_r^2}$$
	$$= \frac{4\sum_{i<j\leq k+1}	\wh{f_{k,\infty}}\pa{n_1,\dots,\wt{n_i},\dots,n_i+n_j,\dots, \wt{n_j},\dots, n_{k+1}}}{2k(k+1)+m_2\sum_{r=1}^{k+1}n_r^2}$$
	$$=\frac{4\sum_{i<j\leq k+1}	\xi_k\pa{n_1,\dots,\wt{n_i},\dots,n_i+n_j,\dots, \wt{n_j},\dots, n_{k+1}}}{2k(k+1)+m_2\sum_{r=1}^{k+1}n_r^2}=\xi_{k+1}\pa{n_1,\dots,n_{k+1}},$$
	where we have used our induction assumption. We conclude that \eqref{eq:f_infty=xi} holds, and as such the proof is complete.
	\end{proof}
	
	\begin{remark}\label{rem:more_information_on_nu_k}
		
		Looking at the recursive definition of $\wh{\xi_k}$, \eqref{eq:explicit_xi_k}, we notice that a simple inductive argument shows that $\wh{\xi_k}\pa{n_1,\dots,n_k} >0$ for any $\pa{n_1,\dots,n_k}\in\Z^k$ and consequently so is $\wh{\nu_k}$. This implies that the probability density function of $\nu_k$, $\upnu_k$, which was found in the proof of Lemma \ref{lem:about_partial_decomp} satisfies
		$$\upnu_k\pa{\theta_1,\dots,\theta_k} = \abs{\upnu_k\pa{\theta_1,\dots,\theta_k} }\leq \sum_{\pa{n_1,\dots,n_k}\in \Z^k}\abs{\wh{\nu_k}\pa{n_1,\dots,n_k}}$$
		$$=\sum_{\pa{n_1,\dots,n_k}\in \Z^k}\wh{\nu_k}\pa{n_1,\dots,n_k}=\upnu_k\pa{0,\dots,0},$$
		validating the intuition that $\nu_k$ is ``concentrated'' at $\pa{0,\dots,0}$ (i.e. that the variables try to align).
	\end{remark}
	
	We have now gathered all the tools we need to show te proof of Theorem \ref{thm:main_partial_order}:
	
	\begin{proof}[Proof of Theorem \ref{thm:main_partial_order}]
		Theorem \ref{thm:recursive_formula_for_sol}, Corollary \ref{cor:t_limiting_behaviour_of_sol}, Lemma \ref{lem:about_partial_decomp}, and the fact that for any $\mu\in \PP\pa{\T}$ we have that $\wh{\mu}(n)=\delta_0(n)$ if and only if $\mu(\theta)=\frac{d\theta}{2\pi}$ prove all the required statements of the theorem besides that $\br{f_{N,\infty}}_{N\in\N}$ is $\br{\pa{\frac{d\theta}{2\pi},\nu_{k-1}}}_{k\in\N}-$partially ordered. To show this claim we notice that since
		$$\Pi_k\pa{F_{N,j}(t)} = F_{N,k}(t)$$
		for any $j\geq k$, and since $F_{N,l}(t)$ converges weakly to $f_l(t)$, we must have that 
		$$\Pi_k\pa{f_j(t)} = f_k(t),$$
		for any $j\geq k$. Taking the weak limit as $t$ goes to infinity implies that 
		$$\Pi_k\pa{f_{j,\infty}} = f_{k,\infty},$$
		for any $j\geq k$ and consequently
		$$\lim_{N\to\infty}\Pi_k\pa{f_{N,\infty}}\pa{\theta_1,\dots,\theta_k} = f_{k,\infty}\pa{\theta_1,\dots,\theta_k}$$
		and as
		$$\wh{f_{k,\infty}}\pa{n_1,\dots,n_k} =\begin{cases}
			\delta_0(n_1),& k=1,\\
			\frac{\delta_0\pa{\sum_{r=1}^{k}n_r}}{k}\sum_{l=1}^{k}\wh{\nu_{k-1}}\pa{n_1,\dots,\wt{n_l},\dots,n_{k}}, & k\geq 2,
		\end{cases} $$
		where $\nu_k$ is an even and symmetric probability measure on $\T^k$ we conclude the desired result from Lemma \ref{lem:partial_order_in_fourier}.
	\end{proof}
	
	Now that we've established the generation of partial order as $t$ goes to infinity we turn our attention to exploring the propagation of partial order.
	
\section{Propagation (or lack of) of partial order}\label{sec:propagation}

This short section is dedicated to the question of the propagation of partial order in the CL model and will consist solely of the proof of Theorem \ref{thm:lack_of_propagation}.

\begin{proof}[Proof of Theorem \ref{thm:lack_of_propagation}]
	We assume that $f_1(t)$ and $f_2(t)$ solve \eqref{eq:limit_of_F_N_K_partial_order} for $k=1,2$ and as such satisfy \eqref{eq:about_f_1} and \eqref{eq:about_f_2}, i.e.
	$$\wh{f_1}(n_1,t) = e^{-\frac{\lambda m_2 n_1^2 t}{2}}\wh{f_{1,0}}(n_1)$$
	and 
		\begin{equation}\nonumber
		\begin{aligned}
			&\wh{f_2}\pa{n_1,n_2,t}= e^{-\frac{\lambda \pa{4+m_2\pa{n_1^2+n_2^2}}t}{2}}\wh{f_{2,0}}\pa{n_1,n_2}\\
			&\qquad\qquad+\wh{f_{1,0}}\pa{n_1+n_2}\;\begin{cases}
				2\lambda te^{-\frac{\lambda \pa{4+m_2 \pa{n_1^2+n_2^2}} t}{2} }, & 4+m_2\pa{n_1^2+n_2^2}= m_2\pa{n_1+n_2}^2,\\
				\frac{4\pa{e^{-\frac{\lambda m_2 \pa{n_1+n_2}^2 t}{2} }-e^{-\frac{\lambda \pa{4+m_2\pa{n_1^2+n_2^2}}t}{2}}}}{4+m_2\pa{n_1^2+n_2^2-\pa{n_1+n_2}^2}}, & 4+m_2\pa{n_1^2+n_2^2}\ne m_2\pa{n_1+n_2}^2.
			\end{cases}
		\end{aligned}
	\end{equation}
	Using Lemma \ref{lem:fourier_of_partial_order} we see that under the assumption that 
	$$f_2\pa{\theta_1,\theta_2,t} = \frac{1}{2}\pa{\mu\pa{\theta_1,t}\nu\pa{\theta_2-\theta_1,t}+\mu\pa{\theta_2,t}\nu\pa{\theta_1-\theta_2,t}}$$
	for some $\mu\in\PP\pa{\T}$ and an even $\nu\in\PP\pa{\T}$ we have that
	$$\wh{f_2}(n_1,n_2,t)=\frac{\wh{\mu}\pa{n_1+n_2,t}}{2}\pa{\wh{\nu}\pa{n_1,t}+\wh{\nu}\pa{n_2,t}}$$
	from which we conclude that 
	\begin{equation}\nonumber
		\begin{aligned}
			&\wh{\nu}(n,t)=\wh{f_2}\pa{n,-n,t}= e^{-\lambda\pa{2+m_2n^2}t}\wh{f_{2,0}}\pa{n,-n}+
				\frac{2\pa{1-e^{-\lambda\pa{2+m_2n^2}t}}}{2+m_2n^2}.
		\end{aligned}
	\end{equation}
	Moreover
	$$\wh{f_1}(n,t) = \lim_{N\to\infty}\wh{F_{N,1}}(n,t) = \lim_{N\to\infty}\wh{F_{N,2}}(n,0,t)=\wh{f_2}\pa{n,0,t} = \frac{\wh{\mu}\pa{n,t}}{2}\pa{\wh{\nu}\pa{n,t}+1} $$
	and consequently
		$$\wh{f_1}(n_1+n_2,t)\pa{\wh{f_2}\pa{n_1,-n_1,t}+\wh{f_2}\pa{n_2,-n_2,t}} =  \frac{\wh{\mu}(n_1+n_2,t)}{2}\pa{\wh{\nu}\pa{n_1+n_2,t}+1}\pa{\wh{\nu}\pa{n_1,t}+\wh{\nu}\pa{n_2,t}}$$
	$$=\wh{f_2}\pa{n_1,n_2,t}\pa{\wh{f_2}\pa{n_1+n_2,-n_1-n_2,t}+1} .$$
	In particular, for any $n\in\Z$
	$$2\wh{f_1}\pa{2n,t}\wh{f_2}\pa{n,-n,t} = \wh{f_2}\pa{n,n,t}\pa{\wh{f_2}\pa{2n,-2n,t}+1}. $$
	Using the expressions for $\wh{f_1}\pa{n_1,t}$ and $\wh{f_2}\pa{n_1,n_2,t}$ we find that when $n^2 \not= \frac{2}{m_2}$, i.e. $4+m_2\pa{n^2+n^2}\ne m_2 \pa{2n}^2$, 
	\begin{equation}\nonumber
		\begin{split}
			&2e^{-2\lambda m_2n^2 t}\pa{e^{-\lambda\pa{2+m_2n^2}t}\wh{f_{2,0}}\pa{n,-n}+\frac{2\pa{1-e^{-\lambda\pa{2+m_2n^2}t}}}{2+m_2n^2}}\wh{f_{1,0}}(2n)\\
			&=\pa{e^{-\lambda\pa{2+m_2n^2}t}\wh{f_{2,0}}\pa{n,n}+\frac{2\pa{e^{-2\lambda m_2n^2t}-e^{-\lambda\pa{2+m_2n^2}t}}}{2-m_2n^2}\wh{f_{1,0}}\pa{2n}}\\
			&\pa{e^{-2\lambda\pa{1+2m_2n^2}t}\wh{f_{2,0}}\pa{2n,-2n}+\frac{1-e^{-2\lambda\pa{1+2m_2n^2}t}}{1+2m_2n^2}+1}.
		\end{split}
	\end{equation}
	Rearranging the above we find that 
	\begin{equation}\label{eq:main_ineq_for_lack_of_propagation}
	\begin{split}
		&2e^{-\lambda\pa{m_2n^2-2}t}\pa{\frac{2}{2+m_2n^2} + e^{-\lambda\pa{2+m_2n^2}t}\pa{\wh{f_{2,0}}\pa{n,-n}-\frac{2}{2+m_2n^2}}}\wh{f_{1,0}}\pa{2n}\\
		&=\pa{\wh{f_{2,0}}\pa{n,n} + \frac{2\wh{f_{1,0}}\pa{2n}}{m_2n^2-2}-\frac{2e^{-\lambda\pa{m_2n^2-2}t}\wh{f_{1,0}}\pa{2n}}{m_2n^2-2}}\\
		&\pa{1+\frac{1}{1+2m_2n^2}+e^{-2\lambda\pa{1+2m_2n^2}t}\pa{\wh{f_{2,0}}\pa{2n,-2n}-\frac{1}{1+2m_2n^2}}}.
	\end{split}	
	\end{equation}
	To show \eqref{item:no_partial_order_for_some_time} we notice that taking $t$ to infinity in the above implies that when $\abs{n}> \sqrt{\frac{2}{m_2}}$
	$$0 = \pa{\wh{f_{2,0}}\pa{n,n} + \frac{2\wh{f_{1,0}}\pa{2n}}{m_2n^2-2}}\pa{1+\frac{1}{1+2m_2n^2}}$$
	which implies that 
	$$0 = \wh{f_{2,0}}\pa{n,n} + \frac{2\wh{f_{1,0}}\pa{2n}}{m_2n^2-2}$$
	for all $\abs{n}> \sqrt{\frac{2}{m_2}}$, giving us the desired contradiction when we assume that \eqref{eq:f_2_for_lack} holds for all $t\in (0,\infty)$.
	
	To show \eqref{item:no_partial_order_for_all_time} we notice that \eqref{eq:main_ineq_for_lack_of_propagation} implies that 
	\begin{equation}\nonumber
	\begin{split}
		&\abs{p(n)\pa{\wh{f_{2,0}}\pa{n,n} + \frac{2\wh{f_{1,0}}\pa{2n}}{m_2n^2-2}}} \\
		&\leq \pa{\frac{\pa{\frac{2}{2+m_2n^2} + \abs{\wh{f_{2,0}\pa{n,-n}}}+\frac{2}{2+m_2n^2}}\abs{\wh{f_{1,0}(2n)}}}{\abs{1+\frac{1}{1+2m_2n^2}+e^{-2\lambda\pa{1+2m_2n^2}t}\pa{\wh{f_{2,0}}\pa{2n,-2n}-\frac{1}{1+2m_2n^2}}}} + \frac{2\abs{\wh{f_{1,0}}\pa{2n}}}{m_2n^2-2} }\abs{p(n)}e^{-\lambda\pa{2+m_2n^2}t}\\
		& \leq \pa{\frac{3\pa{1+2m_2n^2}}{\pa{1+2m_2n^2}\pa{1-e^{-2\lambda\pa{1+2m_2n^2}t}}}+ \frac{2}{m_2n^2-2}}\abs{p(n)}e^{-\lambda\pa{m_2n^2-2}t} 
	\end{split}	
	\end{equation}
	for any function $p(n)$, any $t\in \pa{0,\infty}$, and any $n\in\Z$ such that $\abs{n} > \sqrt {\frac{2}{m_2}}$, where we have used the fact that 
	$$\abs{e^{-2\lambda\pa{1+2m_2n^2}t}\pa{\wh{f_{2,0}}\pa{2n,-2n}-\frac{1}{1+2m_2n^2}}} \leq e^{-2\lambda\pa{1+2m_2n^2}t}\pa{1+\frac{1}{1+2m_2n^2}}<1+\frac{1}{1+2m_2n^2},$$
	for any $t>0$.  We conclude that for if there exists a $t\in\pa{0,\infty}$ such that \eqref{eq:f_2_for_lack} holds, then for any polynomial $p(n)$ we must have that 
	$$\lim_{n\to\pm\infty}\abs{p(n)\pa{\wh{f_{2,0}}\pa{n,n} + \frac{2\wh{f_{1,0}}\pa{2n}}{m_2n^2-2}}}=0,$$
	giving us the desired contradiction. 
	
	Lastly, the fact that $\br{F_{N}(0)}_{N\in\N}$ is $\br{\pa{\mu_0,\updelta_{k-1}}}_{k\in\N}-$partially ordered, where 
	$$\updelta_{k}\pa{\theta_1,\dots,\theta_k} = \prod_{i=1}^k \delta \pa{\theta_i}$$ 
	was shown in \S\ref{sec:preliminaries}. Moreover
	$$f_{1,0}(\theta_1) = \lim_{N\to\infty}\Pi_1\pa{F_{N,1}(0)}\pa{\theta_1} = \mu_0(\theta_1)$$
	and
	$$f_{2,0}(\theta_1,\theta_2) = \lim_{N\to\infty}\Pi_2\pa{F_{N,2}(0)}\pa{\theta_1,\theta_2} = \mu_0(\theta_1)\delta\pa{\theta_2-\theta_1}$$
	from which we find that 
	$$\wh{f_{1,0}}(n_1)=\wh{\mu_0}(n_1),\qquad \wh{f_{2,0}}\pa{n_1,n_2} =\wh{\mu_0}\pa{n_1+n_2}.$$
	For our choice of $\mu_0$ we have that
	$$\wh{f_{1,0}}(n) = \frac{2}{2+n^2},\qquad \wh{f_{2,0}}\pa{n,n} = \frac{1}{1+2n^2}$$
	and as such
	
	$$\lim_{n\to\infty}n^2\abs{\wh{f_{2,0}}\pa{n,n} + \frac{2\wh{f_{1,0}}\pa{2n}}{m_2n^2-2}}=\frac{1}{2}>0,$$
	showing that the conditions of \eqref{item:no_partial_order_for_all_time} hold.
\end{proof}

\section{The convergence to the limiting partially ordered state}\label{sec:quantitative}

In the penultimate section of our work we will show that the marginal solutions to the rescaled CL model, $\br{F_{N,k}(t)}_{N\in\N}$, converge uniformly in $N$ and $t$ to the limiting generated partially ordered state $f_{k,\infty}$. Moreover, we will be able to provide a quantitative estimate for this convergence by considering the Fourier coefficients of the appropriate measures and using  $f_k(t)$ as an intermediate point.

 The main technical steps we will need to achieve this are expressed in Lemma \ref{lem:distance_F_N_k_and_f_k} and Corollary \ref{cor:not_final_distance_written_simpler}. Due to their extremely technical nature, the reader is encouraged to skip their proof at first reading and focus on theorems \ref{thm:distance_F_N_k_and_f_k} and \ref{thm:distance_f_k_and_f_infty} which lead to the proof of Theorem \ref{thm:quantitative_convergence}.

\begin{lemma}\label{lem:distance_F_N_k_and_f_k}
	Let $\br{F_N(t)}_{N\in\N}$ be the family of symmetric solutions to \eqref{eq:master_CL_rescaled} with initial data $\br{F_{N}(0)}_{N\in\N}$. Assume in addition that $N\epsilon_N^2=1$, the interaction generating function $g$ is a strong interaction generating function with an $l-$th moment $m_l$, where $l\in\N$ and $l\geq 3$, and $\br{F_{N,k}\pa{0}}_{N\in\N}$ converges weakly as $N$ goes to infinity to a family $f_{k,0} \in \PP\pa{\mathcal{T}^k}$ for any $k\in\N$. Let  $f_k(t)$ be the weak limit as $N$ goes to infinity of the family of marginals $\br{F_{N,k}(t)}_{N\in\N}$, with $k\in\N$ as described in Theorem \ref{thm:main_partial_order}. 
	
	If $\br{\alpha_N}_{N\in\N}$ is a positive sequence and $N_0\in\N$ is such that
	\begin{equation}\label{eq:condition_on_N_0}
		N_0 \geq \max\pa{\frac{\pa{2m_l}^{\frac{l}{2}}}{\pi^2},\pa{\frac{32m_l }{\pi^l\max\pa{8,m_2}}}^{\frac{2}{l-2}}}.
	\end{equation}
	and for any $N\geq N_0$ 
	\begin{equation}\label{eq:condition_for_alpha_N_II}
			\alpha_N  \leq \begin{cases}
				\min\br{4^{q+1}\norm{g}_{L^p}^q, \frac{m_2}{8m_3},1}, & l=3,\\
				\min\br{4^{q+1}\norm{g}_{L^p}^q, \sqrt{\frac{m_2}{2m_4}},1}, & l\geq 4,
			\end{cases}
	\end{equation}
	with $q$ being the H\"older conjugate of $p$, and 
	\begin{equation}\label{eq:extra_condition_on_alpha_N}
		N^{\frac{l}{2}}\alpha_N^2	\geq \frac{ 4^{2\pa{q+3}}m_l\norm{g}_{L^p}^{2q}\pa{\sqrt[l]{4m_l}+2\pi}^2}{\pi^{l+2}}.
	\end{equation}
	Then:
	\begin{enumerate}[(i)]
		\item\label{item:approximation_for_k=1} 	
		For any $n\in\Z$ such that $\abs{n\epsilon_N} \geq  \alpha_N$ we have that 
		\begin{equation}\label{eq:distance_between_F_N1_and_f_1_large_frequency}
			\begin{split}
				&\abs{\wh{F_{N,1}}(n,t)-\wh{f_1}(n,t)} \leq e^{-\frac{\lambda m_2t}{2}}\abs{\wh{F_{N,1}}(n,0)-\wh{f_{1,0}}(n)}\\
				&+e^{-\frac{\lambda N\alpha_N^2 \pi^2 t}{ 4^{2\pa{q+2}}\norm{g}_{L^p}^{2q}\pa{\sqrt[l]{4m_l}+2\pi}^2}}+	e^{-\frac{\lambda m_2N\alpha_N^2t}{2}},
			\end{split}		
		\end{equation}
		and for any $n\in\Z$ such that $\abs{n\epsilon_N} \leq \alpha_N$ we have that 
		\begin{equation}\label{eq:distance_between_F_N1_and_f_1_small_frequency}
			\begin{split}
				&\abs{\wh{F_{N,1}}(n,t)-\wh{f_1}(n,t)} \leq e^{-\frac{\lambda m_2t}{2}}\abs{\wh{F_{N,1}}(n,0)-\wh{f_{1,0}}(n)}
				+\mathfrak{e}_1\pa{N}e^{-\frac{\lambda m_2 t}{16} },
			\end{split}		
		\end{equation}
		where 
		\begin{equation}\label{eq:epsilon_1_N}
			\mathfrak{e}_1\pa{N} = \begin{cases}
				\frac{4}{3\pi^lm_2e}\pa{12 m_l N^{\frac{2-l}{2}}+\pi^lm_3 \alpha_N}, & l=3,\\
				\frac{1}{3\pi^l m_2 e}\pa{48m_l N^{\frac{2-l}{2}} + \pi^l m_4\alpha_N^2}, & l\geq 4.
			\end{cases}
		\end{equation}
		\item\label{item:approximation_for_k>1}  
		For a given $k\geq 2$ we define
		\begin{equation}\label{eq:high_low_frequency_sets}
			\begin{gathered}
				\A^{(k)} = \br{\pa{n_1,\dots,n_{k}}\in \Z^{k}\;|\;\exists l_0\in \br{1,\dots,{k}}\;\text{such that } \abs{n_{l_0} \epsilon_N} \geq \alpha_N},\\
				\B^{(k)}=\pa{\A^{(k)}}^c = \br{\pa{n_1,\dots,n_{k}}\in \Z^{k}\;|\;\forall l\leq 1,\; \abs{n_l \epsilon_N} \leq \alpha_N},
			\end{gathered}
		\end{equation}
		and $N_1\in\N$ by 
		\begin{equation}\label{eq:condition_on_N_1s}
			N_1 = \max\pa{N_0,2k, \pa{\frac{96m_l k}{\pi^l m_2}}^{\frac{2}{l-2}}}
		\end{equation}
		Then for any $\pa{n_1,\dots,n_{k}}\in \A^{(k)}$ and $N\geq N_1$ we have that 
			\allowdisplaybreaks
		\begin{align}
				\nonumber&\abs{\wh{F_{N,k}}\pa{n_1,\dots,n_{k},t}-\wh{f_{k}}\pa{n_1,\dots,n_{k},t}}\\
				\nonumber& \leq 
				e^{-\frac{\lambda\pa{ 2k\pa{k-1}+m_2}t}{2}}\abs{\wh{F_{N,k}}\pa{n_1,\dots,n_{k},0}-\wh{f_{k,0}}\pa{n_1,\dots,n_{k}}} \\
				\nonumber& +e^{-\lambda k(k-1)t}\pa{e^{-\frac{\lambda N\alpha_N^2 \pi^2 t}{ 2\cdot4^{2\pa{q+2}}\norm{g}_{L^p}^{2q}\pa{\sqrt[l]{4m_l}+2\pi}^2}}+e^{-\frac{\lambda m_2N\alpha_N^2 t }{2}}}\\
				\label{eq:distance_between_F_Nk_and_f_k_on_A_k} 	
				 &+\frac{ \pa{2+k(k-1)}\cdot 4^{2\pa{q+2}+1} \norm{g}_{L^p}^{2q}\pa{\sqrt[l]{4m_l}+2\pi}^2}{N\alpha_N^2\pi^2 + 2\cdot 4^{2\pa{q+2}}k(k-1)\norm{g}_{L^p}^{2q}\pa{\sqrt[l]{4m_l}+2\pi}^2}+\frac{8}{m_2N\alpha_N^2+2k(k-1)}\\
				\nonumber & +\frac{2\lambda N}{N-1} \sum_{i<j\leq k}\int_{0}^t e^{-\frac{\lambda\pa{2k\pa{k-1}+m_2}\pa{t-s}}{2}} \\
				\nonumber &\qquad\qquad \abs{\wh{F_{N,k-1}}\pa{n_1,\dots, n_i+n_j,\dots,n_{k},s}-\wh{f_{k-1}}\pa{n_1,\dots, n_i+n_j,\dots,n_{k},s}}ds\\
				\nonumber&+\frac{2k(k-1)}{(N-1)\pa{m_2+2k(k-1)}},
		\end{align}
		and for any $\pa{n_1,\dots,n_{k}}\in \B^{(k)}$ and $N\geq N_1$ we have that 
			\allowdisplaybreaks
		\begin{align}
				\nonumber&\abs{\wh{F_{N,{k}}}\pa{n_1,\dots,n_{k},t}-\wh{f_{k}}\pa{n_1,\dots,n_{k},t}} \\
				\nonumber&\leq 
				e^{-\frac{\lambda\pa{ 2k\pa{k-1}+m_2}t}{2}}\abs{\wh{F_{N,k}}\pa{n_1,\dots,n_{k},0}-\wh{f_{k,0}}\pa{n_1,\dots,n_{k}}}\\
				\nonumber&+\mathfrak{e}_2(k,N)e^{-\frac{\lambda m_2t}{24}}e^{-\lambda k(k-1)t}+\frac{ e^{-\frac{\lambda m_2 t}{2}}e^{-\frac{\lambda k(k-1)t}{2}}\pa{2+m_2N\alpha_N^2}
				}{(N-1)e}\\
			\nonumber &+ \frac{
					4e^{-\frac{\lambda m_2 t}{2}}e^{-\frac{\lambda k(k-1)t}{4}}\pa{4+m_2^2N^2 \alpha_N^4}}{(N-1)^2e^2}+\mathfrak{e}_3(k,N)\alpha_N^2 +\frac{96\mathfrak{e}_2(k,N)}{m_2+24k(k-1)}\\
					\label{eq:distance_between_F_Nk_and_f_k_on_B_k}  &
				+\frac{8 \pa{2+m_2N\alpha_N^2}
				}{(N-1)e\pa{m_2+k(k-1)}} +\frac{
				64\pa{4+m_2^2N^2 \alpha_N^4}}{(N-1)^2e^2\pa{2m_2+k(k-1)}}\\
				\nonumber& 
				+\frac{2\lambda N}{N-1} \sum_{i<j\leq k}\int_{0}^t e^{-\frac{\lambda\pa{2k\pa{k-1}+m_2}\pa{t-s}}{2}} \\
				\nonumber &\qquad\qquad \abs{\wh{F_{N,k-1}}\pa{n_1,\dots, n_i+n_j,\dots,n_{k},s}-\wh{f_{k-1}}\pa{n_1,\dots, n_i+n_j,\dots,n_{k},s}}ds\\
				\nonumber&+\frac{2k(k-1)}{(N-1)\pa{m_2+2k(k-1)}}.
		\end{align}
		where 
		\begin{equation}\label{eq:epsilon_2_N}
			\mathfrak{e}_2(k,N) = \begin{cases}
				\frac{8\pa{12km_l N^{\frac{2-l}{2}}+\pi^lm_3\alpha_N}}{3\pi^l m_2e}, & l=3,\\
				\frac{2\pa{48km_l N^{\frac{2-l}{2}}+\pi^lm_4\alpha_N^2}}{3\pi^l m_2e}, &l\geq 4,
			\end{cases}
		\end{equation}
			\begin{equation}\label{eq:epsilon_3_N}
			\mathfrak{e}_3(k,N)= \begin{cases}
				\frac{ 24k m_lN^{\frac{2-l}{2}}+2\pi^lm_3\alpha_N+ 3\pi^lm_2 }{6\pi^l}, &l=3,\\
				\frac{96km_lN^{\frac{2-l}{2}} + 2\pi^lm_4\alpha_N^2+ 12\pi^lm_2 }{24 \pi^l}, & l\geq 4.
			\end{cases}	
		\end{equation}
	\end{enumerate}
\end{lemma}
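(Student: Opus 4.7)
The strategy is a direct term-by-term comparison between the exact recursive formula \eqref{eq:recursive} for $\wh{F_{N,k}}(n_1,\dots,n_k,t)$ and the limiting recursive formula \eqref{eq:limit_of_F_N_K_partial_order} for $\wh{f_k}(n_1,\dots,n_k,t)$. Subtracting them and applying the triangle inequality splits the difference into four independent sources of error: (a) the decayed mismatch between the initial data $\wh{F_{N,k}}(\cdot,0)-\wh{f_{k,0}}$; (b) the mismatch between the two exponential rates, $\frac{\lambda N}{N-1}\bigl((N-k)\sum_r(1-\wh{g_{\epsilon_N}}(n_r))+k(k-1)\bigr)$ against $\frac{\lambda}{2}\pa{2k(k-1)+m_2\sum_r n_r^2}$; (c) the mismatch between the two integral prefactors, $\frac{\lambda N}{N-1}(\wh{g_{\epsilon_N}}(n_i)+\wh{g_{\epsilon_N}}(n_j))$ against $2\lambda$; and (d) a recursive contribution involving $|\wh{F_{N,k-1}}-\wh{f_{k-1}}|$, which is precisely the integral retained on the last line of \eqref{eq:distance_between_F_Nk_and_f_k_on_A_k}--\eqref{eq:distance_between_F_Nk_and_f_k_on_B_k} and will be bootstrapped later. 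The role of the auxiliary hypotheses \eqref{eq:condition_on_N_0}--\eqref{eq:extra_condition_on_alpha_N} is to ensure that the low/high-frequency cutoff $|n\epsilon_N|\gtrless\alpha_N$ falls squarely within the regimes where parts \eqref{item:low_frequencies} and \eqref{item:high_frequencies} of Lemma \ref{lem:on_g_low_and_high_frequencies} produce clean bounds.

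For part \eqref{item:approximation_for_k=1} I would start from
\[
\wh{F_{N,1}}(n,t) - \wh{f_1}(n,t) = e^{-\lambda N(1-\wh{g_{\epsilon_N}}(n))t}\bigl(\wh{F_{N,1}}(n,0)-\wh{f_{1,0}}(n)\bigr) + \bigl(e^{-\lambda N(1-\wh{g_{\epsilon_N}}(n))t} - e^{-\frac{\lambda m_2 n^2}{2}t}\bigr)\wh{f_{1,0}}(n).
\]
In the high-frequency regime $|n\epsilon_N|\geq\alpha_N$, part \eqref{item:high_frequencies} of Lemma \ref{lem:on_g_low_and_high_frequencies} together with $N\epsilon_N^2=1$ ensures that $\lambda N(1-\wh{g_{\epsilon_N}}(n))$ is bounded below by a quantity of order $\lambda N\alpha_N^2$, so bounding each exponential in isolation by this lower rate and by $\frac{\lambda m_2}{2}$ (from the triangle inequality against $\wh{f_1}$) produces \eqref{eq:distance_between_F_N1_and_f_1_large_frequency}. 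In the low-frequency regime $|n\epsilon_N|\leq\alpha_N$, I would use part \eqref{item:low_frequencies} of the same lemma to write $|N(1-\wh{g_{\epsilon_N}}(n))-\frac{m_2 n^2}{2}|$ as a quantity bounded by $N\uptau_N$ plus a term of order $\epsilon_N|n|^3$ or $\epsilon_N^2 n^4$, and feed this into the elementary bound $|e^{-at}-e^{-bt}|\leq|a-b|\,t\,e^{-\min(a,b)t}$ together with $xe^{-cx}\leq\frac{1}{ce}$ applied with $c$ a fixed fraction of $m_2$. The constraint \eqref{eq:condition_for_alpha_N_II} is exactly what keeps the dominant rate on this side bounded below by (close to) $\frac{\lambda m_2}{2}$.

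For part \eqref{item:approximation_for_k>1} the same subtraction yields, in addition to an analogue of the $k=1$ term, a difference of two integrals against $\wh{F_{N,k-1}}$ and $\wh{f_{k-1}}$ respectively. Adding and subtracting the hybrid integral with the limiting rate and prefactor but the exact integrand $\wh{F_{N,k-1}}$ isolates the recursive contribution (d) at once. The remaining rate and prefactor mismatches are then estimated on $\A^{(k)}$ and $\B^{(k)}$ separately, in direct parallel with the $k=1$ case: on $\A^{(k)}$ one applies Lemma \ref{lem:on_g_low_and_high_frequencies} \eqref{item:high_frequencies} to the distinguished ``large'' index $l_0$ to extract a factor of the form $e^{-\lambda N\alpha_N^2\pi^2 t/(\mathrm{const})}$; on $\B^{(k)}$, every $\wh{g_{\epsilon_N}}(n_r)$ is close to $1-\frac{m_2\epsilon_N^2 n_r^2}{2}$, and the quadratic approximation, propagated through the $k(k-1)$-dependent prefactor, generates the $\mathfrak{e}_2(k,N),\mathfrak{e}_3(k,N)$ correction terms. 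The ubiquitous bound $|\wh{F_{N,k-1}}(\cdot,s)|\leq 1$ from Remark \ref{rem:L_infty_bound_on_coefficient_of_probability_measure}, combined with $\int_0^t e^{-a(t-s)}\,ds\leq\frac{1-e^{-at}}{a}$ and $xe^{-cx}\leq\frac{1}{ce}$, then converts each rate estimate into one of the explicit constants displayed in \eqref{eq:distance_between_F_Nk_and_f_k_on_A_k}--\eqref{eq:distance_between_F_Nk_and_f_k_on_B_k}.

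The chief difficulty is not conceptual but combinatorial bookkeeping: each of the four mismatches (a)--(d) contributes its own exponential factor whose rate and prefactor must be traced carefully through the integral against $\wh{F_{N,k-1}}$ so that the final constants line up with those displayed. One must also verify that the same cutoff parameter $\alpha_N$ works inside and outside the exponentials simultaneously, so that after taking differences the dominant decay rate remains at least $\frac{\lambda m_2}{2}$ on the low-frequency side and at least $\frac{\lambda N\alpha_N^2\pi^2}{\mathrm{const}}$ on the high-frequency side; this is exactly the content of hypothesis \eqref{eq:extra_condition_on_alpha_N}. The recursive term on the last line is intentionally left unestimated, as it will be iterated against the $k=1$ base case in the subsequent corollary to yield the uniform-in-$N$-and-$t$ bound of Theorem \ref{thm:quantitative_convergence}.
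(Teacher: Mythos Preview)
Your proposal is correct and follows essentially the same approach as the paper. The paper's proof decomposes the difference into six labelled terms $\mathrm{I}$--$\mathrm{VI}$ rather than your four sources (a)--(d): your (b) splits into the rate mismatch on the free term ($\mathrm{II}$) and inside the integral ($\mathrm{IV}$), and your (c) splits into the $\wh{g_{\epsilon_N}}\to 1$ correction ($\mathrm{III}$) and the $\tfrac{N}{N-1}\to 1$ correction ($\mathrm{VI}$), the latter producing the stray $\tfrac{2k(k-1)}{(N-1)(m_2+2k(k-1))}$ term. One minor point: for the initial-data term the paper factors out $e^{-\frac{\lambda m_2 n^2 t}{2}}$ (the \emph{limiting} rate) rather than $e^{-\lambda N(1-\wh{g_{\epsilon_N}}(n))t}$ as you wrote, which is what guarantees the clean $e^{-\frac{\lambda m_2 t}{2}}$ prefactor in \eqref{eq:distance_between_F_N1_and_f_1_large_frequency}--\eqref{eq:distance_between_F_N1_and_f_1_small_frequency} without an extra appeal to $N(1-\wh{g_{\epsilon_N}}(n))\gtrsim m_2/2$; otherwise the argument is exactly as you describe.
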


\begin{proof} [Proof of Lemma \ref{lem:distance_F_N_k_and_f_k}]
	We start by noticing that for any $k\in\N$ and $t\in [0,\infty)$
	$$\abs{\wh{F_{N,k}}(0,\dots,0,t)-\wh{f_{k}}(0,\dots,0,t)}=\abs{1-1}=0.$$
	Consequently, we can restrict our attention to the case where $\pa{n_1,\dots,n_k}\ne \pa{0,\dots,0}$.

	Starting with \eqref{item:approximation_for_k=1} we see that as
	$$\abs{\wh{F_{N,1}}(n,t)-\wh{f}_1(n,t)}  = \abs{e^{\lambda N\pa{\wh{g_{\epsilon_N}}(n)-1}t}\wh{F_{N,1}}(n,0)- e^{-\frac{\lambda m_2n^2 t}{2}}\wh{f_{1,0}}(n)}\leq \abs{\pa{e^{\lambda N\pa{\wh{g_{\epsilon_N}}(n)-1}t}-e^{-\frac{\lambda m_2n^2t}{2}}}\wh{F_{N,1}}(n,0)} $$
	$$+ e^{-\frac{\lambda m_2n^2t}{2}}\abs{\wh{F_{N,1}}(n,0)-\wh{f_{1,0}}(n)}\underset{\abs{n}\geq 1}{\leq} \abs{e^{\lambda N\pa{\wh{g_{\epsilon_N}}(n)-1}t}-e^{-\frac{\lambda m_2n^2t}{2}}}+e^{-\frac{\lambda m_2t}{2}}\abs{\wh{F_{N,1}}(n,0)-\wh{f_{1,0}}(n)}$$
	it is enough to show that under assumptions \eqref{eq:condition_on_N_0}, \eqref{eq:condition_for_alpha_N_II} and \eqref{eq:extra_condition_on_alpha_N} we have that for any $N\geq N_0$
	$$ \abs{e^{\lambda N\pa{\wh{g_{\epsilon_N}}(n)-1}t}-e^{-\frac{\lambda m_2n^2t}{2}}}\leq  e^{- \frac{\lambda N\alpha_N^2 \pi^2 t}{ 4^{2\pa{q+2}}\norm{g}_{L^p}^{2q}\pa{\sqrt[l]{4m_l}+2\pi}^2}}+	e^{-\frac{\lambda m_2N\alpha_N^2t}{2}} $$
	when $\abs{n\epsilon_N}\geq \alpha_N$, and 
	$$ \abs{e^{\lambda N\pa{\wh{g_{\epsilon_N}}(n)-1}t}-e^{-\frac{\lambda m_2}{2}n^2t}}\leq 
\mathfrak{e}_1(N) e^{-\frac{\lambda m_2 t}{16}},$$
	when $0<\abs{n\epsilon_N}\leq \alpha_N$ to conclude the result. To achieve that we will use the fact that $g$ is a strong interaction generating function, and as such all parts of Lemma \ref{lem:on_g_low_and_high_frequencies} hold.
	
	We notice that the condition $N \geq \frac{\pa{2m_l}^{\frac{2}{l}}}{\pi^2}$ implies that $\uptau_N$, defined by \eqref{eq:uptau_N}, satisfies
	\begin{equation}\label{eq:bound_on_Ntau_N}
		0<N\tau_N = \frac{2m_lN}{\pi^l N^{\frac{l}{2}}-m_l} \leq \frac{4m_l}{\pi^l N^{\frac{l}{2}-1}}.
	\end{equation}
	The additional condition $N\geq N_0 \geq \pa{\frac{32m_l }{\pi^l\max\pa{8,m_2}}}^{\frac{2}{l-2}}$ together with the above results in the estimate
	\begin{equation}\label{eq:Ntau_N_estimate}
				N\uptau_N \leq \min\pa{1,\frac{m_2}{8}}.
	\end{equation}
	 Furthermore, conditions \eqref{eq:condition_for_alpha_N_II} and \eqref{eq:extra_condition_on_alpha_N} together with \eqref{eq:bound_on_Ntau_N}  imply that
	\begin{equation}\nonumber 
		\begin{split}
			&\frac{N\alpha_N^2 \pi^2}{2\cdot 4^{2\pa{q+1}+1}\norm{g}_{L^p}^{2q}\pa{\sqrt[l]{4m_l}\alpha_N+2\pi}^2} 
			> \frac{N\alpha_N^2 \pi^2}{ 2\cdot 4^{2\pa{q+2}}\norm{g}_{L^p}^{2q}\pa{\sqrt[l]{4m_l}+2\pi}^2}\\
			&=\frac{N^{\frac{l}{2}}\alpha_N^2 \pi^{l+2}}{ 4^{2\pa{q+3}}m_l\norm{g}_{L^p}^{2q}\pa{\sqrt[l]{4m_l}+2\pi}^2} \frac{8m_l}{\pi^l N^{\frac{l}{2}-1}}\geq 2N\tau_N.
		\end{split}	
	\end{equation}
	We conclude from the above that under our conditions
	\begin{equation}\label{eq:term_with_alpha_N_squared_minus_Ntau_N_is_ok}
		\begin{split}
			&\frac{N\alpha_N^2 \pi^2}{2\cdot 4^{2\pa{q+1}+1}\norm{g}_{L^p}^{2q}\pa{\sqrt[l]{4m_l}\alpha_N+2\pi}^2} -N\tau_N
			\geq  \frac{N\alpha_N^2 \pi^2}{4^{2\pa{q+2}}\norm{g}_{L^p}^{2q}\pa{\sqrt[l]{4m_l}\alpha_N+2\pi}^2}\\
			&\qquad\qquad \geq  \frac{N\alpha_N^2 \pi^2}{4^{2\pa{q+2}}\norm{g}_{L^p}^{2q}\pa{\sqrt[l]{4m_l}+2\pi}^2}>0.
		\end{split}	
	\end{equation}

When $\abs{n\epsilon_N} \geq \alpha_N$ we utilise part \eqref{item:high_frequencies} of Lemma \ref{lem:on_g_low_and_high_frequencies} to find that 
	\begin{equation}\nonumber 
		\begin{split}
			& \abs{e^{\lambda N\pa{\wh{g_{\epsilon_N}}(n)-1}t}-e^{-\frac{\lambda m_2n^2t}{2}}}  \leq e^{- \lambda\pa{\frac{N\alpha_N^2 \pi^2}{ 2\cdot 4^{2\pa{q+1}+1}\norm{g}_{L^p}^{2q}\pa{\sqrt[l]{4m_l}\alpha_N+2\pi}^2}-N\uptau_N}t}+	e^{-\frac{\lambda m_2}{2}\pa{\frac{\alpha_N^2}{\epsilon_N^2}}t} \\
			& \leq e^{- \frac{\lambda N\alpha_N^2 \pi^2 t}{ 4^{2\pa{q+2}}\norm{g}_{L^p}^{2q}\pa{\sqrt[k]{4m_k}+2\pi}^2}}+	e^{-\frac{\lambda m_2N\alpha_N^2t}{2}}.
		\end{split}
	\end{equation}
	
	Next we consider the case where $\abs{n\epsilon_N} \leq \alpha_N$ and $n\not=0$. When $l=3$ we utilise part \eqref{item:low_frequencies} of Lemma \ref{lem:on_g_low_and_high_frequencies} to find that
	\begin{equation}\nonumber
		\begin{split}
			&\abs{e^{\lambda N\pa{\wh{g_{\epsilon_N}}(n)-1}t}-e^{-\frac{\lambda m_2n^2t}{2}}} =e^{-\frac{\lambda m_2n^2 t}{2}}\abs{e^{\lambda N\pa{\wh{g_{\epsilon_N}}(n)-1 +\frac{m_2\epsilon_N^2 n^2}{2}}t}-1}\\
			&  \leq e^{-\frac{\lambda m_2n^2 t}{2}}\pa{e^{\lambda \pa{N\uptau_N +\frac{m_3}{3}N\epsilon_N^3\abs{n}^3} t}-1}\leq \lambda \pa{N\uptau_N +\frac{m_3}{3}N\epsilon_N^3\abs{n}^3} t e^{-\frac{\lambda m_2n^2 t}{2}}e^{\lambda \pa{N\uptau_N +\frac{m_3}{3}N\epsilon_N^3\abs{n}^3} t} \\
			&\underset{\abs{n}\geq 1}{\leq} \lambda\pa{N\tau_N + \frac{m_3}{3}\abs{\epsilon_N n}}n^2 te^{-\frac{\lambda m_2n^2 t}{2}}e^{\lambda \pa{N\uptau_N +\frac{m_3}{3}\abs{\epsilon_N n}} n^2 t}\leq \lambda\pa{N\tau_N + \frac{m_3}{3}\alpha_N}n^2 te^{-\frac{\lambda m_2n^2 t}{2}}e^{\lambda \pa{N\uptau_N +\frac{m_3}{3}\alpha_N} n^2 t}\\
			& \leq \frac{4}{m_2 e} \pa{N\tau_N + \frac{m_3}{3}\alpha_N}e^{-\frac{\lambda m_2n^2 t}{4}}e^{\lambda \pa{\frac{m_2}{8} +\frac{m_3}{3}\alpha_N} n^2 t}
			=  \frac{4}{m_2 e}\pa{N\tau_N  + \frac{m_3 \alpha_N}{3}}e^{-\frac{\lambda}{8}\pa{m_2-\frac{8m_3}{3}\alpha_N}n^2t }\\
			&\leq \frac{4}{m_2 e}\pa{N\tau_N  + \frac{m_3\alpha_N}{3}}e^{-\frac{\lambda}{8}\pa{m_2-\frac{8m_3}{3}\alpha_N}t } \leq \frac{4}{3\pi^lm_2e}\pa{12 m_l N^{\frac{2-l}{2}}+\pi^lm_3 \alpha_N}e^{-\frac{\lambda m_2 t}{16} },
		\end{split}
	\end{equation}
	where we have used \eqref{eq:bound_on_Ntau_N}, \eqref{eq:Ntau_N_estimate}, the fact that $m_2-\frac{8m_3}{3}\alpha_N\geq \frac{m_2}{2}>0$ when \eqref{eq:condition_for_alpha_N_II} holds, and the following: 
	\begin{itemize}
		\item for any $M>0$
		$$\max_{x\in [-M,M]}\abs{e^x-1} = \max \pa{e^M-1,1-e^{-M}} =e^{M}-1.$$
		\item for all $x\geq 0$ we have that $e^x-1 \leq xe^x$.\footnote{Indeed, defining $f(t)=e^t-1-te^t$ we find that $f(0)=0$ and $f'(t) = -te^t<0$ for $t>0$. }
		\item For any $\alpha>0$
			$$xe^{-\alpha x} =\pa{xe^{-\frac{\alpha}{2}x}} e^{-\frac{\alpha}{2}x} \leq \frac{2}{\alpha e}e^{-\frac{\alpha}{2}x}.$$
	\end{itemize}
	Similarly, for $l\geq 4$ we have that 
	\begin{equation}\nonumber
		\begin{split}
			&\abs{e^{\lambda N\pa{\wh{g_{\epsilon_N}}(n)-1}t}-e^{-\frac{\lambda m_2n^2t}{2}}} \leq e^{-\frac{\lambda m_2n^2 t}{2}}\pa{e^{\lambda \pa{N\uptau_N +\frac{m_4}{12}N\epsilon_N^4\abs{n}^4} t}-1}\\
			&  \leq \lambda \pa{N\uptau_N +\frac{m_4}{12}\epsilon_N^2n^2} n^2t e^{-\frac{\lambda m_2n^2 t}{2}}e^{\lambda \pa{N\uptau_N +\frac{m_4}{12}\alpha_N^2 n^2} t} \leq \frac{4}{m_2 e} \pa{N\tau_N + \frac{m_4}{12}\alpha_N^2}e^{-\frac{\lambda m_2n^2 t}{4}}e^{\lambda \pa{\frac{m_2}{8} +\frac{m_4}{12}\alpha_N} n^2 t}  \\
			& \leq \frac{1}{3\pi^l m_2 e}\pa{48m_l N^{\frac{2-l}{2}} + \pi^l m_4\alpha_N^2}e^{-\frac{\lambda}{8}\pa{m_2-\frac{2m_4}{3}\alpha_N^2}t }
			\leq \frac{1}{3\pi^l m_2 e}\pa{48m_l N^{\frac{2-l}{2}} + \pi^l m_4\alpha_N^2}e^{-\frac{\lambda m_2 t}{16} },
		\end{split}
	\end{equation}
	where we have used  fact that $m_2-\frac{2m_4}{3}\alpha_N^2\geq \frac{m_2}{2}>0$ when \eqref{eq:condition_for_alpha_N_II} holds.
	This concludes the proof of \eqref{item:approximation_for_k=1}.
	
	We turn our attention to showing \eqref{item:approximation_for_k>1}. Much like the previous part, we can assume that $\pa{n_1,\dots,n_{k}}\ne \pa{0,\dots,0}$. Using \eqref{eq:limit_of_F_N_K_partial_order} and \eqref{eq:recursive} we find that
	
	\allowdisplaybreaks
	\begin{align}
		\nonumber&\abs{\wh{F_{N,k}}\pa{n_1,\dots,n_{k},t}-\wh{f_{k}}\pa{n_1,\dots,n_{k},t}} \\
		\nonumber& \leq e^{-\frac{\lambda\pa{ 2k\pa{k-1}+m_2\sum_{r=1}^{k} n_r^2}t}{2}}\abs{\wh{F_{N,k}}\pa{n_1,\dots,n_{k},0}-\wh{f_{k,0}}\pa{n_1,\dots,n_{k}}}  \\
		\nonumber&+\abs{e^{-\frac{\lambda N}{N-1}\pa{\pa{N-k}\sum_{l=1}^{k}\pa{1-\wh{g_{\epsilon_N}}\pa{n_l}}+k\pa{k-1}}t}-e^{-\frac{\lambda\pa{ 2k\pa{k-1}+m_2\sum_{r=1}^{k} n_r^2}t}{2}}}\abs{\wh{F_{N,k}}\pa{n_1,\dots,n_{k},0}}\\
		\nonumber&+\frac{\lambda N}{N-1}\left |\sum_{i<j\leq k}\pa{\widehat{g_{\epsilon_N}}\pa{n_i}+\widehat{g_{\epsilon_N}}\pa{n_j}-2}\int_{0}^t e^{-\frac{\lambda N}{N-1}\pa{\pa{N-k}\sum_{l=1}^{k}\pa{1-\wh{g_{\epsilon_N}}\pa{n_l}}+k\pa{k-1}}\pa{t-s}} \right.\\
		\label{eq:main_breakdown_of_distance_k}&\qquad\qquad \left.\wh{F_{N,k-1}}\pa{n_1,\dots, n_i+n_j,\dots,n_{k},s}ds \right |\\
		\nonumber&+\frac{2N\lambda}{N-1}\left |\sum_{i<j\leq k}\int_{0}^t \pa{e^{-\frac{\lambda N}{N-1}\pa{\pa{N-k}\sum_{l=1}^{k}\pa{1-\wh{g_{\epsilon_N}}\pa{n_l}}+k\pa{k-1}}\pa{t-s}}-e^{-\frac{\lambda\pa{2k\pa{k-1}+m_2\sum_{r=1}^{k} n_r^2}\pa{t-s}}{2}}} \right.\\
		\nonumber&\left. \wh{F_{N,k-1}}\pa{n_1,\dots, n_i+n_j,\dots,n_{k},s}ds \right | +\frac{2N\lambda}{N-1}\left| \sum_{i<j\leq k}\int_{0}^t e^{-\frac{\lambda\pa{2k\pa{k-1}+m_2\sum_{r=1}^{k} n_r^2}\pa{t-s}}{2}} \right.\\
		\nonumber&\left.\pa{\wh{F_{N,k-1}}\pa{n_1,\dots, n_i+n_j,\dots,n_{k},s}-\wh{f_{k-1}}\pa{n_1,\dots, n_i+n_j,\dots,n_{k},s}}ds\right|\\
		\nonumber&+\frac{2\lambda }{N-1}\abs{\sum_{i<j\leq k}\int_{0}^t e^{-\frac{\lambda\pa{2k\pa{k-1}+m_2\sum_{r=1}^{k} n_r^2}\pa{t-s}}{2}}
			\wh{f_{k-1}}\pa{n_1,\dots, n_i+n_j,\dots,n_{k},s}ds}\\
		\nonumber&=\tm{I}+\tm{II}+\tm{III}+\tm{IV}+\tm{V}+\tm{VI}.
	\end{align}

	 We will bound each of these terms from above to conclude the proof.\\
	  \underline{The term $\tm{I}$:} Since $\sum_{r=1}^{k}n_r^2\geq 1$ we have that 
	  \begin{equation}\label{eq:estimate_I}
	  	\begin{split}
	  		&\tm{I} \leq e^{-\frac{\lambda\pa{ 2k\pa{k-1}+m_2}t}{2}}\abs{\wh{F_{N,k}}\pa{n_1,\dots,n_{k},0}-\wh{f_{k,0}}\pa{n_1,\dots,n_{k}}}. 
	  	\end{split}
	  \end{equation}
	 \underline{The term $\tm{II}$:} We will estimate this term in a similar way to part \eqref{item:approximation_for_k=1}.\\ 
	 For any $\pa{n_1,\dots,n_{k}}\in \A^{(k)}$ we can find $l_0\in\br{1,\dots,k}$ such that $\abs{n_{l_0}\epsilon_N} \geq \alpha_N$. Consequently,
	 	 \begin{equation}\nonumber
	 	\sum_{r=1}^{k}n_r^2 \geq n_{l_0}^2 \geq \frac{\alpha_N^2}{\epsilon_N^2}=N\alpha_N^2,
	 \end{equation}
	 from which we conclude that 
	  \begin{equation}\label{eq:bound_on_frequencies_sum_A_k}
	 	\min_{\pa{n_1,\dots,n_{k}}\in \A^{(k)}}\sum_{r=1}^{k}n_r^2 \geq N\alpha_N^2.
	 \end{equation}
	  In addition, using parts \eqref{item:g_coefficient_real_and_bounded_by_1} and \eqref{item:high_frequencies} from Lemma \ref{lem:on_g_low_and_high_frequencies} we see that on $\A^{(k)}$
	 \begin{equation}\nonumber
	 	\begin{split}
	 		&N\sum_{l=1}^{k}\pa{\wh{g_{\epsilon_N}}\pa{n_l}-1} \leq  N\pa{\wh{g_{\epsilon_N}}\pa{n_{l_0}}-1}\leq  N\uptau_N  - \frac{N\alpha_N^2 \pi^2}{ 2\cdot 4^{2\pa{q+1}+1}\norm{g}_{L^p}^{2q}\pa{\sqrt[l]{4m_l}\alpha_N+2\pi}^2}\\
	 	\end{split},
	 \end{equation}
	 and together with \eqref{eq:term_with_alpha_N_squared_minus_Ntau_N_is_ok} we conclude that 
	  \begin{equation}\label{eq:estimates_on_A_k}
	 	\begin{split}
	 		&\max_{\pa{n_1,\dots,n_{k}}\in \A^{(k)}}N\sum_{l=1}^{k}\pa{\wh{g_{\epsilon_N}}\pa{n_l}-1} \leq - \frac{N\alpha_N^2 \pi^2}{ 4^{2\pa{q+2}}\norm{g}_{L^p}^{2q}\pa{\sqrt[l]{4m_l}+2\pi}^2}
	 	\end{split},
	 \end{equation}
	 Combining the above observations we find that for any $\pa{n_1,\dots,n_k}\in \A^{(k)}$
	 \begin{equation}\nonumber 
	 	\begin{split}
	 		&\abs{e^{-\frac{\lambda N}{N-1}\pa{\pa{N-k}\sum_{l=1}^{k}\pa{1-\wh{g_{\epsilon_N}}\pa{n_l}}+k\pa{k-1}}t}-e^{-\frac{\lambda\pa{ 2k\pa{k-1}+m_2\sum_{r=1}^{k} n_r^2}t}{2}}}\\
	 		&\leq e^{\lambda\pa{1-\frac{k-1}{N-1}}N\sum_{l=1}^{k}\pa{\wh{g_{\epsilon_N}}\pa{n_l}-1}t}e^{-\frac{\lambda N}{N-1}k\pa{k-1}t}+e^{-\frac{\lambda\pa{ 2k\pa{k-1}+m_2\sum_{r=1}^{k} n_r^2}t}{2}}\\
	 		& \leq e^{-\lambda k(k-1)t}\pa{e^{-\lambda\pa{1-\frac{k-1}{N-1}}\frac{N\alpha_N^2 \pi^2 t}{ 4^{2\pa{q+2}}\norm{g}_{L^p}^{2q}\pa{\sqrt[l]{4m_l}+2\pi}^2}}+e^{-\frac{\lambda m_2N\alpha_N^2 t }{2}}}.
	 	\end{split}
	 \end{equation}
	 The fact that $N\geq N_1\geq 2k >2k-1$ implies that
	 \begin{equation}\label{eq:N_k_connection}
	 	1-\frac{k-1}{N-1} \geq \frac{1}{2}
	 \end{equation}
	 and as such we find that on $\A^{(k)}$
	 	 \begin{equation}\label{eq:estimate_II_on_A_k}
	 	\begin{split}
	 		&	\tm{II} \leq \abs{e^{-\frac{\lambda N}{N-1}\pa{\pa{N-k}\sum_{l=1}^{k}\pa{1-\wh{g_{\epsilon_N}}\pa{n_l}}+k\pa{k-1}}t}-e^{-\frac{\lambda\pa{ 2k\pa{k-1}+m_2\sum_{r=1}^{k} n_r^2}t}{2}}}\\
	 		& \leq e^{-\lambda k(k-1)t}\pa{e^{-\frac{\lambda N\alpha_N^2 \pi^2 t}{ 2\cdot4^{2\pa{q+2}}\norm{g}_{L^p}^{2q}\pa{\sqrt[l]{4m_l}+2\pi}^2}}+e^{-\frac{\lambda m_2N\alpha_N^2 t }{2}}}.
	 	\end{split}
	 \end{equation}
	 We turn our attention to the set $\B^{(k)}$. For any $\pa{n_1,\dots,n_{k}}\in \B^{(k)}$ we have that
	 \allowdisplaybreaks
	 \begin{align}
	 		\nonumber&\tm{II} \leq \abs{e^{-\frac{\lambda N}{N-1}\pa{\pa{N-k}\sum_{l=1}^{k}\pa{1-\wh{g_{\epsilon_N}}\pa{n_l}}+k\pa{k-1}}t}-e^{-\frac{\lambda\pa{ 2k\pa{k-1}+m_2\sum_{r=1}^{k} n_r^2}t}{2}}}\\
			\nonumber&=\abs{e^{\lambda\pa{1-\frac{k-1}{N-1}}\sum_{l=1}^{k}N\pa{\wh{g_{\epsilon_N}}\pa{n_l}-1}t}e^{-\lambda \pa{1+\frac{1}{N-1}}k\pa{k-1}t}-e^{-\frac{\lambda\pa{ 2k\pa{k-1}+m_2\sum_{r=1}^{k} n_r^2}t}{2}}}\\
			\nonumber&=e^{-\frac{\lambda\pa{1-\frac{k-1}{N-1}}m_2\pa{\sum_{r=1}^{k} n_r^2}t}{2}}e^{-\lambda k(k-1)t}\\
			\label{eq:starting_to_estimate_II_on_B_k}&\qquad\qquad\abs{e^{\lambda\pa{1-\frac{k-1}{N-1}}\sum_{l=1}^{k}N\pa{\wh{g_{\epsilon_N}}\pa{n_l}-1+\frac{m_2 \epsilon_N^2 n_l^2}{2}}t}e^{-\frac{\lambda k(k-1)t}{N-1}}-e^{-\frac{\lambda (k-1)m_2\pa{\sum_{r=1}^{k}n_r^2}t}{2\pa{N-1}}}}\\	 	
			\nonumber&\leq e^{-\frac{\lambda m_2\pa{\sum_{r=1}^{k} n_r^2}t}{4}}e^{-\lambda k(k-1)t}e^{-\frac{\lambda k(k-1)t}{N-1}}\abs{e^{\lambda\pa{1-\frac{k}{N-1}}\sum_{l=1}^{k}N\pa{\wh{g_{\epsilon_N}}\pa{n_l}-1+\frac{m_2 \epsilon_N^2 n_l^2}{2}}t}-1}\\
			\nonumber& + e^{-\frac{\lambda m_2\pa{\sum_{r=1}^{k} n_r^2}t}{4}}e^{-\lambda k(k-1)t}\abs{e^{-\frac{\lambda k(k-1)t}{N-1}}-e^{-\frac{\lambda (k-1)m_2\pa{\sum_{r=1}^{k}n_r^2}t}{2\pa{N-1}}}},
	 \end{align}
	 where we have used \eqref{eq:N_k_connection}.
	 We will estimate each of the above terms individually. 
	 
	 We notice that 
	 $$e^{-\frac{\lambda m_2\pa{\sum_{r=1}^{k} n_r^2}t}{4}}e^{-\lambda k(k-1)t}e^{-\frac{\lambda k(k-1)t}{N-1}}\abs{e^{\lambda\pa{1-\frac{k-1}{N-1}}\sum_{l=1}^{k}N\pa{\wh{g_{\epsilon_N}}\pa{n_l}-1+\frac{m_2 \epsilon_N^2 n_l^2}{2}}t}-1}$$
	$$\leq e^{-\frac{\lambda m_2\pa{\sum_{r=1}^{k} n_r^2}t}{4}}e^{-\lambda k(k-1)t}\abs{e^{\lambda\pa{1-\frac{k-1}{N-1}}\sum_{l=1}^{k}N\pa{\wh{g_{\epsilon_N}}\pa{n_l}-1+\frac{m_2 \epsilon_N^2 n_l^2}{2}}t}-1}$$
	$$\leq \lambda \pa{1-\frac{k-1}{N-1}}B_{\sup}\pa{\epsilon_N}\pa{\sum_{r=1}^k n_r^2}te^{-\frac{\lambda m_2\pa{\sum_{r=1}^{k} n_r^2}t}{4}}e^{-\lambda k(k-1)t}e^{\lambda \pa{1-\frac{k-1}{N-1}}B_{\sup}\pa{\epsilon_N}\pa{\sum_{r=1}^k n_r^2}t}$$
	$$\leq \lambda B_{\sup}\pa{\epsilon_N}\pa{\sum_{r=1}^k n_r^2}t e^{-\frac{\lambda m_2\pa{\sum_{r=1}^{k} n_r^2}t}{4}}e^{-\lambda k(k-1)t}e^{\lambda B_{\sup}\pa{\epsilon_N}\pa{\sum_{r=1}^k n_r^2}t},$$
	where $B_{\sup}\pa{\epsilon_N}$ is defined by
	\begin{equation}\label{eq:def_of_B_eps_N}
		B_{\sup}\pa{\epsilon_N}=\sup_{\pa{n_1,\dots,n_{k}}\in \B^{(k)}\setminus\br{0,\dots,0}}\frac{N\sum_{l=1}^{k}\abs{\wh{g_{\epsilon_N}}\pa{n_l}-1+\frac{m_2 \epsilon_N^2 n_l^2}{2}}}{\sum_{l=1}^k n_l^2}.
	\end{equation}
	Using part \eqref{item:low_frequencies} from Lemma \ref{lem:on_g_low_and_high_frequencies} we see that when $l=3$ and $\pa{n_1,\dots,n_k}\ne\pa{0,\dots,0}$
	\begin{equation}\nonumber
		\begin{split}
			&N\sum_{l=1}^{k}\abs{\wh{g_{\epsilon_N}}\pa{n_l}-1+\frac{m_2 \epsilon_N^2 n_l^2}{2}} \leq k N\uptau_N+
				\frac{m_3}{3}N\epsilon_N^3 \pa{\sum_{l=1}^{k}\abs{n_l}^3}\\
				&= kN\uptau_N+
				\frac{m_3}{3} \pa{\sum_{l=1}^{k}\abs{\epsilon_N n_l} n_l^2} \leq \pa{ kN\uptau_N+
					\frac{m_3\alpha_N}{3} }\pa{\sum_{l=1}^{k}n_l^2},
		\end{split}
	\end{equation}
	from which we conclude that 
		\begin{equation}\label{eq:estimate_B_eps_N_l=3}
		\begin{split}
			&B_{\sup}\pa{\epsilon_N} \leq kN\uptau_N+
				\frac{m_3\alpha_N}{3}.
		\end{split}
	\end{equation}
	and consequently for  any $\pa{n_1,\dots,n_k}\in \B^{(k)}$
	\begin{equation}\nonumber
		\begin{split}
		&e^{-\frac{\lambda m_2\pa{\sum_{r=1}^{k} n_r^2}t}{4}}e^{-\lambda k(k-1)t}e^{-\frac{\lambda k(k-1)t}{N-1}}\abs{e^{\lambda\pa{1-\frac{k-1}{N-1}}\sum_{l=1}^{k}N\pa{\wh{g_{\epsilon_N}}\pa{n_l}-1+\frac{m_2 \epsilon_N^2 n_l^2}{2}}t}-1}\\
		&\leq  \frac{8B_{\sup}\pa{\epsilon_N}e^{-\frac{\lambda m_2\pa{\sum_{r=1}^{k} n_r^2}t}{8}}e^{\lambda B_{\sup}\pa{\epsilon_N}\pa{\sum_{r=1}^{k} n_r^2} t}}{m_2e}e^{-\lambda k(k-1)t} \\
		&\leq \frac{8\pa{3kN\uptau_N+m_3\alpha_N}}{3m_2e}e^{-\frac{\lambda \pa{m_2-8kN\uptau_N -\frac{8m_3}{3}\alpha_N}\pa{\sum_{r=1}^{k} n_r^2}t}{8}}e^{-\lambda k(k-1)t}.
		\end{split}
	\end{equation}
		At this point we notice that due to \eqref{eq:condition_for_alpha_N_II}, \eqref{eq:bound_on_Ntau_N} and the fact that $N\geq N_1\geq \pa{\frac{96m_lk}{\pi^l m_2}}^{\frac{2}{l-2}}$ we have that 
	$$\frac{8m_3\alpha_N}{3} +8kN\tau_N\leq \frac{m_2}{3}+ \frac{32km_l}{\pi^{l}N^{\frac{l}{2}-1}} \leq \frac{2m_2}{3},$$
	and as such, together with \eqref{eq:bound_on_Ntau_N} again, the above implies that for any $\pa{n_1,\dots,n_{k}}\in \B^{(k)}$
	\begin{equation}\label{eq:first_term_in_II_on_B_k_estimate_l=3}
		\begin{split}
			&e^{-\frac{\lambda m_2\pa{\sum_{r=1}^{k} n_r^2}t}{4}}e^{-\lambda k(k-1)t}e^{-\frac{\lambda k(k-1)t}{N-1}}\abs{e^{\lambda\pa{1-\frac{k-1}{N-1}}\sum_{l=1}^{k}N\pa{\wh{g_{\epsilon_N}}\pa{n_l}-1+\frac{m_2 \epsilon_N^2 n_l^2}{2}}t}-1}\\
			&\leq  \frac{8\pa{12km_l N^{\frac{2-l}{2}}+\pi^lm_3\alpha_N}}{3\pi^l m_2e}e^{-\frac{\lambda m_2t}{24}}e^{-\lambda k(k-1)t}.
		\end{split}
	\end{equation}
	Similarly, when $l\geq 4$ Lemma \ref{lem:on_g_low_and_high_frequencies} shows that on $\B^{(k)}\setminus\br{0,\dots,0}$
		\begin{equation}\nonumber
		\begin{split}
			&N\sum_{l=1}^{k}\abs{\wh{g_{\epsilon_N}}\pa{n_l}-1+\frac{m_2 \epsilon_N^2 n_l^2}{2}} \leq kN\uptau_N+
			\frac{m_4}{12}N\epsilon_N^4 \pa{\sum_{l=1}^{k}n_l^4} \leq \pa{ kN\uptau_N+
				\frac{m_4\alpha_N^2}{12} }\pa{\sum_{l=1}^{k}n_l^2},
		\end{split}
	\end{equation}
	which implies that
	\begin{equation}\label{eq:estimate_B_eps_N_l>3}
		\begin{split}
			&B_{\sup}\pa{\epsilon_N} \leq kN\uptau_N+
				\frac{m_4\alpha_N^2}{12},
		\end{split}
	\end{equation}
	and as such
	\begin{equation}\label{eq:first_term_in_II_on_B_k_estimate_l>3}
		\begin{split}
			&e^{-\frac{\lambda m_2\pa{\sum_{r=1}^{k} n_r^2}t}{4}}e^{-\lambda k(k-1)t}e^{-\frac{\lambda k(k-1)t}{N-1}}\abs{e^{\lambda\pa{1-\frac{k-1}{N-1}}\sum_{l=1}^{k}N\pa{\wh{g_{\epsilon_N}}\pa{n_l}-1+\frac{m_2 \epsilon_N^2 n_l^2}{2}}t}-1}\\
			&\leq  \frac{8B_{\sup}\pa{\epsilon_N}e^{-\frac{\lambda m_2\pa{\sum_{r=1}^{k} n_r^2}t}{8}}e^{\lambda B_{\sup}\pa{\epsilon_N}\pa{\sum_{r=1}^{k} n_r^2} t}}{m_2e}e^{-\lambda k(k-1)t} \\
			&\leq \frac{2\pa{12kN\uptau_N+m_4\alpha_N^2}}{3m_2e}e^{-\frac{\lambda \pa{m_2-8kN\uptau_N -\frac{2m_4}{3}\alpha_N^2}\pa{\sum_{r=1}^{k} n_r^2}t}{8}}e^{-\lambda k(k-1)t}\\
			&\leq  \frac{2\pa{48km_l N^{\frac{2-l}{2}}+\pi^lm_4\alpha_N^2}}{3\pi^l m_2e}e^{-\frac{\lambda m_2t}{24}}e^{-\lambda k(k-1)t},
		\end{split}
	\end{equation}
	where, again, when $N\geq N_1\geq \pa{\frac{96m_l k}{\pi^l m_2}}^{\frac{2}{l-2}}$, \eqref{eq:condition_for_alpha_N_II}, and \eqref{eq:bound_on_Ntau_N} hold we have that 
	$$\frac{2m_4\alpha_N^2}{3} +8kN\tau_N\leq \frac{m_2}{3}+ \frac{32km_l}{\pi^{l}N^{\frac{l}{2}-1}} \leq \frac{2m_2}{3}.$$
	To estimate the second term in \eqref{eq:starting_to_estimate_II_on_B_k} we start by noticing that as for any $x,y>0$\footnote{We have that 
	$$e^{-x} = e^{-y} +\pa{-e^{-y}}\pa{x -y} + \frac{e^{-c}}{2}\pa{x-y}^2,$$
	where $c$ is between $x$ and $y$.}
	$$\abs{e^{-x}-e^{-y}} \leq \abs{y-x}\pa{1+\abs{y-x}},$$
	we have that on $\B^{(k)}$
	\begin{equation}\nonumber
	\begin{split}
		&e^{-\frac{\lambda m_2\pa{\sum_{r=1}^{k} n_r^2}t}{4}}e^{-\lambda k(k-1)t}\abs{e^{-\frac{\lambda k(k-1)t}{N-1}}-e^{-\frac{\lambda (k-1)m_2\pa{\sum_{r=1}^{k}n_r^2}t}{2\pa{N-1}}}}\\
		&\leq  \frac{\lambda (k-1) t e^{-\frac{\lambda m_2 t}{2}}e^{-\lambda k(k-1)t}\abs{2k-m_2\sum_{r=1}^{k}n_r^2}
		}{2(N-1)}\\
		&+ \frac{
			\pa{\lambda (k-1) t}^2 e^{-\frac{\lambda m_2 t}{2}}e^{-\lambda k(k-1)t}\pa{2k-m_2\sum_{r=1}^{k}n_r^2}^2}{4(N-1)^2}\\
		&\leq \frac{ e^{-\frac{\lambda m_2 t}{2}}e^{-\frac{\lambda k(k-1)t}{2}}\pa{2k+m_2\sum_{r=1}^{k}n_r^2}
		}{k(N-1)e}+ \frac{
		4e^{-\frac{\lambda m_2 t}{2}}e^{-\frac{\lambda k(k-1)t}{4}}\pa{4k^2+m_2^2\pa{\sum_{r=1}^{k}n_r^2}^2}}{k^2(N-1)^2e^2}
	\end{split}	
	\end{equation}
	where we have used  the fact that
	$$x^2 e^{-\alpha x} \leq \frac{2}{\alpha e}xe^{-\frac{\alpha x}{2}} \leq \frac{8}{\alpha^2 e^2}e^{-\frac{\alpha x}{4}}.$$ 
	Recall that on $\B^{(k)}$
	\begin{equation}\nonumber
	\sum_{r=1}^{k}n_r^2 = N	\sum_{r=1}^{k}\epsilon_N^2n_r^2 \leq kN\alpha_N^2, 
	\end{equation}
	and as such 
	\begin{equation}\label{eq:second_term_in_II_on_B_k_estimate}
		\begin{split}
			&e^{-\frac{\lambda m_2\pa{\sum_{r=1}^{k} n_r^2}t}{4}}e^{-\lambda k(k-1)t}\abs{e^{-\frac{\lambda k(k-1)t}{N-1}}-e^{-\frac{\lambda (k-1)m_2\pa{\sum_{r=1}^{k}n_r^2}t}{2\pa{N-1}}}}\\
			&\leq \frac{ e^{-\frac{\lambda m_2 t}{2}}e^{-\frac{\lambda k(k-1)t}{2}}\pa{2+m_2N\alpha_N^2}
			}{(N-1)e}+ \frac{
				4e^{-\frac{\lambda m_2 t}{2}}e^{-\frac{\lambda k(k-1)t}{4}}\pa{4+m_2^2N^2 \alpha_N^4}}{(N-1)^2e^2}.
		\end{split}	
	\end{equation}
	Combining \eqref{eq:starting_to_estimate_II_on_B_k} with \eqref{eq:first_term_in_II_on_B_k_estimate_l=3} or \eqref{eq:first_term_in_II_on_B_k_estimate_l>3}, and \eqref{eq:second_term_in_II_on_B_k_estimate} gives us that for any $\pa{n_1,\dots, n_{k}}\in \B^{(k)}$
	\begin{equation}\label{eq:estimate_II_on_B_k}
		\begin{split}
			&\tm{II} \leq \abs{e^{-\frac{\lambda N}{N-1}\pa{\pa{N-k}\sum_{l=1}^{k}\pa{1-\wh{g_{\epsilon_N}}\pa{n_l}}+k\pa{k+1}}t}-e^{-\frac{\lambda\pa{ 2k\pa{k-1}+m_2\sum_{r=1}^{k} n_r^2}t}{2}}} \\
			& \leq \mathfrak{e}_2(k,N)e^{-\frac{\lambda m_2t}{24}}e^{-\lambda k(k-1)t}
			+\frac{ e^{-\frac{\lambda m_2 t}{2}}e^{-\frac{\lambda k(k-1)t}{2}}\pa{2+m_2N\alpha_N^2}
			}{(N-1)e}\\
			&\qquad\qquad + \frac{
				4e^{-\frac{\lambda m_2 t}{2}}e^{-\frac{\lambda k(k-1)t}{4}}\pa{4+m_2^2N^2 \alpha_N^4}}{(N-1)^2e^2}.
		\end{split}
	\end{equation}
	with $\mathfrak{e}_2(k,N)$ defined in \eqref{eq:epsilon_2_N}.\\
	 \underline{The term $\tm{III}$:} Using the facts that 
	 $$\int_{0}^{t}e^{-\alpha\pa{t-s}}ds = \frac{1-e^{-\alpha t}}{\alpha}\leq \frac{1}{\alpha}$$
	 for any $\alpha>0$ and part \eqref{item:g_coefficient_real_and_bounded_by_1} of Lemma \ref{lem:on_g_low_and_high_frequencies} we find that 
	 $$\tm{III} \leq \frac{\lambda N}{N-1}\sum_{i<j\leq k}\pa{2-\widehat{g_{\epsilon_N}}\pa{n_i}-\widehat{g_{\epsilon_N}}\pa{n_j}}\int_{0}^t e^{-\frac{\lambda N}{N-1}\pa{\pa{N-k}\sum_{l=1}^{k}\pa{1-\wh{g_{\epsilon_N}}\pa{n_l}}+k\pa{k-1}}\pa{t-s}}ds$$
	 $$\leq \frac{\sum_{i<j\leq k}\pa{2-\widehat{g_{\epsilon_N}}\pa{n_i}-\widehat{g_{\epsilon_N}}(n_j)}}{\pa{1-\frac{k}{N}}\sum_{l=1}^{k}N\pa{1-\wh{g_{\epsilon_N}}\pa{n_l}}+k\pa{k-1}}.$$
	 Consequently, as $N\geq N_1\geq 2k$ implies that $1-\frac{k}{N }\geq \frac{1}{2}$ and 
	 $$\sum_{i<j\leq k}\pa{2-\widehat{g_{\epsilon_N}}\pa{n_i}-\widehat{g_{\epsilon_N}}\pa{n_j}} \leq \sum_{i<j\leq k} 4 = 2k(k-1),$$
	 we see that when $\pa{n_1,\dots,n_{k}}\in \A^{(k)}$
	 \begin{equation}\label{eq:estimate_III_on_A_k}
	 	\begin{split}
	 		&\tm{III} \leq \frac{4^{2\pa{q+2}+1}k(k-1)\norm{g}_{L^p}^{2q}\pa{\sqrt[l]{4m_l}+2\pi}^2}{N\alpha_N^2 \pi^2 + 2\cdot4^{2\pa{q+2}}k(k-1)\norm{g}_{L^p}^{2q}\pa{\sqrt[l]{4m_l}+2\pi}^2}
	 	\end{split}
	 \end{equation}
	 where we have used \eqref{eq:estimates_on_A_k}.
	
	 On the other hand, for any $\pa{n_1,\dots,n_{k}}\in\B^{(k)}$ we find that
	 $$\tm{III} \leq \frac{\sum_{i<j\leq k}\pa{2-\widehat{g_{\epsilon_N}}\pa{n_i}-\widehat{g_{\epsilon_N}}(n_j)}}{k\pa{k-1}} = \frac{\sum_{i,j=1,\;i\ne j}^{k}\pa{2-\widehat{g_{\epsilon_N}}\pa{n_i}-\widehat{g_{\epsilon_N}}(n_j)}}{2k\pa{k-1}}$$
	 $$\leq \frac{\sum_{i,j=1,\;i\ne j}^{k}\pa{1-\widehat{g_{\epsilon_N}}\pa{n_i}}}{k\pa{k-1}}=\frac{\sum_{i=1}^{k}\pa{1-\widehat{g_{\epsilon_N}}\pa{n_i}}}{k}=\frac{\sum_{i=1}^{k}\pa{1-\widehat{g_{\epsilon_N}}\pa{n_i}-\frac{m_2 \epsilon_N^2n_i^2}{2}}+\frac{m_2 \epsilon_N^2}{2}\sum_{i=1}^{k}n_i^2}{k}$$
	 $$\underset{N\epsilon_N^2=1}{\leq} \frac{\pa{2\epsilon_N^2B\pa{\epsilon_N}+ m_2 \epsilon_N^2}\sum_{i=1}^{k}n_i^2}{2k} = \frac{\pa{2B_{\sup}\pa{\epsilon_N}+ m_2 }}{2k}\pa{\epsilon_N^2\sum_{i=1}^{k}n_i^2}\leq \frac{\pa{2B_{\sup}\pa{\epsilon_N}+ m_2}\alpha_N^2 }{2}.$$
	 We conclude that on $\B^{(k)}$
	 	\begin{equation}\nonumber 
	 	\begin{split}
	 		&	\tm{III} \leq \frac{\pa{ 6kN\uptau_N+2m_3\alpha_N+ 3m_2}\alpha_N^2 }{6} \\
	 		&\qquad\qquad\leq \frac{\pa{ 24km_lN^{\frac{2-l}{2}}+2\pi^lm_3\alpha_N+ 3\pi^lm_2}\alpha_N^2 }{6\pi^l}.
	 	\end{split}
	 \end{equation}
	 when $l=3$ and
	 	\begin{equation}\nonumber 
	 	\begin{split}
	 		&	\tm{III} \leq \frac{\pa{96km_lN^{\frac{2-l}{2}} + 2\pi^lm_4\alpha_N^2+ 12\pi^lm_2}\alpha_N^2 }{24 \pi^l},
	 	\end{split}
	 \end{equation}
	 when $l\geq 4$, where we have used  \eqref{eq:bound_on_Ntau_N}, \eqref{eq:estimate_B_eps_N_l=3}, and \eqref{eq:estimate_B_eps_N_l>3}. In other words 
	 \begin{equation}\label{eq:estimate_III_on_B_k}
	 	\begin{split}
	 		&	\tm{III} \leq \mathfrak{e}_3(k,N)\alpha_N^2,
	 	\end{split}
	 	\end{equation}
	 	with $\mathfrak{e}_3(k,N)$ defined in \eqref{eq:epsilon_3_N}. \\
	\underline{The term $\tm{IV}$:} As
	$$\tm{IV}\leq \frac{2N\lambda}{N-1}\sum_{i<j\leq k}\int_{0}^t \abs{e^{-\frac{\lambda N}{N-1}\pa{\pa{N-k}\sum_{l=1}^{k}\pa{1-\wh{g_{\epsilon_N}}\pa{n_l}}+k\pa{k-1}}\pa{t-s}}-e^{-\frac{\lambda\pa{2k\pa{k-1}+m_2\sum_{r=1}^{k+1} n_r^2}\pa{t-s}}{2}}}ds $$
	we can use \eqref{eq:estimate_II_on_A_k} together a simple change of variable to see that when $\pa{n_1,\dots,n_{k}}\in \A^{(k)}$ we have that
	\begin{equation}\label{eq:estimate_IV_on_A_k}
		\begin{split}
			&\tm{IV}\leq \frac{2N\lambda }{N-1}\int_{0}^t e^{-\lambda k(k-1)s}\pa{e^{-\frac{\lambda N\alpha_N^2 \pi^2 s}{ 2\cdot4^{2\pa{q+2}}\norm{g}_{L^p}^{2q}\pa{\sqrt[l]{4m_l}+2\pi}^2}}+e^{-\frac{\lambda m_2N\alpha_N^2 s }{2}}}ds\\
			&\leq \frac{ 2\cdot 4^{2\pa{q+2}+1} \norm{g}_{L^p}^{2q}\pa{\sqrt[l]{4m_l}+2\pi}^2}{N\alpha_N^2\pi^2 + 2\cdot 4^{2\pa{q+2}}k(k-1)\norm{g}_{L^p}^{2q}\pa{\sqrt[l]{4m_l}+2\pi}^2}+\frac{8}{m_2N\alpha_N^2+2k(k-1)},
		\end{split}
	\end{equation}
	where we have used the fact that $N\leq 2(N-1)$ when $N\geq 2$. 
	
	Turning our attention to $\B^{(k)}$ we utilise \eqref{eq:estimate_II_on_B_k} to find that
		\begin{equation}\nonumber
		\begin{split}
			&\mathrm{IV} \leq \frac{2N\lambda }{N-1}\int_{0}^t \left(\mathfrak{e}_2(k,N)e^{-\frac{\lambda m_2s}{24}}e^{-\lambda k(k-1)s} \right.\\
			&\left.+\frac{ e^{-\frac{\lambda m_2 2}{2}}e^{-\frac{\lambda k(k-1)s}{2}}\pa{2+m_2N\alpha_N^2}
			}{(N-1)e}+ \frac{
				4e^{-\frac{\lambda m_2 s}{2}}e^{-\frac{\lambda k(k-1)s}{4}}\pa{4+m_2^2N^2 \alpha_N^4}}{(N-1)^2e^2}\right) ds
		\end{split}
	\end{equation}
	
	and conclude that 
		\begin{equation}\label{eq:estimate_IV_on_B_k}
		\begin{split}
			&\tm{IV} \leq \frac{96\mathfrak{e}_2(k,N)}{m_2+24k(k-1)}
			+\frac{8 \pa{2+m_2N\alpha_N^2}
			}{(N-1)e\pa{m_2+k(k-1)}} +\frac{
				64\pa{4+m_2^2N^2 \alpha_N^4}}{(N-1)^2e^2\pa{2m_2+k(k-1)}}.
		\end{split}
	\end{equation}
		\underline{The term $\tm{V}$:} Using the definition of $\tm{V}$ and that fact we are assuming that $\sum_{r=1}^{k}n_r^2 \geq 1$ we find that
		
	\begin{equation}\label{eq:estimate_V}
		\begin{split}
			&\tm{V} \leq \frac{2\lambda N}{N-1}\sum_{i<j\leq k}\int_{0}^t e^{-\frac{\lambda\pa{2k\pa{k-1}+m_2}\pa{t-s}}{2}} \\
			&\abs{\wh{F_{N,k-1}}\pa{n_1,\dots, n_i+n_j,\dots,n_{k},s}-\wh{f_{k-1}}\pa{n_1,\dots, n_i+n_j,\dots,n_{k},s}}ds.
		\end{split}
	\end{equation}
	\underline{The term $\tm{VI }$:} Like above, and using estimates that were mentioned previously, we find that 
	\begin{equation}\label{eq:estimate_VI}
		\begin{split}
			&\tm{VI} \leq \frac{\lambda k(k-1) }{N-1}\int_{0}^t e^{-\frac{\lambda\pa{2k\pa{k-1}+m_2}\pa{t-s}}{2}}ds \leq \frac{2k(k-1)}{(N-1)\pa{m_2+2k(k-1)}}.
		\end{split}
	\end{equation}
	Combining \eqref{eq:estimate_I}, \eqref{eq:estimate_II_on_A_k}, \eqref{eq:estimate_III_on_A_k}, \eqref{eq:estimate_IV_on_A_k}, \eqref{eq:estimate_V} and \eqref{eq:estimate_VI} gives us \eqref{eq:distance_between_F_Nk_and_f_k_on_A_k}, while combining \eqref{eq:estimate_I}, \eqref{eq:estimate_II_on_B_k}, \eqref{eq:estimate_III_on_B_k}, \eqref{eq:estimate_IV_on_B_k}, \eqref{eq:estimate_V} and \eqref{eq:estimate_VI} gives us \eqref{eq:distance_between_F_Nk_and_f_k_on_B_k}. The proof is thus complete. 
\end{proof}

An immediate corollary of Lemma \ref{lem:distance_F_N_k_and_f_k} is the following:

\begin{corollary}\label{cor:not_final_distance_written_simpler}
	Under the same assumptions and notations as Lemma \ref{lem:distance_F_N_k_and_f_k}, there exists an explicit $c_{1}>0$ that depend only on the interaction generating function $g$ such that if $N\geq N_0$ then
	\begin{equation}\label{eq:not_final_distance_simplified_1}
		\begin{split}
			&\abs{\wh{F_{N,1}}(n,t)-\wh{f_1}(n,t)} \leq e^{-\frac{\lambda m_2t}{2}}\abs{\wh{F_{N,1}}(n,0)-\wh{f_{1,0}}(n)}\\
			&+e^{-\frac{\lambda N\alpha_N^2 \pi^2 t}{ 4^{2\pa{q+2}}\norm{g}_{L^p}^{2q}\pa{\sqrt[l]{4m_l}+2\pi}^2}}+	e^{-\frac{\lambda m_2N\alpha_N^2t}{2}} +c_{1}\pa{N^{\frac{2-l}{2}}+\alpha_N^{\min\pa{\lfloor \frac{l}{2}\rfloor,2}}}e^{-\frac{\lambda m_2 t}{16}}.
		\end{split}
	\end{equation}
	for any $n\in\Z$.
	
	In addition, for any $k\geq 2$ there exist explicit $c_{2},c_3,c_4>0$ that depend only on the interaction generating function $g$ such that  if $N\geq N_1$ then
			\allowdisplaybreaks
	\begin{align}
		\nonumber&\abs{\wh{F_{N,k}}\pa{n_1,\dots,n_{k},t}-\wh{f_{k}}\pa{n_1,\dots,n_{k},t}}\\
		\nonumber& \leq 
		e^{-\frac{\lambda\pa{ 2k\pa{k-1}+m_2}t}{2}}\abs{\wh{F_{N,k}}\pa{n_1,\dots,n_{k},0}-\wh{f_{k,0}}\pa{n_1,\dots,n_{k}}} \\
		\nonumber& +e^{-\lambda k(k-1)t}\pa{e^{-\frac{\lambda N\alpha_N^2 \pi^2 t}{ 2\cdot4^{2\pa{q+2}}\norm{g}_{L^p}^{2q}\pa{\sqrt[l]{4m_l}+2\pi}^2}}+e^{-\frac{\lambda m_2N\alpha_N^2 t }{2}}}\\
		\label{eq:not_final_distance_simplified_k} 	
		&+\frac{c_{2}}{N\alpha_N^2} + c_{3}e^{-\frac{\lambda m_2 t}{24}}e^{-\frac{\lambda k(k-1)t}{4}}\pa{kN^{\frac{2-l}{2}}+\frac{1}{N}+\alpha_N^{\min\pa{\lfloor \frac{l}{2}\rfloor,2}}}\\
		\nonumber & +c_4\pa{\frac{1}{N} + \pa{kN^{\frac{2-l}{2}}+1}\alpha_N^2+kN^{\frac{2-l}{2}}+\alpha_N^{\min\pa{\lfloor \frac{l}{2}\rfloor,2}}}\\
		\nonumber &+\frac{2\lambda N}{N-1} \sum_{i<j\leq k}\int_{0}^t e^{-\frac{\lambda\pa{2k\pa{k-1}+m_2}\pa{t-s}}{2}} \\
		\nonumber &\qquad\qquad \abs{\wh{F_{N,k-1}}\pa{n_1,\dots, n_i+n_j,\dots,n_{k},s}-\wh{f_{k-1}}\pa{n_1,\dots, n_i+n_j,\dots,n_{k},s}}ds,
	\end{align}
	for any  $\pa{n_1,\dots,n_{k}}\in \Z^{k} $.
\end{corollary}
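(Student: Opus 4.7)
The plan is to derive the two inequalities by directly simplifying the estimates produced in Lemma \ref{lem:distance_F_N_k_and_f_k}. The essential idea is that the lemma already contains the desired inequalities in a case-by-case form (high-frequency vs low-frequency), so that a uniform bound is obtained simply by summing the two case estimates. Every simplification we make will merely pass from an explicit algebraic expression in $m_l$, $\norm{g}_{L^p}$ and $q$ to a single $g$-dependent constant, without modifying any of the $t$-dependent exponential decay rates.

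First, I will handle the case $k=1$. For any $n\in\Z$ either $\abs{n\epsilon_N}\geq \alpha_N$ (so \eqref{eq:distance_between_F_N1_and_f_1_large_frequency} applies) or $\abs{n\epsilon_N}\leq \alpha_N$ (so \eqref{eq:distance_between_F_N1_and_f_1_small_frequency} applies); in either case the sum of the right-hand sides of these two inequalities is an upper bound. I will then inspect the explicit expression \eqref{eq:epsilon_1_N} for $\mathfrak{e}_1(N)$ and note that both of the two formal branches ($l=3$ and $l\geq 4$) take the form $C(g)\bigl(N^{(2-l)/2}+\alpha_N^{r(l)}\bigr)$ with $r(3)=1$ and $r(l)=2$ for $l\geq 4$. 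Since $\min(\lfloor l/2\rfloor,2)=r(l)$ in both cases, absorbing $C(g)$ into a single constant $c_1$ yields \eqref{eq:not_final_distance_simplified_1}.

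Next, for $k\geq 2$, I will apply the same summation trick to \eqref{eq:distance_between_F_Nk_and_f_k_on_A_k} and \eqref{eq:distance_between_F_Nk_and_f_k_on_B_k} to produce a uniform bound valid on all of $\Z^k$. The recursive-integral term at the bottom and the common initial-data term at the top of \eqref{eq:distance_between_F_Nk_and_f_k_on_A_k}–\eqref{eq:distance_between_F_Nk_and_f_k_on_B_k} already match the ones in \eqref{eq:not_final_distance_simplified_k}, so they carry through unchanged. The two stretched-exponential terms $e^{-\lambda k(k-1)t}e^{-\lambda N\alpha_N^2\pi^2 t/\cdots}$ and $e^{-\lambda k(k-1)t}e^{-\lambda m_2 N\alpha_N^2 t/2}$ from the $\A^{(k)}$ bound also appear directly. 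The two stationary-size residuals from the $\A^{(k)}$ bound, of the form $(2+k(k-1))\cdot C(g)/(N\alpha_N^2 \pi^2+\cdots)$ and $8/(m_2N\alpha_N^2+2k(k-1))$, are both bounded by dropping the positive $k(k-1)$ parts of the denominators; they collapse into a single $c_2/(N\alpha_N^2)$ contribution. The error coefficients $\mathfrak{e}_2(k,N)$ and $\mathfrak{e}_3(k,N)$ in the $\B^{(k)}$ bound are analysed via \eqref{eq:epsilon_2_N} and \eqref{eq:epsilon_3_N} exactly as for $\mathfrak{e}_1(N)$, giving rise to the $c_3$ and $c_4$ contributions once we note $\alpha_N\leq 1$ (so that, e.g., $\alpha_N^{2+r(l)}\leq \alpha_N^{r(l)}$) and $1/(N-1)\leq 2/N$ for $N\geq 2$.

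The one subtle bookkeeping point is that the $\B^{(k)}$ estimate in \eqref{eq:distance_between_F_Nk_and_f_k_on_B_k} contains three different negative exponents in $t$ multiplying various coefficients ($e^{-\lambda k(k-1)t}$, $e^{-\lambda k(k-1)t/2}$, $e^{-\lambda k(k-1)t/4}$). Since all of these are dominated by $e^{-\lambda k(k-1)t/4}$ and each coefficient factor is at worst $O(N\alpha_N^2)$, the whole block is bounded by a constant multiple of $\bigl(kN^{(2-l)/2}+N^{-1}+\alpha_N^{\min(\lfloor l/2\rfloor,2)}\bigr)e^{-\lambda m_2 t/24}e^{-\lambda k(k-1)t/4}$, matching exactly the $c_3$ term in \eqref{eq:not_final_distance_simplified_k}. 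There is no genuine obstacle in this proof: it is a purely algebraic exercise in identifying explicit constants, with the only concern being to verify that no exponential rate is inadvertently worsened in the process.
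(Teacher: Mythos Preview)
Your approach is correct and essentially identical to the paper's: both simply take the two case-by-case bounds from Lemma \ref{lem:distance_F_N_k_and_f_k} (high frequency and low frequency), combine them to a uniform bound, and then absorb the explicit expressions \eqref{eq:epsilon_1_N}, \eqref{eq:epsilon_2_N}, \eqref{eq:epsilon_3_N} and the $1/(N-1)$ factors into single constants $c_1,\dots,c_4$ using $\alpha_N\leq 1$ and $1/(N-1)\leq 2/N$. The paper's proof is just as brief as yours, merely recording the explicit constant choices without further commentary.
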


\begin{proof}
	From \eqref{eq:distance_between_F_N1_and_f_1_large_frequency}, \eqref{eq:distance_between_F_N1_and_f_1_small_frequency}, \eqref{eq:distance_between_F_Nk_and_f_k_on_A_k} and \eqref{eq:distance_between_F_Nk_and_f_k_on_B_k}, and using the fact that under our assumptions $\alpha_N \leq 1$ we see that we can choose
	\begin{equation}\nonumber
		c_{1} = \begin{cases}
		\frac{4\max\pa{12m_l,\pi^lm_3}}{3\pi^lm_2e}	, & l=3,\\
		\frac{\max\pa{48m_l,\pi^l m_4}}{3\pi^l m_l e}	, & l\geq 4.
		\end{cases}
	\end{equation}
	\begin{equation}\nonumber
		c_{2} = \frac{ \pa{2+k(k+1)}\cdot 4^{2\pa{q+2}+1} \norm{g}_{L^p}^{2q}\pa{\sqrt[l]{4m_l}+2\pi}^2}{\pi^2}+\frac{8}{m_2}
	\end{equation}
	\begin{equation}\nonumber
		c_{3}=\max\pa{2c_1, \frac{32}{e^2},\frac{2m_2}{e},\frac{16m_2^2}{e^2} },
	\end{equation}
	and 
	\begin{equation}\nonumber
	c_4 = \max\pa{\zeta,\frac{192c_1}{m_2},\frac{16}{e},\frac{256}{m_2e^2}, \frac{128 m_2}{e^2},2},
	\end{equation}
		with 
	\begin{equation}\nonumber
		\zeta=\begin{cases}
			\frac{ \max\pa{24 m_l,2\pi^lm_3,3\pi^lm_2 }}{6\pi^l}, &l=3,\\
			\frac{\max\pa{96m_l, 2\pi^lm_4,12\pi^lm_2} }{24 \pi^l}, & l\geq 4,
		\end{cases}	
	\end{equation}
	where we have used the facts that
	$$\frac{\mathfrak{e}_2\pa{k,N}}{kN^{\frac{2-l}{2}}+\alpha_N^{\min\pa{\lfloor \frac{l}{2} \rfloor, 2}}} \leq \begin{cases}
		\frac{8\max\pa{12m_l,\pi^lm_3}}{3\pi^lm_2e},& l=3,\\
			\frac{2\max\pa{48m_l,\pi^lm_4}}{3\pi^lm_2 e}, & l=4,
	\end{cases} =2c_1,$$
	$$\frac{1}{\pa{N-1}^2}\leq \frac{1}{N-1} \leq \frac{2}{N},\qquad \forall N\geq 2,$$
	and
			\begin{equation}\nonumber
		\frac{\mathfrak{e}_3(k,N)}{kN^{\frac{2-l}{l}}+\alpha_N^{\min\pa{\lfloor \frac{l}{2} \rfloor, 2}}+1}= \begin{cases}
			\frac{ \max\pa{24 m_l,2\pi^lm_3,3\pi^lm_2 }}{6\pi^l}, &l=3,\\
			\frac{\max\pa{96m_l, 2\pi^lm_4,12\pi^lm_2} }{24 \pi^l}, & l\geq 4,
		\end{cases}	=\zeta.
	\end{equation}
\end{proof}

With this at hand, we can build on Lemma \ref{lem:distance_F_N_k_and_f_k} and give an explicit estimate to the distance between $F_{N,k}(t)$ and $f_k(t)$. 

\begin{theorem}\label{thm:distance_F_N_k_and_f_k}
	Under the same assumptions and notations as in Lemma \ref{lem:distance_F_N_k_and_f_k} we have that if $N\geq N_1$\footnote{Recall that by definition $N_1\geq N_0$.} then
	\allowdisplaybreaks
	\begin{align}
		\nonumber&\abs{\wh{F_{N,k}}\pa{n_1,\dots,n_{k},t}-\wh{f_{k}}\pa{n_1,\dots,n_{k},t}}\\
		\nonumber& \leq 
		e^{-\frac{\lambda\pa{ 2k\pa{k-1}+m_2}t}{2}}\abs{\wh{F_{N,k}}\pa{n_1,\dots,n_{k},0}-\wh{f_{k,0}}\pa{n_1,\dots,n_{k}}} \\
			\label{eq:final_distance_simplified_k} 	& +\pa{\frac{N}{N-1}}^{k-1}\pa{k-1}\max_{\fontsize{5}{4}\selectfont{\begin{matrix}
											r\in \br{1,\dots,k-1},\\
											\bm{p}\in \mathfrak{L}_{k}\pa{r},\;\;\sigma\in S^k
									\end{matrix}}, }\abs{\wh{F_{N,r}}\pa{s_{\bm{p}}^{\sigma}\pa{n_1,\dots,n_k},0}-\wh{f_{r,0}}\pa{s_{\bm{p}}^{\sigma}\pa{n_1,\dots,n_k}}}\\
		\nonumber& +e^{-\lambda k(k-1)t}\pa{e^{-\frac{\lambda N\alpha_N^2 \pi^2 t}{ 2\cdot4^{2\pa{q+2}}\norm{g}_{L^p}^{2q}\pa{\sqrt[l]{4m_l}+2\pi}^2}}+e^{-\frac{\lambda m_2N\alpha_N^2 t }{2}}}\\
	\nonumber
		 &+\mathfrak{c}_k\pa{\frac{1}{N} + \pa{kN^{\frac{2-l}{2}}+1}\alpha_N^2+kN^{\frac{2-l}{2}}+\alpha_N^{\min\pa{\lfloor \frac{l}{2}\rfloor,2}}+\frac{k(k-1)}{N\alpha_N^2}},
	\end{align}
	for any  $\pa{n_1,\dots,n_{k}}\in \Z^{k} $ where $\mathfrak{L}_{k}(r)$ and $s_{\bm{p}}^\sigma$ are as in Definition \ref{def:k_level_set_of_Z_r}, and $\mathfrak{c}_k>0$ is given by the recursive formula
	\begin{equation}\label{eq:mathfrak_c_k}
		\mathfrak{c}_1=c_1,\qquad \mathfrak{c}_{k+1} =c_2+c_3+c_4+ 2\mathfrak{c}_k+\frac{4^{2\pa{q+2}+1}\norm{g}_{L^p}^{2q}\pa{\sqrt[l]{4m_l}+2\pi}^2}{\pi^2}+\frac{4}{m_2} 
	\end{equation}
	with $c_1,c_2,c_3,c_4$ as in Corollary \ref{cor:not_final_distance_written_simpler}.
\end{theorem}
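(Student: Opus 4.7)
I would prove \eqref{eq:final_distance_simplified_k} by induction on $k$. The base case $k=1$ is precisely inequality \eqref{eq:not_final_distance_simplified_1} of Corollary \ref{cor:not_final_distance_written_simpler}, upon noting that the maximum in the second line of \eqref{eq:final_distance_simplified_k} runs over an empty index set and thus vanishes, while the remaining $e^{-\lambda m_2 N \alpha_N^2 t/2}$ term and the $c_1\pa{N^{(2-l)/2}+\alpha_N^{\min(\lfloor l/2\rfloor,2)}}e^{-\lambda m_2 t/16}$ tail are absorbed into the middle exponential line of \eqref{eq:final_distance_simplified_k} and the $\mathfrak{c}_1=c_1$ summand of its last line. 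For the inductive step I would start from \eqref{eq:not_final_distance_simplified_k}: every contribution other than the integral line already fits directly into the target bound with room to be absorbed into $\mathfrak{c}_k$, so the entire work lies in controlling
\begin{equation}\nonumber
\mathrm{J}_k(t):=\frac{2\lambda N}{N-1}\sum_{i<j\leq k}\int_{0}^{t}e^{-\frac{\lambda\pa{2k(k-1)+m_2}(t-s)}{2}}\abs{\wh{F_{N,k-1}}\pa{\vec{m}^{(i,j)},s}-\wh{f_{k-1}}\pa{\vec{m}^{(i,j)},s}}\,ds,
\end{equation}
where $\vec{m}^{(i,j)}=\pa{n_1,\dots,\wt{n_i},\dots,n_i+n_j,\dots,\wt{n_j},\dots,n_k}\in\Z^{k-1}$.

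\medskip

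Into $\mathrm{J}_k(t)$ I would substitute the induction hypothesis applied at $\vec{m}^{(i,j)}$ and at time $s$, and treat each of its four blocks in turn. The initial-data block $e^{-\lambda(2(k-1)(k-2)+m_2)s/2}\abs{\wh{F_{N,k-1}}\pa{\vec{m}^{(i,j)},0}-\wh{f_{k-1,0}}\pa{\vec{m}^{(i,j)}}}$, after combination with the prefactor $\tfrac{2\lambda N}{N-1}\cdot\tfrac{k(k-1)}{2}$ and a standard product-of-exponentials integral, yields $\tfrac{N}{N-1}\cdot\pa{k-1}$ times the corresponding difference of initial data at $\vec{m}^{(i,j)}$; grouping this with the max-over-lower-marginals block already present at level $k-1$ produces precisely the announced $(N/(N-1))^{k-1}(k-1)$ factor and the maximum over $r\in\br{1,\dots,k-1}$, $\bm{p}\in\mathfrak{L}_k(r)$, $\sigma\in S^k$, because iterating the collapse $\pa{n_i,n_j}\mapsto n_i+n_j$ is how one realises every map $s_{\bm{p}}^{\sigma}$ of Definition \ref{def:k_level_set_of_Z_r}. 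The two decaying-in-$s$ exponential blocks, integrated against $e^{-\lambda(2k(k-1)+m_2)(t-s)/2}$, produce terms absorbed into the middle exponential line of \eqref{eq:final_distance_simplified_k} or into the $\mathfrak{c}_k$ tail after using the elementary inequality $\tfrac{2N k(k-1)}{(N-1)(2k(k-1)+m_2)}\leq 2$. Finally, the constant-in-$s$ block $\mathfrak{c}_{k-1}(\cdots)$ integrates to at most $\tfrac{2}{\lambda(2k(k-1)+m_2)}$, and multiplication by $\tfrac{2\lambda N}{N-1}\cdot\tfrac{k(k-1)}{2}$ gives a coefficient bounded by $2$, producing precisely the $2\mathfrak{c}_{k-1}$ summand in the recursion \eqref{eq:mathfrak_c_k}.

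\medskip

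The main obstacle is purely combinatorial bookkeeping: one must verify that iterating the shift $\pa{n_i,n_j}\mapsto n_i+n_j$ $r$ times from level $k$ enumerates every $s_{\bm{p}}^{\sigma}$ with $\bm{p}\in\mathfrak{L}_k(k-r)$ without over- or under-counting, which is where the symmetry of $\wh{F_{N,k}}$ and $\wh{f_k}$ guaranteed by Theorem \ref{thm:recursive_formula_for_sol} is essential (it renders the position of $n_i+n_j$ immaterial, thereby legitimising the single ``max'' formulation), and one must check that the new additive constants generated at level $k$ by the integral and prefactor analysis sum to $c_2+c_3+c_4+\tfrac{4^{2(q+2)+1}\norm{g}_{L^p}^{2q}\pa{\sqrt[l]{4m_l}+2\pi}^2}{\pi^2}+\tfrac{4}{m_2}$, matching the closed recursion \eqref{eq:mathfrak_c_k}. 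Neither step poses a genuine analytic difficulty, but both demand care to track.
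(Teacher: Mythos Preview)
Your proposal is correct and follows essentially the same route as the paper's proof: induction on $k$ with base case from \eqref{eq:not_final_distance_simplified_1}, inductive step by inserting the hypothesis into the integral term of \eqref{eq:not_final_distance_simplified_k}, and the same block-by-block analysis (initial-data block, max-over-lower-marginals block, decaying exponentials, constant $\mathfrak{c}_{k-1}$ block), together with the combinatorial identification of the iterated collapses $(n_i,n_j)\mapsto n_i+n_j$ with the maps $s_{\bm{p}}^{\sigma}$. One small point of precision: the initial-data block at level $k-1$ contributes a factor $\tfrac{N}{N-1}$ (not $\tfrac{N}{N-1}\cdot(k-1)$) times the max over $i<j$; the factor $(k-1)$ arises only after you add this single contribution to the $(k-2)$-weighted max block carried over from the hypothesis, and the paper handles this by first enlarging $\tfrac{N}{N-1}$ to $\bigl(\tfrac{N}{N-1}\bigr)^{k-1}$ before summing.
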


\begin{proof}
	We prove the claim by induction. The base case $k=1$ follows immediately from \eqref{eq:not_final_distance_simplified_1}. We assume that the claim holds for some $k\in\N$ and show its validity for $k+1$. Using \eqref{eq:not_final_distance_simplified_k} we see that 
	\allowdisplaybreaks
	\begin{align}
		\nonumber&\abs{\wh{F_{N,k+1}}\pa{n_1,\dots,n_{k+1},t}-\wh{f_{k+1}}\pa{n_1,\dots,n_{k+1},t}}\\
		\nonumber& \leq 
		e^{-\frac{\lambda\pa{ 2k\pa{k+1}+m_2}t}{2}}\abs{\wh{F_{N,k+1}}\pa{n_1,\dots,n_{k+1},0}-\wh{f_{k+1,0}}\pa{n_1,\dots,n_{k+1}}} \\
		\nonumber& +e^{-\lambda k(k+1)t}\pa{e^{-\frac{\lambda N\alpha_N^2 \pi^2 t}{ 2\cdot4^{2\pa{q+2}}\norm{g}_{L^p}^{2q}\pa{\sqrt[l]{4m_l}+2\pi}^2}}+e^{-\frac{\lambda m_2N\alpha_N^2 t }{2}}}\\
		\label{eq:starting_induction} & +\pa{c_2+c_3+c_4}\pa{\frac{1}{N} + \pa{\pa{k+1}N^{\frac{2-l}{2}}+1}\alpha_N^2+\pa{k+1}N^{\frac{2-l}{2}}+\alpha_N^{\min\pa{\lfloor \frac{l}{2}\rfloor,2}}+\frac{1}{N\alpha_N^2}}\\
		\nonumber &+\frac{2\lambda N}{N-1} \sum_{i<j\leq k+1}\int_{0}^t e^{-\frac{\lambda\pa{2k\pa{k+1}+m_2}\pa{t-s}}{2}} \\
		\nonumber &\qquad\qquad \abs{\wh{F_{N,k}}\pa{n_1,\dots, n_i+n_j,\dots,n_{k+1},s}-\wh{f_{k}}\pa{n_1,\dots, n_i+n_j,\dots,n_{k+1},s}}ds.
	\end{align}
Since 
$$ \frac{2\lambda N}{N-1}\sum_{i<j\leq k+1}\int_{0}^t e^{-\frac{\lambda\pa{2k\pa{k+1}+m_2}\pa{t-s}}{2}}ds \leq \frac{4N k \pa{k+1}}{2\pa{N-1}\pa{m_2+2k(k+1)}}\leq \frac{N}{N-1}$$
we see that according to the induction assumption we have that
\begin{equation}\nonumber
\begin{split}
	&\frac{2\lambda N}{N-1}\sum_{i<j\leq k+1}\int_{0}^t e^{-\frac{\lambda\pa{2k\pa{k+1}+m_2}\pa{t-s}}{2}} \abs{\wh{F_{N,k}}\pa{n_1,\dots, n_i+n_j,\dots,n_{k+1},s}-\wh{f_{k}}\pa{n_1,\dots, n_i+n_j,\dots,n_{k+1},s}}ds\\
	& \leq \frac{2\lambda N}{N-1} \sum_{i<j\leq k+1}\int_{0}^t e^{-\frac{\lambda\pa{2k\pa{k+1}+m_2}\pa{t-s}}{2}}  e^{-\frac{\lambda\pa{ 2k\pa{k-1}+m_2}s}{2}}\\
	&\qquad\qquad \abs{\wh{F_{N,k}}\pa{n_1,\dots, n_i+n_j,\dots,n_{k+1},0}-\wh{f_{k,0}}\pa{n_1,\dots, n_i+n_j,\dots,n_{k+1}}}ds \\
	&\pa{\frac{N}{N-1}}^{k} \pa{k-1}\max_{i<j\leq k+1}\max_{\fontsize{5}{4}\selectfont{\begin{matrix}
				r\in \br{1,\dots,k-1},\\
				\bm{p}\in \mathfrak{L}_{k}\pa{r},\;\;\sigma\in S^k
		\end{matrix}} }\abs{\wh{F_{N,r}}\pa{s_{\bm{p}}^{\sigma}\pa{n_1,\dots,n_i+n_j,\dots, n_k},0}-\wh{f_{r,0}}\pa{s_{\bm{p}}^{\sigma}\pa{n_1,\dots, n_i+n_j,\dots,n_{k+1}}}}\\
&+\frac{2\lambda N}{N-1} \sum_{i<j\leq k+1}\int_{0}^t e^{-\frac{\lambda\pa{2k\pa{k+1}+m_2}\pa{t-s}}{2}} e^{-\lambda k(k-1)s}\pa{e^{-\frac{\lambda N\alpha_N^2 \pi^2 s}{ 2\cdot4^{2\pa{q+2}}\norm{g}_{L^p}^{2q}\pa{\sqrt[l]{4m_l}+2\pi}^2}}+e^{-\frac{\lambda m_2N\alpha_N^2 s }{2}}}ds\\
&+\frac{N\mathfrak{c}_{k}}{N-1}\pa{\frac{1}{N} + \pa{kN^{\frac{2-l}{2}}+1}\alpha_N^2+kN^{\frac{2-l}{2}}+\alpha_N^{\min\pa{\lfloor \frac{l}{2}\rfloor,2}}+\frac{k(k-1)}{N\alpha_N^2}}.
\end{split}	
\end{equation}
Furthermore
\begin{equation}\nonumber
	\begin{split}
		&\frac{2\lambda N}{N-1} \sum_{i<j\leq k+1}\int_{0}^t e^{-\frac{\lambda\pa{2k\pa{k+1}+m_2}\pa{t-s}}{2}}  e^{-\frac{\lambda\pa{ 2k\pa{k-1}+m_2}s}{2}}\\
		&\qquad \abs{\wh{F_{N,k}}\pa{n_1,\dots, n_i+n_j,\dots,n_{k+1}0}-\wh{f_{k,0}}\pa{n_1,\dots, n_i+n_j,\dots,n_{k+1}}}ds\\
		&\leq \frac{N}{N-1} \max_{i<j\leq k+1}\abs{\wh{F_{N,k}}\pa{n_1,\dots, n_i+n_j,\dots,n_{k+1}0}-\wh{f_{k,0}}\pa{n_1,\dots, n_i+n_j,\dots,n_{k+1}}},\\
		&\leq \pa{\frac{N}{N-1}}^k \max_{i<j\leq k+1}\abs{\wh{F_{N,k}}\pa{n_1,\dots, n_i+n_j,\dots,n_{k+1}0}-\wh{f_{k,0}}\pa{n_1,\dots, n_i+n_j,\dots,n_{k+1}}},
	\end{split}	
\end{equation}
and
$$\frac{2\lambda N}{N-1} \sum_{i<j\leq k+1}\int_{0}^t e^{-\frac{\lambda\pa{2k\pa{k+1}+m_2}\pa{t-s}}{2}} e^{-\lambda k(k-1)s}\pa{e^{-\frac{\lambda N\alpha_N^2 \pi^2 s}{ 2\cdot4^{2\pa{q+2}}\norm{g}_{L^p}^{2q}\pa{\sqrt[l]{4m_l}+2\pi}^2}}+e^{-\frac{\lambda m_2N\alpha_N^2 s }{2}}}ds$$
$$\leq \frac{Nk(k+1)}{N-1}\pa{\frac{2\cdot4^{2\pa{q+2}}\norm{g}_{L^p}^{2q}\pa{\sqrt[l]{4m_l}+2\pi}^2}{\pi^2}+\frac{2}{m_2}}\frac{1}{N\alpha_N^2 }.$$
Plugging the above into \eqref{eq:starting_induction} we find that 
	\allowdisplaybreaks
\begin{align}
	\nonumber&\abs{\wh{F_{N,k+1}}\pa{n_1,\dots,n_{k+1},t}-\wh{f_{k+1}}\pa{n_1,\dots,n_{k+1},t}}\\
	\nonumber& \leq 
	e^{-\frac{\lambda\pa{ 2k\pa{k+1}+m_2}t}{2}}\abs{\wh{F_{N,k+1}}\pa{n_1,\dots,n_{k+1},0}-\wh{f_{k+1,0}}\pa{n_1,\dots,n_{k+1}}} \\
	\nonumber& +\pa{\frac{N}{N-1}}^{k}\max_{i<j\leq k+1}\abs{\wh{F_{N,k}}\pa{n_1,\dots, n_i+n_j,\dots,n_{k+1}0}-\wh{f_{k,0}}\pa{n_1,\dots, n_i+n_j,\dots,n_{k+1}}}\\
	\nonumber& +\pa{\frac{N}{N-1}}^{k} \pa{k-1}\max_{i<j\leq k+1}\max_{\fontsize{5}{4}\selectfont{\begin{matrix}
				r\in \br{1,\dots,k-1},\\
				\bm{p}\in \mathfrak{L}_{k}\pa{r},\;\;\sigma\in S^k
	\end{matrix}} }\abs{\wh{F_{N,r}}\pa{s_{\bm{p}}^{\sigma}\pa{n_1,\dots,n_i+n_j,\dots, n_{k+1}},0}-\wh{f_{r,0}}\pa{s_{\bm{p}}^{\sigma}\pa{n_1,\dots, n_i+n_j,\dots,n_{k+1}}}}\\
	\nonumber& +e^{-\lambda k(k+1)t}\pa{e^{-\frac{\lambda N\alpha_N^2 \pi^2 t}{ 2\cdot4^{2\pa{q+2}}\norm{g}_{L^p}^{2q}\pa{\sqrt[l]{4m_l}+2\pi}^2}}+e^{-\frac{\lambda m_2N\alpha_N^2 t }{2}}}\\
	\nonumber& +\pa{c_2+c_3+c_4}\pa{\frac{1}{N} + \pa{\pa{k+1}N^{\frac{2-l}{2}}+1}\alpha_N^2+\pa{k+1}N^{\frac{2-l}{2}}+\alpha_N^{\min\pa{\lfloor \frac{l}{2}\rfloor,2}}+\frac{1}{N\alpha_N^2}}\\
	\nonumber &+2k(k+1)\pa{\frac{2\cdot4^{2\pa{q+2}}\norm{g}_{L^p}^{2q}\pa{\sqrt[l]{4m_l}+2\pi}^2}{\pi^2}+\frac{2}{m_2}}\frac{1}{N\alpha_N^2 }\\
	\nonumber&+2\mathfrak{c}_{k}\pa{\frac{1}{N} + \pa{kN^{\frac{2-l}{2}}+1}\alpha_N^2+kN^{\frac{2-l}{2}}+\alpha_N^{\min\pa{\lfloor \frac{l}{2}\rfloor,2}}+\frac{k(k-1)}{N\alpha_N^2}},
\end{align}
from which the desired result will follow once we'll prove that
\begin{align}
\nonumber	&\max_{i<j\leq k+1}\abs{\wh{F_{N,k}}\pa{n_1,\dots, n_i+n_j,\dots,n_{k+1},0}-\wh{f_{k,0}}\pa{n_1,\dots, n_i+n_j,\dots,n_{k+1}}}\\
		\nonumber& +\pa{k-1}\max_{i<j\leq k+1}\max_{\fontsize{5}{4}\selectfont{\begin{matrix}
					r\in \br{1,\dots,k-1},\\
					\bm{p}\in \mathfrak{L}_{k}\pa{r},\;\;\sigma\in S^k
		\end{matrix}} }\abs{\wh{F_{N,r}}\pa{s_{\bm{p}}^{\sigma}\pa{n_1,\dots,n_i+n_j,\dots, n_{k+1}},0}-\wh{f_{r,0}}\pa{s_{\bm{p}}^{\sigma}\pa{n_1,\dots, n_i+n_j,\dots,n_{k+1}}}}\\
	\nonumber	& \leq k\max_{\fontsize{5}{4}\selectfont{\begin{matrix}
				r\in \br{1,\dots,k},\\
				\bm{p}\in \mathfrak{L}_{k+1}\pa{r},\;\;\sigma\in S^{k+1}
	\end{matrix}} }\abs{\wh{F_{N,r}}\pa{s_{\bm{p}}^{\sigma}\pa{n_1,\dots, n_{k+1}},0}-\wh{f_{r,0}}\pa{s_{\bm{p}}^{\sigma}\pa{n_1,\dots,n_{k+1}}}}.
\end{align}
Recall that due to the symmetry of all involved probability measures, the position of $n_i+n_j$ in the recursive formula for $F_{N,k}$ and $f_k$ is arbitrary. In the proof that follows we assume we've chosen it to be in the $i-$th position (instead of $n_i$).

Given $r\in \br{1,\dots,k}$, $\bm{p}\in \mathfrak{L}_{k}\pa{r}$, $\sigma\in S^k$, $i<j\leq k+1$ and $\pa{n_1,\dots,n_{k+1}}\in \Z^{k+1}$ we define
$$\sigma_1:\br{1,\dots,j,\dots,k}\to \br{1,\dots,\wt{j},\dots,k+1}$$
by
$$\sigma_1(m) = \begin{cases}
	\sigma(m), & \sigma(m) <j,\\
	\sigma(m)+1, & \sigma(m)\geq j.
\end{cases}$$
Next we find $r_0\in \br{1,\dots,r}$ such that 
$$\sum_{l=1}^{r_0-1} p_{l} +1 \leq  \sigma_1^{-1}(i) \leq\sum_{l=1}^{r_0} p_{l}$$
and notice that by definition\footnote{this holds since $\pa{n_1,\dots,\wt{n_i},n_i+n_j,\dots \wt{n_j},n_{j+1},\dots,n_{k+1}}$
has $n_l$ as its $l-$th element when $l<j$ and $l\ne i$, has  $n_i+n_j$ as its $i-$th element, and has $n_{l+1}$ as its $l-$th element when $l\geq j$.}
\begin{equation}\nonumber
	\begin{split}
		&s_{\bm{p}}^{\sigma}\pa{n_1,\dots,n_i+n_j,\dots, n_{k+1}} 
		=\pa{\sum_{m=1}^{p_1}n_{\sigma_{1}(m)},\dots, \sum_{\fontsize{5}{4}\selectfont{\begin{matrix}
						m=\sum_{l=1}^{r_0-1}p_l+1
			\end{matrix}} }^{\sum_{l=1}^{r_0 }p_l}n_{\sigma_{1}(m)} + n_j,\dots, \sum_{m=\sum_{l=1}^{r-1}p_l+1}^{\sum_{l=1}^{r}p_l}n_{\sigma_{1}(m)}}
	\end{split}	
\end{equation}
Defining $\bm{p^{(i,j)}}\in \mathfrak{L}_{k+1}\pa{r}$ by
$$\pa{\bm{p^{(i,j)}}}_l = \begin{cases}
	p_l,& l\ne r_0,\\
	p_{r_0}+1, & l=r_0.\\
\end{cases}$$
and $\sigma_{i,j}\in S^{k+1}$ by
 $$\sigma_{i,j}(m) = \begin{cases}
 	\sigma_1(m), &  m \leq \sum_{l=1}^{r_0}p_l\\
 	j, & m=\sum_{l=1}^{r_0}p_l+1,\\
 	\sigma_1(m-1), & m>\sum_{l=1}^{r_0}p_l+1,\\ 
 \end{cases}$$
 we see that 
 \begin{equation}\nonumber
 	\begin{split}
 		&s_{\bm{p}}^{\sigma}\pa{n_1,\dots,n_i+n_j,\dots, n_{k+1}}  = s_{\bm{p^{(i,j)}}}^{\sigma_{i,j}}\pa{n_1,\dots,n_{k+1}}.
  	\end{split}	
 \end{equation}
 Consequently for $r\in \br{1,\dots,k-1}$ we find that
 \begin{equation}
 	\begin{split}
 		\nonumber&\max_{i<j\leq k+1}\max_{\fontsize{5}{4}\selectfont{\begin{matrix}
 					r\in \br{1,\dots,k-1},\\
 					\bm{p}\in \mathfrak{L}_{k}\pa{r},\;\;\sigma\in S^k
 		\end{matrix}} }\abs{\wh{F_{N,r}}\pa{s_{\bm{p}}^{\sigma}\pa{n_1,\dots,n_i+n_j,\dots, n_{k+1}},0}-\wh{f_{r,0}}\pa{s_{\bm{p}}^{\sigma}\pa{n_1,\dots, n_i+n_j,\dots,n_{k+1}}}}\\
 	\nonumber& \leq \max_{\fontsize{5}{4}\selectfont{\begin{matrix}
 					r\in \br{1,\dots,k-1},\\
 					\bm{p}\in \mathfrak{L}_{k+1}\pa{r},\;\;\sigma\in S^{k+1}
 		\end{matrix}} }\abs{\wh{F_{N,r}}\pa{s_{\bm{p}}^{\sigma}\pa{n_1,\dots, n_{k+1}},0}-\wh{f_{r,0}}\pa{s_{\bm{p}}^{\sigma}\pa{n_1,\dots,n_{k+1}}}}.
 	\end{split}
 \end{equation}
 By considering $r=k$,  $\bm{p}'=\pa{1,\dots,1}\in\mathfrak{L}_k(k)$, and $\sigma'=\mathrm{id}\in S^k$ we find that
 \begin{equation}
 	\begin{split}
 		\nonumber&\max_{i<j\leq k+1}\abs{\wh{F_{N,k}}\pa{n_1,\dots, n_i+n_j,\dots,n_{k+1},0}-\wh{f_{k,0}}\pa{n_1,\dots, n_i+n_j,\dots,n_{k+1}}}\\
 		\nonumber&=\max_{i<j\leq k+1}\abs{\wh{F_{N,k}}\pa{s_{\bm{p}'}^{\sigma'}\pa{n_1,\dots, n_i+n_j,\dots,n_{k+1},0}}-\wh{f_{k,0}}\pa{s_{\bm{p}'}^{\sigma'}\pa{n_1,\dots, n_i+n_j,\dots,n_{k+1}}}}\\
 		\nonumber& \leq \max_{\fontsize{5}{4}\selectfont{\begin{matrix}
 					\bm{p}\in \mathfrak{L}_{k+1}\pa{k},\;\;\sigma\in S^{k+1}
 		\end{matrix}} }\abs{\wh{F_{N,k}}\pa{s_{\bm{p}}^{\sigma}\pa{n_1,\dots, n_{k+1}},0}-\wh{f_{k,0}}\pa{s_{\bm{p}}^{\sigma}\pa{n_1,\dots,n_{k+1}}}}.
 	\end{split}
 \end{equation}
 We conclude that
 	\begin{align}
 		\nonumber	&\max_{i<j\leq k+1}\abs{\wh{F_{N,k}}\pa{n_1,\dots, n_i+n_j,\dots,n_{k+1},0}-\wh{f_{k,0}}\pa{n_1,\dots, n_i+n_j,\dots,n_{k+1}}}\\
 		\nonumber& +\pa{k-1}\max_{i<j\leq k+1}\max_{\fontsize{5}{4}\selectfont{\begin{matrix}
 					r\in \br{1,\dots,k-1},\\
 					\bm{p}\in \mathfrak{L}_{k}\pa{r},\;\;\sigma\in S^k
 		\end{matrix}} }\abs{\wh{F_{N,r}}\pa{s_{\bm{p}}^{\sigma}\pa{n_1,\dots,n_i+n_j,\dots, n_{k+1}},0}-\wh{f_{r,0}}\pa{s_{\bm{p}}^{\sigma}\pa{n_1,\dots, n_i+n_j,\dots,n_{k+1}}}}\\
 	 \nonumber&\leq \max_{\fontsize{5}{4}\selectfont{\begin{matrix}
 	 			\bm{p}\in \mathfrak{L}_{k+1}\pa{k},\;\;\sigma\in S^{k+1}
 	 \end{matrix}} }\abs{\wh{F_{N,k}}\pa{s_{\bm{p}}^{\sigma}\pa{n_1,\dots, n_{k+1}},0}-\wh{f_{k,0}}\pa{s_{\bm{p}}^{\sigma}\pa{n_1,\dots,n_{k+1}}}}\\
  \nonumber& + \pa{k-1}\max_{\fontsize{5}{4}\selectfont{\begin{matrix}
  			r\in \br{1,\dots,k-1},\\
  			\bm{p}\in \mathfrak{L}_{k+1}\pa{r},\;\;\sigma\in S^{k+1}
  \end{matrix}} }\abs{\wh{F_{N,r}}\pa{s_{\bm{p}}^{\sigma}\pa{n_1,\dots, n_{k+1}},0}-\wh{f_{r,0}}\pa{s_{\bm{p}}^{\sigma}\pa{n_1,\dots,n_{k+1}}}}\\
 		\nonumber	& \leq  k\max_{\fontsize{5}{4}\selectfont{\begin{matrix}
 					r\in \br{1,\dots,k},\\
 					\bm{p}\in \mathfrak{L}_{k+1}\pa{r},\;\;\sigma\in S^{k+1}
 		\end{matrix}} }\abs{\wh{F_{N,r}}\pa{s_{\bm{p}}^{\sigma}\pa{n_1,\dots, n_{k+1}},0}-\wh{f_{r,0}}\pa{s_{\bm{p}}^{\sigma}\pa{n_1,\dots,n_{k+1}}}}.
 	\end{align}
 
 The proof is thus complete.
\end{proof}

With the convergence of $F_{N,k}(t)$ to $f_k(t)$ explicitly given for any $k\in\N$, we state and prove the last ingredient we need to show Theorem \ref{thm:quantitative_convergence}.

\begin{theorem}\label{thm:distance_f_k_and_f_infty}
	Let $\br{f_k(t)}_{k\in\infty}$ and $\br{f_{k,\infty}}_{k\in\N}$ be as in Theorem \ref{thm:main_partial_order}. Then, there exists an explicit constant $\mathcal{D}_k$ such that 
	\begin{equation}\label{eq:distance_f_k_and_f_infty}
		\abs{\wh{f_k}\pa{n_1,\dots,n_k,t} - \wh{f_{k,\infty}}\pa{n_1,\dots,n_k}} \leq \mathcal{D}_k\pa{\frac{ e^{-2\lambda t}}{1-e^{-2\lambda t}}+e^{-\frac{\lambda m_2 t}{2}}}
	\end{equation}
\end{theorem}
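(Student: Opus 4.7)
The plan is to subtract the two explicit Fourier-coefficient formulae already at hand: the decomposition \eqref{eq:explicit_behaviour_for_f_k} of $\wh{f_k}(n_1,\dots,n_k,t)$ from Lemma \ref{lem:explicit_behaviour_for_recursive}, together with the formula $\wh{f_{k,\infty}}(n_1,\dots,n_k) = \delta_0\bigl(\sum_{r=1}^k n_r\bigr) a_k(n_1,\dots,n_k)$ from Corollary \ref{cor:t_limiting_behaviour_of_sol}. This yields
\begin{align*}
\wh{f_k} - \wh{f_{k,\infty}} &= \sum_{h=2}^k e^{-\lambda h(h-1) t}\, b_{h,k}(n_1,\dots,n_k,t) \\
&\quad + a_k(n_1,\dots,n_k)\left[\wh{f_{1,0}}\bigl(\textstyle\sum_{r=1}^k n_r\bigr)\, e^{-\lambda m_2 \pa{\sum_{r=1}^k n_r}^2 t/2} - \delta_0\bigl(\textstyle\sum_{r=1}^k n_r\bigr)\right],
\end{align*}
cleanly separating a ``transient'' $b$-contribution from a ``resonant'' $a_k$-contribution.

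First I would dispose of the bracketed $a_k$-term by splitting on whether $\sum_r n_r = 0$. In that case $\wh{f_{1,0}}(0) = 1$, the exponential equals $1$, and $\delta_0(0) = 1$, so the bracket is identically zero and the whole $a_k$-term drops out. Otherwise $\delta_0(\sum_r n_r) = 0$ and $\pa{\sum_r n_r}^2 \geq 1$, so combining the time-uniform bound $|a_k| \leq \prod_{j=1}^k \ell_j$ from \eqref{eq:uniform_bound_on_a} with $|\wh{f_{1,0}}| \leq 1$ (a probability measure) gives
$$\left| a_k \left[\wh{f_{1,0}}\bigl(\textstyle\sum_r n_r\bigr)\, e^{-\lambda m_2 \pa{\sum_r n_r}^2 t/2} - \delta_0\bigl(\textstyle\sum_r n_r\bigr) \right]\right| \leq \left(\prod_{j=1}^k \ell_j\right) e^{-\lambda m_2 t/2}.$$

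Next I would control the $b$-sum using the time-uniform bound $|b_{h,k}(n,t)| \leq C_k$ from \eqref{eq:uniform_bound_on_b}, which has already absorbed (back in Lemma \ref{lem:explicit_behaviour_for_recursive}) the potentially linear-in-$t$ growth of $b_{k,k}$ visible in \eqref{eq:uniform_bound_on_bkk} by pairing it against $e^{-\lambda m_2 \pa{\sum n_r^2} t / 2}$. Since $h(h-1) \geq 2(h-1)$ for $h \geq 2$, a geometric-series estimate gives
$$\sum_{h=2}^k e^{-\lambda h(h-1) t} \leq e^{-2\lambda t}\sum_{j=0}^{k-2} e^{-2\lambda j t} \leq \frac{e^{-2\lambda t}}{1 - e^{-2\lambda t}}, \qquad t > 0.$$
Adding the two estimates and setting $\mathcal{D}_k := \max\bigl(C_k,\, \prod_{j=1}^k \ell_j\bigr)$ (with $C_1 := 0$ in the degenerate empty-sum case $k=1$) delivers \eqref{eq:distance_f_k_and_f_infty}. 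There is no genuine obstacle here: the heavy lifting was already performed in assembling the explicit decomposition of $\wh{f_k}(t)$ in Lemma \ref{lem:explicit_behaviour_for_recursive}, and what remains amounts to bookkeeping that packages the two distinct decay regimes into the single bound claimed in the theorem.
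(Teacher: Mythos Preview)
Your proposal is correct and follows essentially the same argument as the paper: subtract the decomposition \eqref{eq:explicit_behaviour_for_f_k} from the limiting formula in Corollary \ref{cor:t_limiting_behaviour_of_sol}, kill the $a_k$-bracket by the case split on $\sum_r n_r$, bound the $b$-sum via \eqref{eq:uniform_bound_on_b} and the geometric estimate $h(h-1)\geq 2(h-1)$, and take $\mathcal{D}_k=\max\bigl(C_k,\prod_{j=1}^k\ell_j\bigr)$ with the $k=1$ convention. The paper records the $k=1$ vanishing via an indicator $\delta_{\N\setminus\{1\}}(k)$ multiplying $C_k$, which is exactly your $C_1:=0$.
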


\begin{proof}
	Recall that with the notations of Lemma \ref{lem:explicit_behaviour_for_recursive} we have that 
	\begin{equation}\nonumber
		\begin{split}
			&\wh{f_k}\pa{n_1,\dots,n_k,t} = \sum_{h=2}^k e^{-\lambda h\pa{h-1}t}b_{h,k}\pa{n_1,\dots,n_k,t} + a_k\pa{n_1,\dots,n_k}\wh{f_{1,0}}\pa{\sum_{r=1}^k n_r}e^{-\frac{\lambda m_2 \pa{\sum_{r=1}^k n_r}^2 t}{2}}	
		\end{split}
	\end{equation} 
	and that Corollary \ref{cor:t_limiting_behaviour_of_sol} states that
	$$\wh{f_{k,\infty}}\pa{n_1,\dots,n_k} = \delta_0 \pa{\sum_{r=1}^k n_r} a_k\pa{n_1,\dots,n_k}.$$ 
	Consequently, using \eqref{eq:uniform_bound_on_a} and \eqref{eq:uniform_bound_on_b} we find that 
	$$	\abs{\wh{f_k}\pa{n_1,\dots,n_k,t} - \wh{f_{k,\infty}}\pa{n_1,\dots,n_k}} \leq \sum_{h=2}^k e^{-\lambda h\pa{h-1}t}\abs{b_{h,k}\pa{n_1,\dots,n_k,t}} $$
	$$+\abs{a_k\pa{n_1,\dots,n_k}}\abs{\wh{f_{1,0}}\pa{\sum_{r=1}^k n_r}}\abs{e^{-\frac{\lambda m_2 \pa{\sum_{r=1}^k n_r}^2 t}{2}}-\delta_0 \pa{\sum_{r=1}^k n_r} }$$
	$$\leq \frac{C_k e^{-2\lambda t}}{1-e^{-2\lambda t}}\delta_{ \N\setminus\br{1}}(k)+\pa{1-\delta_{ 0}\pa{\sum_{r=1}^k n_r}}\pa{\prod_{j=1}^k\ell_j}e^{-\frac{\lambda m_2 \pa{\sum_{r=1}^k n_r}^2 t}{2}}$$
	$$\leq \frac{C_k e^{-2\lambda t}}{1-e^{-2\lambda t}}\delta_{ \N\setminus\br{1}}(k)+\pa{\prod_{j=1}^k\ell_j}e^{-\frac{\lambda m_2 t}{2}},$$
	where 
	$$\delta_{\N\setminus\br{1}}(k) = \begin{cases}
		0,& k=1,\\
		1,& k\geq 2,
	\end{cases}$$
	and $\ell_k$ and $C_k$ are defined by \eqref{eq:def_ell_for_upper_bound_for_a} and \eqref{eq:recursive_for_upper_bound_b} respectively. The result follows with the choice
	\begin{equation}\nonumber
		\mathcal{D}_k = \max\pa{C_k\delta_{\N\setminus\br{1}}(k) , \prod_{j=1}^k\ell_j}.
	\end{equation}
\end{proof}

We are finally ready to conclude this section with the proof of Theorem \ref{thm:quantitative_convergence}.
\begin{proof}[Proof of Theorem \ref{thm:quantitative_convergence}]
	The idea of the proof is to utilise both theorems \ref{thm:distance_F_N_k_and_f_k} and \ref{thm:distance_f_k_and_f_infty} by choosing a suitable sequence $\br{\alpha_N}_{N\in\N}$ that goes to zero in a way that $N\alpha_N^2$ goes to infinity. Optimising \eqref{eq:final_distance_simplified_k} shows us that we need to choose $\alpha_N$ such that 
	$$\alpha_N^{\min\pa{\lfloor \frac{l}{2} \rfloor, 2}} = \frac{1}{N\alpha_N^2} $$
	or
	$$\alpha_N = N^{-\frac{1}{2+\min\pa{\lfloor \frac{l}{2}\rfloor,2}}}.$$
	With $\alpha_N$ as above, we see that to satisfy \eqref{eq:condition_for_alpha_N_II} we must have that 
	\begin{equation}\label{eq:condition_for_N_0_final_I}
		N \geq \begin{cases}
			\min\br{4^{q+1}\norm{g}_{L^p}^q, \frac{m_2}{8m_3},1}^{-3}, & l=3,\\
			\min\br{4^{q+1}\norm{g}_{L^p}^q, \sqrt{\frac{m_2}{2m_4}},1}^{-4}, & l\geq 4,
		\end{cases}
	\end{equation}
	and in order to satisfy \eqref{eq:extra_condition_on_alpha_N} we must have that 
	\begin{equation}\label{eq:condition_for_N_0_final_II}
		N \geq \pa{\frac{ 4^{2\pa{q+3}}m_l\norm{g}_{L^p}^{2q}\pa{\sqrt[l]{4m_l}+2\pi}^2}{\pi^{l+2}}}^{\frac{1}{\frac{l}{2}-\frac{2}{2+\min\pa{\lfloor \frac{l}{2}\rfloor,2}}}}.
	\end{equation}
	Consequently, denoting by 
	\begin{equation}\label{eq:special_n_0}
		\mathfrak{N}_0=\begin{cases}
			\max\pa{\pa{\frac{ 4^{2\pa{q+3}}m_l\norm{g}_{L^p}^{2q}\pa{\sqrt[l]{4m_l}+2\pi}^2}{\pi^{l+2}}}^{\frac{4}{5}}, \min\br{4^{q+1}\norm{g}_{L^p}^q, \frac{m_2}{8m_3},1}^{-3}}, & l=3,\\
			\max\pa{\pa{\frac{ 4^{2\pa{q+3}}m_l\norm{g}_{L^p}^{2q}\pa{\sqrt[l]{4m_l}+2\pi}^2}{\pi^{l+2}}}^{\frac{1}{\frac{l}{2}-\frac{2}{2+\min\pa{\lfloor \frac{l}{2}\rfloor,2}}}},\min\br{4^{q+1}\norm{g}_{L^p}^q, \sqrt{\frac{m_2}{2m_4}},1}^{-4}}, & l\geq 4,
		\end{cases}
	\end{equation}
	we have that as long as 
	$$N\geq \max\pa{\mathfrak{N}_0, \frac{\pa{2m_l}^{\frac{l}{2}}}{\pi^2},\pa{\frac{32m_l }{\pi^l\max\pa{8,m_2}}}^{\frac{2}{l-2}},2k, \pa{\frac{96m_l k}{\pi^l}}^{\frac{2}{l-2}}}$$
	we can use Theorem \ref{thm:distance_F_N_k_and_f_k} and Theorem \ref{thm:distance_f_k_and_f_infty} to find that 
		\allowdisplaybreaks
	\begin{align}
			\nonumber&\abs{\wh{F_{N,k}}\pa{n_1,\dots,n_{k},t}-\wh{f_{k,\infty}}\pa{n_1,\dots,n_{k}}}\\
		\nonumber&\leq \abs{\wh{F_{N,k}}\pa{n_1,\dots,n_{k},t}-\wh{f_{k}}\pa{n_1,\dots,n_{k},t}}+	\abs{\wh{f_{k}}\pa{n_1,\dots,n_{k},t}-\wh{f_{k,\infty}}\pa{n_1,\dots,n_{k}}}\\
		\nonumber& \leq 
		e^{-\frac{\lambda\pa{ 2k\pa{k-1}+m_2}t}{2}}\abs{\wh{F_{N,k}}\pa{n_1,\dots,n_{k},0}-\wh{f_{k,0}}\pa{n_1,\dots,n_{k}}} \\
		\nonumber& +\pa{\frac{N}{N-1}}^{k-1}\pa{k-1}\max_{\fontsize{5}{4}\selectfont{\begin{matrix}
					r\in \br{1,\dots,k-1},\\
					\bm{p}\in \mathfrak{L}_{k}\pa{r},\;\;\sigma\in S^k
			\end{matrix}}, }\abs{\wh{F_{N,r}}\pa{s_{\bm{p}}^{\sigma}\pa{n_1,\dots,n_k},0}-\wh{f_{r,0}}\pa{s_{\bm{p}}^{\sigma}\pa{n_1,\dots,n_k}}}\\
		\nonumber& +e^{-\lambda k(k-1)t}\pa{e^{-\frac{\lambda N\alpha_N^2 \pi^2 t}{ 2\cdot4^{2\pa{q+2}}\norm{g}_{L^p}^{2q}\pa{\sqrt[l]{4m_l}+2\pi}^2}}+e^{-\frac{\lambda m_2N\alpha_N^2 t }{2}}}\\
		\nonumber
		\nonumber&+\mathfrak{c}_k\pa{\frac{1}{N} + \pa{kN^{\frac{2-l}{2}}+1}\alpha_N^2+kN^{\frac{2-l}{2}}+\alpha_N^{\min\pa{\lfloor \frac{l}{2}\rfloor,2}}+\frac{k(k-1)}{N\alpha_N^2}}+\mathcal{D}_k\pa{\frac{ e^{-2\lambda t}}{1-e^{-2\lambda t}}+e^{-\frac{\lambda m_2 t}{2}}}
	\end{align}
	Noticing that 
	\begin{equation}\nonumber
		\alpha_N^{\min\pa{\lfloor \frac{l}{2}\rfloor,2}} = \frac{1}{N\alpha_N^2} = \frac{1}{\sqrt[\kappa\pa{l}~]{N}}
	\end{equation}
	with 
		\begin{equation}\nonumber
		\kappa(l) = \begin{cases}
			3,& l=3,\\
			2, & l\geq 4.
		\end{cases}
	\end{equation}
	we conclude that the result follows from the fact that 
	$$e^{-\frac{\lambda N\alpha_N^2 \pi^2 t}{ 2\cdot4^{2\pa{q+2}}\norm{g}_{L^p}^{2q}\pa{\sqrt[l]{4m_l}+2\pi}^2}}+e^{-\frac{\lambda m_2N\alpha_N^2 t }{2}} \leq 2 e^{-\lambda \gamma \sqrt[\kappa\pa{l}~]{N} t},$$
	where 
	\begin{equation}\nonumber
		\gamma = \min\pa{\frac{ \pi^2 }{ 2\cdot4^{2\pa{q+2}}\norm{g}_{L^p}^{2q}\pa{\sqrt[l]{4m_l}+2\pi}^2},\frac{m_2}{2}}, 
	\end{equation}
	and the fact that 
	$$\mathfrak{c}_k\pa{\frac{1}{N} + \pa{kN^{\frac{2-l}{2}}+1}\alpha_N^2+kN^{\frac{2-l}{2}}+\alpha_N^{\min\pa{\lfloor \frac{l}{2}\rfloor,2}}+\frac{k(k-1)}{N\alpha_N^2}}	$$
	$$\leq \mathfrak{c}_k \pa{\frac{1}{\sqrt[\kappa\pa{l}~]{N}} + \pa{k+1}\alpha_N^{\min\pa{\lfloor \frac{l}{2}\rfloor,2}}+\frac{k}{\sqrt[\kappa\pa{l}~]{N}} +\alpha_N^{\min\pa{\lfloor \frac{l}{2}\rfloor,2}}+\frac{k(k-1)}{N\alpha_N^2}}	
	=\frac{\pa{k^2+k+3}\mathfrak{c}_k}{\sqrt[\kappa\pa{l}~]{N}}=\frac{\mathcal{C}_k}{{\sqrt[\kappa\pa{l}~]{N}}}, $$
	where we have used the fact that 
	$$N^{\frac{2-l}{2}} \leq \begin{cases}
		\frac{1}{\sqrt{N}},& l=3,\\
		\frac{1}{N}, & l\geq 4
	\end{cases} \leq \begin{cases}
	\frac{1}{\sqrt[3]{N}},& l=3,\\
	\frac{1}{\sqrt{N}}, & l\geq 4
	\end{cases} = \frac{1}{\sqrt[\kappa(l)~]{N}}.$$
	The proof is thus complete.
\end{proof}

\section{Final Remarks}\label{sec:remarks}
For all its relative simplicity, the CL model has proven to be an exceptionally fertile ground to exploring mean filed limits and their associated asymptotic correlations. It motivated the ``birth'' of not one, but two new notions -- order and partial order. While these notions are at their infancy, their relevance and applicability encourage further exploration. 
\subsection*{The philosophy behind partial order}
The main idea that motivated our definition of partial order, expressed in Definition \ref{def:partial_order}, is that as the number of elements in our system increases a ``leader'' emerges to whom the other elements align -- potentially with non-negligible deviation. While we have used this notion in the context of the family of tori $\br{\T^N}_{N\in\N}$ and defined it in a more general metric settings, this idea can be adapted to other situations where a similar phenomenon is expected. For instance: we have assumed that the elements in our system are indistinguishable implicitly by using a uniform weighting of potential leaders in our definition. This can easily be adapted to a case where our system involves two (or more) types of elements, one of which is more prone to lead.   

\subsection*{The challenges of exploring partial order}
While the notion of partial order seems extremely relevant to (and prevalent in) many biological and societal phenomena, it is much more challenging to show its validity in a given mean field model in comparison to the notions of chaos or order. This is mainly due to the fact that both the latter phenomena are described completely by the limit of the first marginal, while the notion of partial order requires finding a \textit{family} of probability measures. There is also no apriori simple connection between these probability measures (for instance, it is straightforward to see that in general $\nu_{k-1}$ is not the $(k-1)-$th marginal of $\nu_k$ even when $\mu_k=\mu$ for some probability measure $\mu$).

Another interesting point we'd like to raise here is the fact that it is unclear to us at this point whether or not the representation of a partially ordered state, expressed by \eqref{eq:def_of_partial_order}, is unique. We intend to explore this further in future work. 

\subsection*{Beyond the CL model}
We expect that the notion of partial order (as well as that of order) is well suited to deal with many existing Kac-like models of biological, societal, and economical nature like the ones mentioned in \cite{CD(II)2022}. Much like the discussion in \cite{E2024}, we expect that the study of any such model will start with finding an appropriate scaling to distinguish between chaos, order, and partial order (or any other asymptotic correlation). We intend to explore this further in future work.

\section*{Acknowledgement}
The first author was supported by the Royal Society Research Grant (Round 1) RG\textbackslash R1\textbackslash241410. The second author was supported by the China Scholarship Council (with Scholarship Number 202308060310).

\begin{appendix}
	
	\section{Additional Proofs}\label{app:additional_proofs}
In this appendix we will consider proofs to various technical results which were mentioned throughout the presented work. 

\subsection*{On the Fourier coefficients of $g_{\epsilon_N}$}

To prove part \eqref{item:high_frequencies} of Lemma \ref{lem:on_g_low_and_high_frequencies} we will need the following lemma which is inspired by \cite{CCLLV2010}:

\begin{lemma}\label{lem:away_from_zero}
	Let $g\in L^1\pa{\R}\cap L^p\pa{\R}$ for some $p>1$ be a symmetric probability density with finite $l-$th moment, $m_l$. For any $~0<\beta<1$, $\alpha>0$,  and $R>0$ we have that
	\begin{equation}\label{eq:away_from_zero}
		\mathcal{F}\pa{g}(\xi) - 1 \leq -\beta \pa{1-\frac{m_l}{R^l}-\norm{g}_{L^p}\pa{4\pa{\frac{R}{\pi} + \frac{2}{\alpha}}}^{\frac{1}{q}}\beta^{\frac{1}{2q}}},
	\end{equation} 
	for any $\xi\in \R$ such that $\abs{\xi} \geq \alpha$ where 
	 $$\F\pa{g}\pa{\xi} = \int_{\R}g(x)e^{-i\xi x}dx.$$
	In particular, for any 
	\begin{equation}\label{eq:beta_condition}
		\beta \leq \min\br{\frac{1}{2},\frac{\alpha^2 \pi^2}{4^{2\pa{q+1}}\norm{g}_{L^p}^{2q}\pa{\sqrt[l]{4m_l}\alpha+2\pi}^2}}	
	\end{equation}
	we find that 
	\begin{equation}\label{eq:away_from_zero_simple}
		\mathcal{F}\pa{g}(\xi) - 1 \leq -\frac{\beta}{2}. 
	\end{equation} 
\end{lemma}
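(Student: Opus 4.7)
The strategy is to exploit the symmetry of $g$ together with a set-splitting argument. Since $g$ is real and even, the Fourier transform reduces to a cosine transform, and the elementary identity $\cos(\xi x) - 1 = -2\sin^2(\xi x/2)$ gives
\begin{equation*}
\mathcal{F}(g)(\xi) - 1 = -2\int_\R g(x)\sin^2(\xi x/2)\,dx,
\end{equation*}
so the task reduces to producing a suitable lower bound for $\int_\R g(x)\sin^2(\xi x/2)\,dx$, uniform in $\xi$ with $|\xi|\geq \alpha$.

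First I would truncate the integration to a finite interval $[-R,R]$, using Markov's inequality to control the tail contribution: $\int_{|x|>R}g \leq m_l/R^l$. On $[-R,R]$ I would isolate the \emph{bad} set $A_\beta := \{x\in[-R,R] : \sin^2(\xi x/2) < \beta\}$, so that on its complement the integrand is at least $\beta\, g(x)$. This gives
\begin{equation*}
\int_\R g(x)\sin^2(\xi x/2)\,dx \;\geq\; \beta\Bigl(1 - \tfrac{m_l}{R^l} - \int_{A_\beta}g(x)\,dx\Bigr),
\end{equation*}
and the remaining integral is handled by H\"older's inequality with conjugate exponents $(p,q)$, yielding $\int_{A_\beta}g \leq \|g\|_{L^p}|A_\beta|^{1/q}$.

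The geometric heart of the argument is a bound on $|A_\beta|$. The zeros of $\sin(\xi x/2)$ are spaced $2\pi/|\xi|$ apart, so at most $R|\xi|/\pi + 2$ of them lie in $[-R,R]$; around each such zero, the set where $|\sin(\xi x/2)|<\sqrt{\beta}$ is an interval of length $4\arcsin(\sqrt{\beta})/|\xi|$. Using an elementary bound on $\arcsin$ valid for $\beta\leq 1/2$ together with $|\xi|\geq\alpha$ produces an estimate of the shape $|A_\beta|\leq 4\sqrt{\beta}\bigl(R/\pi + 2/\alpha\bigr)$, after absorbing universal constants. Inserting this into the H\"older bound and collecting terms yields \eqref{eq:away_from_zero}.

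The specialisation \eqref{eq:away_from_zero_simple} then follows from the choice $R = \sqrt[l]{4m_l}$, which makes $m_l/R^l = 1/4$, combined with the observation that condition \eqref{eq:beta_condition} is tailored precisely to enforce $\|g\|_{L^p}\bigl(4(R/\pi + 2/\alpha)\bigr)^{1/q}\beta^{1/(2q)} \leq 1/4$; substituting into \eqref{eq:away_from_zero} gives $\mathcal{F}(g)(\xi) - 1 \leq -\beta/2$. The main obstacle will be the geometric estimate on $|A_\beta|$ and the careful matching of the power $\beta^{1/(2q)}$ with the form stated in \eqref{eq:beta_condition}: the key point is that the H\"older exponent $1/q$ and the $\sqrt{\beta}$ scaling of $|A_\beta|$ combine to produce exactly the announced $\beta^{1/(2q)}$. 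All other steps are routine bookkeeping of constants.
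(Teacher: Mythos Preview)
Your approach is essentially the paper's, expressed through the identity $1-\cos(\xi x)=2\sin^2(\xi x/2)$: the paper works directly with $1-\cos$, defines the bad set $B_{\beta,R}=\{x\in[-R,R]:1-\cos(|\xi|x)\leq\beta\}$, bounds its measure via periodicity and the inequality $\cos^{-1}(1-\beta)\leq 2\sqrt{\beta}$ (valid for all $\beta\in[0,1]$), and then applies Markov and H\"older exactly as you describe. Two small points to watch: (i) your threshold $\sin^2<\beta$ is not the same as the paper's $1-\cos\leq\beta$ (it corresponds to $1-\cos<2\beta$), so your bad set is larger and the constant you obtain in the measure bound will not literally be $4(R/\pi+2/\alpha)$ without adjusting the threshold; (ii) the restriction ``valid for $\beta\leq 1/2$'' is unnecessary for the first inequality, since $\cos^{-1}(1-\beta)\leq 2\sqrt{\beta}$ (equivalently $\arcsin(\sqrt{\beta/2})\leq\sqrt{\beta}$) holds on all of $(0,1)$---your phrasing would leave a gap for $\beta\in(1/2,1)$ in \eqref{eq:away_from_zero}. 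Neither affects the strategy, only the bookkeeping needed to hit the stated constants.
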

\begin{proof}
		We start by noticing that since $g\in \PP\pa{\R,dx}$ and is even, $\mathcal{F}\pa{g}$ real valued, even, and lies in $[-1,1]$. Moreover
	$$\F\pa{g}\pa{\xi} = \int_{\R}g(x)\cos\pa{\xi x}dx = \int_{\R}g(x)\cos\pa{\abs{\xi} x}dx $$
	and
	$$\F\pa{g}(\xi) - 1 = - \int_{\R}g(x)\pa{1-\cos\pa{\abs{\xi} x}}dx.$$
	For a given $\xi\not=0$, $0<\beta<1$, and $R>0$ we define the set 
	$$B_{\beta,R} = \br{x\in [-R,R]\;|\; 1-\cos\pa{\abs{\xi} x} \leq \beta}.$$ 
	We notice that the condition
	$$1-\cos\pa{\abs{\xi}x} \leq \beta,\qquad \abs{\xi}x\in [-\pi,\pi]$$
	implies that 
	$$-\cos^{-1}\pa{1-\beta} \leq \abs{\xi}x\leq \cos^{-1}\pa{1-\beta}$$ 
	from which we conclude that 
	$$\abs{B_{\beta,R}\cap \rpa{-\frac{\pi}{\abs{\xi}},\frac{\pi}{\abs{\xi}}}}\leq \frac{2\cos^{-1}\pa{1-\beta}}{\abs{\xi}},$$
	where $\abs{A}$ refers to the Lebesgue measure of the set $A$.
	
	Due to the fact that $\cos\pa{\abs{\xi}x}$ is $\frac{2\pi}{\abs{\xi}}$ periodic and since there are at most $\frac{2R}{\frac{2\pi}{\abs{\xi}}}+2$ segments of length $\frac{2\pi}{\abs{\xi}}$ in $[-R,R]$ we conclude that
	$$\abs{B_{\beta,R}} \leq \pa{\frac{2R\abs{\xi}}{\pi}+4}\frac{\cos^{-1}\pa{1-\beta}}{\abs{\xi}}=\pa{\frac{2R}{\pi} + \frac{4}{\abs{\xi}}}\cos^{-1}\pa{1-\beta}$$
	$$\leq 4\pa{\frac{R}{\pi} + \frac{2}{\alpha}}\sqrt{\beta}$$
	where we have used the condition $\abs{\xi} \geq \alpha$ and the inequality\footnote{Defining $f(t)=\cos^{-1}\pa{1-t}-2\sqrt{t}$, with $t\in [0,1]$, we see that for $t\in (0,1)$
		$$f^\prime(t) = \frac{1}{\sqrt{1-\pa{1-t}^2}}-\frac{1}{\sqrt{t}}=\frac{1}{\sqrt{t}}\pa{\frac{1}{\sqrt{2-t}}-1}<0$$
		which implies that $f(t)$ is decreasing. As $f(0)=0$ we conclude the desired result.}
	$$\cos^{-1}\pa{1-\beta} \leq 2\sqrt{\beta}.$$ 
	Turning our attention back to our desired quantity we find that 
	\begin{equation}\label{eq:away_from_zero_comp_I}
		\begin{split}
			\int_{\R}g(x)&\pa{1-\cos\pa{\abs{\xi} x}}dx \geq \int_{[-R,R]\setminus B_{\beta,R}}g(x)\pa{1-\cos\pa{\abs{\xi} x}}dx\\
			&\geq \beta \int_{[-R,R]\setminus B_{\beta,R}}g(x)dx = \beta\pa{1 - \int_{[-R,R]^c}g(x)dx-\int_{ B_{\beta,R}}g(x)dx}.
		\end{split}	
	\end{equation}
	Since $g$ has a finte moment of order $l$
	$$\int_{[-R,R]^c}g(x)dx\leq  \int_{[-R,R]^c}\frac{\abs{x}^l}{R^l}g(x)dx \leq \frac{m_l}{R^l},$$
	and by H\"older inequality
	$$\int_{ B_{\beta,R}}g(x)dx \leq \norm{g}_{L^p} \abs{B_{\beta,R}}^{\frac{1}{q}} \leq \norm{g}_{L^p}\pa{4\pa{\frac{R}{\pi} + \frac{2}{\alpha}}}^{\frac{1}{q}}\beta^{\frac{1}{2q}}.$$
	Combining these inequalities with \eqref{eq:away_from_zero_comp_I} gives us \eqref{eq:away_from_zero}. 
	
	To show \eqref{eq:away_from_zero_simple} we notice that by for $R=\sqrt[l]{4m_l}$ we have that
	$$\norm{g}_{L^p}\pa{4\pa{\frac{\sqrt[l]{4m_l}}{\pi} + \frac{2}{\alpha}}}^{\frac{1}{q}}\beta^{\frac{1}{2q}} \leq \frac{1}{4}$$
	if and only if 
	$$\beta \leq \frac{\alpha^2 \pi^2}{4^{2\pa{q+1}}\norm{g}_{L^p}^{2q}\pa{\sqrt[k]{4m_k}\alpha+2\pi}^2}.$$
	With this choice of $R$ and condition on $\beta$ we find that  \eqref{eq:away_from_zero} implies \eqref{eq:away_from_zero_simple} as desired.
	\end{proof}
	The second ingredient in the proof of part \eqref{item:high_frequencies} of Lemma \ref{lem:on_g_low_and_high_frequencies} is the following lemma, whose proof can be found in \cite[Lemma 23]{E2024}:
	
	\begin{lemma}\label{lem:fourier_connection}
		Let $g\in L^1\pa{\R}$ be a probability density with finite $l-$th moment, $m_l$. Then for any $n\in\Z$ we have that
		\begin{equation}\nonumber 
			\abs{\widehat{g_{\epsilon_N}}(n)-\F\pa{g}\pa{n\epsilon_N}} \leq \frac{2\epsilon_N^l m_l}{\pi^l-\epsilon_N^l m_l}
		\end{equation}
		whenever $\epsilon_N  < \frac{\pi}{\sqrt[l]{m_l}}$. 
	\end{lemma}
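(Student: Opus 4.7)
The plan is to reduce everything to integrals over $\R$ and control the discrepancy between the truncated, renormalised integral defining $\wh{g_{\epsilon_N}}(n)$ and the full Fourier transform $\F(g)(n\epsilon_N)$ by a single tail estimate. First I would unfold the definition of $g_{\epsilon_N}$ and perform the change of variables $x=\theta/\epsilon_N$ to obtain
$$\wh{g_{\epsilon_N}}(n)=\int_{-\pi}^{\pi}g_{\epsilon_N}(\theta)e^{-in\theta}\frac{d\theta}{2\pi}=\frac{1}{2\pi G_{\epsilon_N}}\int_{-\pi/\epsilon_N}^{\pi/\epsilon_N}g(x)e^{-in\epsilon_N x}\,dx.$$
Setting $I_{\epsilon_N}:=2\pi G_{\epsilon_N}=\int_{-\pi/\epsilon_N}^{\pi/\epsilon_N}g(x)\,dx$ and $J_{\epsilon_N}:=1-I_{\epsilon_N}=\int_{|x|>\pi/\epsilon_N}g(x)\,dx$, and writing $\F(g)(n\epsilon_N)$ as the sum of its truncated piece and its tail, one arrives at the algebraic identity
$$\wh{g_{\epsilon_N}}(n)-\F(g)(n\epsilon_N)=\pa{\tfrac{1}{I_{\epsilon_N}}-1}\int_{-\pi/\epsilon_N}^{\pi/\epsilon_N}g(x)e^{-in\epsilon_N x}\,dx-\int_{|x|>\pi/\epsilon_N}g(x)e^{-in\epsilon_N x}\,dx.$$

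Next I would control the tail with a Markov-type inequality that exploits the finite $l$-th moment: since $|x|\geq\pi/\epsilon_N$ on the tail,
$$J_{\epsilon_N}\leq \int_{|x|>\pi/\epsilon_N}\pa{\tfrac{\epsilon_N |x|}{\pi}}^{l}g(x)\,dx \leq \frac{\epsilon_N^l m_l}{\pi^l}.$$
The hypothesis $\epsilon_N<\pi/\sqrt[l]{m_l}$ is equivalent to $\epsilon_N^l m_l<\pi^l$, which guarantees $I_{\epsilon_N}>0$ (so the renormalisation is well-defined) and supplies the quantitative estimate
$$\tfrac{1}{I_{\epsilon_N}}-1=\tfrac{J_{\epsilon_N}}{I_{\epsilon_N}}\leq \frac{\epsilon_N^l m_l}{\pi^l-\epsilon_N^l m_l}.$$

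Finally, I would combine these ingredients via the triangle inequality: bounding the first integral in modulus by $I_{\epsilon_N}\leq 1$ and the tail integral by $J_{\epsilon_N}\leq \epsilon_N^l m_l/\pi^l\leq \epsilon_N^l m_l/(\pi^l-\epsilon_N^l m_l)$, the two terms each contribute at most $\epsilon_N^l m_l/(\pi^l-\epsilon_N^l m_l)$, yielding the claimed bound. There is no serious obstacle here; the argument is essentially a one-line comparison once the identity above is in place. The only point requiring a moment of care is keeping track of which of the two natural ways to split $\F(g)(n\epsilon_N)$ (using the numerator of $\wh{g_{\epsilon_N}}$ or the full transform) leads to the cleanest denominator $\pi^l-\epsilon_N^l m_l$; the choice described above does so directly.
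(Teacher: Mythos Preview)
Your argument is correct: the change of variables identifies $\wh{g_{\epsilon_N}}(n)$ with the renormalised truncated Fourier integral, and the Markov-type tail bound $J_{\epsilon_N}\le \epsilon_N^l m_l/\pi^l$ together with $I_{\epsilon_N}\ge (\pi^l-\epsilon_N^l m_l)/\pi^l$ dispatches both terms in the decomposition. The paper does not actually supply a proof of this lemma but defers to \cite[Lemma 23]{E2024}; your approach is the natural one and is almost certainly what is done there. (In fact your estimate is slightly sharper than needed: bounding $|A|$ by $I_{\epsilon_N}$ rather than by $1$ gives the stronger bound $2\epsilon_N^l m_l/\pi^l$, but of course this still implies the stated inequality.)
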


	\begin{proof}[Proof of part \eqref{item:high_frequencies} of Lemma \ref{lem:on_g_low_and_high_frequencies}]
	Since
	$$\wh{g}_{\epsilon_N}(n)-1 = \pa{\wh{g}_{\epsilon_N}(n)-\F\pa{g}(n\epsilon_N)} + \F\pa{g}\pa{n\epsilon_N}-1,$$
	we find that lemmas \ref{lem:away_from_zero} and \ref{lem:fourier_connection} imply that 
	\begin{equation}\label{eq:almost_final_low_frequencies_bound}
		\wh{g}_{\epsilon_N}(n)-1 \leq \frac{2\epsilon_N^l m_l}{\pi^l-\epsilon_N^l m_l} - \frac{\min\br{\frac{1}{2},\frac{\alpha_N^2 \pi^2}{4^{2\pa{q+1}}\norm{g}_{L^p}^{2q}\pa{\sqrt[l]{4m_l}\alpha_N+2\pi}^2}}}{2},	
	\end{equation}
	when $\epsilon_N \leq \frac{\pi}{\sqrt[l]{m_l}}$ and $\abs{n}\epsilon_N \geq \alpha_N$.
	
	As 
	$$\frac{\alpha_N^2 \pi^2}{4^{2\pa{q+1}}\norm{g}_{L^p}^{2q}\pa{\sqrt[l]{4m_l}\alpha_N+2\pi}^2} \leq \frac{\alpha_N^2 }{4^{2\pa{q+1}+1}\norm{g}_{L^p}^{2q}}$$
	we see that if $\alpha_N \leq 4^{q+1}\norm{g}_{L^p}^q$ then
	$$\min\br{\frac{1}{2},\frac{\alpha_N^2 \pi^2}{4^{2\pa{q+1}}\norm{g}_{L^p}^{2q}\pa{\sqrt[k]{4m_k}\alpha_N+2\pi}^2}}=\frac{\alpha_N^2 \pi^2}{4^{2\pa{q+1}}\norm{g}_{L^p}^{2q}\pa{\sqrt[k]{4m_k}\alpha_N+2\pi}^2},$$
	which, together with \eqref{eq:almost_final_low_frequencies_bound}, gives us
	\begin{equation}\nonumber
		\wh{g}_{\epsilon_N}(n)-1 \leq \uptau_N - \frac{\alpha_N^2 \pi^2}{2\cdot 4^{2\pa{q+1}}\norm{g}_{L^p}^{2q}\pa{\sqrt[l]{4m_l}\alpha_N+2\pi}^2}.	
	\end{equation}
	This concludes the proof.
\end{proof}

\subsection*{The proof of Lemma \ref{lem:estimating_quadratic_sum}}
 	\begin{proof}
 		For a given $A,B,K\in\Z$ we define the following sets
 		$$S_0\pa{A,B,K} := \br{n\in\Z\;|\; 0<\abs{n^2+An+B+\frac{K}{m}} \leq 1},$$
		$$S_N^+\pa{A,B,K} := \br{n\in\Z\;|\; N^2 \leq n^2+An+B+\frac{K}{m} \leq \pa{N+1}^2},$$
 		$$S_N^-\pa{A,B,K} := \br{n\in\Z\;|\; -\pa{N+1}^2 \leq n^2+An+B+\frac{K}{m} \leq -N^2},$$
 		where $N\in\N$. To prove our desired result we will start by estimating the size of each of these sets:\\
 		\underline{$S_0\pa{A,B}$:} Any $n\in S_0\pa{A,B,K}$ will satisfy
 		$$-1 - \frac{K}{m} \leq n^2+An+B \leq 1-\frac{K}{m}.$$
 		As $A$, $B$ and $n$ are integers, the above implies that the integer $n^2+An+B$ lies in an interval of length $2$. As such an interval can contain at most $3$ integers, and as quadratic equations have at most two solutions, we conclude that 
 		\begin{equation}\label{eq:S_0}
 			\sharp S_0\pa{A,B,K} \leq 6.
 		\end{equation}
 		\underline{$S_N^+\pa{A,B,K}$:} The inequality 
 		$$ N^2 \leq n^2+An+B+\frac{K}{m} \leq \pa{N+1}^2$$ 
 		can be rewritten as
 		$$N^2 - \pa{B -\frac{A^2}{4}+ \frac{K}{m}} \leq \pa{n+\frac{A}{2}}^2 \leq \pa{N+1}^2 -\pa{ B -\frac{A^2}{4}+ \frac{K}{m}}.$$
 		We have the following options:
 		\begin{itemize}
 			\item If $B-\frac{A^2}{4}+\frac{K}{m}\geq 0$ and $N+1 < \sqrt{B-\frac{A^2}{4}+\frac{K}{m}}$ there are no solutions to the above inequality and consequently $\sharp S_N^+\pa{A,B,K}=0$.
 			\item If $B-\frac{A^2}{4}+\frac{K}{m}\geq 0$, $N \leq \sqrt{B-\frac{A^2}{4}+\frac{K}{m}}$, and $N+1 \geq \sqrt{B-\frac{A^2}{4}+\frac{K}{m}}$ then 
 			$$-\frac{A}{2}- \sqrt{\pa{N+1}^2 - \pa{B-\frac{A^2}{4}+\frac{K}{m}}} \leq n \leq -\frac{A}{2} + \sqrt{\pa{N+1}^2 - \pa{B-\frac{A^2}{4}+\frac{K}{m}}} $$
 			Consequently, $n$ lies in an interval of length 
 			$$2\sqrt{\pa{N+1}^2 - \pa{B-\frac{A^2}{4}+\frac{K}{m}}}
 			\leq 2 \sqrt{\pa{N+1}^2-N^2} = 2\sqrt{2N+1}.$$
 			and we conclude that 
 			$$\sharp S_N^+\pa{A,B,K} \leq 2\sqrt{2N+1}+2.$$
 			\item Lastly, if $N^2 \geq B-\frac{A^2}{4}+\frac{K}{m} \geq 0$ then 
 			$$-\frac{A}{2} + \sqrt{N^2 - \pa{B-\frac{A^2}{4}+\frac{K}{m}}}\leq n \leq -\frac{A}{2} + \sqrt{\pa{N+1}^2 - \pa{B-\frac{A^2}{4}+\frac{K}{m}}},$$
 			or 
 			$$-\frac{A}{2} - \sqrt{\pa{N+1}^2 - \pa{B-\frac{A^2}{4}+\frac{K}{m}}}\leq n \leq -\frac{A}{2}-  \sqrt{N^2 - \pa{B-\frac{A^2}{4}+\frac{K}{m}}},$$
 			and consequently $n$ lies in two possible intervals of length
 			$$ \sqrt{\pa{N+1}^2 - \pa{B-\frac{A^2}{4}+\frac{K}{m}}}-\sqrt{N^2 - \pa{B-\frac{A^2}{4}+\frac{K}{m}}} = \frac{2N+1}{\sqrt{N^2 - \pa{B-\frac{A^2}{4}+\frac{K}{m}}}+\sqrt{\pa{N+1}^2 - \pa{B-\frac{A^2}{4}+\frac{K}{m}}}}$$
 			$$ \leq \frac{2N+1}{\sqrt{N^2 - \pa{B-\frac{A^2}{4}+\frac{K}{m} }+ 2N+1}} \leq \sqrt{2N+1}$$
 			and similarly to the previous case we find that 
 			$$\sharp S_N^+\pa{A,B,K} \leq 2\pa{\sqrt{2N+1}+2}.$$
 		\end{itemize}
 		After considering all the possibilities we see that for any $N\in\N$
 		\begin{equation}\label{eq:S_N+}
 			\sharp S_N^+\pa{A,B,K} \leq 2\sqrt{2N+1}+4 \leq 8\sqrt{N}.
 		\end{equation}
 		\underline{$S_N^-\pa{A,B,K}$:} The inequality 
 		$$ -\pa{N+1}^2 \leq n^2+An+B+\frac{K}{m} \leq -N^2$$ 
 		can be rewritten as
 		$$-\pa{N+1}^2 - \pa{B -\frac{A^2}{4}+ \frac{K}{m} }\leq \pa{n+\frac{A}{2}}^2 \leq -N^2 - \pa{B -\frac{A^2}{4}+ \frac{K}{m}}.$$
 		Much like our study of $S_N^+\pa{A,B,K}$, we will need to consider the following possibilities:
 		\begin{itemize}
 			\item If $N^2> -\pa{B-\frac{A^2}{4}+\frac{K}{m}}$ there are no solutions to the above inequality, i.e. $\sharp S_N^-\pa{A,B,K}=0$.
 			\item If $B-\frac{A^2}{4}+\frac{K}{m}\leq  0$, $N+1 \geq \sqrt{-\pa{B-\frac{A^2}{4}+\frac{K}{m}}}$, and $N \leq \sqrt{-\pa{B-\frac{A^2}{4}+\frac{K}{m}}}$ then 
 			$$-\frac{A}{2}- \sqrt{-N^2 - \pa{B-\frac{A^2}{4}+\frac{K}{m}}} \leq n \leq -\frac{A}{2} + \sqrt{-N^2 - \pa{B-\frac{A^2}{4}+\frac{K}{m}}}$$
 			i.e. $n$ lies in an interval of length 
 			$$2\sqrt{-N^2 - \pa{B-\frac{A^2}{4}+\frac{K}{m}}}
 			\leq 2 \sqrt{\pa{N+1}^2-N^2} = 2\sqrt{2N+1},$$
 			from which we find that 
 			$$\sharp S_N^-\pa{A,B,K} \leq 2\sqrt{2N+1}+2.$$
 			\item Lastly, if $B-\frac{A^2}{4}+\frac{K}{m}\leq  0$ and $N+1 \leq \sqrt{-\pa{B-\frac{A^2}{4}+\frac{K}{m}}}$ then 
 			$$-\frac{A}{2} + \sqrt{-\pa{N+1}^2 - \pa{B-\frac{A^2}{4}+\frac{K}{m}}}\leq n \leq -\frac{A}{2} + \sqrt{-N^2 - \pa{B-\frac{A^2}{4}+\frac{K}{m}}},$$
 			or
 			$$-\frac{A}{2} - \sqrt{-N^2 - \pa{B-\frac{A^2}{4}+\frac{K}{m}}}\leq n \leq -\frac{A}{2} - \sqrt{-\pa{N+1}^2 -\pa{ B-\frac{A^2}{4}+\frac{K}{m}}},$$
 			and consequently $n$ lies in two possible intervals of length
 			$$\frac{2N+1}{\sqrt{-\pa{N+1}^2 -\pa{ B-\frac{A^2}{4}+\frac{K}{m}}}+\sqrt{-N^2 -\pa{ B-\frac{A^2}{4}+\frac{K}{m}}}}$$
 			$$ \leq \frac{2N+1}{\sqrt{-\pa{N+1}^2 - \pa{B-\frac{A^2}{4}+\frac{K}{m}} + 2N+1}} \leq \sqrt{2N+1}.$$
 			We conclude that 
 			$$\sharp S_N^-\pa{A,B,K} \leq 2\sqrt{2N+1}+4.$$
 		\end{itemize}
 		After considering all the possibilities we see that for any $N\in\N$
 		\begin{equation}\label{eq:S_N-}
 			\sharp S_N^-\pa{A,B,K} \leq 2\sqrt{2N+1}+4\leq 8\sqrt{N}.
 		\end{equation}
 		The last ingredient we need for the proof is the fact that for any $A,B,K\in\Z$ and any $m>0$ 
 		$$\sup_{\fontsize{5}{4}\selectfont{\begin{matrix}
 					n\in\Z \\
 					m\pa{n^2+An+B}+K\ne 0	
 		\end{matrix}}}\frac{1}{\abs{m\pa{n^2+An+B}+K}}  \leq \begin{cases}
 			\frac{1}{m},& \frac{K}{m}\in\Z,\\
 			\frac{1}{\min\pa{K-m\lfloor \frac{K}{m} \rfloor, m\pa{\lfloor \frac{K}{m} \rfloor+1}-K}}, & \frac{K}{m}\not\in\Z
 		\end{cases}.$$
 		where we have used identity
 			$$\inf_{n\in\Z\setminus\br{\frac{K}{m}}}\abs{\frac{K}{m}-n} = \begin{cases}
 				1,& \frac{K}{m}\in\Z,\\
 				\min\pa{\frac{K}{m}-\lfloor \frac{K}{m} \rfloor, \lfloor \frac{K}{m} \rfloor+1-\frac{K}{m}}, & \frac{K}{m}\not\in\Z
 			\end{cases}.$$
 		 which was shown in the proof Lemma \ref{lem:explicit_behaviour_for_recursive}.\\
 		 Combining all the above
 		 we see that for a given $A,B,K\in\Z$ and $m>0$ we have that 
 		 $$\Z \setminus\br{n\in\Z\;|\; m\pa{n^2+An+B}+K=0}=S_0\pa{A,B,K}\cup \bigcup_{N\in\N}\pa{S_N^+(A,B,K)\cup S_N^-(A,B,K)},$$
 		 and
 		 $$\sum_{\fontsize{5}{4}\selectfont{\begin{matrix}
 		 			n\in\Z \\
 		 			m\pa{n^2+An+B}+K\ne 0	
 		 \end{matrix}}}\frac{1}{\abs{m\pa{n^2+An+B}+K}} \leq \sum_{n\in S_0\pa{A,B}}\frac{1}{\abs{m\pa{n^2+An+B}+K}}  $$
 	 $$+\sum_{N=1}^\infty\sum_{n\in S_N^+\pa{A,B,K}}\frac{1}{\abs{m\pa{n^2+An+B}+K}}+\sum_{N=1}^\infty\sum_{n\in S_N^{-}\pa{A,B,K}}\frac{1}{\abs{m\pa{n^2+An+B}+K}}$$
 	 $$\leq \begin{cases}
 	 	\frac{6}{m},& \frac{K}{m}\in\Z,\\
 	 	\frac{6}{\min\pa{K-m\lfloor \frac{K}{m} \rfloor, m\pa{\lfloor \frac{K}{m} \rfloor+1}-K}}, & \frac{K}{m}\not\in\Z
 	 \end{cases} + \frac{16}{m}\sum_{N=1}^\infty \frac{1}{N^{\frac{3}{2}}},$$
 	 where we have used the fact that on $S^\pm_N\pa{A,B,K}$
 	 $$\frac{1}{\abs{m\pa{n^2+An+B}+K}}= \frac{1}{m\abs{n^2+An+B+\frac{K}{m}}}\leq \frac{1}{mN^2}.$$
 	 The proof is thus complete.
 	\end{proof}
 	
\subsection*{The Fundamental solution for the operator $a^2-\Delta$ on $\R^k$} 	
 	
 	\begin{lemma}\label{lem:green_function}
 		Let $a>0$ be given and let $Y_a:\R^k\to\R$ be defined by
 		\begin{equation}\nonumber
 			Y_a\pa{\bm{x}} = \begin{cases}
 			 		\frac{e^{-a\abs{x}}}{a},& k=1,\\
 			 		\frac{a^{\frac{k-2}{2}} K_{\frac{k-2}{2}}\pa{a\abs{\bm{x}}}}{|x|^{\frac{k-2}{2}}}, & k\geq 2,		
 			 	\end{cases}
 		\end{equation}
 		where $K_\nu$ is the modified Bessel function of second kind or order $\nu$ which can be written as\footnote{See, for instance \cite[Formula 1 in 8.432]{GR2007}.}
 		$$K_{\nu}(x) = \int_{0}^{\infty} e^{-x\cosh(t)}\cosh\pa{\nu t}dt,$$
 		when $\nu,x>0$. Then $Y_a$ is a non-negative function in $L^1\pa{\R^k}$. Moreover, there exists an explicit $C_k>0$ such that 
		$$\wh{Y_a}\pa{\xi_1,\dots,\xi_k} = \frac{C_k}{a^2+\abs{\bm{\xi}}^2}.$$
 	\end{lemma}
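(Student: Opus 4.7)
The plan is to split the proof by cases. For $k=1$, I would handle everything by direct computation: since $Y_a(x) = e^{-a|x|}/a$ is manifestly non-negative and integrable, the Fourier transform
$$\wh{Y_a}(\xi) = \frac{1}{a}\int_{\R} e^{-a|x|}e^{-i\xi x}dx = \frac{2}{a^2+\xi^2}$$
gives $C_1 = 2$ at once.

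For $k \geq 2$, I would first note that the integrand $e^{-z\cosh(t)}\cosh(\nu t)$ in the stated representation of $K_\nu$ is positive for $z > 0$, so $K_{(k-2)/2}(a|\bm{x}|) > 0$ and hence $Y_a$ is non-negative on $\R^k\setminus\{0\}$. The core of the proof is then a single calculation that produces the Fourier transform and the $L^1$ bound simultaneously. Starting from
$$\frac{1}{a^2+|\bm{\xi}|^2} = \int_0^\infty e^{-(a^2+|\bm{\xi}|^2)s}\,ds$$
and the Gaussian identity $e^{-s|\bm{\xi}|^2} = \int_{\R^k}(4\pi s)^{-k/2}e^{-|\bm{x}|^2/(4s)}e^{-i\bm{\xi}\cdot\bm{x}}d\bm{x}$, Fubini (legitimate because every integrand is positive) yields
$$\frac{1}{a^2+|\bm{\xi}|^2} = \int_{\R^k} f_a(\bm{x})\,e^{-i\bm{\xi}\cdot\bm{x}}d\bm{x},\qquad f_a(\bm{x}) := \int_0^\infty \frac{e^{-a^2 s - |\bm{x}|^2/(4s)}}{(4\pi s)^{k/2}}\,ds.$$

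The key step is the substitution $s = \frac{|\bm{x}|}{2a}e^{t}$, under which $a^2 s + |\bm{x}|^2/(4s) = a|\bm{x}|\cosh(t)$ and $ds/s^{k/2}$ produces a factor $e^{(1-k/2)t}dt$ up to powers of $|\bm{x}|$ and $a$. Collecting constants and invoking the identity
$$\int_{-\infty}^{\infty} e^{-z\cosh(t)}e^{\nu t}\,dt = 2K_{\nu}(z) = 2K_{-\nu}(z)$$
with $\nu = 1-k/2$ and $z = a|\bm{x}|$ identifies
$$f_a(\bm{x}) = \frac{1}{(2\pi)^{k/2}}\cdot\frac{a^{(k-2)/2}K_{(k-2)/2}(a|\bm{x}|)}{|\bm{x}|^{(k-2)/2}} = \frac{Y_a(\bm{x})}{(2\pi)^{k/2}}.$$
Thus $\wh{Y_a}(\bm{\xi}) = (2\pi)^{k/2}/(a^2+|\bm{\xi}|^2)$, giving $C_k = (2\pi)^{k/2}$. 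Since $Y_a \geq 0$, the $L^1$ norm is recovered by evaluating $\wh{Y_a}$ at $\bm{\xi} = 0$: $\|Y_a\|_{L^1(\R^k)} = C_k/a^2 < \infty$.

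The only genuine obstacle is bookkeeping in the change of variables that transmutes $f_a$ into the Bessel kernel; the Fubini step, the non-negativity, and the $L^1$ conclusion are all immediate once the computation is carried out. Everything else reduces to invoking standard identities for the Gaussian Fourier transform and for $K_\nu$.
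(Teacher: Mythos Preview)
Your argument is correct and takes a genuinely different route from the paper. The paper first establishes $Y_a\in L^1$ for $k\geq 2$ by quoting the closed form $\int_0^\infty r^{k/2}K_{(k-2)/2}(ar)\,dr$ from tables, then computes $\wh{Y_a}$ via spherical coordinates: the angular integral is identified with $J_{(k-2)/2}$ through a standard formula, and the remaining radial integral $\int_0^\infty r\,K_{(k-2)/2}(ar)J_{(k-2)/2}(|\bm{\xi}|r)\,dr$ is again read off from tables. Your approach avoids both table lookups by using the Laplace representation $\frac{1}{a^2+|\bm{\xi}|^2}=\int_0^\infty e^{-(a^2+|\bm{\xi}|^2)s}\,ds$ together with the Gaussian Fourier transform; the resulting $\int_0^\infty (4\pi s)^{-k/2}e^{-a^2 s-|\bm{x}|^2/(4s)}\,ds$ is reduced to $K_{(k-2)/2}$ by the substitution $s=\frac{|\bm{x}|}{2a}e^{t}$ and the symmetry $K_\nu=K_{-\nu}$. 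This is more self-contained, produces the explicit constant $C_k=(2\pi)^{k/2}$ for $k\geq 2$, and yields the $L^1$ bound for free by setting $\bm{\xi}=0$. The paper's approach, by contrast, is the classical Hankel-transform calculation and keeps the two tasks (integrability and Fourier transform) separate, at the cost of relying on tabulated Bessel identities.
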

 	
 	\begin{proof}
 		The fact that $Y_a$ is a non-negative follows from its definition. We continue by considering the case $k=1$ and $k\geq 2$ separately, starting with $k=1$. In the case $Y_a$ is clearly in $L^1\pa{\R}$. Moreover, we recall that $e^{-ar}e^{i \xi r}\in L^1\pa{[0,\infty)}$ for any $a>0$ and $\xi\in\R$ and
 		$$\int_{0}^\infty e^{-ar} e^{-i\xi r}dr = \frac{1}{a+i\xi}.$$
 		Consequently
 		$$\int_{\R}Y_a(x_1)e^{-i\xi x_1}dx_1 = \frac{1}{a}\int_{\R}e^{-a\abs{x_1}}e^{-i\xi x_1}dx_1 = \frac{1}{2a}\int_{\R}e^{-a\abs{x_1}}\pa{e^{-i\xi x_1}+e^{i\xi x_1}}dx_1$$
 		$$=\frac{1}{a}\int_{\R}e^{-a\abs{x_1}}\cos\pa{\xi x_1}dx_1 = \frac{2}{a}\int_{0}^\infty e^{-a\abs{x_1}}\cos\pa{\xi x_1}dx_1 = \frac{2}{a}\mathrm{Re}\pa{\int_{0}^\infty e^{-ar} e^{-i\xi r}dr } = \frac{2}{a^2+\xi^2},$$
 		which concludes the one dimensional case.
 		
 		We proceed by considering the case $k\geq 2$. Since
 		$$\int_{0}^\infty Y_a\pa{\bm{x}}d\bm{x} = a^{\frac{k-2}{2}}\int_{(0,\infty)\times \mathbb{S}^{k-1}}r^{\frac{k}{2}}K_{\frac{k-2}{2}}\pa{ar}drd\Omega_k$$
 		$$=a^{\frac{k-2}{2}}\abs{\mathbb{S}^{k-1}}\int_{0}^{\infty}r^{\frac{k}{2}}K_{\frac{k-2}{2}}\pa{ar}dr = \frac{2^{\frac{k-2}{2}}\abs{\mathbb{S}^{k-1}}\Gamma\pa{\frac{k}{2}}}{a^2}<\infty$$
 	 	(see, for instance \cite[Formula 16 in 6.561]{GR2007}), we find that $Y_a\in L^1\pa{\R^k}$.
 		
 		To calculate the Fourier transform of $Y_a$ we notice that due to the fact that $Y_a$ is radial, we can rotate our space so that the frequency variable $\bm{\xi}$ lies on the rotated $x_{k}-$th axis. As such, we can assume without loss of generality that 
 		$$\pa{\xi_1,\dots,\xi_k} = \pa{0,\dots,0,\abs{\bm{\xi}}}.$$
 		Using spherical coordinates $\pa{r,\theta_1,\dots,\theta_{k-2},\phi}$, with $\theta_1$ being the angle relative to the $x_k-$th axis and where $0\leq \theta_i<\pi$ for $i=1,\dots, k-2$ and $0\leq \phi<2\pi$, we find that 
 		$$\int_{\R^k}Y_a\pa{\bm{x}}e^{-i\bm{\xi}\cdot \bm{x}}d\bm{x} =a^{\frac{k-2}{2}} \int_{[0,\infty)\times [0,\pi)^{k-1}\times [0,2\pi)}r^{\frac{k}{2}}K_{\frac{k-2}{2}}\pa{ar}e^{-i\abs{\bm{\xi}}r\cos\pa{\theta_1}}\sin^{k-2}\pa{\theta_1}drd\theta_1 d\Omega_{k-2}\pa{\theta_2,\dots,\theta_{k-2},\phi}$$
 		$$=\abs{\mathbb{S}^{k-2}}a^{\frac{k-2}{2}}\int_{[0,\infty)\times [0,\pi)}r^{\frac{k}{2}}K_{\frac{k-2}{2}}\pa{ar}e^{-i\abs{\bm{\xi}}r\cos\pa{\theta_1}}\sin^{k-2}\pa{\theta_1}drd\theta_1$$
 		where we used the convention $\abs{\mathbb{S}^0}=2$\footnote{We need to be slightly careful when $k=2$ as in this case our radial variables are $r$ and $\phi$. However, as
 		$$\int_{0}^{\pi} e^{-i\abs{\bm{\xi}}r\cos\pa{\phi}}d\phi = \int_{\pi}^{2\pi} e^{-i\abs{\bm{\xi}}r\cos\pa{\phi}}d\phi, $$
 	the result still holds with our convention $\abs{\mathbb{S}^0}=2$.}. It is well known that for $\beta>0$ and $k\geq 2$
 		$$\int_{0}^\pi e^{-i\beta\cos\pa{\theta}}\sin^{k-2}\pa{\theta}d\theta=\sqrt{\pi}\pa{\frac{2}{\beta}}^{\frac{k-2}{2}}\Gamma\pa{\frac{k-1}{2}}J_{\frac{k-2}{2}}\pa{\beta},$$
 		where $J_{\nu}$ is the Bessel function of first kind of order $\nu$	(see, for instance \cite[Formula 5 in 3.915]{GR2007}). We find that when $\bm{\xi}\ne \pa{0,\dots,0}$
 		$$ \int_{\R^k}Y_a\pa{\bm{x}}e^{-i\bm{\xi}\cdot \bm{x}}d\bm{x} = C_k\frac{a^{\frac{k-2}{2}}}{\abs{\bm{\xi}}^{\frac{k-2}{2}}}\int_{0}^\infty rK_{\frac{k-2}{2}}\pa{ar}J_{\frac{k-2}{2}}\pa{\abs{\bm{\xi}}r}dr,$$
 		where $C_k$ is a fixed geometric constant. 
 		Since
 		$$\int_{0}^\infty rK_{\frac{k-2}{2}}\pa{ar}J_{\frac{k-2}{2}}\pa{\abs{\bm{\xi}}r}dr = \frac{\abs{\bm{\xi}}^{\frac{k-2}{2}}}{a^{\frac{k-2}{2}}\pa{a^2+\abs{\bm{\xi}}^2}} $$
 		(see, for instance \cite[Formula 2 in 6.521]{GR2007}) we conclude the desired result using the continuity of $\wh{Y_a}$ to include $\bm{\xi} = \pa{0,\dots,0}$.
 	\end{proof}
 	
\end{appendix}


\begin{thebibliography}{99}
	
	
	\bibitem{APD2021}
	Ayi, N and Pouradier Duteil, N., \textit{Mean-field and graph limits for collective dynamics models
		with time-varying weights}, J. Differential Equations \textbf{299} (2021). https://doi.org/10.1016/j.jde.2021.07.010
	
	\bibitem{BCC2011}
	Bolley, F. and Ca\~{n}izo, J. A. and Carrillo, J. A., \textit{Stochastic mean-field limit: non-{L}ipschitz forces and
		swarming}, Math. Models Methods Appl. Sci. \textbf{11} (2011). https://doi.org/10.1142/S0218202511005702
	
	
	\bibitem{BFFT2012}
	Baladron, J. and Fasoli, D. and Faugeras, O. and Touboul J., \textit{Mean-field description and propagation of chaos in networks of
		{H}odgkin-{H}uxley and {F}itz{H}ugh-{N}agumo neurons}, J. Math. Neurosci. \textbf{2} (2012). https://doi.org/10.1186/2190-8567-2-10
	
	\bibitem{CCLLV2010}
	Carlen E. A., Carvalho M. C., Le Roux J., Loss M. and Villani C., \textit{Entropy and chaos in the {K}ac model},Kinet. Relat. Models, \textbf{3} (2010). 
	
	\bibitem{CCDW2013}
	Carlen, E. A. and Chatelin, R. and Degond, P. and Wennberg, B., \textit{Kinetic hierarchy and propagation of chaos in biological swarm models}, Phys. D \textbf{260} (2013), 90--111. https://doi.org/10.1016/j.physd.2012.05.013
	
	\bibitem{CDW2013}
	Carlen, E. A. and Degond, P. and Wennberg, B., \textit{Kinetic limits for pair-interaction driven master equations
		and biological swarm models}, Math. Models Methods Appl. Sci. \textbf{23} (2013), 1339--1376. https://doi.org/10.1142/S0218202513500115
	
	\bibitem{CD(I)2022}
	Chaintron, L.-P. and Diez, A., \textit{Propagation of chaos: a review of models, methods and
		applications. I. Models and methods}, Kinet. Relat. Model \textbf{15} (2022), 895--1015. https://doi.org/10.3934/krm.2022017
		
	\bibitem{CD(II)2022}
	Chaintron, L.-P. and Diez, A., \textit{Propagation of chaos: a review of models, methods and
		applications. II. Applications}, Kinet. Relat. Models \textbf{15} (2022), 1017--1173. https://doi.org/10.3934/krm.2022018
	
	
	\bibitem{DLMT2017}
	P. Degond, J-G. Liu, S. Merino-Aceituno, and T. Tardiveau. \textit{Continuum dynamics of the intention field under weakly cohesive social interaction}. Math. Models Methods Appl. Sci., \textbf{27} (2017), No. 1,  159--182. https://doi.org/10.1142/S021820251740005X
	
	\bibitem{E2024}
	Einav, A., \textit{The Emergence of Order in Many Element Systems}. J. Stat. Phys., \textbf{191} (2024), No. 86.
	
	\bibitem{FL2016}
	N. Fournier and E L\"ocherbach. \textit{On a toy model of interacting neurons}. Ann. Inst. Henri Poincar\'e Probab. Stat., \textbf{52} (2016), No. 4, 1844--1876. https://doi.org/10.1214/15-AIHP701
	
	
	\bibitem{GR2007}
	Gradshteyn, I. S. and Ryzhik, I. M., \textit{Table of integrals, series, and products seventh edition}, Elsevier/Academic Press (2007).
	
%
	
	
	\bibitem{K1956}
	Kac, M., \textit{Foundations of kinetic theory}, Proceedings of the Third Berkeley Symposium on
	Mathematical Statistics and Probability, 1954--1955,
	vol. III (1956), 171--197.
	
%
	
	
	\bibitem{Rudin91}
	Rudin, W., \textit{Fourier Analysis on Groups}, John Wiley \& Sons (1991). 
	
%
\end{thebibliography}
\end{document}